\newcommand{\CC}[0]{\operatorname{Constrained-Cut}}
\algnewcommand{\InlineComment}[1]{\hfill {\itshape // #1}}
\newenvironment{proofof}[1]{\noindent{\bf Proof of #1.}}
        {\hspace*{\fill}$\Box$\par\vspace{4mm}}
\titlespacing{\paragraph}{
	0pt}{
	0.1\baselineskip}{
	1em}
\newcommand{\defeq}{\stackrel{\mathrm{\scriptscriptstyle def}}{=}}
\newcommand{\DeclareProblem}[2]{\DeclareRobustCommand{#1}[0]{\textsf{#2}\xspace}}
\DeclareProblem{\SNDP}{rSNDP}
\newcommand{\abs}[1]{\lvert #1\rvert}
\newcommand{\paren}[1]{\left ( #1 \right ) }
\newcommand{\set}[1]{\left\{ #1 \right\}}
\newcommand{\parenbig}[1]{\bigl( #1 \bigr)}
\newcommand{\Hcal}[0]{\ensuremath{\mathcal H}\xspace} 
\newcommand{\Tcal}[0]{\ensuremath{\mathcal T}\xspace} 
\newcommand{\Gnet}[0]{\ensuremath{\Hcal'}\xspace}
\newcommand{\Dnet}[0]{\ensuremath{\Hcal}\xspace}
\DeclareMathOperator{\tw}{tw}
\DeclareMathOperator{\rootn}{root}
\newcommand{\be}{\begin{enumerate}}
\newcommand{\ee}{\end{enumerate}}
\newcommand{\bd}{\begin{description}}
\newcommand{\ed}{\end{description}}
\newcommand{\bi}{\begin{itemize}[noitemsep,topsep=0pt]}
\newcommand{\ei}{\end{itemize}}
\newtheorem{theorem}{Theorem}[section]
\newtheorem{lemma}[theorem]{Lemma}
\newtheorem{observation}[theorem]{Observation}
\newtheorem{corollary}[theorem]{Corollary}
\newtheorem{claim}[theorem]{Claim}
\newtheorem{proposition}[theorem]{Proposition}
\newtheorem{definition}[theorem]{Definition}
\newenvironment{proof}{\par \smallskip{\bf Proof:}}{\hfill\stopproof}
\def\stopproof{\square}
\def\square{\vbox{\hrule height.2pt\hbox{\vrule width.2pt height5pt \kern5pt
\vrule width.2pt} \hrule height.2pt}}
\newenvironment{prog}[1]{
\begin{minipage}{5.8 in}
{\sc\bf #1}
\begin{enumerate}}
{
\end{enumerate}
\end{minipage}
}
\renewcommand{\phi}{\varphi}
\newcommand{\eps}{\epsilon}
\newcommand{\poly}{\mathrm{poly}}
\newcommand{\Z}{\ensuremath{\mathbb Z}}
\newcommand{\F}{\ensuremath{\mathbb F}}
\newcommand{\cset}{{\mathcal C}}
\newcommand{\sets}{{\mathcal S}}
\newcommand{\pset}{{\mathcal P}}
\newcommand{\tset}{{\mathcal T}}
\newcommand{\tsethat}{{\widehat{\mathcal T}}}
\newcommand{\mincut}[0]{\operatorname{mincut}}
\newcommand{\Xset}{\ensuremath{X}\xspace} 
\newcommand{\card}[1]{\left| #1 \right|}
\newcommand{\nosolution}{\textsc{No Valid Solution}\xspace}
\newcommand{\chat}{\widehat{c}}
\newcommand{\Ehat}{\widehat{E}}
\newcommand{\Fhat}{\widehat{F}}
\newcommand{\Ghat}{\widehat{G}}
\newcommand{\Vhat}{\widehat{V}}
\newcommand{\Mincut}{\mincut}
\renewcommand{\O}{\widetilde{O}}
\newcommand{\bs}{\backslash}
\newcommand{\localcut}{\mathrm{LocalCut}}
\newcommand{\vol}{\mathit{vol}}
\newcommand{\cupdot}{\mathbin{\mathaccent\cdot\cup}}
\newcommand{\T}{\mathcal{T}}
\newcommand{\OfflineConnectivity}{\textsc{OfflineConnectivity}}
\newenvironment{fminipage}
{\begin{Sbox}\begin{minipage}}
		{\end{minipage}\end{Sbox}\fbox{\TheSbox}}
\newenvironment{algbox}[0]{\vskip 0.2in
	\noindent 
	\begin{fminipage}{6.3in}
	}{
	\end{fminipage}
	\vskip 0.2in
}
\newcommand{\optimal}{$O(kc^4)$\xspace}
\newcommand{\cwell}{connectivity-$c$ well-linked\xspace}
\newcommand{\sparsifier}[0]{connectivity-$c$ mimicking network\xspace}
\newcommand{\sparsifiers}[0]{\sparsifier{}s\xspace}
\newcommand{\sparsifiersTC}[0]{Connectivity-\texorpdfstring{$c$}{c} Mimicking Networks\xspace}
\newcommand{\Econtain}[0]{E^{\textit{contain}}}	
\newcommand{\Eintersect}[0]{E^{\textit{intersect}}}	
\title{Vertex Sparsification for Edge Connectivity}
	\author{Parinya Chalermsook\thanks{Aalto University, Finland \texttt{parinya.chalermsook@aalto.fi}} \and 
	Syamantak Das\thanks{Indraprastha Institute of Information Technology Delhi, India \texttt{syamantak@iiitd.ac.in}} \and 
	Bundit Laekhanukit\thanks{Shanghai University of Finance and Economics, China \texttt{bundit@sufe.edu.cn}} \and
	Yunbum Kook\thanks{KAIST, South Korea \texttt{yb.kook@kaist.ac.kr}. Part of this work was done while visiting Georgia Tech.}\and
	Yang P. Liu\thanks{Stanford University \texttt{yangpliu@stanford.edu}}\and
	Richard Peng\thanks{Georgia Tech \texttt{richard.peng@gmail.com}. Part of this work was done while visiting MSR Redmond.}\and
	Mark Sellke\thanks{Stanford University  \texttt{msellke@stanford.edu}} \and
	Daniel Vaz\thanks{Operations Research Group, Technische Universit\"{a}t M\"{u}nchen, Germany, \texttt{daniel.vaz@tum.de}. Part of this work was done while being at MPI Informatik and while visiting Aalto University.}}
\definecolor{OliveGreen}{cmyk}{0.64,0,0.95,0.40}
\begin{document}

\maketitle

\begin{abstract}

\emph{Graph compression} or \emph{sparsification} is a basic information-theoretic and computational question. A major open problem in this research area is whether $(1+\epsilon)$-approximate cut-preserving vertex sparsifiers with size close to the number of terminals exist. As a step towards this goal, we study a thresholded version of the problem: for a given parameter $c$, find a smaller graph, which we call \emph{connectivity-$c$ mimicking network}, which preserves connectivity among $k$ terminals exactly up to the value of $c$. We show that connectivity-$c$ mimicking networks with $O(kc^4)$ edges exist and can be found in time $m(c\log n)^{O(c)}$. We also give a separate algorithm that constructs such graphs with $k \cdot O(c)^{2c}$ edges in time $mc^{O(c)}\log^{O(1)}n$.

These results lead to the first data structures for answering fully dynamic offline $c$-edge-connectivity queries for $c \ge 4$ in polylogarithmic time per query, as well as more efficient algorithms for survivable network design on bounded treewidth graphs.
\end{abstract}

\pagenumbering{gobble}

\newpage

\pagenumbering{arabic}

\section{Introduction} 
\label{sec:Introduction}

\emph{Graph compression} or \emph{sparsification} is a basic information-theoretic and computational question of the following nature:
can we compute a ``compact'' representation of a graph, with fewer vertices or edges, that preserves important information?
Important examples include spanners, which preserve distances approximately up to a multiplicative factor, and cut and spectral sparsifiers \cite{BenczurK96,SpielmanT04}, which preserve cuts and the Laplacian spectrum up to an approximation factor of $(1+\eps)$.
Such edge sparsifiers allow us to reduce several algorithmic problems on dense graphs to those on sparse graphs, at the cost of a $(1+\eps)$ approximation factor. On the other hand, some computational tasks, such as routing or graph partitioning, require reducing the number of vertices (instead of edges), that is, \textit{vertex sparsification}. 

The notion of vertex sparsification we consider here is that of \emph{cut sparsification}, introduced by \cite{hagerup1998characterizing,moitra2009approximation,leighton2010extensions}. In this setting, we are given an edge-capacitated graph $G$ and a subset $\tset \subseteq V(G)$  of $k$ vertices called \emph{terminals}, and we want to construct a smaller graph $H$ that maintains all the minimum cuts between every pair of subsets of $\tset$ up to a multiplicative factor $q$, called the \textit{quality of the sparsifier}. 
More formally, we want to find a graph $H$ which contains $\tset$ as well as possibly additional vertices, such that for any $S \subseteq \tset$, the minimum cut between $S$ and $\tset \setminus S$ in $G$ and $H$ agree up to the multiplicative factor of $q$. 
Ideally, the size of the sparsifier (i.e., $|V(H)|$) should only depend on $|\tset|$ and not the size of $G$. 

There have been several results regarding tradeoffs between the quality $q$ and the size of cut sparsifiers. One line of work considers the case where the sparsifier $H$ has no additional vertices beyond the terminals. Here, an upper bound of $O(\log k /\log\log k)$ \cite{moitra2009approximation,leighton2010extensions,CharikarLLM10,EGKRTT10,MakarychevM10} and lower bound of $\Omega(\sqrt{\log k}/\log\log k)$ \cite{MakarychevM10} are known. In a different direction, quality $1$-sparsifiers (known as \emph{mimicking networks}) with $2^{2^k}$ vertices were shown to exist \cite{hagerup1998characterizing,khan2014mimicking}, and a lower bound of $2^{\Omega(k)}$ is also known \cite{krauthgamer2013mimicking}. Also, Chuzhoy \cite{Chuzhoy12} studied the problem of obtaining the best possible trade-offs between quality and size of sparsifiers, and showed that quality-$3$ sparsifiers with $O(Z^3)$ vertices exist, where $Z$ is the total capacity of all terminals. A major open problem is whether a quality-$(1+\eps)$ cut sparsifier of size $\O(k/\poly(\eps))$ exists; so far, this is only known for special graph classes, such as quasi-bipartite graphs~\cite{andoni2014towards,AbrahamDKKP16}.

The aim of this paper is to study a related graph sparsifier that is suitable for applications in designing fast algorithms for \textit{connectivity problems}. In particular, we consider the following problem: given an edge-capacitated graph $G$ with $k$ terminals $\tset$ and a constant $c$, construct a graph $H$ with $\tset \subseteq V(H)$ that maintains all minimum cuts up to size $c$ among terminals. Precisely, we want, for all subsets $S \subseteq \tset$, that $\min\left(c, \mincut_G(S,\tset\bs S)\right) = \min\left(c, \mincut_H(S,\tset\bs S)\right)$, where, for disjoint subsets $A,B \subseteq V(G)$, we define $\mincut_G(A,B)$ as the value of a minimum cut between $A$ and $B$ in graph $G$. In this case, we call $H$ a \emph{\sparsifier} of $G$ (See Definition~\ref{def:DefMain}).

Our main result (Theorem \ref{thm:Main}) shows that every graph $G$ with integer edge capacities admits a \sparsifier with \optimal edges (so \optimal vertices as well), and we show a near-linear time algorithm to compute it.
\begin{theorem}
	\label{thm:Main}
	Given any edge-capacitated graph $G$ with $n$ vertices, $m$ edges,
	along with a set $\tset$ of $k$ terminals and a value $c$,
	there are algorithms that construct a \sparsifier $H$ of $G$ with
	\begin{enumerate}
		\item \label{part:Main1} \optimal edges in time $O(m\cdot (c\log n)^{O(c)})$,
		\item \label{part:Main2} $k \cdot O(c)^{2c}$ edges
		in time $O(m \cdot c^{O(c)} \log^{O(1)}n)$.
	\end{enumerate}
\end{theorem}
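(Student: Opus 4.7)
The plan is to build $H$ via two structural reductions on $G$. First, \emph{contract highly-connected edges}: if an edge $e=(u,v)$ satisfies $\lambda_G(u,v) > c$ (i.e., $u$ and $v$ admit more than $c$ edge-disjoint paths in $G$), then $u,v$ lie on the same side of every cut of value at most $c$, so contracting $e$ preserves $\min(c,\mincut_G(S,\tset\setminus S))$ for every $S\subseteq\tset$. A near-linear $c$-edge-connectivity subroutine identifies all such edges and yields an intermediate graph $G_1$ in which every edge is a genuine bottleneck. Second, compute a \cwell vertex decomposition of $G_1$ into clusters $U_1,\ldots,U_r$ so that (a)~the total number of inter-cluster edges is $O(kc)$, and (b)~each $U_i$ is internally well-linked with respect to its boundary vertices. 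The boundary vertices then act as ``local terminals'' for their cluster; sparsifying each cluster separately and pasting the results back along the inter-cluster edges produces $H$.

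The main technical obstacle is the per-cluster size bound: a \cwell cluster with $t$ boundary vertices must admit a sparsifier with $O(tc^3)$ edges. The proof should exploit well-linkedness to argue that every minimum cut of value $\le c$ among the boundary can be ``routed'' to a structured family of at most $\poly(c)$ candidate cuts per boundary vertex, so that $O(tc^3)$ edges suffice to certify all of them simultaneously; this is where the subtler ingredients (splitter-style lemmas, careful counting of ``interesting'' edges inside well-linked pieces) are most delicate. Combined with $\sum_i t_i = O(kc)$ summed over clusters, the total is $O(kc^4)$ edges, giving Part~\ref{part:Main1}. The running time $O(m\,(c\log n)^{O(c)})$ then follows by implementing both the contraction step and the well-linked decomposition via recent near-linear $c$-edge-connectivity machinery.

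For Part~\ref{part:Main2}, the plan is to trade size for speed by replacing the global well-linked decomposition with a recursive ``split-and-brute-force'' strategy: at each level, an approximate cheap decomposition isolates pieces with only $O(c)$ boundary vertices; each such piece is then sparsified by enumerating all $O(c)$-sized boundary cuts (contributing the $c^{O(c)}$ blow-up per piece), and the reduced graph is recursed upon. After $O(\log n)$ levels the process terminates; since each terminal can appear in $O(c)^{O(c)}$ small pieces across levels and each piece contributes $\poly(c)$ edges per boundary vertex, the resulting sparsifier has $k\cdot O(c)^{2c}$ edges. The runtime $O(m\cdot c^{O(c)}\log^{O(1)} n)$ follows because each recursion level touches each edge a constant number of times and the per-piece brute force costs $c^{O(c)}$, avoiding the heavier exact $c$-connectivity computations required for Part~\ref{part:Main1}.
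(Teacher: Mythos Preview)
Your Part~\ref{part:Main1} plan misidentifies where the difficulty lies. You propose to decompose $V\setminus\tset$ into \cwell clusters and then argue that each well-linked cluster with $t$ boundary vertices admits a sparsifier with $O(tc^3)$ edges. But a \cwell set is the \emph{easy} case: by the paper's Lemma~\ref{lem:ContractWellLinked} it can be contracted to a single vertex with no loss, contributing zero internal edges. The actual recursion (Section~\ref{sssec:description}) splits a piece along a violating cut until each piece is \emph{either} \cwell (then contract it) \emph{or} has at most $2c-1$ boundary edges but is not well-linked. The $O(c^3)$ bound is needed only for the latter pieces, and it is obtained not by ``splitter-style lemmas'' but by passing to the line graph and invoking the gammoid/representative-set kernelization of Kratsch and Wahlstr\"om (Lemma~\ref{lem:kw12}, yielding Lemma~\ref{lem:c3}). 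Your plan is missing this matroid-theoretic ingredient entirely, and the statement you set out to prove (well-linked pieces need $O(tc^3)$ edges) is both unnecessary and, as stated, not obviously tractable by the tools you list.

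For the algorithmic side of Part~\ref{part:Main1}, the paper does not compute a well-linked decomposition at all; instead it uses expander decomposition (Lemma~\ref{lem:ExpanderDecompose}) to ensure that every cut of size $\le c$ has a small side, enumerates all such cuts in $n(c\phi^{-1})^{O(c)}$ time (Lemma~\ref{lem:EnumerateCuts}), and then greedily contracts edges that are not essential to any minimum terminal cut. Your reference to ``near-linear $c$-edge-connectivity machinery'' for both steps is too vague to substitute for this. For Part~\ref{part:Main2}, the paper's mechanism is again quite different from your ``split into $O(c)$-boundary pieces and brute-force'' sketch: it defines a notion of an edge set that \emph{intersects} all $(\tset,c)$-cuts (Definition~\ref{def:Intersect}), finds such a set of size $O(|\tset|c^2)$ via minimum terminal cuts and local-cut algorithms (Lemma~\ref{lem:StrongerVersion}, Theorem~\ref{thm:LocalFlow}), removes it, and recurses with $c$ decreased by one (Lemma~\ref{lem:IntersectToContain}); the $O(c)^{2c}$ size comes from the $O(c^2)$ multiplicative blow-up over $c$ rounds, not from per-piece enumeration. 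Your accounting (``each terminal can appear in $O(c)^{O(c)}$ small pieces'') does not obviously control the size, and the claimed decomposition into $O(c)$-boundary pieces at every level is not justified.
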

In fact, the algorithm for Part~\ref{part:Main1} constructs the optimal {\em contraction-based mimicking network}, so any existential improvement to the size bound of such mimicking networks would immediately translate to an efficient algorithm.\footnote{Formally, if there exists a \sparsifier with $kf(c)$ edges that can be obtained by only contracting edges in $G$, for some function $f$, then our algorithm finds a mimicking network with at most $O(kf(c))$ edges.} Our second algorithm is more efficient, while blowing up the size of the mimicking network obtained. We believe that our dependence on $c$ is suboptimal -- we were only able to construct instances that require at least $2kc$ edges in the \sparsifier, and are inclined to believe that an upper bound of $O(kc)$ is likely.

Theorem \ref{thm:Main} has direct applications in fixed-parameter tractability and dynamic graph data structures (see Sections \ref{sec:ourresults} and \ref{sec:Applications}). In fact, our results and techniques have already been used to give a deterministic $n^{o(1)}$ update time fully dynamic algorithm for $c$-connectivity for all $c = o(\log n)$ \cite{JS20}. Additionally, our results are motivated in part by elimination-based graph algorithms \cite{KyngLPSS16,KyngS16}, which we discuss in Section \ref{sec:further}. In this way, we believe that achieving $(1+\eps)$-quality cut sparsifiers of size $\O(k/\poly(\eps))$, an analogue of approximate Schur complements for cuts, may have broad applications in graph algorithms and data structures.

\subsection{Our Results}
\label{sec:ourresults}
Our proof of existence of \sparsifiers with \optimal edges (and thus \optimal vertices as well) involves extending the recursive approach of \cite{Chuzhoy12} using \emph{well-linked sets} to a thresholded setting. The construction of \cite{Chuzhoy12} for cut sparsifiers maintains a partition of the vertices of the graph $G$. For each partition piece, the algorithm either finds a sparse cut to recurse on, or contracts the piece. \cite{Chuzhoy12} then bounds the deterioration in quality from these contractions. The main differences between our approach and \cite{Chuzhoy12} are:
\begin{itemize}
\item We introduce an extension of well-linkedness to a thresholded $c$-connectivity setting.
\item We do not run the recursion all the way down. Instead, we use a kernelization result on mimicking networks via gammoid representative sets \cite{Kratsch12} to bottom out the recursion.
\end{itemize} Additionally, in order to obtain near-linear running times for our constructions, we combine the {\em expander decomposition} technique \cite{SaranurakW19} with several other combinatorial results that allow us to build the desired \sparsifier.

We would like to note that the results of \cite{Kratsch12,fomin2016efficient} already give \sparsifiers of size $\poly(k,c).$ However, the dependence on $k$ is at least quadratic, and the algorithms for computing them run in at least quadratic time, as these results use linear algebra on matroids. Therefore, their results do not give more efficient algorithms for the applications of dynamic connectivity and subset $c$-EC below.

Theorem~\ref{thm:Main} has applications in data structures for dynamic edge connectivity. The problem of dynamic $c$-edge-connectivity is to design an algorithm which supports edge additions, deletions, and $c$-edge-connectivity queries between pairs of vertices as efficiently as possible, preferably in nearly constant $\O(1)$\footnote{Throughout, we use $\O(\cdot)$ to hide $\poly\log(n)$ factors. In particular, $\O(1) = \poly\log(n).$} amortized update time. For online fully dynamic algorithms, such results are only known for $c \le 3$ \cite{HenzingerK99,HolmLT01}.
Even in the simpler \emph{offline} model introduced by Eppstein~\cite{Eppstein94}, where the algorithm sees all queries at the beginning, the
only result for $c \ge 4$ is, to our knowledge, an unpublished offline fully
dynamic algorithm for $c = 4, 5$ by Molina and
Sandlund~\cite{MolinaS18:unpublished}, which requires about $\sqrt{n}$ time per
query.
The fact that even offline algorithms are not known for dynamic $c$-connectivity when $c > 5$ shows a serious gap in understanding of dynamic flow algorithms. We make significant progress towards shrinking this gap, and show in Section~\ref{sec:dynaconapplication} that, by combining Theorem~\ref{thm:Main} Part~\ref{part:Main2} with a divide and conquer algorithm for processing queries, we achieve nearly constant amortized time for offline fully dynamic $c$-edge-connectivity.
\begin{theorem}
	\label{thm:Dynacon}
	There is an offline algorithm that on an initially empty graph $G$ answers $q$ edge insertion, deletion, and $c$-connectivity queries between arbitrary pairs of vertices in amortized $\O(c^{O(c)})$ time per query.
\end{theorem}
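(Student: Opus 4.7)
I would run Eppstein's~\cite{Eppstein94} offline segment-tree-over-time framework, using Theorem~\ref{thm:Main} Part~\ref{part:Main2} to keep the graph stored at each tree node small. Build a balanced segment tree whose $O(q)$ leaves correspond to the operations in time order, so each node $v$ represents an interval $[l_v,r_v]$. Decompose each edge lifespan $I_e$ into its $O(\log q)$ canonical segments and assign $e$ to the corresponding nodes; let $E_v$ be the edges assigned to $v$, and note $\sum_v|E_v|=O(q\log q)$. Let $P_v=\bigcup_{u\text{ on the }v\text{-to-root path}}E_u$ be the set of edges permanent throughout $[l_v,r_v]$, and define
\[
T_v\;=\;\bigcup_{u\text{ a strict descendant of }v}V(E_u)\;\cup\;\{\text{endpoints of queries in }[l_v,r_v]\}.
\]
Each vertex of $T_v$ is an endpoint of an operation in $[l_v,r_v]$ (up to endpoint conventions), so $|T_v|=O(r_v-l_v+1)$ and hence $\sum_v|T_v|=O(q\log q)$.

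\textbf{Recursive construction and answering queries.} Traverse the tree by DFS, maintaining the invariant that the stored graph $H_v$ is a \sparsifier of $P_v$ on $T_v$; at the root, $P_{\text{root}}=H_{\text{root}}=\emptyset$. When descending from $v$ to a child $u$, form the working graph $H_v\cup E_u$ and invoke Theorem~\ref{thm:Main} Part~\ref{part:Main2} with terminals $T_u$ to produce $H_u$ with $|T_u|\cdot O(c)^{2c}$ edges in time $O((|H_v|+|E_u|)\cdot c^{O(c)}\log^{O(1)}n)$. At a leaf $\lambda$, $P_\lambda$ is precisely the graph at that time and the query endpoints are terminals of $H_{p(\lambda)}\cup E_\lambda$, so each query is answered by running at most $c$ augmenting paths on this $c^{O(c)}$-size graph, taking $c^{O(c)}$ time.

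\textbf{Correctness.} Since every endpoint of $E_u$ lies in $T_v$, for each $S\subseteq T_v$ adjoining $E_u$ shifts both $\mincut_{P_v}(S,T_v\setminus S)$ and $\mincut_{H_v}(S,T_v\setminus S)$ by the same fixed amount, and a short case analysis on whether $\mincut_{P_v}(S,T_v\setminus S)$ exceeds $c$ shows that the $c$-thresholded mincut equalities are preserved. Hence $H_v\cup E_u$ is a \sparsifier of $P_u:=P_v\cup E_u$ on $T_v$. Combined with the standard restriction identity
\[
\mincut_G(S,T_u\setminus S)=\min_{R\subseteq T_v\setminus T_u}\mincut_G\bigl(S\cup R,\,T_v\setminus(S\cup R)\bigr)\quad\text{for }S\subseteq T_u,
\]
applied to both $G=P_u$ and $G=H_v\cup E_u$, this also makes $H_v\cup E_u$ a \sparsifier of $P_u$ on $T_u\subseteq T_v$, and transitivity yields that the output $H_u$ is a \sparsifier of $P_u$ on $T_u$ as required.

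\textbf{Runtime and main obstacle.} Using $\sum_v|H_{p(v)}|\le\sum_v|T_{p(v)}|\cdot O(c)^{2c}=\O(q)\cdot c^{O(c)}$ and $\sum_v|E_v|=\O(q)$, the total runtime sums to $\O(q\cdot c^{O(c)})$, giving $\O(c^{O(c)})$ amortized per query as claimed. The main obstacle is establishing composability of \sparsifiers under (i) adding edges between existing terminals and (ii) restricting to a subset of terminals; once these two operations are shown to preserve the \sparsifier property, the algorithm is a direct instantiation of Eppstein's offline framework on top of Theorem~\ref{thm:Main} Part~\ref{part:Main2}.
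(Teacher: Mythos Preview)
Your proposal is correct and is essentially the same as the paper's: the paper's \textsc{OfflineConnectivity} is exactly Eppstein's segment-tree/divide-and-conquer framework with a \sparsifier computed at each node, and its one-line correctness appeal to Lemma~\ref{lem:AddEdges} is precisely your composability-under-adding-edges step (with Lemma~\ref{lem:Smaller} covering your terminal-restriction step). Your runtime analysis also matches the paper's Lemma~\ref{lemma:dynacon} instantiated with $f(c)=O(c)^{2c}$ and $g(c)=c^{O(c)}\log^{O(1)}n$.
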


Finally, \sparsifiers are perhaps the most natural object that can be used to ``pass along'' connectivity information between sub-problems in the dynamic programming framework. We illustrate this concept by presenting an additional application. 
The \textsf{Subset $c$-Edge-Connectivity} (or Subset $c$-EC) problem is the following:
given a graph $G=(V,E)$ with costs on edges, and a terminal set, find the cheapest subgraph $H$ in which every pair of terminals is $c$-connected. We show in Section~\ref{sec:sndpapplication} that Theorem~\ref{thm:Main} speeds up the running time for solving this problem in low treewidth graphs.
\begin{theorem}
	\label{thm:SNDP} 
	There is an algorithm that exactly solves {\sf Subset $c$-EC} on an input graph $G$ with $n$ vertices in time $n \exp\left(O(c^4\tw(G)\log(\tw(G)c)\right)$, where $\tw(G)$ denotes the treewidth of $G$.
\end{theorem}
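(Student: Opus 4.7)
The plan is to run a bottom-up dynamic program on a nice tree decomposition of $G$ of width $w = \tw(G)+1$, where every bag is augmented with a distinguished super-terminal $r$; this reduction is valid because $\mincut(x,r) \ge c$ and $\mincut(y,r) \ge c$ for all $x,y \in \tset$ implies $\mincut(x,y) \ge c$, since any $x$-$y$ cut must place $r$ on one side and is therefore either an $x$-$r$ or an $r$-$y$ cut. At each decomposition node $v$, the DP stores, for each ``type'' of partial solution, the minimum cost of a subset of edges in $G_v$ (the edges appearing in bags of the subtree of $v$) realizing that type. Two partial solutions have the same type if and only if they induce the same \sparsifier on the bag $B_v$: by Theorem~\ref{thm:Main} Part~\ref{part:Main1}, such mimicking networks may be taken to have $O(wc^4)$ edges with integer capacities in $\{0,1,\dots,c\}$, giving at most $\exp\bigl(O(wc^4\log(wc))\bigr)$ types per bag.

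At the five nice-decomposition node types the DP transitions are natural: \emph{leaf}, the empty mimicking network; \emph{introduce vertex}, add an isolated vertex; \emph{introduce edge} $(u,w)$ with cost $\kappa$, branch on whether the edge is used (if so, insert it into $H_v$, add $\kappa$ to the cost, and re-sparsify via Theorem~\ref{thm:Main}); \emph{forget vertex}, project the vertex out as a Steiner vertex and re-sparsify; \emph{join}, take the edge-disjoint union of the two children's mimicking networks and re-sparsify. The key fact making the join correct is that mincut is additive under edge-disjoint union (every cut decomposes into two cuts on a common vertex partition, whose edge sets are disjoint), so an edge-disjoint union of mimicking networks is a mimicking network of the edge-disjoint union of the underlying subgraphs. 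Each transition considers at most $\exp(O(wc^4\log(wc)))$ state combinations and re-sparsification is polynomial in $wc^4$, giving total running time $n\exp(O(c^4\tw(G)\log(\tw(G)c)))$ as claimed.

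The main obstacle is correctly handling forget nodes when the forgotten vertex is a terminal $t \in \tset$: the decomposition
\[
\mincut_{H^*}(t,r) \;=\; \min_{T \sse B_v\setminus\{r\}}\!\Bigl(\mincut_{G_v^*}(\{t\}\cup T,\, B_v\setminus T) \;+\; \mincut_{K_v^*}(T,\, B_v\setminus T)\Bigr)
\]
(where $G_v^*$ and $K_v^*$ are the chosen edges inside and outside the subtree at $v$) shows that $t$'s eventual $c$-connectivity to $r$ may rely on edges above $v$ that strengthen $B_v$-to-$r$ connectivity without being incident to $t$, so a naive ``commit $t$ iff currently $c$-connected to $r$'' rule would be incomplete. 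I resolve this by retaining $t$ as a terminal of $H_v$ after it is forgotten (thus preserving its full $c$-cut profile against $B_v$) and only ``committing'' and removing $t$ once it is actually $c$-connected to $r$ in $H_v$, which is safe because subsequent transitions only increase mincuts. The technically delicate part is ensuring that the number of uncommitted forgotten terminals carried in $H_v$ stays bounded so the state count remains $\exp\bigl(O(c^4\tw(G)\log(\tw(G)c))\bigr)$; this is the main challenge of the proof and I expect it to require careful structural arguments about how terminal ``profiles'' can be absorbed by the $O(wc^4)$-edge budget of the mimicking network after each re-sparsification.
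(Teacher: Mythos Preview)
Your proposal has a genuine gap that you yourself flag: you have no bound on the number of ``uncommitted'' forgotten terminals carried in $H_v$. This is not a technicality to be cleaned up later; it is the heart of the matter. A \sparsifier with terminal set $B_v \cup \{\text{uncommitted terminals}\}$ has size $O((|B_v|+k')c^4)$ where $k'$ is the number of uncommitted terminals, and there is no a priori reason $k'$ should be bounded by any function of $w$ and $c$. One can easily construct instances where arbitrarily many terminals are each individually below the $c$-threshold inside $G_v$ and only become $c$-connected to $r$ once edges from outside $G_v$ are added; all of them would sit uncommitted in your state. The hope that ``profiles can be absorbed by the $O(wc^4)$-edge budget'' is circular: that budget is $O(|\text{terminals}|\cdot c^4)$, so it grows with the very quantity you need to bound.

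The paper resolves this by storing \emph{two} \sparsifiers per node $t$: one, $\Gnet_t$, summarizing the chosen edges inside the subtree $G_t$, and a second, $\Dnet_t$, summarizing the chosen edges \emph{outside} $G_t$ (i.e.\ in $G\setminus E(G_t)$). Both have terminal set $X_t$, so both have size $O(|X_t|c^4)$. Because $r$ is placed in every bag, a demand $(v_i,d_i)$ can be checked locally at any node $t$ with $v_i\in X_t$ by testing whether $\Gnet_t\cup\Dnet_t$ has $d_i$ edge-disjoint $r$--$v_i$ paths; no terminal ever needs to be carried after it leaves the bag. The price is that $\Dnet_t$ is a \emph{guess} about what happens above $t$, and one must show that the local consistency rules between a node and its children (Definition~\ref{def:kgst:weakmimnets:unify}) force $\Dnet_t$ to be a faithful \sparsifier of the outside part if and only if $\Gnet_t$ is one for the inside (Lemma~\ref{lem:kgst:weakmimnets:unify}). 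This two-sided state is the missing idea in your approach.
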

This is an improvement over~\cite{ChalermsookDELV18} in which the running time was doubly-exponential in both $c$ and $\tw(G)$.
Furthermore, the existence of a conditional lower bound of $(3-\eps)^{\tw(G)}$ even when $c=1$, under the assumption of the strong exponential time hypothesis, implies that the dependence of our running time on $\tw(G)$ is almost optimal. Additionally, our dynamic programming based algorithm shows that any improvement to the edge bound \optimal in Theorem~\ref{thm:Main} gives an improvement for Theorem~\ref{thm:SNDP}.

\subsection{Related Work}
\label{sec:further}

We believe that our work has potential connections to dynamic data structures, elimination-based graph algorithms, and approximation algorithms and sparsification.
\begin{sloppypar}
\paragraph{Static and Dynamic $c$-Edge-Connectivity Algorithms.}
The study of efficient algorithms for computing graph
connectivity has a long history,
including the study of max-flow algorithms~\cite{GoldbergT14},
near-linear time algorithms for computing global min-cut~\cite{Karger00},
and most recently, progress in exact~\cite{Madry13,Madry16,CohenMSV17}
and approximate max-flow algorithms~\cite{Sherman17,Peng16,KelnerLOS14,Sherman13}.
The $c$-limited edge connectivity case can be solved in $O(mc)$ time
statically, and is also implied by $\eps$-approximate routines
by setting $\eps < 1/c$.
As a result, it is a natural starting point for developing routines
that can answer multiple flow queries on the same graph.
\end{sloppypar}

The question of computing max-flow between multiple pairs
of terminal vertices dates back to the Gomory-Hu tree~\cite{GomoryH61},
which gives a tree representation of all $s$-$t$ min-cuts.
However, such tree structures do not extend to arbitrary
subsets of vertices, and to date, have proven difficult
to maintain dynamically.
As a result, previous works on computing cuts between
a subset of vertices have gone through the use of tree-packing
based certificates.
These include results on computing the minimum cut separating terminals~\cite{ColeH03},
as well as the construction of $c$-limited Gomory-Hu trees~\cite{HariharanKP07,BhalgatHKP08}.

Such results are in turn used to compute max-flow between multiple
pairs of vertices~\cite{AbboudKT19:arxiv}.
These problems have received much attention in fine grained complexity,
since their directed versions are difficult~\cite{AbboudWY15,AbboudW14}, 
and it is not known whether computing $(1+\eps)$-approximate versions 
of these is possible.
From this perspective, the $c$-limited version is a natural starting point
towards understanding the difficulty of computing $(1+\eps)$-approximate
all-pairs max-flows in both static and dynamic graphs.

For the problem of finding a \sparsifier,
the construction time of these vertex sparsifiers is critical for their use in data structures~\cite{PengSS19}.
For a moderate number of terminals (e.g., $k = n^{0.1}$), nearly-linear time constructions of vertex sparsifiers with $\poly(k)$
vertices were previously known only when $c \leq 5$~\cite{PengSS19, MolinaS18:unpublished}. 
To our knowledge, the only results for maintaining exact $c$ connectivity for $c \geq 4$
are incremental algorithms \cite{DinitzV94,DinitzV95,DinitzW98,GoranciHT16}, in addition to the aforementioned fully dynamic algorithm
for $c = 4, 5$ by Molina and Sandlund \cite{MolinaS18:unpublished}, which took about $\sqrt{n}$ time per query.

Furthermore, since this work was originally released, the concept of \sparsifiers along with the techniques of this work has been used to design deterministic $n^{o(1)}$ time fully dynamic algorithms for exact $c$-connectivity for all $c = o(\log n)$ \cite{JS20}.

\paragraph{Elimination-based graph algorithms:} The study of \sparsifiers in this paper can also be viewed in the context of vertex
reduction / elimination based graph algorithms. Such algorithms are closely related to the widely used and highly practically
effective multigrid methods, which until very recently have been viewed
as heuristics with unproven bounds. Even in the static setting,
the only worst-case bounds for multi-grid and elimination based
algorithms have been in the setting of linear systems~\cite{KyngLPSS16,KyngS16},
by utilizing a combination of vertex and edge sparsifications.
Compared to the tree-like Laplacian solvers, sparse vertex elimination has a multitude of advantages:
they are readily parallelizable~\cite{KyngLPSS16},
and can be more easily adapted to data structures
that handle dynamic graphs~\cite{DurfeeKPRS17,DurfeeGGP19}.

Important properties of such routines is that the size of sparsifier is linear in the number of terminals, the construction can be computed in nearly linear time, and they are $(1+\eps)$-quality approximations. In particular, guaranteeing a $(1+\eps)$ approximation is essential as there are often multiples stages in elimination algorithms, so losing $\omega(1)$-quality at each stage would be detrimental. Our vertex-elimination routine combines all these properties and thus meets all criteria of previous elimination based routines \cite{KyngLPSS16,KyngS16}. On the other hand, most of the work on vertex sparsification to date has been on shortest path metrics~\cite{vdBrandS19:arxiv,Chechik18}, and/or utilizes algebraic techniques~\cite{vdBrandS19:arxiv,Kratsch12,FafianieHKQ16}. As a result, these routines, when interpreted as vertex elimination routines, either incur errors, or have size super-linear in the number of terminals.

\paragraph{Other notions of approximate sparsification.} 
Without using any additional vertices, the best known upper and lower bounds on the quality of vertex cut sparsifiers are $O(\log{k} / \log\log{k})$ \cite{CharikarLLM10,MakarychevM10} and $\Omega(\sqrt{\log k}/\log\log k)$ \cite{MakarychevM10} respectively. 
\cite{Chuzhoy12} presents a quality-$O(1)$, size-$O(C^3)$ sparsifier, computable in time $\poly(n) \cdot 2^C$, where $C$ denotes the total capacity of the edges incident on the terminals. It is open whether there are quality-$(1+\eps)$ and size $\poly(k/\eps)$ vertex sparsifiers for edge connectivity, and we see Theorem~\ref{thm:Main} as a first step towards achieving this goal.

Additionally, there has been significant work on vertex sparsification in approximation
algorithms~\cite{MakarychevM10,CharikarLLM10,EnglertGKRTT14,
	KrauthgamerR17,Kratsch12,AssadiKYV15,FafianieKQ16,FafianieHKQ16,
	GoranciR16,GoranciHP17b}. Recently, vertex sparsifiers were also shown to be
closely connected with dynamic graph
data structures~\cite{GoranciHP17a,PengSS19,GoranciHP18,DurfeeGGP19}.

There has also been work on mimicking networks on special graph classes. Krauthgamer and Rika~\cite{krauthgamer2013mimicking} presented a mimicking network of size $O(k^2 2^{2k})$ for planar graphs with $k$ terminals, nearly matching the lower bound~\cite{karpov2018exponential}. 
When all terminals lie on the same face, mimicking networks of size $O(k^2)$ are known~\cite{goranci2017improved,KrauthgamerR17}.
An upper bound of $O(k\cdot 2^{2^{\tw(G)}})$ is known for bounded-treewidth graphs \cite{chaudhuri2000computing}. 

\subsection{Structure of the Paper}
\label{subsec:Structure}

In Section~\ref{sec:Preliminaries}, we give preliminaries for our algorithms.
In Section~\ref{sec:overview}, we sketch our approach to the main results, the existence of a \sparsifier with \optimal edges and an algorithm to construct \sparsifiers of the optimal size, which we elaborate in detail in Section~\ref{sec:MainSection}.
In Section~\ref{sec:EfficientAlgorithms}, we take a different approach to make an algorithm more efficient.
We finalize our paper with detailed explanation on applications in Section~\ref{sec:Applications}.
\section{Preliminaries}
\label{sec:Preliminaries}

Our focus will be on cuts with at most $c$ edges. Our algorithms will involve contractions, which naturally lead to multigraphs. Therefore, we view capacitated graphs $(G,w)$ as multigraphs with $\min(w_e,c)$ copies of an edge $e$. Hence, we only deal with undirected, unweighted multigraphs.

Furthermore, we assume that each terminal vertex $t \in \tset$ has degree at most $c$ through the following operation: for $t \in \tset$ add a new vertex $t'$ and $c$ edges between $t$ and $t'$. As any cut separating $t$ and $t'$ has size at least $c$, this operation preserves all cuts of size at most $c$.

\subsection{Cuts, Minimum Cuts, and \texorpdfstring{$(\tset,c)$-equivalency}{(T,c)-equivalence}}

For a graph $G=(V,E)$ and disjoint subsets $A,B \subseteq V$, let $E_G(A,B)$ denote the edges with one endpoint in $A$ and the other in $B$. The set of cuts in $G$ consists of $E_G(X,V\bs X)$ for $X \subseteq V$. For a subset $X \subseteq V$ the \emph{boundary} of $X$, denoted by $\partial X$, is $E_G(X,V\bs X)$.
For subsets $A,B \subseteq V$, define $\mincut_G(A,B)$ to be the minimum cut separating $A,B$ in $G$. If $A \cap B \neq \emptyset$, then $\mincut_G(A,B) = \infty$. Formally, we have \[ \mincut_G(A,B) = \min_{\substack{S \subseteq V \\ A \subseteq S, B \subseteq V \bs S}} |E_G(S,V \bs S)|. \]
Furthermore, we let $\mincut(G,A,B)$ be the edges in a minimum cut between $A,B$ in $G$, so that $\mincut_G(A,B) = |\mincut(G,A,B)|$. 
If multiple minimum cuts exist, the choice is arbitrary, and does not affect our results.
For disjoint $A, B \subseteq V$, we sometimes write their disjoint union as $A \cupdot B \defeq A \cup B$ to emphasize that $A, B$ are disjoint.
We define the \emph{thresholded minimum cut} as $\mincut_G^c(A,B) \defeq \min(c, \mincut_G(A,B)).$
This definition then allows us to formally define $(\tset,c)$-equivalence.
\begin{definition}
	\label{def:DefMain}
	Let $G$ and $H$ be graphs both containing terminals $\tset$.
	We say that $G$ and $H$ are $(\tset, c)$-equivalent if
	for any subset $\tset_1 \subseteq \tset$ of the terminals we have that
	\[
		\mincut^c_H\left(\tset_1, \tset \bs \tset_1\right) =
		\mincut^c_G\left(\tset_1, \tset \bs \tset_1\right).
	\]
If $G$ and $H$ are $(\tset,c)$-equivalent, then we also say that $H$ is a \sparsifier for $G$.
\end{definition}
A \emph{terminal cut} is any cut that has at least one terminal from $\tset$ on both sides of the cut. The minimum terminal cut is the terminal cut
with the smallest number of edges. We denote by $(\tset, c)$-cuts the terminal cuts with at most $c$ edges.

We present several useful observations about the notion of $(\tset, c)$-equivalence.
\begin{lemma}
	\label{lem:Smaller}
	If $G$ and $H$ are $(\tset,c)$-equivalent, then
	for any subset of terminals $\tsethat \subseteq \tset$
	and any $\chat \leq c$, $G$ and $H$ are also
	$(\tsethat, \chat)$-equivalent.
\end{lemma}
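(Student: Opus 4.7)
The plan is to reduce a cut between sub-terminals of $\tsethat$ to a minimum over bipartitions of the full terminal set $\tset$, and then apply the hypothesis term by term.

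Fix an arbitrary $\tsethat_1 \subseteq \tsethat$; I need to show
\[
\mincut_G^{\chat}(\tsethat_1, \tsethat \setminus \tsethat_1) \;=\; \mincut_H^{\chat}(\tsethat_1, \tsethat \setminus \tsethat_1).
\]
The key observation is that any vertex subset $S$ witnessing a cut separating $\tsethat_1$ from $\tsethat \setminus \tsethat_1$ in $G$ must assign each terminal in $\tset \setminus \tsethat$ to exactly one side of the cut. Letting $A \defeq S \cap (\tset \setminus \tsethat)$, the cut $(S, V \setminus S)$ also separates $\tsethat_1 \cupdot A$ from $\tset \setminus (\tsethat_1 \cup A)$. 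Conversely, any cut separating such a $\tset$-bipartition separates $\tsethat_1$ from $\tsethat \setminus \tsethat_1$. Hence
\[
\mincut_G(\tsethat_1, \tsethat \setminus \tsethat_1) \;=\; \min_{A \subseteq \tset \setminus \tsethat} \mincut_G\bigl(\tsethat_1 \cup A,\; \tset \setminus (\tsethat_1 \cup A)\bigr),
\]
and the identical identity holds in $H$.

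Next, I would use that $\chat \le c$ to commute the thresholding with the minimum: for any collection of nonnegative values $\{x_A\}$,
\[
\min\!\Bigl(\chat,\; \min_A x_A\Bigr) \;=\; \min\!\Bigl(\chat,\; \min_A \min(c, x_A)\Bigr),
\]
since capping at $c \ge \chat$ cannot change which quantities realize the outer $\min$ with $\chat$. Applying this to both $G$ and $H$ yields
\[
\mincut_G^{\chat}(\tsethat_1, \tsethat \setminus \tsethat_1) \;=\; \min\!\Bigl(\chat,\; \min_{A \subseteq \tset \setminus \tsethat} \mincut_G^{c}\bigl(\tsethat_1 \cup A,\; \tset \setminus (\tsethat_1 \cup A)\bigr)\Bigr),
\]
and the analogous identity in $H$. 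Since $(G, H)$ are $(\tset, c)$-equivalent, each of the inner terms $\mincut_G^c(\tsethat_1 \cup A, \tset \setminus (\tsethat_1 \cup A))$ matches its $H$-counterpart, so the two outer minima coincide, completing the proof.

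There is no substantive obstacle here; the only thing to be careful about is the $\chat \le c$ hypothesis, which is exactly what allows the cap at $c$ to be inserted inside the minimum without affecting the final value.
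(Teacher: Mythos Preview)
Your argument is correct. The paper itself states Lemma~\ref{lem:Smaller} as an observation without an explicit proof; the natural justification is exactly the one you give, and indeed the paper later spells out essentially the same reduction (expressing a cut between disjoint subsets of $\tset$ via the induced full partition of $\tset$) in the proof of Lemma~\ref{lem:prelimcuts:disjointsets}.
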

\begin{lemma}
	\label{lem:AddEdges}
	If $G$ and $H$ are $(\tset, c)$-equivalent, then for
	any additional set of edges $\Ehat$ with endpoints in $\tset$,
	$G \cup \Ehat$ and $H \cup \Ehat$
	are also $(\tset, c)$-equivalent.
\end{lemma}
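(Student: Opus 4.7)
The plan is to reduce the cut in $G \cup \Ehat$ (and $H \cup \Ehat$) to a cut in $G$ (resp.\ $H$) plus a term depending only on $\Ehat$ and $\tset_1$, and then appeal to the $(\tset, c)$-equivalence of $G$ and $H$.

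Fix any subset $\tset_1 \subseteq \tset$ and let $k_{\Ehat}(\tset_1)$ denote the number of edges of $\Ehat$ with one endpoint in $\tset_1$ and the other in $\tset \setminus \tset_1$. The first (and only real) observation is that since every endpoint of every edge in $\Ehat$ lies in $\tset$, any set $S$ with $\tset_1 \subseteq S$ and $\tset \setminus \tset_1 \subseteq V \setminus S$ necessarily cuts exactly the $k_{\Ehat}(\tset_1)$ edges of $\Ehat$ joining $\tset_1$ to $\tset \setminus \tset_1$ (the placement of non-terminal vertices is irrelevant to $\Ehat$). Hence
\[
\mincut_{G \cup \Ehat}(\tset_1, \tset \setminus \tset_1) = \mincut_G(\tset_1, \tset \setminus \tset_1) + k_{\Ehat}(\tset_1),
\]
and the analogous identity holds for $H$.

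Write $a = \mincut_G(\tset_1, \tset \setminus \tset_1)$, $b = \mincut_H(\tset_1, \tset \setminus \tset_1)$, and $k = k_{\Ehat}(\tset_1) \geq 0$. We must show $\min(a+k, c) = \min(b+k, c)$, given that $\min(a, c) = \min(b, c)$ by hypothesis. Split into cases: if $a \geq c$ and $b \geq c$, then $a + k, b + k \geq c$, so both thresholded values equal $c$. If $a < c$ and $b < c$, then the hypothesis gives $a = b$, hence $a + k = b + k$, and the thresholded values agree. The mixed cases (say $a < c$ but $b \geq c$) would force $\min(a, c) = a \neq c = \min(b, c)$, contradicting the hypothesis, and so cannot occur.

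There is no real obstacle here beyond this small case analysis; the essential content is the decomposition of cut values in the first paragraph, which in turn rests entirely on the fact that $\Ehat$ is supported on $\tset$.
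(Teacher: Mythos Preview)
Your proof is correct. The paper states this lemma as a basic observation and does not supply a proof at all, so there is nothing to compare against; your argument---decomposing the cut value as $\mincut_G(\tset_1,\tset\setminus\tset_1)+k_{\Ehat}(\tset_1)$ using that $\Ehat$ is supported on $\tset$, and then doing the obvious case analysis on the threshold---is exactly the natural justification the authors presumably had in mind.
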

When used in the reverse direction, this lemma says that we can remove edges, as long as we include their endpoints as terminal vertices (Corollary~\ref{cor:RemoveEdges}).
We complement this partitioning process by showing that sparsifiers on disconnected graphs can be built separately (Lemma~\ref{lem:Independent}).
\begin{corollary}
  \label{cor:RemoveEdges}
	Let $\Ehat$ be a set of edges in $G$ with endpoints
	$V(\Ehat)$, and $\tset$ be terminals in $G$.
	If $H$ is $(\tset \cup V(\Ehat), c)$-equivalent
	to $G \bs \Ehat$, then $H \cup \Ehat$
	is $(\tset, c)$-equivalent to $G$.
\end{corollary}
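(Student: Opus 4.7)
The plan is to derive this corollary by chaining the two preceding lemmas. The hypothesis gives that $H$ and $G \bs \Ehat$ are $(\tset \cup V(\Ehat), c)$-equivalent, and note that the endpoints of every edge in $\Ehat$ lie in the terminal set $\tset \cup V(\Ehat)$, so Lemma~\ref{lem:AddEdges} applies to $\Ehat$. Invoking it yields that $H \cup \Ehat$ and $(G \bs \Ehat) \cup \Ehat = G$ are $(\tset \cup V(\Ehat), c)$-equivalent.

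Next, I would simply weaken the terminal set. Taking $\tsethat \defeq \tset \subseteq \tset \cup V(\Ehat)$ and $\chat \defeq c$, Lemma~\ref{lem:Smaller} implies that $H \cup \Ehat$ and $G$ are $(\tset, c)$-equivalent, which is exactly the statement of the corollary.

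There isn't really a main obstacle here; the only small point to be careful about is the set-theoretic identity $(G \bs \Ehat) \cup \Ehat = G$, which is immediate since $\Ehat \subseteq E(G)$ by assumption (``$\Ehat$ is a set of edges in $G$''), and the verification that Lemma~\ref{lem:AddEdges} is applicable because its hypothesis ``edges with endpoints in $\tset$'' is met after we replace its ambient terminal set by $\tset \cup V(\Ehat)$.
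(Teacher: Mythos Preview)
Your proposal is correct and matches the paper's intended argument: the paper explicitly states that Corollary~\ref{cor:RemoveEdges} is Lemma~\ref{lem:AddEdges} ``used in the reverse direction,'' which is exactly what you do---apply Lemma~\ref{lem:AddEdges} with terminal set $\tset \cup V(\Ehat)$ to recover $G$ from $G\bs\Ehat$, then invoke Lemma~\ref{lem:Smaller} to shrink the terminal set back to $\tset$. The two small checks you flag (that $(G\bs\Ehat)\cup\Ehat = G$ and that the endpoints of $\Ehat$ lie in the enlarged terminal set) are precisely the points that make the chaining go through.
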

\begin{lemma}
	\label{lem:Independent}
	If $G_1$ is $(\tset_1, c)$-equivalent to $H_1$,
	and $G_2$ is $(\tset_2, c)$-equivalent to $H_2$,
	then the vertex-disjoint union
	of $G_1$ and $G_2$,
	is $(\tset_1 \cup \tset_2, c)$-equivalent
	to the vertex-disjoint union of $H_1$ and $H_2$.
\end{lemma}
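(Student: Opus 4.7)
The plan is to reduce the two-component case to the two individual equivalences by decomposing an arbitrary terminal bipartition into its restrictions to each side of the vertex-disjoint union, and then handling the threshold via a short case analysis.

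First, fix an arbitrary subset $S \subseteq \tset_1 \cup \tset_2$ and write $S_1 \defeq S \cap \tset_1$ and $S_2 \defeq S \cap \tset_2$. The main structural observation is that, because $G_1$ and $G_2$ share no vertices and hence no edges, any cut $(X, V \setminus X)$ in $G_1 \cupdot G_2$ decomposes as a disjoint union of a cut $(X \cap V(G_1), V(G_1) \setminus X)$ in $G_1$ and a cut $(X \cap V(G_2), V(G_2) \setminus X)$ in $G_2$, with the total size being the sum. Taking $X$ to be a minimum cut separating $S$ from $(\tset_1 \cup \tset_2) \setminus S$ on one side and pasting together independent minimizers on the other gives
\[
\mincut_{G_1 \cupdot G_2}\!\bigl(S,(\tset_1\cup\tset_2)\setminus S\bigr)
= \mincut_{G_1}(S_1,\tset_1\setminus S_1) + \mincut_{G_2}(S_2,\tset_2\setminus S_2),
\]
and the same identity holds with $H_1,H_2$ in place of $G_1,G_2$. (The edge cases $S_i=\emptyset$ or $S_i=\tset_i$ pose no trouble, since by convention the corresponding term is simply $0$.)

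Next I would handle the threshold. Set $a_i \defeq \mincut_{G_i}(S_i,\tset_i\setminus S_i)$ and $b_i \defeq \mincut_{H_i}(S_i,\tset_i\setminus S_i)$ for $i=1,2$. By the hypothesis that $G_i$ and $H_i$ are $(\tset_i,c)$-equivalent, we have $\min(c,a_i)=\min(c,b_i)$. Now argue that $\min(c,a_1+a_2)=\min(c,b_1+b_2)$ by cases: if both $a_1,a_2<c$, then $a_i=b_i$ and the sums agree outright; otherwise some $a_i\ge c$, which forces the corresponding $b_i\ge c$ as well, so both thresholded sums equal $c$. Combining this with the decomposition identity above completes the proof.

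I do not expect a real obstacle here; the only thing to be mildly careful about is confirming that the decomposition identity remains valid when $S_1$ or $S_2$ equals $\emptyset$ or the full terminal set on its side, which follows directly from the convention $\mincut_G(\emptyset, B)=0$ built into Definition~\ref{def:DefMain}. Everything else is a mechanical case split on the threshold.
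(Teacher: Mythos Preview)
Your proposal is correct. The paper itself does not supply a proof for this lemma; it is stated as a basic observation in the preliminaries and used freely thereafter. Your argument---decomposing the cut across the two vertex-disjoint components, invoking the additivity $\mincut_{G_1\cupdot G_2}(S,\overline{S}) = \mincut_{G_1}(S_1,\overline{S_1}) + \mincut_{G_2}(S_2,\overline{S_2})$, and then checking that $\min(c,a_i)=\min(c,b_i)$ for $i=1,2$ implies $\min(c,a_1+a_2)=\min(c,b_1+b_2)$---is exactly the natural way to fill in the details, and the edge cases you flag are handled correctly by the paper's convention that $\mincut_G(\emptyset,B)=0$.
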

When considering \sparsifiers, we can restrict our
attention to sparse graphs~\cite{NagamochiI92}.
For completeness, we prove the following lemma in Appendix~\ref{proofs:EdgeReduction}.
\begin{lemma}
	\label{lem:EdgeReduction}
	Given any graph $G = (V, E)$ on $n$ vertices and any $c \geq 0$,
	we can find in $O(cm)$ time a graph $H$ on the same
	$n$ vertices, but with at most $c(n - 1)$ edges,
	such that $G$ and $H$ are $(V, c)$-equivalent.
\end{lemma}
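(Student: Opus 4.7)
The plan is to construct $H$ by the classical Nagamochi--Ibaraki layered spanning-forest procedure. Set $G_0 := G$, and for $i = 1, 2, \ldots, c$ let $F_i$ be any spanning forest of $G_{i-1}$ (computed in $O(n + m)$ time via BFS or DFS), then set $G_i := G_{i-1} \setminus F_i$. Output $H := F_1 \cup F_2 \cup \cdots \cup F_c$ on the vertex set $V$. Since a spanning forest of a multigraph contains at most $n-1$ edges (parallel edges form $2$-cycles and hence cannot both be present), we have $|E(H)| \leq c(n-1)$, and the total running time is $O(c(n+m)) = O(cm)$ (after trivially handling the sparse case $m < n$, in which case we may simply take $H = G$).

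For $(V,c)$-equivalence, $E(H) \subseteq E(G)$ immediately gives $\mincut_H(A, B) \leq \mincut_G(A, B)$ for all disjoint $A, B \subseteq V$. The nontrivial direction reduces to the following cut inequality: for every cut $C = E_G(S, V \setminus S)$,
\[
|C \cap E(H)| \;\geq\; \min(|C|, c).
\]
This suffices because every cut separating $A$ from $B$ in $H$ comes from some $(S, V \setminus S)$ separating $A$ from $B$ in $G$, which by the inequality contributes at least $\min(|E_G(S,V\setminus S)|, c) \geq \mincut^c_G(A,B)$ edges to $H$, giving $\mincut^c_H(A, B) \geq \mincut^c_G(A, B)$.

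I would prove the cut inequality by induction on $i \leq c$ with the invariant $|C \cap (F_1 \cup \cdots \cup F_i)| \geq \min(|C|, i)$. The base $i = 0$ is trivial. For the inductive step, either $|C \cap (F_1 \cup \cdots \cup F_{i-1})| \geq |C|$, in which case the invariant trivially persists, or some edge $(u, v) \in C$ survives in $G_{i-1}$; then $u$ and $v$ lie in the same connected component of $G_{i-1}$, so the spanning forest $F_i$ contains a $u$-$v$ path, which, by crossing the cut $(S, V \setminus S)$, contributes a new edge of $C$ to $F_i$. In this second case the previous count equals $\min(|C|, i-1) = i - 1$, so adding one new edge preserves the invariant. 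Taking $i = c$ yields the desired inequality.

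The only subtle point is that a single forest $F_i$ may absorb several edges of $C$ at once, so the naive ``one-edge-per-forest'' charging could conceivably exhaust $C$ prematurely and leave later forests with no edges of $C$ to contribute; the invariant above sidesteps this by demanding only one new edge per step, and only for as long as $C$ has not already been fully captured. All remaining bookkeeping---forest sizes, the subset relation, and the running-time accounting---is routine.
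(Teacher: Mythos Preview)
Your proof is correct and takes essentially the same approach as the paper: iteratively extract $c$ spanning forests (the Nagamochi--Ibaraki construction) and use the fact that each forest meets every nonempty cut of the remaining graph. The paper's proof is a two-line sketch of exactly this argument; your version spells out the induction on $i$ more carefully, and your treatment of the ``premature exhaustion'' subtlety via the invariant is a nice touch, but the underlying idea is identical.
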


\subsection{Contractions}
For a graph $G$ and an edge $e \in E(G)$, we let $G/e$ denote the graph 
obtained from $G$ by identifying the endpoints of $e$ as a single vertex; 
we say that we have \emph{contracted} the edge $e$.
The new vertex is marked as a terminal if at least
one of its endpoints was a terminal.
For a subset of edges $\Ehat \subseteq E$, we let $G/\Ehat$
denote the graph obtained from $G$ by contracting all edges in $\Ehat$.
For any vertex set $X \subseteq V$, we denote by $G/X$ the graph obtained from $G$ by contracting every edge in $G[X]$.

For multigraphs, minimum cuts are monotonically increasing under contractions.
\begin{lemma}
	\label{lem:CutMonotone}
	For any subset of vertices $V_1$ and $V_2$ in $V$,
	and any set of edges $\Ehat$, it holds that
	\[
	\mincut_G\left(V_1, V_2\right)
	\leq
	\mincut_{G/\Ehat}\left(V_1, V_2\right).
	\]
\end{lemma}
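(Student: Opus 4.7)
The plan is to take a minimum cut in the contracted graph $G/\Ehat$ and \emph{lift} it back to a cut in $G$ that separates $V_1$ from $V_2$ and uses no more edges. Let $\phi : V(G) \to V(G/\Ehat)$ denote the contraction map that sends each vertex of $G$ to the vertex of $G/\Ehat$ representing the connected component of $(V(G), \Ehat)$ containing it. First I would dispose of the degenerate case: if there exist $u \in V_1$ and $v \in V_2$ with $\phi(u) = \phi(v)$, then the images of $V_1$ and $V_2$ in $G/\Ehat$ intersect, so by the convention adopted in the paper $\mincut_{G/\Ehat}(V_1,V_2) = \infty$, and the claimed inequality is trivial.

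Otherwise $\phi(V_1)$ and $\phi(V_2)$ are disjoint, so there is a finite minimum cut $(S, V(G/\Ehat)\setminus S)$ in $G/\Ehat$ with $\phi(V_1) \subseteq S$ and $\phi(V_2) \subseteq V(G/\Ehat)\setminus S$. Define the preimage partition $S' = \phi^{-1}(S)$ of $V(G)$. By construction $V_1 \subseteq S'$ and $V_2 \subseteq V(G) \setminus S'$, so $E_G(S', V(G)\setminus S')$ is a valid cut separating $V_1$ from $V_2$ in $G$, and hence $\mincut_G(V_1,V_2) \leq |E_G(S', V(G)\setminus S')|$.

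The core step is to argue $|E_G(S', V(G)\setminus S')| = |E_{G/\Ehat}(S, V(G/\Ehat)\setminus S)|$. This follows from the fact that contraction sets up a natural bijection between non-loop edges of $G/\Ehat$ and edges of $G$ whose endpoints lie in distinct components of $(V(G), \Ehat)$: an edge $e = (u,v) \in E(G)$ contributes to the cut in $G/\Ehat$ exactly when $\phi(u) \in S$ and $\phi(v) \notin S$ (or vice versa), which by definition of $S'$ is equivalent to $u \in S'$ and $v \notin S'$. Edges of $G$ whose endpoints lie in the same component of $(V(G),\Ehat)$ become self-loops in $G/\Ehat$ and simultaneously fail to cross $(S', V(G)\setminus S')$ in $G$, so they contribute to neither side of the equality. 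Combining the two inequalities yields the lemma.

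There is no real obstacle here; the only subtlety is being careful with the convention on $\mincut$ when $V_1$ and $V_2$ get identified under contraction, and with the bookkeeping that establishes the edge bijection in multigraphs (so that parallel edges are correctly counted on both sides).
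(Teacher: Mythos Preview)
Your argument is correct and is the standard way to verify this monotonicity fact. The paper itself does not supply a proof for this lemma; it states it as a basic observation about contractions (``For multigraphs, minimum cuts are monotonically increasing under contractions'') and moves on. Your write-up is exactly the natural justification one would give: lift a minimum cut in $G/\Ehat$ to $G$ via the preimage of the contraction map, observe that the crossing edges are in bijection, and handle the degenerate case where $V_1$ and $V_2$ collide under contraction. Nothing is missing.
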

All our mimicking networks in Theorem \ref{thm:Main} are produced by contracting edges of $G$.
\section{Overview of our Approach}
\label{sec:overview}
In this section, we give an overview for our proof of Theorem~\ref{thm:Main} Part~\ref{part:Main1}. We first present our contraction-based approach to construct \sparsifiers with \optimal edges, and then show how to generically convert contraction-based approaches into efficient algorithms.

\paragraph{Existence of \sparsifiers with \optimal edges.}
Recall that, in our setup, we have a graph $G$ with a set of $k$ terminals $\tset \subseteq V$, and wish to construct a graph $H$ with \optimal edges which is $(\tset,c)$-equivalent to $G$.
Our algorithm constructs $H$ by contracting edges of $G$ whose contraction does not affect the terminal cuts of size at most $c$.
To find these \emph{non-essential edges}, we intuitively perform a recursive procedure to identify \emph{essential edges}, i.e., edges that are involved in terminal cuts of size at most $c$. After finding this set of essential edges in $G$, we contract all remaining edges.

At a high level, this recursive procedure finds a ``small cut" in $G$, marks these edges as essential, and recurses on both halves. We formalize this notion of small cut through the definition of well-linkedness, variations of which have seen use throughout flow approximation algorithms \cite{Chuzhoy12,RackeST14}. Here, we introduce a thresholded version of well-linkedness.\footnote{There are two notions of well-linkedness in the literature: edge linkedness and vertex linkedness. Here, our work focuses on edge linkedness. For discussions and definitions of vertex linkedness, we refer the readers to~\cite{reed1997tree}}
\begin{definition}
\label{def:welllinked}
For a graph $G$, we call a subset $X \subseteq V$ \emph{connectivity-$c$ well-linked}
if for every bipartition $(A, B)$ of $X$, we have 
$|E_G(A,B)| \geq \min ( |\partial A \cap \partial X|, |\partial B\cap \partial X|, c)$.
\end{definition}

If a bipartition $(A, B)$ of $X$ satisfies $|E_G(A,B)| < \min ( |\partial A \cap \partial X|, |\partial B\cap \partial X|, c)$, we say that $E_G(A,B)$ is a \emph{violating cut}, as it certifies that $X$ is not \cwell. In this way, a violating cut corresponds to the ``small cut" in $G$ whose edges we mark as essential. Conversely, we show in Lemma \ref{lem:ContractWellLinked} that \emph{all edges} inside a \cwell set are non-essential, i.e., may be freely contracted.

Our full recursive algorithm is as follows. We maintain a partition of $V \bs \tset = X_1 \cupdot X_2 \cupdot \dots \cupdot X_p$, where $p$ denotes the number of pieces, and track the potential function $\sum_{i=1}^p |\partial(X_i)|$. Initially, we let there be a single piece $X = V \bs \tset$, so that the potential value is $|\partial X| = |\partial(V \bs \tset)| \le kc$, by our assumption in Section \ref{sec:Preliminaries} that all terminals have degree at most $c$. We recursively refine the partition until each $X_i$ is either \cwell or $|\partial(X_i)| \le 2c-1$. 
More precisely, if $|\partial(X_i)| \ge 2c$ but $X_i$ is not \cwell, let $E_G(A, B)$ be a violating cut of $X_i$ for a bipartition $(A, B)$ of $X_i$; we then remove $X_i$ and add $A, B$ to our partition. After this partitioning process terminates, the well-linked pieces among $X_1,X_2,\cdots,X_p$ may be contracted as discussed. For the pieces with $|\partial(X_i)| \le 2c-1$ we make tricky manipulation on the boundary edges $\partial(X_i)$ as in Lemma~\ref{lem:boundaryAssumption} and then work on the line graph of $X_i$.
Applying a kernelization result (see Lemma~\ref{lem:kw12}), which develops from matroid theory (gammoid in particular) and the representative sets lemma (see Theorem~\ref{thm:fomin}), to the line graph gives rise to a fruitful result (see Lemma \ref{lem:c3}) which is more tailored to our edge-cut problem.
It allows us to contract those pieces down to $O(c^3)$ edges and maintain $(\tset,c)$-equivalence.

It suffices to argue that the number of pieces $p$ in the partition is at most $O(kc)$ at the end, so that our total edge count is $O(kc \cdot c^3) = O(kc^4).$ To show this, note that by Definition \ref{def:welllinked}, for a violating cut $E_G(A, B)$ of $X$, we have that $\max(|\partial A|, |\partial B|) \le |\partial X|-1$ and $|\partial A|+|\partial B| \le 2(c-1) + |\partial X|$. The former shows that our recursive procedure terminates, and combining the latter with our potential function bounds the number of pieces at the end. A more formal analysis is given in Section \ref{ssec:Existence}.

Note that the only non-constructive part of the above proof is the assumption that we can find a violating cut. 
However, we believe that an algorithm with very efficient running time is unlikely to exist, as this seems like a non-trivial instance of the non-uniform sparsest cut problem  (in Appendix~\ref{sec:CutFinding}, we present an algorithm with running time $2^{O(c^2)} k^2m$, which could be of independent interest).
Hence, we present another procedure that does not rely on computing a violating cut.

\begin{sloppypar}
\paragraph{Efficient algorithm for constructing contraction-based \sparsifiers.} 
Our above analysis, in fact, shows that all but \optimal edges of $G$ may be contracted while still giving a graph which is $(\tset,c)$-equivalent to $G$ (see Theorem \ref{thm:kc4}).
\end{sloppypar}

A natural high level approach for an algorithm would be to go through the edges $e$ of $G$ sequentially and check whether contracting $e$ results in a $(\tset,c)$-equivalent graph. If so, we contract $e$, and otherwise, we do not. Our analysis shows that at most \optimal edges will remain at the end, and in fact that proving a better existential bound improves the guarantees of such an algorithm.

Unfortunately, we do not know how to decide whether contracting an edge $e$ maintains all $(\tset,c)$-cuts even in polynomial time. To get around this, we enforce particular structure on our graph by performing an \emph{expander decomposition}. Expanders, defined formally in Definition \ref{def:expander}, are governed by their conductance $\phi$, and satisfy that any cut of size at most $c$ has at most $c\phi^{-1}$ vertices on the smaller side. For a fixed parameter $\phi$, \cite{SaranurakW19} have given an efficient algorithm to remove $O(m\phi\log^3 n)$ edges from $G$ such that all remaining components are expanders with conductance at least $\phi$ (see Lemma \ref{lem:ExpanderDecompose}). We now mark all the removed edges as essential, delete them, and mark their endpoints as additional terminals. Corollary \ref{cor:RemoveEdges} and Lemma \ref{lem:Independent} show that it suffices to work separately with each remaining component, which are guaranteed to be expanders with conductance at least $\phi$. Note that the total number of terminals is now $k + O(m\phi \log^3 n)$.

In order to check each edge $e$ and decide whether it can be safely contracted, we first enumerate all cuts in the graph with at most $c$ edges, of which there are at most $n(c\phi^{-1})^{2c}$, using the fact that the small side of any cut of size at most $c$ has at most $c\phi^{-1}$ vertices in a graph of conductance $\phi$ (see Lemma \ref{lem:EnumerateCuts}). For each cut, we find the induced terminal partition, and all involved edges. This allows us to find the minimum cut value for any terminal partition, as long as it is at most $c$.
Now, for an edge $e$ we check for all minimum cuts of size at most $c$ that it is involved in, whether there is another minimum cut separating terminals that does not involve $e$. If so, $e$ may be contracted, and otherwise, it cannot.
In case we contract $e$, we delete the minimum cuts containing $e$ in the enumeration.
Since cuts are monotone under contraction (see Lemma~\ref{lem:CutMonotone}) and a cut in $G/e$ is also a cut in $G$ (see Lemma~\ref{lem:ContractionCut}), the remaining minimum cuts in the enumeration correspond to the minimum cuts of $G/e$.
Hence, we do not have to rebuild the set of all small cuts during the algorithm.
As there are at most $n(c\phi^{-1})^{2c}$ total cuts of size at most $c$, this algorithm may be executed in time $\O(nc(c\phi^{-1})^{2c})$ using some standard data structures.

Finally, we discuss how to make our algorithm efficient, even though the total number of terminals increased to $k + O(m\phi \log^3 n)$ after the expander decomposition. We set $\phi^{-1} = O(c^4\log^3 n)$, and note that the number of edges in our \sparsifier for $G$ is $O(kc^4 + mc^4\phi \log^3 n) \le m/2$ as long as $m$ is a constant factor larger than $kc^4.$ Now we repeat this procedure until our \sparsifier has \optimal edges, which requires $O(\log m)$ iterations. Details are given in Section \ref{ssec:Algorithm}.

\section{Existence and Algorithm for Sparsifiers with \optimal edges}

\label{sec:MainSection}
We first show the existence of a \sparsifier with \optimal edges in Section~\ref{ssec:Existence}, based on contractions of \cwell sets and replacement of sets with sparse boundary.
Then, in Section~\ref{ssec:Algorithm}, we design a $O(m(c\log n)^{O(c)})$ time algorithm to find a \sparsifier whose size matches the best guarantee achievable via contractions.

\subsection{Existence of Sparsifiers with \optimal edges via Contractions}
\label{ssec:Existence}

Given a graph $G$ and $k$ terminals $\tset$, our construction of a \sparsifier with \optimal edges leverages a recursion scheme, where we maintain a partition of the vertices $X = V\bs\tset$, and track the total number of boundary edges of the partition as a potential function.
This approach naturally introduces the notion of well-linkedness, a standard tool for studying flows and cuts, in order to refine the partition.
Additionally, we must stop recursion at sets with sufficiently sparse boundary to guarantee that the recursion terminates without branching exponential times.
The recursion results in a decomposition of $V\setminus \tset$ into at most $kc$ clusters, each of which is either a \cwell set or a set with sparse boundary.
Then we contract the \cwell sets and change the sets with sparse boundary into equivalent \sparsifiers with $O(c^3)$ edges.
This procedure results in the following theorem.
\begin{theorem}
	\label{thm:kc4}
	For a graph $G$ with $k$ terminals, there is a subset $E'$ of $E(G)$ such that the size of $E'$ is $O(kc^4)$ and the graph with all edges except $E'$ contracted, $G / (E\backslash E')$, is $(\tset, c)$-equivalent to $G$.
\end{theorem}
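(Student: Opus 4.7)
The plan is to construct $E'$ through a recursive partitioning of $V\setminus\tset$ into pieces that are either connectivity-$c$ well-linked or have boundary at most $2c-1$, followed by a per-piece reduction. I will initialize the partition to the single piece $X = V \setminus \tset$, whose boundary has size at most $kc$ by the preprocessing that forces every terminal to have degree at most $c$. Then, as long as some piece $X_i$ in the current partition satisfies $|\partial X_i| \ge 2c$ and is not connectivity-$c$ well-linked, Definition~\ref{def:welllinked} guarantees a bipartition $(A,B)$ of $X_i$ with $|E_G(A,B)| < \min(|\partial A \cap \partial X_i|, |\partial B \cap \partial X_i|, c)$, and I replace $X_i$ in the partition by $A$ and $B$. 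The process terminates because each split produces pieces whose boundary is strictly smaller than that of the parent.

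To bound the number of pieces $p$ at termination, I track the potential $\Phi = \sum_i |\partial X_i|$. When a piece is split along a violating cut, the identity $|\partial A| + |\partial B| = |\partial X_i| + 2|E_G(A,B)|$ together with $|E_G(A,B)| < c$ shows that $\Phi$ grows by at most $2(c-1)$ per split. Combined with the initial bound $\Phi \le kc$ and with the fact that only pieces contributing at least $2c$ to $\Phi$ are eligible to be split further, a standard amortization yields $p = O(kc)$ and the final potential $\Phi = O(kc^2)$.

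For each of the final pieces I will use Corollary~\ref{cor:RemoveEdges} and Lemma~\ref{lem:Independent} to build an equivalent \sparsifier locally, after removing the boundary edges of the piece and promoting their endpoints to terminals. When $X_i$ is connectivity-$c$ well-linked, the contraction lemma for well-linked sets (referenced in the overview) lets me contract all of its internal edges while preserving $(\tset,c)$-equivalence, so no internal edge of such a piece enters $E'$. When $|\partial X_i| \le 2c-1$, I first apply the boundary-normalization lemma so the piece has a canonical boundary structure, and then apply the gammoid-based kernelization of Kratsch (a consequence of the representative sets lemma of Fomin et al.) to replace the piece with an equivalent \sparsifier on $O(c^3)$ edges. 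The set $E'$ is the union of the reduced edges from every sparse-boundary piece together with all boundary edges $\bigcup_i \partial X_i$.

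Putting the counts together, the boundary contribution is $O(\Phi) = O(kc^2)$ edges, and the $O(kc)$ reduced pieces contribute $O(c^3)$ edges each, giving $|E'| = O(kc \cdot c^3) + O(kc^2) = O(kc^4)$. The main obstacle is the amortized bound $p = O(kc)$: one must verify that the threshold $2c$ in the termination condition is chosen so that the two stopping criteria (well-linked versus sparse boundary) balance against the $O(c)$ potential increase per split. A secondary point is to ensure the per-piece $O(c^3)$ bound from the gammoid kernelization composes correctly with Corollary~\ref{cor:RemoveEdges} and Lemma~\ref{lem:Independent} across pieces that share boundary edges, which is a routine check once each piece is viewed with its boundary endpoints as additional terminals.
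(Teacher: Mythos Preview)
Your proposal is correct and follows essentially the same approach as the paper: recursively split $V\setminus\tset$ along violating cuts until every piece is either connectivity-$c$ well-linked (and hence contractible by Lemma~\ref{lem:ContractWellLinked}) or has boundary at most $2c-1$ (and hence reducible to $O(c^3)$ edges via Lemma~\ref{lem:c3}), then assemble the pieces with Corollary~\ref{cor:RemoveEdges} and Lemma~\ref{lem:Independent}. The only minor difference is in the amortization: your potential $\Phi=\sum_i|\partial X_i|$ increases under splits, so it does not directly bound $p$; the paper instead tracks the shifted quantity $\sum_i\max(|\partial X_i|-2c+1,0)$ (Lemma~\ref{lem:DecreasingInvariant}), which strictly decreases by at least $1$ per split and hence bounds the number of splits by its initial value $\le kc$. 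Your ``standard amortization'' remark is correct in spirit, but to make it go through you should shift $\Phi$ by $(2c-1)$ per piece, which is exactly the paper's invariant.
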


To this end, we elaborate the procedure with details in Section~\ref{sssec:description}. To handle sets with sparse boundary, we use a known kernelization result based on matroid theory and the representative sets lemma.
Unfortunately, these results mostly discuss vertex cuts, so in Section~\ref{sssec:reduction} we build a gadget to transform a given graph, from which we wish to obtain a \sparsifier, into a new graph whose minimum \textit{vertex} cut of any partition of terminals corresponds to a minimum \textit{edge} cut of the corresponding partition of terminals in the original graph.

\subsubsection{Existence Proof: \textsc{PolySizedcNetwork}}
\label{sssec:description}

As discussed in Section~\ref{sec:overview}, it is desirable to find \cwell sets, because they can be contracted without changing connectivity.
Its proof is deferred to Appendix \ref{proofs:ContractWellLinked}.
\begin{lemma}
	\label{lem:ContractWellLinked}
	Let $X$ be a \cwell set in $G$, and $\tset$
	be terminals disjoint with $X$ (i.e.\ $X \cap \tset = \emptyset$).
	Then $G/X$ is $(\tset, c)$-equivalent to $G$. 
\end{lemma}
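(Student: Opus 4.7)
The plan is to prove both inequalities of $\mincut^c_G(\tset_1, \tset\setminus\tset_1) = \mincut^c_{G/X}(\tset_1, \tset\setminus\tset_1)$ for every $\tset_1 \subseteq \tset$. One direction, $\mincut^c_G \le \mincut^c_{G/X}$, is an immediate consequence of Lemma~\ref{lem:CutMonotone} (monotonicity of cuts under contraction preserved by thresholding). For the reverse direction it suffices to handle the case $k^* := \mincut_G(\tset_1, \tset\setminus\tset_1) < c$, since if $k^* \ge c$ then both thresholded quantities equal $c$.

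Fixing a minimum cut $(S, V\setminus S)$ in $G$ of size $k^*$, I would study how it interacts with $X$ by writing $A = S\cap X$, $B = X \setminus S$, $Y_1 = S \setminus X$, $Y_2 = (V\setminus S)\setminus X$. Since $X \cap \tset = \emptyset$, the terminals split as $\tset_1 \subseteq Y_1$ and $\tset \setminus \tset_1 \subseteq Y_2$. Accounting the crossing edges of $(S, V\setminus S)$ in $G$ gives
\[
k^* = |E_G(A,B)| + |E_G(A,Y_2)| + |E_G(Y_1,B)| + |E_G(Y_1,Y_2)|.
\]
In $G/X$, the contracted vertex $x$ represents all of $X$, so there are two natural candidate terminal cuts separating $\tset_1$ from $\tset \setminus \tset_1$: either $Y_1$ (placing $x$ with $Y_2$) of size $|E_G(A,Y_1)| + |E_G(Y_1,B)| + |E_G(Y_1,Y_2)|$, or $Y_1 \cup \{x\}$ (placing $x$ with $Y_1$) of size $|E_G(A,Y_2)| + |E_G(B,Y_2)| + |E_G(Y_1,Y_2)|$. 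Showing that one of these candidates has size at most $k^*$ will conclude the proof.

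The crux is to invoke \cwell ness on the bipartition $(A, B)$ of $X$. Since $|E_G(A,B)| \le k^* < c$, the threshold $c$ is not binding and Definition~\ref{def:welllinked} yields
\[
|E_G(A,B)| \ge \min\bigl(\,|\partial A \cap \partial X|,\; |\partial B \cap \partial X|\,\bigr),
\]
where $|\partial A \cap \partial X| = |E_G(A,Y_1)| + |E_G(A,Y_2)|$ and symmetrically for $B$. By swapping the two sides of $S$ if necessary, I assume the minimum is attained on the $A$-side, so $|E_G(A,B)| \ge |E_G(A,Y_1)| + |E_G(A,Y_2)| \ge |E_G(A,Y_1)|$. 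Substituting this into the size of the $Y_1$-cut of $G/X$ and comparing to $k^*$ shows that the $Y_1$-cut has size at most $k^*$, producing the required cut in $G/X$.

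The main obstacle I anticipate is purely bookkeeping: identifying the correct pair of candidate cuts in $G/X$ and aligning the six edge bundles with the well-linkedness inequality. Degenerate cases (such as $A$ or $B$ empty, i.e., $X$ lying entirely on one side of $S$) are handled uniformly because the well-linkedness inequality still holds and one of the two candidate cuts then coincides with the image of $(S, V\setminus S)$ under contraction and has exactly the same size.
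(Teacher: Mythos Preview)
Your argument is correct. The edge--accounting is right, the invocation of Definition~\ref{def:welllinked} is valid because $|E_G(A,B)|\le k^*<c$ forces the threshold $c$ to be inactive, and the symmetry between the two candidate cuts handles both cases of which side of the well-linkedness minimum is attained. The degenerate cases $A=\emptyset$ or $B=\emptyset$ are indeed covered, since then one of the two candidates is literally the image of $(S,V\setminus S)$.

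Your route is genuinely different from the paper's. The paper argues the nontrivial inequality $\mincut^c_{G/X}\le\mincut^c_G$ in its dual form: it takes $\ell=\mincut^c_{G/X}$ edge-disjoint paths in $G/X$ and \emph{lifts} them to $\ell$ edge-disjoint paths in $G$, using well-linkedness of $X$ to route the at most $c$ paths that pass through the contracted vertex $v_X$ back through $G[X]$. This yields $\mincut_G\ge\ell$. You instead stay on the cut side: you take a minimum cut in $G$ and exhibit a cut of no larger size in $G/X$ by pushing all of $X$ to one side, using the well-linkedness inequality to bound the extra boundary picked up. Your approach is more elementary in that it avoids max-flow/min-cut duality and the (implicit) routing lemma ``$|E_{in}|=|E_{out}|\le c$ boundary edges of a \cwell set can be matched by edge-disjoint paths inside $X$''; the paper's approach, on the other hand, makes the role of well-linkedness as a routing guarantee more transparent and is closer in spirit to how well-linked sets are used elsewhere in the literature.
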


The recursive procedure \textsc{PolySizedcNetwork} takes a subset $X$ of $V\setminus \tset$ and bisects it if there exists a violating cut.
Since finding a violating cut $E_G(A, B)$ of $X$ guarantees that $|\partial A|, |\partial B| < |\partial X|$, the recursion ends up reaching the base case in which the number of boundary edges is at most $2c-1$.
If there are no violating cuts, contracting $X$ also halts the recursion.

\begin{figure}
	\begin{algbox}
		$H = \textsc{PolySizedcNetwork}(G)$
		
		\underline{Input}: undirected unweighted multi-graph $G$.\\
		\underline{Output:} a \sparsifier $H$.
		\bi
			\item[ ] If $|\partial G| \leq 2c - 1$:
			\bi
				\item[-] Return, based on Lemma~\ref{lem:c3}, a $(\tset', c)$-equivalent \sparsifier with terminals $\tset'$, where $\tset'$ is the set of vertices incident to boundary edges $\partial G$.
			\ei
			\item[ ] Else:
			\bi
				\item[-] Find a violating cut $(V_1, V_2)$ if it exists and then, return \textsc{PolySizedcNetwork}($G[V_1]$) and \textsc{PolySizedcNetwork}($G[V_2]$).
				\item[-] If no violating cut exists, contract $G$ to a single vertex.
			\ei
			\item[ ] Return a \sparsifier $H$.
		\ei
	\end{algbox}
	\vspace{-6mm}
	\caption{Pseudocode for \textsc{PolySizedcNetwork}}
	\label{fig:polysizednetwork}
\end{figure}

Hence, $\textsc{PolySizedcNetwork}(V\setminus \tset)$ just partitions $V\setminus \tset$ into pieces being either \cwell sets or sets whose number of boundary edges is at most $2c-1$.
\sloppy
The contraction of a \cwell set for $(\tset, c)$-equivalency is justified by Lemma~\ref{lem:ContractWellLinked}.
Also, Corollary~\ref{cor:RemoveEdges} and Lemma~\ref{lem:Independent} justify the replacement of a set $X$ with no terminals by a $(\tset', c)$-equivalent graph, where we introduce boundary vertices in $X$ as the tentative terminals $\tset'$.

The number of edges in a \sparsifier returned by the procedure depends on 
\begin{inparaenum}[(i)]
\item the number of pieces, and \label{sec41expl:1}
\item how small equivalent sparsifiers a subgraph with $O(c)$ boundary edges has. \label{sec41expl:2}
\end{inparaenum}

For \eqref{sec41expl:1}, the number of smaller pieces being either \cwell or $|\partial X|\leq 2c -1$ simply matches with the number of branching during the recursion induced by the existence of a violating cut.
It is bounded by the following decreasing invariant, which decreases by at least $1$ for each branching.

\begin{lemma}
	\label{lem:DecreasingInvariant}
	When \textsc{PolySizedcNetwork} splits a given $X$ into $\{X_i\}_{i=1}^l$, $\sum_{i\leq l} \max( |\partial (X_i)| - 2c + 1, 0)$ is a decreasing invariant with respect to separation induced by a violating cut for some $X_i$.	
\end{lemma}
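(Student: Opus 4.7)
The plan is to show that each split strictly decreases the invariant $\Phi(\{X_i\}) \defeq \sum_i \max(|\partial X_i| - 2c + 1, 0)$ by at least $1$. Since only the contribution from the split piece $X_j$ changes, it suffices to compare $\max(|\partial X_j| - 2c + 1, 0)$ against $\max(|\partial A| - 2c + 1, 0) + \max(|\partial B| - 2c + 1, 0)$, where $(A,B)$ is the violating bipartition of $X_j$. Because \textsc{PolySizedcNetwork} only attempts to split a piece with $|\partial X_j| \ge 2c$, the first of these equals $|\partial X_j| - 2c + 1$.

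First I would collect two elementary identities about the boundary of $X_j$ after the split. The boundary $\partial X_j$ decomposes as $(\partial A \cap \partial X_j) \,\cupdot\, (\partial B \cap \partial X_j)$, while $\partial A = (\partial A \cap \partial X_j) \,\cupdot\, E_G(A,B)$ and symmetrically for $\partial B$. Adding these gives $|\partial A| + |\partial B| = |\partial X_j| + 2|E_G(A,B)|$. Moreover, since $(A,B)$ is a violating cut, $|E_G(A,B)| \le c - 1$ and also $|E_G(A,B)| < |\partial B \cap \partial X_j|$, so $|\partial A| = |\partial A \cap \partial X_j| + |E_G(A,B)| < |\partial A \cap \partial X_j| + |\partial B \cap \partial X_j| = |\partial X_j|$; symmetrically $|\partial B| < |\partial X_j|$.

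Then I would finish by a three-way case analysis on whether $|\partial A|$ and $|\partial B|$ exceed $2c - 1$. If both $\ge 2c$, the change in $\Phi$ equals $|\partial A| + |\partial B| - 2c + 1 - |\partial X_j| \le (|\partial X_j| + 2(c-1)) - 2c + 1 - |\partial X_j| = -1$, using the summation identity and $|E_G(A,B)| \le c - 1$. If exactly one side, say $A$, is $\ge 2c$, the change is $|\partial A| - |\partial X_j| \le -1$ by the strict-inequality fact $|\partial A| < |\partial X_j|$. If neither side reaches $2c$, both new max-terms vanish and the change is $-(|\partial X_j| - 2c + 1) \le -1$, since $|\partial X_j| \ge 2c$. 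In every case $\Phi$ drops by at least $1$.

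There is no real obstacle here; the argument is purely arithmetic. The only subtlety is careful bookkeeping: partitioning $\partial A$ and $\partial B$ into their ``old'' portion inherited from $\partial X_j$ and their ``new'' portion coming from the cut edges $E_G(A,B)$, so that the two violating-cut inequalities can be applied precisely to conclude $|\partial A|, |\partial B| < |\partial X_j|$.
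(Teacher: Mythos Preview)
Your proposal is correct and follows essentially the same argument as the paper: both decompose $\partial A$ and $\partial B$ into the inherited portion of $\partial X_j$ plus the new cut edges $E_G(A,B)$, derive $|\partial A| + |\partial B| = |\partial X_j| + 2|E_G(A,B)|$ and $|\partial A|, |\partial B| < |\partial X_j|$ from the violating-cut inequalities, and then compare the old and new contributions to the potential. The only cosmetic difference is that the paper handles the ``at least one side small'' situation in a single sentence, whereas you split it into two explicit subcases with arithmetic; the content is the same.
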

\begin{proof}
	The number of pieces only increases when a piece $X$ is divided into $A$ and $B$ by finding a violating cut.
	Denote $k = |\partial X|, k_1 = |\partial A \cap \partial X|, k_2 = |\partial B\cap \partial X|$,  and $l = |E_G(A, B)|$.
	Note that $|\partial A| = k_1 + l$ and $|\partial B| = k_2 + l$.
	From the definition of a violating cut, we have $l < \min (k_1, k_2, c)$ and clearly, $|\partial A|, |\partial B| < |\partial X|$.
	When $X$ is divided, the term $|\partial X| - 2c + 1$ is replaced by $\max( |\partial A| - 2c + 1, 0) + \max( |\partial B| - 2c + 1, 0)$.
	If either $A$ or $B$ happens to have less than $2c - 1$ boundary edges, the latter term is strictly smaller.
	Now, let $|\partial A|, |\partial B| \geq 2c -1$.
	Then,
	\begin{align*}
	|\partial A| - 2c + 1 + |\partial B| - 2c + 1 &= k_1 + l - 2c + 1 + k_2 + l - 2c + 1\\
	&= k - 2c + 1 + 2(l - c) + 1 < k - 2c + 1.
	\end{align*}
\end{proof}

The above lemma says that the number of branching is bounded by $|\partial X|$ since the decreasing invariant begins with $|\partial X|$.
By the assumption that terminals have degree at most $c$, we have $|\partial(V\setminus \tset)|$ pieces, which is upper bounded by $kc$.

For (\ref{sec41expl:2}), due to a tricky preprocessing on boundary edges as in Lemma~\ref{lem:boundaryAssumption}, we may assume that a set with $O(c)$ boundary edges can be viewed as a set with $O(c)$ tentative terminals, each of which has degree $1$.
This preprocessing gives rise to the following lemma with its proof presented in Section~\ref{sssec:reduction}.

\begin{lemma}
	\label{lem:c3}
	Let $G = (V, E)$ be a graph with a set $\tset$ of $O(c)$ terminals and each terminal have degree $1$.
	There is a subset $E'$ of $E$ with $|E'| = O(c^3)$ and $G/(E\bs E')$ is a \sparsifier for $G$.
\end{lemma}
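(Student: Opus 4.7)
The plan is to reduce the edge-cut preservation problem on $G$ to a vertex-cut preservation problem on the line graph $L(G)$, and then invoke the kernelization result Lemma~\ref{lem:kw12} (based on the gammoid matroid together with Theorem~\ref{thm:fomin}, the representative sets lemma) to obtain a polynomial-size set of essential edges. The degree-$1$ assumption on terminals is precisely what enables a clean correspondence between edge cuts in $G$ and vertex cuts in $L(G)$.

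Concretely, I would first construct $L(G)$, whose vertices are the edges of $G$ with two vertices adjacent iff the corresponding edges share an endpoint. For each terminal $t \in \tset$, let $e_t$ denote its unique incident edge and $v_{e_t}$ the corresponding vertex of $L(G)$; the set $\tset^L \defeq \{v_{e_t} : t \in \tset\}$ has $O(c)$ vertices. I would then argue the following bijection: for any bipartition $\tset = \tset_1 \cupdot \tset_2$, a subset $F \subseteq E(G)$ of size at most $c$ separates $\tset_1$ from $\tset_2$ in $G$ iff $\{v_e : e \in F\} \subseteq V(L(G))$ separates $\{v_{e_t} : t \in \tset_1\}$ from $\{v_{e_t} : t \in \tset_2\}$ in $L(G)$. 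Hence $(\tset, c)$-edge-cuts in $G$ correspond exactly to $(\tset^L, c)$-vertex-cuts in $L(G)$.

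Applying Lemma~\ref{lem:kw12} to $(L(G), \tset^L, c)$ then yields a set $U \subseteq V(L(G))$ of size $O(|\tset^L| \cdot c^2) = O(c^3)$ such that for every bipartition of $\tset^L$, some minimum vertex cut of size at most $c$ is contained in $U$. Translating back, $U$ corresponds to an edge subset $E' \subseteq E(G)$ of size $O(c^3)$ with the property that every minimum $(\tset, c)$-edge-cut of $G$ admits a representative contained entirely in $E'$. To finish, I would combine this with Lemma~\ref{lem:CutMonotone}: contractions never decrease cut values, so $\mincut^c$ in $G/(E \setminus E')$ is at least that in $G$; and since every small minimum cut survives in $E'$, it remains a valid cut in $G/(E \setminus E')$ of the same size, giving the reverse inequality and thus $(\tset, c)$-equivalence.

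The main obstacle will be making the bijection between edge cuts in $G$ and vertex cuts in $L(G)$ fully rigorous, since the standard vertex-cut formulation does not allow terminal vertices themselves to be cut, whereas in $G$ the terminal edge $e_t$ can certainly appear in a minimum edge cut. I would resolve this either by (i) subdividing each terminal edge with a short auxiliary path so that $v_{e_t}$ need never appear in a minimum vertex cut of $L(G)$, or (ii) working with the mildly relaxed vertex-cut definition used in the gammoid framework underlying Lemma~\ref{lem:kw12}, where terminal vertices are allowed to be included in a cut. The degree-$1$ hypothesis makes either workaround routine, since $v_{e_t}$ has a very simple local neighborhood (a clique on the edges incident to the non-terminal endpoint of $e_t$), and whether or not $v_{e_t}$ is ``in'' the cut on the $L(G)$ side maps cleanly to whether or not $e_t$ is in the cut on the $G$ side.
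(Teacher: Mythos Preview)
Your proposal follows essentially the same approach as the paper: pass to the line graph, invoke the Kratsch--Wahlstr\"om kernel (Lemma~\ref{lem:kw12}) on $O(c)$ terminals to extract $O(c^3)$ essential vertices, translate back to an edge set $E'$, and conclude $(\tset,c)$-equivalence by cut monotonicity under contraction. Two small remarks: first, Lemma~\ref{lem:kw12} as stated gives $O(|X|^3)$ rather than $O(|X|\cdot c^2)$ (there is no separate $c$ parameter), though with $|X|=O(c)$ the bound is the same; second, the paper resolves the terminal-in-cut issue you flag by attaching a fresh pendant copy $t'$ to each $v(e_t)$ in $L(G)$, applying Lemma~\ref{lem:kw12} with the pendants as terminals, and then swapping any pendant appearing in $Z$ for its unique neighbor $v(e_t)$---a concrete instantiation of your workaround~(i), and slightly cleaner than subdividing in $G$ before taking the line graph.
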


We can combine series of lemmas to show Theorem~\ref{thm:kc4}.

\begin{proofof}{Theorem \ref{thm:kc4}}
We show that \textsc{PolySizedcNetwork} returns a \sparsifier with \optimal edges for a graph $G$. First, applying Lemma \ref{lem:DecreasingInvariant} shows that the total number $p$ of pieces in the partition $V\setminus \tset = X_1 \cup X_2 \cup \cdots \cup X_p$ is at most $kc$, as $|\partial (V\setminus \tset)| \le kc$ by our assumption that all terminals have degree at most $c$.

To bound the total number of edges in the final sparsifier, we must analyze two contributions. First, the total number of boundary edges over all partition pieces is at most $\left|\bigcup_{i=1}^p (\partial X_i) \right| \le O(kc^2)$, as the total number of boundary edges may increase by $c$ each time we split a partition piece into two pieces, and there are at most $kc$ pieces. The other contribution is from the partition pieces $X_i$ with $|\partial (X_i)| \le 2c-1$. The total number of edges from this is at most $kc \cdot O(c^3) = O(kc^4)$ by applying Lemma \ref{lem:c3}.

To verify that the returned graph is indeed a \sparsifier, it suffices to apply Lemma \ref{lem:ContractWellLinked} to argue that we can contract well-linked pieces. Then we can use Corollary \ref{cor:RemoveEdges} to delete all boundary edges in $\bigcup_{i=1}^p \partial X_i$, and then use Lemma \ref{lem:Independent} and build a \sparsifier on each $X_i$ separately.
\end{proofof}

\subsection{Algorithm for the Optimal Sparsifiers: Contracting Non-Essential Edges}
\label{ssec:Algorithm}

We use many forms of graph partitioning and operations, such as adding and deleting edges among terminals (Lemma~\ref{lem:Smaller},~\ref{lem:AddEdges}, and Corollary~\ref{cor:RemoveEdges}), and connected components may be handled separately (Lemma \ref{lem:Independent}).
These observations form the basis of our divide-and-conquer scheme, which repeatedly deletes edges, adds terminals, and works on connected components of disconnected graphs.
Our approach in fact utilizes \textit{expander decomposition} elaborated in Section~\ref{sssec:EnumerateSmallCuts} to split a graph into several expanders.
Removing the inter-cluster edges, it sparsifies each expander by contracting non-essential edges in the expander, the contraction of each still preserves the value of a minimum cut up to $c$ between any partition of terminals.
Then it glues together all the sparsified expanders via the inter-cluster edges to obtain a \sparsifier.
This one pass reduces the number of edges by half.
Repeating several passes leaves \textit{essential} edges in the end, leading to the \sparsifier of the optimal size, which is currently \optimal.
\begin{theorem}
	\label{thm:sparsifyGraph}
	For a graph $G$ with $n$ vertices, $m$ edges, and $k$ terminals, there exists an algorithm which successfully finds a \sparsifier with \optimal edges in $O(m(c\log n)^{O(c)})$.
\end{theorem}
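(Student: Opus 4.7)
The plan is to realize the sketch in Section~\ref{sec:overview}: each ``pass'' combines an expander decomposition with an edge-by-edge contraction test that at least halves the edge count, and a logarithmic number of such passes drives the total size down to the $O(kc^4)$ existential bound of Theorem~\ref{thm:kc4}.

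For a single pass I would first invoke the expander decomposition of~\cite{SaranurakW19} with conductance parameter $\phi = \Theta(1/(c^4\log^3 n))$; this removes $O(m\phi\log^3 n)$ inter-cluster edges and leaves components of conductance at least $\phi$. By Corollary~\ref{cor:RemoveEdges} and Lemma~\ref{lem:Independent}, I mark the endpoints of the removed edges as new terminals, sparsify each component independently, and glue the removed edges back at the end. Inside one expander component I would enumerate all cuts of size at most $c$: by the conductance bound, the smaller side of any such cut has at most $c\phi^{-1}$ vertices, so a local-search enumeration produces at most $n(c\phi^{-1})^{2c}$ cuts in time $\widetilde O(n(c\phi^{-1})^{2c})$.

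Next I would sweep over the edges $e$ of the component, using the enumerated list to maintain, for each terminal bipartition, its current thresholded minimum cut value together with its witnessing cuts. Call $e$ \emph{non-essential} if for every terminal bipartition $(\tset_1,\tset\setminus\tset_1)$ whose minimum cut of size $\leq c$ contains $e$, some other minimum cut of the same size avoids $e$; a non-essential edge may be contracted without changing any $(\tset,c)$-cut. Upon contracting $e$ I delete from the enumeration every cut that crossed $e$. By Lemma~\ref{lem:CutMonotone} together with the observation that every cut of $G/e$ lifts to a cut of $G$, the surviving entries are exactly the $(\leq c)$-cuts of the contracted graph, so the list is never rebuilt. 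With standard incidence/indexing data structures, each per-edge check and update runs in $\widetilde O(c(c\phi^{-1})^{2c})$ amortized time, giving $O(m(c\log n)^{O(c)})$ for one pass.

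For the outer loop, Theorem~\ref{thm:kc4} applied component-wise guarantees that after the contraction phase the total edge count is $O(kc^4 + mc^4\phi\log^3 n)$, where the first term counts essential edges charged to original terminals and the second to new terminals produced by the decomposition. The chosen $\phi$ makes the second summand at most $m/4$, and once $m \geq 4kc^4$ the first is at most $m/4$ as well, so one pass takes $m$ to at most $m/2$. Iterating $O(\log m)$ times brings the edge count to $O(kc^4)$, and the geometric decay of $m$ keeps the total running time at $O(m(c\log n)^{O(c)})$. The main obstacle I anticipate is the cut-list maintenance: one must exhibit an invariant that the stored list is always exactly the set of $(\leq c)$-cuts of the currently contracted graph, bound the amortized cost of deletions when a single contraction invalidates many enumerated cuts, and verify that the terminal blow-up caused by the decomposition does not destroy the geometric halving. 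These points rest on the tight interplay between Lemma~\ref{lem:CutMonotone}, Corollary~\ref{cor:RemoveEdges}, and the conductance-based cut-counting bound.
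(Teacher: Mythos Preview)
Your proposal is correct and matches the paper's approach essentially step for step: the same expander decomposition with $\phi^{-1}=\Theta(c^4\log^3 n)$, the same cut enumeration bound $n(c\phi^{-1})^{2c}$, the same edge-by-edge contractibility test backed by an auxiliary structure that is updated (not rebuilt) via Lemma~\ref{lem:CutMonotone} and Lemma~\ref{lem:ContractionCut}, and the same outer geometric-halving loop appealing to Theorem~\ref{thm:kc4}. The paper packages the per-component work as a subroutine $\varphi$-\textsc{Sparsify} and proves the halving via a slightly more explicit induction on the iteration index, but the content is identical to what you wrote.
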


In fact, our algorithm guarantees a stronger property: If there exists a \sparsifier with $kf(c)$ edges that can be obtained by only contracting edges in $G$, for some function $f$, then our algorithm finds a \sparsifier with $O(kf(c))$ edges.
We present the proof in three parts. In Section~\ref{sssec:EnumerateSmallCuts} and \ref{sec: phi sparse}, we explain two sub-routines that are used in our algorithm. 
The description of the algorithm is in Section~\ref{sec:FirstAlgo:EssentialEdges}.

\subsubsection{Enumeration of Small Cuts via Expander Decomposition}
\label{sssec:EnumerateSmallCuts}

To achieve Theorem~\ref{thm:sparsifyGraph}, 
we utilize insights from recent results on finding $c$-vertex
cuts~\cite{NanongkaiSY19a,NanongkaiSY19b:arxiv,ForsterY19:arxiv},
namely that in a well connected graph, any cut of size at most $c$
must have a very small side.
This notion of connectivity is
formalized through the notion of graph conductance.
\begin{definition}
\label{def:expander}
	In an undirected unweighted graph $G = (V, E)$, denote
	the volume of a subset of vertices, $\vol(S)$, as the
	total degrees of its vertices.
	The conductance of a cut $S$ is then
	\[
	\Phi_G\left( S \right)
	=
	\frac{\abs{\partial\left( S \right)}}
	{\min\left\{ \vol\left( S \right), \vol\left(V \bs S \right) \right\}},
	\]
	and the conductance of a graph $G = (V, E)$ is the minimum
	conductance of a subset of vertices:
	\[
	\Phi\left( G \right)
	= \min_{S \subseteq V} \Phi_{G}\left( S \right).
	\]
\end{definition}
We use expander decomposition to reduce to the case where the graph has high conductance.
\begin{lemma}
	\label{lem:ExpanderDecompose}
	(Theorem 1.2 of~\cite{SaranurakW19})
	There exists an algorithm \textsc{ExpanderDecompose}
	that for any undirected unweighted graph $G$ and
	any parameter $\phi$, decomposes in $O(m\phi^{-1}\log^{4}{n})$
	time $G$ into pieces $\{G_i\}$ of conductance at least $\phi$
	so that at most $O(m \phi \log^{3}n)$ edges are between
	the pieces.
\end{lemma}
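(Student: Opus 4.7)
The plan is to follow the recursive framework of Saranurak--Wang (from whom this lemma is directly quoted) based on repeated cutting combined with a trimming step. At a high level, the algorithm maintains a collection of active pieces and processes each one by invoking an approximate sparsest cut subroutine: if the subroutine certifies that the piece has conductance at least $\phi$, it is output as a final cluster; otherwise, the subroutine returns a cut $(A, V(H) \setminus A)$ of conductance $O(\phi \log n)$ which is used to split the piece, and we recurse.

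I would realize the sparsest cut subroutine via the cut-matching game of Khandekar--Rao--Vazirani, implementing the ``matching'' steps through local flow computations so the running time scales with the volume of the smaller side rather than the whole graph. When the cut returned is \emph{balanced} (both sides have volume $\Omega(\vol(H))$), one simply recurses on both sides. When the cut is heavily \emph{unbalanced}, running the primitive directly on the large side is too costly, so instead we perform a trimming step: using local flow from the boundary of the large side, we either certify that the large side has become an $\Omega(\phi)$-expander, or we prune a small additional subpiece. The trimming step is the crucial ingredient that prevents the unbalanced case from blowing up the running time.

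The analysis has two parts. For the inter-cluster edge bound, each cut contributes at most $O(\phi \log n) \cdot \vol(\text{smaller side})$ boundary edges. A potential argument based on $\sum_v d(v) \log \vol(\cdot)$ shows that the volumes of the smaller sides across all recursive calls sum to $O(m \log n)$, yielding a base bound of $O(m \phi \log^2 n)$; an additional $\log n$ factor comes from bounding the cumulative loss due to trimming steps, giving the claimed $O(m \phi \log^3 n)$. For the running time, each invocation on a subgraph of size $m'$ costs $\O(m'/\phi)$ because of the local flow routines targeted at conductance $\phi$; since each edge participates in $O(\log n)$ levels of recursion (balanced cuts halve volume on each side), the total running time comes out to $O(m \phi^{-1} \log^4 n)$.

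The main obstacle is making the unbalanced case combinatorially clean: one must show that after trimming the remaining large side is a genuine $\Omega(\phi)$-expander while losing only $O(\phi \cdot \vol(\text{smaller side}))$ additional boundary edges and spending time proportional to the smaller side rather than the whole piece. This hinges on careful use of local flow primitives with appropriate congestion bounds and is the core technical contribution of Saranurak--Wang, whose argument I would import wholesale rather than re-derive.
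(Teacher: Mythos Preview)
This lemma is not proved in the paper at all: it is a verbatim citation of Theorem~1.2 of Saranurak--Wang and is used purely as a black box. There is therefore nothing to compare your proposal against; your sketch is a reasonable high-level summary of the Saranurak--Wang argument, but the paper neither needs nor provides any proof here.
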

Note that if a graph has conductance $\phi$,
any cut $(S,V\bs S)$ of size at most $c$ must have
\begin{equation}
\min\left\{ \vol\left( S \right), \vol\left(V \bs S \right) \right\}
\leq
c\phi^{-1}.
\label{eq:ExpanderImba}
\end{equation}
In a graph with expansion $\phi$, we can enumerate all cuts of size at most $c$ in time exponential in $c$ and $\phi$.
As a side note, the time complexity of both the results in Theorem~\ref{thm:Main} are dominated by 
the $c^{O(c)}$ term, essentially coming from this enumeration.
Hence, a more efficient algorithm on enumeration may open up the possibility toward a faster algorithm for finding a \sparsifier.
\begin{lemma}
	\label{lem:EnumerateCuts}
	In a graph $G$ with $n$ vertices and conductance $\phi$, there exists an algorithm that enumerates all cuts of size at most $c$
	with connected smaller side in time $O(n(c\phi^{-1})^{2c})$.
\end{lemma}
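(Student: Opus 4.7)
The plan is to leverage the conductance guarantee to constrain the combinatorial structure of small cuts, and then enumerate them via a branching DFS rooted at each vertex. The first step is to apply \eqref{eq:ExpanderImba}: any cut $(S, V\bs S)$ with $|\partial S| \le c$ satisfies $\min(\vol(S), \vol(V\bs S)) \le c\phi^{-1}$, so relabeling so that $S$ is the smaller side yields $|S| \le \vol(S) \le c\phi^{-1}$. Moreover, the set of ``out-neighbors'' $N(S) = \set{w \notin S : \exists u \in S,\ (u,w) \in E}$ satisfies $|N(S)| \le |\partial S| \le c$, so every endpoint of every edge in $\partial S$ lies in $S \cup N(S)$, a set of size $O(c\phi^{-1})$.

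The second step is a counting argument for the number of valid cuts containing a fixed seed $v \in S$. Since $S$ is the connected component of $v$ in $G \setminus \partial S$, the cut is uniquely determined by $\partial S$. Each edge of $\partial S$ is a pair from $S \times N(S)$, a product of sets of sizes at most $c\phi^{-1}$ and $c$ respectively, giving at most $c^2 \phi^{-1}$ candidate edges. Choosing $\le c$ such edges yields at most $(c^2 \phi^{-1})^c \le (c\phi^{-1})^{2c}$ possible boundaries per seed $v$, and hence at most $n \cdot (c\phi^{-1})^{2c}$ valid cuts total.

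To realize this bound constructively, I would, for each seed $v \in V$, run a branching DFS that enumerates connected $S \ni v$ with $|\partial S| \le c$ and $|S| \le c\phi^{-1}$. Starting from $S = \set{v}$ and $F = \emptyset$, repeat: if $\partial S \subseteq F$, output $(S, V\bs S)$; otherwise pick the canonically-first edge $e = (u, w) \in \partial S \setminus F$ with $u \in S$, $w \notin S$, and branch on two options --- include $w$ into $S$, or add $e$ to the committed cut set $F$. Prune whenever $|F| > c$ or $|S| > c\phi^{-1}$. The canonical ordering (e.g., lexicographic by the BFS-arrival time of the in-endpoint) ensures distinct leaves correspond to distinct $(S, F)$, so the leaf count is at most the count from Step~2, and each recursive step performs $\poly(c\phi^{-1})$ bookkeeping on the boundary list.

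The main obstacle will be arguing that the branching DFS visits only $(c\phi^{-1})^{O(c)}$ internal nodes, not just leaves. The natural binary branching gives $2^{c + c\phi^{-1}}$ leaves if analyzed naively, which is far worse than the target bound; matching the combinatorial bound from Step 2 requires that the canonical order of edges be tied to the growing structure of $S$ so that every internal state corresponds to a committed prefix of some valid $(S, F)$, with no redundant exploration. A secondary subtlety is that the same cut may be enumerated from multiple seeds $v \in S$; this is handled by a standard hash-table deduplication at negligible overhead, and does not affect the asymptotic bound.
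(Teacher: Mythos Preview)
Your Step~2 counting argument is circular. You write that each edge of $\partial S$ lies in $S \times N(S)$, a pool of size at most $c^2\phi^{-1}$, and then count $\le c$ choices from this pool. But $S$ and $N(S)$ are not known in advance; they depend on the very cut you are trying to count. Different cuts containing $v$ draw their boundary edges from different pools, so you cannot simply multiply $(c^2\phi^{-1})^c$. To make a counting argument work you would need a single pool of $\poly(c\phi^{-1})$ edges, determined by $v$ alone, that is guaranteed to contain every boundary edge of every valid cut through $v$; you have not exhibited such a pool.

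This gap propagates to Step~3: you correctly identify that your vertex-by-vertex branching DFS has a naive bound of $2^{c+c\phi^{-1}}$, but the proposed fix via canonical ordering cannot rescue it. Even with a perfect ordering, distinct leaves of your search tree correspond to distinct connected sets $S \ni v$ with $|S| \le c\phi^{-1}$, and the number of such sets can be exponential in $c\phi^{-1}$ (there is no degree bound here). The paper's algorithm avoids enumerating $S$ altogether: from $v$ it runs a \emph{deterministic} DFS until $c\phi^{-1}$ vertices are reached; since $|S| \le c\phi^{-1}$, this DFS must touch some edge of $\partial S$, and there are at most $(c\phi^{-1})^2$ edges among the reached vertices to choose from. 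It branches on that single choice, deletes the chosen edge, and repeats. After $c$ rounds all of $\partial S$ has been guessed, giving $((c\phi^{-1})^2)^c = (c\phi^{-1})^{2c}$ total branches per seed. The key difference is that branching happens only $c$ times (once per cut edge), not once per vertex of $S$.
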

\begin{proof}
	We first enumerate over all starting vertices.
	For a starting vertex $u$,
	we repeatedly perform the following process.
	\bi
	\item[1.] Perform a DFS from $u$ until it reaches more than
	$c \phi^{-1}$ vertices.
	\item[2.] Pick one of the edges among the reached vertices as
	a cut edge.
	\item[3.] Remove that edge, and recursively start another DFS
	starting at $u$.
	\ei
	After we have done this process at most $c$ times, we check whether
	the edges form a valid cut, and store it if so.
	
	By Equation~\ref{eq:ExpanderImba}, the smaller side of the
	cut involves at most $c \phi^{-1}$ vertices.
	Consider such a cut with $S$ as the smaller side,
	$F = E(S, V \setminus S)$, and $|S| \leq c \phi^{-1}$.
	Then if we picked some vertex $u \in S$ as the starting point,
	the DFS tree rooted at $u$ must contain some edge in $F$
	at some point.
	Performing an induction with this edge removed then gives
	that the DFS starting from $u$ will find this cut.
	Because there can be at most $O((c \phi^{-1})^2)$ different
	edges picked among the vertices reached, the total work performed
	in the $c$ layers of recursion is $O((c \phi^{-1})^{2c})$.
\end{proof} 

It suffices to enumerate all such cuts once at the start,
and reuse them as we perform contractions.
\begin{lemma}
\label{lem:ContractionCut}
	If $F$ is a cut in $G / \Ehat$, then $F$ is also a cut in $G$.
\end{lemma}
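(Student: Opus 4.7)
The plan is to unpack the definition of contraction and translate a cut in $G/\Ehat$ back to a cut in $G$ via the natural surjection from $V(G)$ onto $V(G/\Ehat)$. First I would recall that every vertex $v$ of $G/\Ehat$ corresponds to a nonempty subset $\pi^{-1}(v) \subseteq V(G)$, namely the connected component of the subgraph $(V(G), \Ehat)$ that was identified to form $v$, and that the edge set $E(G/\Ehat)$ is canonically identified with $E(G) \setminus \Ehat$ (contraction does not create new edges, it only relabels endpoints). In particular, the set $F \subseteq E(G/\Ehat)$ is, via this identification, a subset of $E(G)$.

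Next I would use the fact that $F$ is a cut in $G/\Ehat$ to produce a bipartition. By definition there exists $S \subseteq V(G/\Ehat)$ with $F = E_{G/\Ehat}(S, V(G/\Ehat) \setminus S)$. Define $S' \defeq \bigcup_{v \in S} \pi^{-1}(v) \subseteq V(G)$; then $V(G) \setminus S' = \bigcup_{v \notin S} \pi^{-1}(v)$, so $(S', V(G) \setminus S')$ is a bipartition of $V(G)$. I would then verify that $F = E_G(S', V(G) \setminus S')$: an edge $e \in E(G) \setminus \Ehat$ crosses this bipartition in $G$ if and only if its two endpoints lie in preimages $\pi^{-1}(v)$ and $\pi^{-1}(w)$ with $v \in S$ and $w \notin S$, which is exactly the condition for the corresponding edge in $G/\Ehat$ to lie in $E_{G/\Ehat}(S, V(G/\Ehat) \setminus S) = F$. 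Edges in $\Ehat$ themselves have both endpoints inside a single $\pi^{-1}(v)$ and hence do not cross the bipartition at all, so they never appear in $F$.

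There is essentially no obstacle here: the lemma is a routine bookkeeping statement that a cut in a quotient lifts to a cut in the original graph, and the only mild care needed is to keep straight the identification of edges $E(G/\Ehat) \leftrightarrow E(G) \setminus \Ehat$ and the fact that contracted edges disappear rather than being split across the cut. I would close by noting that the argument works equally well for multigraphs (parallel edges are preserved in multiplicity by contraction, and each is independently classified as crossing or not), which is the setting of the paper.
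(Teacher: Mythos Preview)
Your argument is correct and is exactly the natural unfolding of the definitions; the paper itself states this lemma without proof, treating it as an immediate observation about contractions. There is nothing to add or compare.
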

Note that this lemma also implies that an expander stays
so under contractions.
So, we do not even need to re-partition the graph as we recurse.

\subsubsection{Sparsifying Procedure (\texorpdfstring{$\varphi$-}{phi-}\textsc{Sparsify} )} 
\label{sec: phi sparse} 

We need a subroutine used to sparsify a graph with conductance $\varphi$ and terminals $\tset$.
This subroutine named as $\varphi$-\textsc{Sparsify} takes such a graph and enumerates all cuts of size at most $c$ by a smaller side through Lemma~\ref{lem:EnumerateCuts}.
Then it sparsifies the expander by checking if the contraction of each edge still preserves $(\tset, c)$-equivalency.
Formally, we can contract an edge $e$ while preserving $(\tset, c)$-equivalency if and only if for any partition $(\tset_1, \tset_2)$ of terminals $\tset$, there exists a $(\tset_1, \tset_2)$-mincut not containing the edge $e$.
For convenience, we call such an edge $e\in E(G)$ as \textit{contractible} in $G$.
Sequentially checking all edges in $G$ and contracting some if possible, we show that $\varphi$-\textsc{Sparsify} only leaves at most $O(|\tset|c^4)$ ``essential edges'' which appear in a minimum cut of any partition of the terminals.

Through the enumeration of all cuts of size at most $c$ by a smaller side of the cut (Lemma~\ref{lem:EnumerateCuts}), $\varphi$-\textsc{Sparsify} forms an auxiliary graph $H$ for efficient tracking of minimum cuts of partitions of terminals as follows:
$V(H)$ is the disjoint union of $P$, $C$, and $E_0$, where
\bi
\item[1.] $P$ is the set of an induced partition of terminals $\tset$ during the enumeration,
\item[2.] $C$ is the set of a minimum cut separating a partition of terminals in $P$,
\item[3.] $E_0$ is the set of edges included in a minimum cut in $C$,
\ei
and for $p \in P, c \in C$, and $e \in E_0$, add an edge $pc$ to $E(H)$ if $c$ is a minimum cut of $p$, and an edge $ce$ to $E(H)$ if $e \in c$.

For a given query edge $e \in E(G)$, the algorithm deletes all nodes (minimum cuts) $N(e) \subseteq C$, also removing the incident edges to $N(e)$.
Then it checks if there is a node (partition $p$) in $P$ whose degree becomes $0$ after the deletion.
If so, it means that the edge $e$ appears in all minimum cuts of the partition $p$, leading the algorithm to undo the deletion.
Otherwise, it means that the algorithm may contract $e$ and actually obtains a $(\tset, c)$-equivalent graph $G/e$.

In the case that it contracts a contractible edge $e$, we should make sure that the auxiliary graph from $G/e$ is equal to $H$ with $N(e)$ deleted.
First of all, a minimum cut $F$ of size at most $c$ inducing a partition of terminals in $G/e$ is also a cut of the partition in $G$ (see Lemma~\ref{lem:ContractionCut}).
As $G/e$ and $G$ are $(\tset, c)$-equivalent, $F$ must be a minimum cut of the partition in $G$ as well.
For the opposite direction, a minimum cut of a partition of terminals in $G$, which does not contain $e$, is also a minimum cut of the partition in $G/e$, since the value of minimum cuts non-decreases under contraction (see Lemma~\ref{lem:CutMonotone}).
Therefore, we only need to enumerate all cuts of size at most $c$ $O(1)$ times and to create an auxiliary graph at the very beginning of $\varphi$-\textsc{Sparsify}, and simply update the auxiliary graph in response to contraction of edges without re-enumerating all cuts of size at most $c$ in contracted graphs.

In this way, scanning through each edge in sequence, $\varphi$-\textsc{Sparsify} checks if each edge is contractible in $G$ with contractible edges (in their turns) already contracted.
In the end, it returns a $(\tset, c)$-equivalent graph $G/X$, where $X$ is the set of contractible edges in each turn.

\begin{lemma}
	\label{lem:sparsifyExpander}
	For a graph $G$ with conductance $\varphi$, $n$ vertices, $m$ edges, and $k$ terminals,
	the algorithm $\varphi$-\textsc{Sparsify} returns a \sparsifier with \optimal edges in $O(m + nc(c\varphi^{-1})^{2c})$ time. 
\end{lemma}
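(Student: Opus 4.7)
The plan is to verify three claims: (a) the returned graph $G' = G/X$, where $X$ denotes the set of greedily contracted edges, is $(\tset,c)$-equivalent to $G$; (b) the auxiliary graph $H$ correctly represents the set of terminal cuts of size at most $c$ in the current contracted graph throughout the execution; and (c) $|E(G')|$ is \optimal, matching the claimed bound.

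For (a) and (b), the main point is that the algorithm only contracts $e$ when every terminal partition with min-cut value at most $c$ still has a representative cut in $H$ avoiding $e$, which is exactly the condition that $G'/e$ is $(\tset, c)$-equivalent to $G'$; chaining these one-step equivalences gives $(\tset, c)$-equivalency between the final $G'$ and $G$. To see that $H$ remains a faithful record, I would argue that a cut of size at most $c$ in the currently contracted graph corresponds, via Lemma~\ref{lem:ContractionCut}, to a cut of the same size in $G$ avoiding all contracted edges; conversely, a cut in $G$ of size at most $c$ avoiding the contracted edges stays a cut of the same size in the contracted graph, using Lemma~\ref{lem:CutMonotone} together with the invariant that $(\tset,c)$-equivalency has been maintained so that min-cut values do not rise above $c$. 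Hence deleting from $H$ exactly the cuts containing a newly contracted edge correctly updates the representation, and the isolation check on the partition nodes in $P$ diagnoses contractibility.

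The main obstacle is the size bound in (c). My plan is to attack it by virtually applying the partitioning procedure \textsc{PolySizedcNetwork} from Section~\ref{sssec:description} to the final graph $G'$, obtaining at most $O(kc)$ pieces $X_1, \ldots, X_p$ of $V(G') \setminus \tset$, each either \cwell or satisfying $|\partial X_i| \le 2c - 1$. The key step is to use greedy termination to force structural constraints on each piece. If $X_i$ is \cwell with $|X_i| \ge 2$, then well-linkedness is preserved under internal edge contraction (a direct check from Definition~\ref{def:welllinked}, since contracting an internal edge leaves $\partial X_i$ unchanged and matches cut values under the lift from $G'/e$ back to $G'$), so applying Lemma~\ref{lem:ContractWellLinked} to both $G'$ and $G'/e$ for any internal edge $e$ yields that $G'/e$ is $(\tset,c)$-equivalent to $G'$; this contradicts the greedy stop and forces $|X_i| = 1$. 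If $|\partial X_i| \le 2c - 1$ but $X_i$ contains more than $O(c^3)$ internal edges, Lemma~\ref{lem:c3}, together with Lemma~\ref{lem:Independent} and Corollary~\ref{cor:RemoveEdges} used to lift a local sparsification to a global one, exhibits a globally contractible internal edge, again a contradiction. Combining, internal edges across all pieces sum to $O(kc \cdot c^3) = O(kc^4)$, and the same potential-function argument as in the proof of Theorem~\ref{thm:kc4} bounds the total boundary edges by $O(kc^2)$, yielding \optimal total edges.

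For the running time, enumerating all cuts of size at most $c$ with a connected smaller side takes $O(n(c\varphi^{-1})^{2c})$ time by Lemma~\ref{lem:EnumerateCuts}, and building the tripartite $H$ on partitions, cuts, and edges costs $O(nc(c\varphi^{-1})^{2c})$ since each enumerated cut contributes at most $c$ edge-incidences. For each of the $m$ edges of $G$, the work to identify and tentatively delete $N(e)$ is proportional to the degree of $e$ in $H$; a per-partition degree counter detects isolation in the same time, and an undo stack reverts the deletion cheaply if contraction is rejected. Because every cut node is permanently deleted at most once across the entire run, the total edge-processing work is bounded by the size of $H$, giving the claimed $O(m + nc(c\varphi^{-1})^{2c})$ runtime.
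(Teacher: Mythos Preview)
Your correctness claims (a) and (b) and the running-time accounting track the paper's proof. The interesting divergence is in the size bound (c).

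The paper handles (c) in two lines. First, it observes that non-contractibility is monotone: if $e$ is not contractible in an intermediate graph $G_i$---i.e., some partition $(\tset_1,\tset_2)$ has $\mincut_{G_i}(\tset_1,\tset_2)<c$ and every minimum $(\tset_1,\tset_2)$-cut uses $e$---then after contracting any contractible edge $f$ the same partition still witnesses non-contractibility of $e$ in $G_i/f$ (because $\mincut_{G_i/f}=\mincut_{G_i}$ on this partition, while $\mincut_{(G_i/f)/e}\ge\mincut_{G_i/e}>\mincut_{G_i}$). Hence at termination every surviving edge is non-contractible in the final graph $G'$. Second, Theorem~\ref{thm:kc4} applied to $G'$ yields $E'\subseteq E(G')$ of size $O(kc^4)$ with $G'/(E(G')\setminus E')$ equivalent to $G'$; by cut monotonicity any single edge outside $E'$ is then contractible in $G'$, forcing $E(G')\subseteq E'$.

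Your route instead re-runs the \textsc{PolySizedcNetwork} decomposition on $G'$ and argues piecewise that well-linked pieces collapse to singletons and small-boundary pieces carry $O(c^3)$ internal edges. That is correct, but it is essentially a re-derivation of Theorem~\ref{thm:kc4} rather than a citation of it; the paper just invokes the theorem as a black box.

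The real gap is the first step. You invoke ``the greedy stop'' as though it certifies that no edge of $G'$ is contractible in $G'$, but the single-pass algorithm only certifies that each surviving edge $e$ was non-contractible \emph{in the graph current at the moment $e$ was examined}. To conclude that $e$ remains non-contractible in $G'$ you need precisely the monotonicity observation above, which you never state. Without it, exhibiting an edge contractible in $G'$ does not contradict anything the algorithm recorded, and your piecewise argument in (c) has no contradiction to appeal to.
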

\vspace{-3mm}
\begin{proof} Observe that once an edge $e$ is not contractible in its turn, $e$ never becomes contractible when checking other subsequent edges.
In other words, for $i < j$, if the algorithm checks the edge $e$ in $i^{th}$ iteration and marks it as non-contractible, then $e$ is still not contractible when checking another edge in $j^{th}$ iteration.
Hence, after checking all edges, the remaining edges $E^*$are not contractible in $G/(E-E^*)$.
As Theorem~\ref{thm:kc4} guarantees the existence of contractible edges as long as the number of remaining edges are larger than \optimal, so the graph with the contractible edges contracted, $G/(E-E^*)$, returned by $\varphi$-\textsc{Sparsify} has at most \optimal edges as well.
	
For the running time, the algorithm enumerates all minimum cuts of size at most $c$ in $O(n\cdot (c\varphi^{-1})^{2c})$ and updating an auxiliary graph requires as many references as the number of edges in the auxiliary graph.
As each minimum cut has size at most $c$, double counting on the number of edges results in the running time $O(m + nc \cdot (c \varphi^{-1})^{2c})$.
\end{proof}

\subsubsection{Putting things together}
\label{sec:FirstAlgo:EssentialEdges}
Now we join all the sparsified graphs via the removed inter-cluster edges and reduce the total number of edges by half.
Repeating this procedure until no more edges are contractible, we can build a $(\tset, c)$-equivalent graph with at most \optimal essential edges.
We present the algorithm \textsc{EfficientPolySized} with details in Figure~\ref{fig:efficientpolysized} and with analysis as follows,
where $C'$ is a constant such that in Lemma~\ref{lem:ExpanderDecompose} the number of edges between the pieces are at most $C' m\varphi \log^3 n$.
\begin{figure}
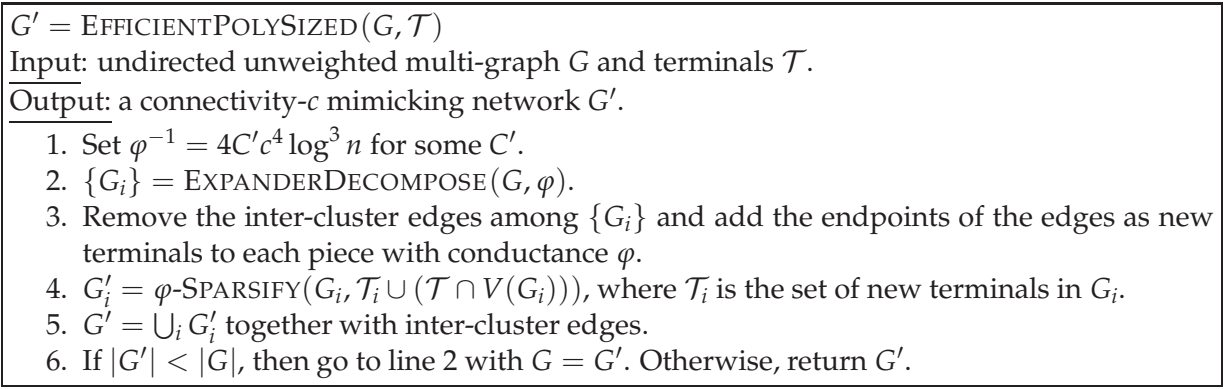

	\begin{algbox}
		$G' = \textsc{EfficientPolySized}(G, \tset)$\\
		\underline{Input}: undirected unweighted multi-graph $G$ and terminals $\tset$.\\
		\underline{Output:} a \sparsifier $G'$.		
		\bi
			\item[1.] Set $\varphi^{-1} = 4C'c^4\log^3 n$ for some $C'$.
			\item[2.] $\{G_i\} = \textsc{ExpanderDecompose}(G, \varphi)$.
			\item[3.] Remove the inter-cluster edges among $\{G_i\}$ and add the endpoints of the edges as new terminals to each piece with conductance $\varphi$.
			\item[4.] $G_i' = \varphi$-\textsc{Sparsify}$(G_i, \tset_i \cup (\tset \cap V(G_i)))$, where $\tset_i$ is the set of new terminals in $G_i$.
			\item[5.] $G' = \bigcup_i G_i'$ together with inter-cluster edges.
			\item[6.] If $|G'| < |G|$, then go to line $2$ with $G = G'$.
			Otherwise, return $G'$.
		\ei
	\end{algbox}
	\vspace{-8mm}
	\caption{Pseudocode for \textsc{EfficientPolySized}}
	\label{fig:efficientpolysized}
\end{figure}
Theorem \ref{thm:Main} part \ref{part:Main1} follows from the analysis of \textsc{EfficientPolySized}.
{
	\renewcommand{\thetheorem}{\ref{thm:sparsifyGraph}}
\begin{theorem}
	For a graph $G$ with $n$ vertices, $m$ edges, and $k$ terminals, the algorithm \textsc{EfficientPolySized} successfully finds a \sparsifier with \optimal edges in $O(m\cdot (c \log n)^{O(c)})$.
\end{theorem}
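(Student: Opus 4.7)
The plan is to prove three things — correctness, the size bound $O(kc^4)$, and the running time $O(m(c\log n)^{O(c)})$ — by analyzing a single iteration of \textsc{EfficientPolySized}, showing it preserves $(\tset, c)$-equivalency while cutting the edge count nearly in half (provided $m$ is substantially larger than $kc^4$), and then iterating $O(\log m)$ times.

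For correctness of one iteration, let $F$ denote the set of inter-cluster edges produced by \textsc{ExpanderDecompose}. Corollary \ref{cor:RemoveEdges}, applied with the terminal set augmented by the endpoints $V(F)$, implies that it suffices to sparsify $G \bs F$ against this larger terminal set and then re-insert $F$. Since $G \bs F$ splits into the expander pieces $\{G_i\}$, Lemma \ref{lem:Independent} lets us operate on each $G_i$ independently, and by Lemma \ref{lem:sparsifyExpander} each call to $\varphi$-\textsc{Sparsify} on $G_i$ outputs a $(\tset_i \cup (\tset \cap V(G_i)), c)$-equivalent graph with $O(|\tset_i \cup (\tset \cap V(G_i))|\, c^4)$ edges, where $\tset_i$ is the set of new terminals inherited from endpoints of $F$ in $G_i$. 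Pasting these sparsifiers back together with $F$ yields a $(\tset, c)$-equivalent graph; inducting over the outer loop gives overall correctness.

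For the size bound, setting $\varphi^{-1} = \Theta(c^4 \log^3 n)$ as in the algorithm makes $|F| \le C' m \varphi \log^3 n = O(m / c^4)$ by Lemma \ref{lem:ExpanderDecompose}, so the total number of terminals entering $\varphi$-\textsc{Sparsify} across all pieces is at most $k + 2|F| \le k + O(m/c^4)$. Summing the bound of Lemma \ref{lem:sparsifyExpander} over pieces gives $O\bigl((k + m/c^4)\, c^4\bigr) = O(kc^4) + O(m)$ edges, and re-adding $F$ contributes only $O(m/c^4)$ more. Calibrating the constant hidden in $\varphi^{-1}$ makes the $O(m)$ coefficient at most $1/2$, so one iteration satisfies $|G'| \le C_1 kc^4 + m/2$ for some absolute constant $C_1$. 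Hence, while $m \ge 2 C_1 k c^4$, the iteration halves $m$, and after at most $O(\log m)$ iterations the loop exits with $|G'| = O(kc^4)$ edges; this termination is guaranteed by the existential bound of Theorem \ref{thm:kc4}.

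For the running time, \textsc{ExpanderDecompose} costs $O(m \varphi^{-1} \log^4 n) = \widetilde{O}(m c^4)$ per iteration, and the total cost of $\varphi$-\textsc{Sparsify} across pieces is $O(m) + \sum_i n_i c (c \varphi^{-1})^{2c} = O(m) + n (c \log n)^{O(c)}$ by Lemma \ref{lem:sparsifyExpander}. After one preprocessing step via Lemma \ref{lem:EdgeReduction} to enforce $n = O(m/c)$, each iteration runs in time $O(m (c \log n)^{O(c)})$, and multiplying by $O(\log m)$ iterations is absorbed into the $(c \log n)^{O(c)}$ factor. The main obstacle I expect is calibrating $\varphi$ to simultaneously keep the size decrement geometric and keep the enumeration cost $(c \varphi^{-1})^{2c}$ quasi-polynomial in $c$ and $\log n$; the chosen scaling $\varphi^{-1} = \Theta(c^4 \log^3 n)$ is essentially forced as the tight balance point between these two competing constraints.
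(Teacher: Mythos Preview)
Your proof is correct and follows essentially the same approach as the paper: correctness via Corollary~\ref{cor:RemoveEdges}, Lemma~\ref{lem:Independent}, and Lemma~\ref{lem:sparsifyExpander}; the size bound via a geometric halving argument (the paper spells out the induction $|G^{(i)}| \le kc^4\sum_{r<i}2^{-r} + m/2^i$ that you implicitly use); and the running time by summing the $\varphi$-\textsc{Sparsify} cost over $O(\log m)$ rounds. One small slip: Lemma~\ref{lem:EdgeReduction} reduces $m$ to $c(n-1)$, not $n$ to $O(m/c)$, so it does not justify $n = O(m)$; that bound instead holds simply because a graph without isolated vertices has $n \le 2m$, and contractions never increase the vertex count beyond the edge count plus the number of components.
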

	\addtocounter{theorem}{-1}
}
\begin{proof}
	By Corollary~\ref{cor:RemoveEdges} and Lemma~\ref{lem:sparsifyExpander}, \textsc{EfficientPolySized} successfully finds a \sparsifier of $G$.
	For the size, we prove a more general statement that if the optimal number of edges in a \sparsifier of a graph with $k$ terminals is $k\cdot p(c)$ for a polynomial $p$, then \textsc{EfficientPolySized} returns a sparsifier with $O(kp(c))$ edges.
	
	We show by induction that after $i^{th}$ iteration, the number of remaining edges is at most $kp(c)\sum_{r=0}^{i-1}\frac{1}{2^r} + \frac{m}{2^i}$, which is bounded by $2kp(c) + \frac{m}{2^i}$.
	Hence, after $O(\log m)$ iterations, the algorithms yields a \sparsifier with $O(kp(c))$ edges.
	
	Observe that $\varphi = 1/(4C'p(c)\log^3 n)$ satisfies $\varphi \cdot (C' p(c) \log^3 n + C' \log^3 n) \leq \frac{1}2$.
	In the first iteration, the total number of terminals is bounded by $k + mC'\varphi \log^3 n$.
	Hence, in line 5, the total number of edges in $G'$ is bounded by 
	\[
		(k + mC'\varphi \log^3 n)p(c) + mC'\varphi \log^3 n
	\leq kp(c) + m\varphi \cdot (C'p(c)\log^3 n + C'\log^3 n) \leq kp(c) + \frac{m}2
	\]
	Using the similar argument for $i^{th}$ iteration and induction hypothesis, we have
	\begin{align*}
	(k + (kp(c)\sum_{r=0}^{i-2}\frac{1}{2^r} &+ \frac{m}{2^{i-1}})C'\varphi \log^3 n)p(c) + (kp(c)\sum_{r=0}^{i-2}\frac{1}{2^r} + \frac{m}{2^{i-1}})C'\varphi \log^3 n\\
	&\leq kp(c) + (kp(c)\sum_{r=0}^{i-2}\frac{1}{2^r} + \frac{m}{2^{i-1}})/2
	\leq kp(c)\sum_{r=0}^{i-1}\frac{1}{2^r} + \frac{m}{2^i}.
	\end{align*}
	
	For time complexity part, it is dominated by $\varphi$-\textsc{Sparsify} which takes time $O(m\cdot c^{O(c)} \log^{6c}n \cdot \log m) = O(m(c \log n)^{O(c)})$ as desired.
\end{proof}

\subsection{Proof of Lemma \ref{lem:c3}: Transforming Edge Cuts to Vertex Cuts}
\label{sssec:reduction}

As seen above, \textsc{PolySizedcNetwork} replaces a set with sparse boundary by a \sparsifier.
Here we present a key lemma used for this subroutine,
which reduces our problem to the problem of identifying essential vertices in preserving the value of minimum \textit{vertex} cuts.
The notion of vertex cuts is closely related with the notion of vertex-disjoint paths, which takes advantages of a well-developed theory from gammoid and representative sets.
The following result in \cite{Kratsch12} is what we will make use of in essence.

\begin{lemma}[\cite{Kratsch12}]
	\label{lem:kw12}
	Let $G = (V, E)$ be a directed graph, and $X \subseteq V$ a set of terminals.
	We can identify a set $Z$ of $O(|X|^3)$ vertices such that for any $A, B\subseteq X$, a minimum $(A, B)$-vertex cut in $G$ is contained in $Z$.
\end{lemma}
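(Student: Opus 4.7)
The plan is to apply the standard matroid-kernelization framework: reduce the minimum vertex cut problem to an independence condition in a gammoid, then invoke the representative-sets lemma (Theorem~\ref{thm:fomin}) to identify a small vertex set that captures a minimum cut for every choice of $A, B \subseteq X$.

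First, I would characterize minimum vertex cuts through Menger's theorem. For any $A, B \subseteq X$, a minimum $(A,B)$-vertex cut $C$ of size $\ell$ admits $\ell$ internally vertex-disjoint $A$-to-$B$ paths, each meeting $C$ in exactly one vertex. Splitting these paths at $C$ exhibits $C$ as an independent set both in the gammoid $M^{\mathrm{out}}_A$ (whose independent sets are those $I \subseteq V$ linkable from $A$ by $|I|$ internally vertex-disjoint paths) and in the reverse gammoid $M^{\mathrm{in}}_B$. Thus recognizing min cuts reduces to detecting independent sets in a pair of gammoids of rank at most $|X|$.

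Next, to avoid working with a separate matroid for each $(A, B)$ pair, I would build a single ``universal'' gammoid $M$ of rank $O(|X|)$ by augmenting $G$ with super-source and super-sink vertices attached to each $x \in X$ with appropriate arc multiplicities, and using the standard vertex-split trick to turn vertex-disjointness into element-disjointness. Minimum $(A,B)$-vertex cuts in $G$ then correspond, via Menger's theorem, to independent sets of size $\ell \leq |X|$ in appropriate restrictions of $M$ and its dual, simultaneously over all choices of $A$ and $B$. I would then invoke Theorem~\ref{thm:fomin} on the singleton family $\{\{v\} : v \in V\}$: for each rank budget $q = O(|X|)$, it produces a representative subfamily of size $\binom{r}{q}$, which translates here into a vertex subset that can substitute, element by element, for any vertex of a minimum cut while preserving the extension property in the matroid.

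Finally, I would obtain the $O(|X|^3)$ bound by a vertex-by-vertex exchange argument: after selecting one cut vertex from the representative set, contract the matroid along that element, recompute a fresh representative subfamily of size $O(|X|)$ for the next cut vertex, and continue. Since the cut has at most $|X|$ vertices and the cut size $\ell$ itself has at most $|X|$ possible values, accumulating one factor of $|X|$ from each of these three sources (singleton representative size, cut length, and choice of $\ell$) yields $|Z| = O(|X|^3)$.

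The main obstacle is the ambient gammoid construction: I need a single matroid whose rank stays $O(|X|)$ yet whose independent-set structure simultaneously encodes minimum vertex cuts for every $A, B \subseteq X$, and the representative-set replacements must be verified to give sets that are genuine vertex cuts rather than merely matroid-independent sets. The vertex-split construction together with the duality between $M^{\mathrm{out}}$ and $M^{\mathrm{in}}$ is designed to handle exactly this subtlety; once it is in place, the representative-sets machinery delivers the kernel almost mechanically.
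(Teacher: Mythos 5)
This lemma is imported from \cite{Kratsch12} and the paper does not re-derive it; the comparison is therefore against the Kratsch--Wahlstr\"om argument. Your overall plan --- Menger duality to cast a minimum vertex cut as an independent set in a gammoid, then Theorem~\ref{thm:fomin} to prune the ground set --- is the right engine. The gap is in the ``vertex-by-vertex exchange argument.'' When you select one cut vertex from a representative family, contract the matroid along that element, and recompute a fresh representative family for the next vertex, the contraction depends on \emph{which} vertex you selected, which in turn depends on the pair $(A,B)$. But $Z$ must be a single fixed set working for every $A, B \subseteq X$ simultaneously, while your process is adaptive: each round branches over $O(|X|)$ candidate vertices, each choice yields a different contracted matroid and therefore a different next-round representative family, and the union over all branches does not collapse to $O(|X|)$ per round. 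So $|Z| = O(|X|) \cdot |X| \cdot |X|$ does not follow from what is written. There is also a second gap, which you flag but do not close: a set that is independent in the gammoid (linkable from $A$ by disjoint paths) need not separate $A$ from $B$ at all, so greedily choosing independent singletons from a representative family can leave the realm of vertex cuts entirely.

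Kratsch and Wahlstr\"om avoid iteration. For each $(A,B)$ they fix one canonical cut --- the $B$-closest minimum $(A,B)$-cut, whose $B$-side is inclusion-minimal --- and prove a characterization of membership in \emph{this particular} cut as a single independence condition: $v$ belongs to the $B$-closest cut if and only if a fixed small set attached to $v$ (built from in/out copies of $v$ in the vertex-split digraph) extends $A \cup B$ to an independent set in one fixed matroid built as a direct sum of gammoids. Because cut membership becomes one independence condition in one fixed matroid, a single application of the representative-sets lemma over the family of these small sets yields a fixed $Z$ working for all $(A,B)$ simultaneously, with $O(|X|^3)$ falling out as a binomial coefficient rather than a product of adaptive round sizes. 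It is exactly this closest-cut characterization that resolves both obstacles above, and your sketch does not supply it.
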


Note that the above lemma addresses a minimum vertex cut, not an edge cut and it holds under digraphs setting.
However, we can still replace digraphs with undirected graphs;
for given an undirected graph $G = (V, E)$, simply orient each edge in the both directions to obtain a directed graph $\widehat{G}$ and then apply the above result to $\widehat{G}$.

This simple but amazing result develops in the context of proving the usefulness of \emph{matroid theory} to \emph{kernelization}.
As a set of vectors in vector fields enjoys removal of redundant vectors through the notion of linear independence, a general notion of independence potentially leads to identification of \textit{important} objects in a problem.
A matroid is one which abstracts and generalizes the notion, while especially developed in abstracting central notions in graph theory.
Formally, for a given set $E$, a matroid $M = (E, \mathcal{I})$ means a collection $\mathcal{I}$ of all the `independent' subsets of $E$.
As motivated from linear independence in vector fields, the most basic matroid is derived from a matrix $A$ over a field $\F$ and the sets of all sets of columns linearly independent over $\F$.
Some matroids $M$ happen to have an intrinsic matrix $A_M$ over $\F$ which represents the matroid in the way described above.
We call such matroids as \textit{linear} matroids over $\F$ and its matrix $A_M$ as a representation matrix.

When a directed graph $G = (V, E)$ with source vertices $S\subseteq V$ is given, a gammoid collects all the sets $T \subseteq V$ to which there exists $|T|$ vertex-disjoint paths from a subset of $S$.
The size of the largest such $T$ is called the rank of the gammoid and naturally corresponds to the rank of matrices over $\F$.
Among various matroids originated from graph theory, a \textit{gammoid} naturally bridges the gap between vertex cuts and matroid theory since the size of a minimum vertex cut between two sets has to do with the number of vertex-disjoint paths between the two sets.

The last key notion is a $q$-representative set.
For a given matroid $(E, \mathcal{I})$, a family $\sets$ of subsets of size $p$ and any given $Y\subseteq E$ with $|Y|\leq q$, a $q$-representative set $\widehat{\sets} \subseteq \sets$ contains a set $\widehat{X} \subseteq E$ with $\widehat{X}\cap Y = \emptyset$ and $\widehat{X} \cup Y$ being independent (i.e., $\widehat{X}\cup Y \in \mathcal{I}$) whenever $\sets$ has such a set satisfying the same condition.

The previous studies \cite{lovasz1977flats, marx2009parameterized, Kratsch12, fomin2016efficient} have been eager to find a representative set in polynomial time, which is also of independent interest, for a given representation matrix and we introduce the following theorem.

\begin{theorem}[\cite{fomin2016efficient}]
	\label{thm:fomin}
	Let $M = (E, \mathcal{I})$ be a linear matroid of rank $p + q = k$ given together with its representation matrix $A_M$ over a field $\F$.
	Let $\sets = \{S_1,...,S_t\}$ be a family of independent sets of size $p$.
	Then a $q$-representative family $\widehat{\sets}$ for $\sets$ with at most $\binom{p+q}p$ sets can be found in $O(\binom{p+q}p tp^{\omega} + t \binom{p+q}p^{\omega -1})$ operations over $\F$, where $\omega < 2.373$ is the matrix multiplication exponent.
\end{theorem}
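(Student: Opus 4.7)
The plan is to use the classical identification of independent sets in a linear matroid with decomposable elements in the exterior algebra of the column space of $A_M$, and then reduce the problem of finding a $q$-representative family to row-reduction in a matrix of size $t \times \binom{p+q}{p}$.

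First, for each column $j \in E$, let $a_j \in \F^k$ denote the $j$-th column of $A_M$, and for each $S_i \in \sets$ define the vector $v_i = \bigwedge_{j \in S_i} a_j \in \bigwedge^p(\F^k) \cong \F^{\binom{k}{p}} = \F^{\binom{p+q}{p}}$; concretely, the coordinates of $v_i$ are the $p \times p$ subdeterminants of the columns of $A_M$ indexed by $S_i$. The key algebraic fact is that, for any $Y \subseteq E$ of size $q$, the set $S_i \cup Y$ is independent in $M$ iff $v_i \wedge v(Y) \neq 0$ in $\bigwedge^{k}(\F^k) \cong \F$, since that wedge equals the $k \times k$ determinant of the columns in $S_i \cup Y$ (up to sign), which also vanishes automatically when $S_i \cap Y \neq \emptyset$. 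Since $v(\cdot)$ and $\wedge$ are multilinear, if $v_i$ lies in the span of $\{v_{i_1}, \ldots, v_{i_r}\}$ then whenever $S_i \cup Y$ is independent, some $S_{i_j} \cup Y$ is also independent. Taking $\widehat{\sets}$ to be a maximal linearly independent subset of $\{v_1, \ldots, v_t\}$ therefore yields a $q$-representative family of size at most the ambient dimension $\binom{p+q}{p}$.

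Next I would turn this existence argument into the claimed algorithm. Build the matrix $V \in \F^{t \times \binom{p+q}{p}}$ whose $i$-th row is $v_i$: computing one row requires $\binom{p+q}{p}$ determinants of $p \times p$ submatrices, each at cost $O(p^\omega)$, so $O(\binom{p+q}{p} t p^\omega)$ operations to form $V$. Then extract a maximal linearly independent subset of rows of $V$, outputting the corresponding $\widehat S_i$. Doing this naively by Gaussian elimination costs $O(t \binom{p+q}{p}^2)$, so the key is to process the rows in batches of size roughly $\binom{p+q}{p}$ and, for each batch, reduce against the currently maintained basis via a single rectangular fast-matrix-multiplication call of cost $O(\binom{p+q}{p}^\omega)$, from which one reads off the newly independent rows via a rank-revealing factorisation. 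Amortising this batch cost over the $\binom{p+q}{p}$ rows in the batch gives $O(\binom{p+q}{p}^{\omega-1})$ per row, hence $O(t \binom{p+q}{p}^{\omega-1})$ overall, matching the stated bound.

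The main obstacle will be the row-reduction step: the multilinear-algebraic core (wedge products, reduction to a basis, the representation property) is essentially standard exterior-algebra bookkeeping, but shaving the exponent from the naive $2$ to $\omega-1$ requires replacing step-by-step column elimination with a carefully batched fast-matrix-multiplication routine that still extracts, from each batched rank computation, both the vectors forming the updated basis and pointers to the corresponding sets $\widehat{S}_i$ in the original family.
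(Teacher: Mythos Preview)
The paper does not prove this theorem at all: it is quoted verbatim from~\cite{fomin2016efficient} and used as a black box, so there is no ``paper's own proof'' to compare against. Your sketch is, however, essentially the argument given in the cited reference (and going back to Lov\'asz and Marx): encode each $S_i$ by the wedge $v_i\in\bigwedge^p\F^k$, observe that $S_i\cup Y$ being a basis is equivalent to $v_i\wedge v(Y)\neq 0$, deduce the $q$-representation property from linear dependence among the $v_i$, and extract a row basis of the $t\times\binom{p+q}{p}$ matrix via batched fast matrix multiplication; both the existence bound $\binom{p+q}{p}$ and the two running-time terms you derive match the statement.
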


Since a gammoid is a linear matroid and its notion of independence highly has to do with a minimum vertex cuts of two sets, the notion of a representative set has a key connection to the following problem:
In a directed graph $G$ with vertex subsets $S$ and $T$, can we find a set $Z$ which contains a minimum $(A, B)$ vertex cut for any $A\subseteq S$ and $B\subseteq T$?
Even though a na\"ive answer can be $Z = V(G)$, the identification of a representative set significantly reduces the size of $Z$ with dependencies on $p,q,$ and $|\sets|$ in the above setting.

\paragraph{The Subroutine}
We identify and contract unnecessary edges especially by using knowledge from vertex sparsification preserving a vertex connectivity.
To exploit such fruitful results, we leverage a natural correspondence between edges and vertices via working on the \emph{line graph} of a graph.
In this way, edges appearing in a minimum edge cut of a partition of terminals in the original graph are given by identifying their corresponding vertices in the line graph through Lemma~\ref{lem:kw12}, which contains a minimum vertex cut of any partition of terminals.

Before making this connection clear, we can make a further assumption by preprocessing the boundary edges of an induced subgraph.
This preprocessing relies on the following, which readily follows from Lemma~\ref{lem:ContractWellLinked}.

\begin{observation}
	\label{obs:SplitVertices}
	Let $G$ be a graph with terminals $\tset$ and $v\in V(G)$.
	A subdivision of an edge $uv$, which is to replace an edge $uv$ with a path $uwv$ through a new vertex $w$, results in a $(\tset, c)$-equivalent graph. 
\end{observation}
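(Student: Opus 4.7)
The plan is to view the subdivided graph $G'$ as arising from $G$ by replacing the edge $uv$ with the path $u$--$w$--$v$, and then to invoke Lemma~\ref{lem:ContractWellLinked} by contracting a suitable \cwell set in $G'$ to recover $G$. The key identification is that $G = G'/\{u,w\}$: the induced subgraph $G'[\{u,w\}]$ consists of just the single edge $uw$, so contracting it merges $u$ and $w$ into a single vertex whose neighbors are exactly those of $u$ in $G$ together with $v$, which is precisely $G$. By symmetry, also $G = G'/\{v,w\}$.

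The main technical step is to verify that $X = \{u,w\}$ is \cwell in $G'$. The only non-trivial bipartition of $X$ is $(\{u\},\{w\})$: we have $|E_{G'}(\{u\},\{w\})| = 1$ (the edge $uw$), while $|\partial\{w\} \cap \partial X| = 1$ (the sole boundary edge $wv$, since $\deg_{G'}(w)=2$). Hence $|E_{G'}(\{u\},\{w\})| = 1 \ge \min(|\partial\{u\}\cap\partial X|,\, 1,\, c) = 1$, establishing \cwell-ness. Since $w \notin \tset$ by construction, whenever at least one of $u,v$ is a non-terminal one may choose $X \in \{\{u,w\},\{v,w\}\}$ with $X \cap \tset = \emptyset$, and Lemma~\ref{lem:ContractWellLinked} then yields that $G = G'/X$ is $(\tset,c)$-equivalent to $G'$, as desired.

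The main (and only) obstacle is the degenerate case when both $u$ and $v$ lie in $\tset$, because then neither candidate set $\{u,w\}$ nor $\{v,w\}$ is disjoint from $\tset$ and Lemma~\ref{lem:ContractWellLinked} cannot be invoked as a black box. This case can be settled by a direct minimum-cut argument: for any bipartition $(\tset_1, \tset\setminus\tset_1)$ of the terminals, any minimum cut of $G'$ of size at most $c$ may be assumed to place the non-terminal $w$ on the same side as $u$ (or $v$), since otherwise both $uw$ and $wv$ are cut and relocating $w$ saves two edges. Together with the observation that any cut of $G$ lifts to a cut of $G'$ of equal size by assigning $w$ to the side of $u$, one obtains a size-preserving correspondence on $c$-bounded terminal cuts, completing the proof of $(\tset,c)$-equivalence.
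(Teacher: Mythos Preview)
Your proof is correct and takes the same approach as the paper, which simply states that $\{u,w\}$ is \cwell (implicitly invoking Lemma~\ref{lem:ContractWellLinked}). You are in fact more thorough: the paper does not address the case $u\in\tset$, whereas you switch to $\{v,w\}$ when possible and handle the degenerate case $u,v\in\tset$ by a direct cut argument---this extra care is not needed in the paper's sole application of the observation (where one endpoint always lies outside $\tset$), but it makes the statement hold unconditionally.
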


\vspace{-3mm}
\begin{proof}
	The set $\{u, w\}$ is a \cwell set.
\end{proof}

Recall in \textsc{PolySizedcNetwork} that when $H$ has at most $2c-1$ boundary edges, we mark the endpoints in $V(H)$ of the boundary edges $\partial H$ (i.e., $V(H)\cap V(\partial H)$) as tentative terminals $\widehat{\tset}$ and then replace $H$ with a smaller equivalent one.
In this case, despite $|\widehat{\tset}|\leq 2c-1$, we do not know how many incident edges $\widehat{\tset}$ would have.
By Observations~\ref{obs:SplitVertices}, we can assume not only that $|\widehat{\tset}| = O(c)$, but also that each terminal has degree $1$.

\begin{lemma}
	\label{lem:boundaryAssumption}
	When working on an induced subgraph $H$ of a graph $G$ with its tentative terminals $\widehat{\tset}$ coming from the endpoints of the boundary edges in $\partial H$ (i.e., $V(H) \cap V(\partial H)$),
	we may assume that each terminal in $\widehat{\tset}$ has degree $1$.
\end{lemma}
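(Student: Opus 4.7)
The plan is to perform a simple preprocessing step on $G$ before invoking the recursive subroutine on $H$: for every boundary edge $e = uv \in \partial H$ with $u \in V(H)$ and $v \notin V(H)$, I would subdivide $e$ by inserting a fresh vertex $w_e$, so that $e$ is replaced by the two edges $uw_e$ and $w_e v$. I would then regard $w_e$ as belonging to $V(H)$, so that the updated induced subgraph is on the vertex set $V(H) \cup \{w_e : e \in \partial H\}$, and the updated boundary edges are precisely the edges $w_e v$. Correspondingly, the tentative terminal associated with the boundary edge originally coming from $e$ becomes $w_e$ rather than $u$.

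The justification that this preserves $(\tset, c)$-equivalency is immediate from Observation~\ref{obs:SplitVertices}: a single subdivision produces a $(\tset, c)$-equivalent graph (since $\{u, w_e\}$ is a connectivity-$c$ well-linked set of size $2$). Applying the observation once per boundary edge and composing the equivalences gives a global $(\tset, c)$-equivalent graph, and since we only modified edges of $\partial H$ and introduced interior structure on the boundary of $H$, the whole construction integrates seamlessly with the surrounding decomposition. Any $(\widehat{\tset}, c)$-equivalent sparsifier we later build for the updated $H$ can then be substituted back into the ambient graph via Corollary~\ref{cor:RemoveEdges} and Lemma~\ref{lem:Independent}, exactly as in the original scheme.

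It remains to check the degree and cardinality claims. By construction, each $w_e$ has exactly two neighbors in total in the updated $G$, namely $u$ and $v$; of these, only $u$ lies in the updated $V(H)$, while the edge $w_e v$ is the new boundary edge. Hence each new tentative terminal $w_e$ has degree exactly $1$ inside the updated $H$, as required. Moreover, the new $\widehat{\tset}$ has one vertex per boundary edge of $H$, so $|\widehat{\tset}| \le |\partial H| \le 2c - 1 = O(c)$, matching the cardinality assumption used in Lemma~\ref{lem:c3}.

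I do not expect a genuine obstacle here; the argument is essentially bookkeeping on top of Observation~\ref{obs:SplitVertices}. The only thing to be careful about is consistency of the decomposition: after subdivision, the vertex $w_e$ must be included in $V(H)$ (so that the boundary edge out of $H$ is $w_e v$ and not $u w_e$), and the recursive procedure must treat $w_e$, rather than $u$, as the tentative terminal contributed by that boundary edge. Once this renaming is spelled out, the lemma follows.
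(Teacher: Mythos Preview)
Your proposal is correct and essentially identical to the paper's own proof: both subdivide each boundary edge of $H$, place the new vertex $w_e$ on the $H$-side so that the fresh boundary edge is $w_e v$, take $w_e$ as the tentative terminal with degree~$1$ inside the enlarged piece, and justify the equivalence via Observation~\ref{obs:SplitVertices}. The only difference is notational (the paper labels the interior endpoint $v$ rather than $u$).
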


\begin{proof}
	Apply a subdivision of each boundary edge $uv \in \partial H$ with $v\in V(H)$, introducing one vertex $w_{uv}$.
	We extend $H$ to the induced subgraph $G[V(H) \cup \{w_{uv} : v \in V(H) \cap V(\partial H)\}]$, denoted by $H'$, and its boundary $\partial (H')$ becomes the edges $uw_{uv}$ for each $u\in V(H)^c \cap V(\partial H)$.

	When we work on $H'$ via Corollary~\ref{cor:RemoveEdges}, new terminals $\widehat{\tset}'$ becomes $\{w_{uv} | uv \in \partial H\}$, and it is clear that each terminal $w_{uv}$ has the unique neighbor $v$ in $H'$ (i.e., deg$_{H'}(w_{uv})=1$).
\end{proof}

Note that this manipulation on boundary of a piece has no impact on boundary of other pieces.
When sparsifying a set with $O(c)$ boundary edges, we can make the stronger assumption as in Lemma~\ref{lem:boundaryAssumption}; the piece has $O(c)$ tentative terminals and each terminal has degree $1$.

{
	\renewcommand{\thetheorem}{\ref{lem:c3}}
	\begin{lemma}
		Let $G = (V, E)$ be a graph with a set $\tset$ of $O(c)$ terminals and each terminal have degree $1$.
		There is a subset $E'$ of $E$ with $|E'| = O(c^3)$ and $G/(E\bs E')$ is a \sparsifier for $G$.
	\end{lemma}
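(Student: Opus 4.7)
The plan is to reduce the edge-cut preservation task on $G$ to a vertex-cut preservation task on the line graph $L(G)$, so that the gammoid-based kernelization in Lemma~\ref{lem:kw12} can be invoked. Since each $t\in\tset$ has degree $1$, I identify $t$ with its unique incident edge $e_t$, viewed as a vertex of $L(G)$, and set $X=\{e_t : t\in\tset\}\subseteq V(L(G))$, so $|X|=O(c)$. I then apply Lemma~\ref{lem:kw12} to $L(G)$ (orienting each edge in both directions) with terminal set $X$ to obtain a set $Z\subseteq V(L(G))$ of size $O(|X|^3)=O(c^3)$ that contains a minimum $(A^*,B^*)$-vertex-cut for every $A^*,B^*\subseteq X$. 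I define
\[
E' \;=\; \{e_t : t\in\tset\}\,\cup\, Z,
\]
interpreting $Z$ as a set of edges in $G$ via the identification $V(L(G))=E(G)$; this gives $|E'|=O(c)+O(c^3)=O(c^3)$.

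The core content is to show that $H := G/(E\bs E')$ is $(\tset,c)$-equivalent to $G$. By Lemma~\ref{lem:CutMonotone}, $\mincut_G(A,B)\le \mincut_H(A,B)$ always, so I just need the reverse inequality at the thresholded level: for every disjoint $A,B\subseteq \tset$ with $\alpha := \mincut_G(A,B)\le c$, I will exhibit an $(A,B)$-edge-cut of $G$ of size $\alpha$ whose edges all lie in $E'$ (such a cut remains valid of the same size in $H$).

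To build this cut, I take any minimum $(A,B)$-edge-cut $F$ of $G$ and decompose it as $F = F_T \cupdot F_I$, where $F_T$ consists of the terminal edges in $F$ and $F_I$ of the remaining interior edges. Let $A''\subseteq A$ and $B''\subseteq B$ be the terminals whose incident edges are \emph{not} in $F_T$. Because every terminal has degree $1$, deleting $F_T$ simply isolates the terminals in $A\bs A''$ and $B\bs B''$, so $F_I$ must separate $A''$ from $B''$ in $G$ while using only interior edges. Under the identification $V(L(G))=E(G)$, I will check (via the Menger-style correspondence: an $a$-$b$ walk in $G$ maps to an $e_a$-$e_b$ path in $L(G)$ on the same edge set, and vice versa) that $F_I$ is a vertex cut in $L(G)$ separating $A''^{*}=\{e_a : a\in A''\}$ from $B''^{*}=\{e_b : b\in B''\}$ and disjoint from $A''^{*}\cup B''^{*}$. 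Lemma~\ref{lem:kw12} then provides a minimum $(A''^{*},B''^{*})$-vertex-cut $C\subseteq Z$ with $|C|\le |F_I|$, and the set $F^\star := F_T\cup C$ lies in $E'$, separates $A$ from $B$ in $G$ (the terminals in $A\bs A''$ and $B\bs B''$ are isolated by $F_T$, the rest is handled by $C$), and has size at most $|F_T|+|F_I|=\alpha$, hence exactly $\alpha$ by minimality.

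The main obstacle I expect is the careful handling of terminal edges: because a single terminal edge $e_t$ can ``cut off'' a terminal $t$ on its own, one cannot directly identify a minimum $(A,B)$-edge-cut of $G$ with a minimum vertex cut in $L(G)$ between the full sets $\{e_a\}_{a\in A}$ and $\{e_b\}_{b\in B}$. The decomposition $F=F_T\cupdot F_I$ and the passage to the reduced terminal sets $A''$, $B''$ is the technical move that makes the line-graph/vertex-cut reduction go through. A minor but necessary check is that the correspondence between interior edge cuts in $G$ and vertex cuts in $L(G)$ avoiding $A''^{*}\cup B''^{*}$ is tight in both directions, which holds precisely because terminals are leaves and hence every $a$-$b$ walk in $G$ must begin with $e_a$ and end with $e_b$.
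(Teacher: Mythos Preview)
Your proof is correct and follows essentially the same approach as the paper: pass to the line graph $L(G)$, apply the gammoid kernel of Lemma~\ref{lem:kw12} to a terminal set of size $O(c)$, and read off an edge set $E'$ of size $O(c^3)$ whose retention preserves all $(\tset,c)$-cuts. The only difference is in how the terminal correspondence is handled. The paper avoids your $F_T/F_I$ decomposition by a small gadget: it enlarges $L(G)$ with a pendant copy $t'$ attached to $v(e_t)$ for each $t\in\tset$ and applies Lemma~\ref{lem:kw12} with these copies as terminals, which makes $\mincut_{L(G)'}(A',B')=\mincut_G(A,B)$ hold on the nose; any terminal copies that land in $Z$ are then pushed to their unique neighbor $v(e_t)$. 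Your route instead keeps $L(G)$ unmodified, takes $X=\{e_t\}$, throws all terminal edges into $E'$ up front, and absorbs the possible presence of terminal edges in a minimum cut via the split $F=F_T\cupdot F_I$. Both work; the paper's gadget yields a one-line equality of mincuts, while your decomposition is slightly more hands-on but avoids introducing the auxiliary vertices.
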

	\addtocounter{theorem}{-1}
}

\begin{proof}
	Let $v(e)$ be the corresponding vertex in the line graph $L(G)$ for an edge $e$ in $E$ and $v_t$ be the unique neighbor in $G$ of each terminal $t$ in $\tset$.
	We enlarge the line graph $L(G)$ by adding a copy $t'$ of $t$ with its unique edge $t'v(tv_t)$.
	Also let $\tset'$ be the set of such terminal copies and $L(G)'$ be the enlarged one.

	Viewing $\tset'$ as $X$ in Lemma~\ref{lem:kw12}, we have a subset $Z$ of $V(L(G)')$ with $O(c^3)$ vertices, which contains a minimum $(A', B')$-vertex cut of any bipartition $(A', B')$ of $\tset'$.
	We slightly change $Z$ as follows: remove terminals $t'$ in $Z$ (if any) and add the unique neighbor $v(tv_t)$ of each terminal $t'$.
	Note that this perturbed set, denoted by $Z'$, still has $O(c^3)$ vertices but no intersection with $\tset'$.

	We show $Z'$ also contains a minimum vertex cut between any bipartition of terminals.
	Suppose that a partition $(A', B')$ of $\tset'$ has a minimum vertex cut $C$ overlapping with $\tset'$.
	For any $t' \in C \cap \tset'$, the minimum vertex cut $C$ does not include the unique neighbor $v(tv_t)$ of $t'$; otherwise $C-t'$ is a smaller minimum $(A', B')$-vertex cut.
	Thus we can replace $C$ by another minimum cut $C-t'+v(tv_t)$.
	Repeating this operation on all terminals in $C$, we end up having a minimum $(A', B')$-vertex cut disjoint from $\tset'$ such that the modified minimum vertex cut is contained in $Z'$ in light of the construction of $Z'$.

	Lastly, we take $E'$ as $\{e\in E(G) : v(e) \in Z'\}$ and claim $G/(E\bs E')$ is $(\tset, c)$-equivalent to $G$.
	For partition $(A, B)$ of $\tset$ in $G$, a minimum edge cut $C$ between $A$ and $B$ corresponds to a vertex cut between $A'$ and $B'$ that consists of the corresponding vertices of the edges in $C$, thus $\mincut_{L(G)'}(A', B')\leq \mincut_G(A,B)$.
	By similar reasoning for the opposite direction, we have $\mincut_{L(G)'}(A', B')\geq \mincut_G(A,B)$ and thus $\mincut_{L(G)'}(A', B') =  \mincut_G(A,B)$.
	It implies that even after contracting all edges in $E\bs E'$, we still retain an edge cut of size $\mincut_G(A, B)$.
\end{proof}

The last paragraph makes more sense by relying on the max-flow and min-cut theorem and the Menger's theorem.
For example, there are $\mincut_G(A, B)$ edge-disjoint paths between $A$ and $B$, which can be exactly transformed into $\mincut_G(A, B)$ vertex-disjoint paths between $A'$ and $B'$.
It implies that a minimum $(A', B')$-vertex cut has size at least $\mincut_G(A, B)$ (i.e., $\mincut_G(A, B)\leq \mincut_{L(G)'}(A', B')$).

After sparsifying the enlarged piece $H'$ with $O(c)$ boundary edges into a smaller equivalent graph with $O(c^3)$ edges, the all edges $w_{uv}v$ for each $v\in H$ still remain, since $Z'$ contains the vertex $v(w_{uv}v)$ and so $E'$ also contains $w_{uv}v$.
As $\{w_{uv}, v\}$ is a \cwell set, we may contract it as if there were no any operations introducing additional vertices $w_{uv}$ at the very beginning.
Therefore, our sparsifier with \optimal edges can be still found by only contracting edges, which means outputs of our algorithm in Section~\ref{ssec:Algorithm} have at most \optimal edges as well.

\section{More Efficient Algorithms for \sparsifiersTC}
\label{sec:EfficientAlgorithms}

In this section we present a faster algorithm at the expense of the size by using the notion of ``important'' edges elaborated in
Section~\ref{ssec:terminology}.

Equipped with these notions, we prove the existence of \sparsifiers constructed from important edges in Section~\ref{subsec:Existence}.
Then we revisit in Section~\ref{subsec:PolyTime} the original approach for speedup (in Section~\ref{subsec:Existence}) and achieve a result implying Theorem~\ref{thm:Main} Part~\ref{part:Main2} by utilizing expander decomposition and local cut algorithms in Section~\ref{sec:ExpanderLocalCut}.

\subsection{Equivalence, Cut Containment, and Cut Intersection}
\label{ssec:terminology}

Our faster recursive algorithm works more directly with the notion of equivalence defined in Definition \ref{def:DefMain}.
This algorithm identifies a set of important edges, $\Ehat$, and forms $H$ by contracting all edges in $E \setminus \Ehat$.

Observe that as long as $\Ehat$ is small, contracting $E \setminus \Ehat$ still results in a graph with few vertices and edges.
Therefore, our goal is find a set $\Ehat$ of important edges to keep in $H$ such that the size of $\Ehat$ is not much larger than $|\tset|.$
We will show for the purpose of being $(\tset, c)$-equivalent, a sufficient condition is that every $(\tset,c)$-cut can be formed using edges from only $\Ehat$.
This leads to the definition of $\Ehat$ containing all $(\tset,c)$-cuts, which was also used in~\cite{MolinaS18:unpublished} for the $c \le 5$ setting.

\begin{definition}[Cut containment]
	\label{def:Contain}
	In a graph $G = (V, E)$ with terminals $\tset$, a subset of edges $\Econtain \subseteq E$ is said to contain all $(\tset, c)$-cuts if for any partition $\tset = \tset_1 \cupdot \tset_2$ with $\mincut_G(\tset_1,\tset_2) \le c$ there is a cut $F \subseteq \Econtain$ such that
	\bi
	\item[1.] $F$ has size equal to $\mincut_G(\tset_1, \tset_2)$,
	\item[2.] $F$ is also a cut between $\tset_1$ and $\tset_2$.
	That is, $\tset_1$ and $\tset_2$ are disconnected in $G \setminus F$.
	\ei
\end{definition}

Note that this is different than containing all the minimum cuts: on a length $n$ path with two endpoints as terminals, any intermediate edge contains a minimum terminal cut, but there are up to $n-1$ different such minimum cuts.

If $\Econtain$ contains all $(\tset,c)$-cuts, we may contract all edges in $E \setminus \Econtain$ to obtain a $(\tset, c)$-equivalent graph $H$ of $G$.
\begin{lemma}
	\label{lem:EdgesToSparsifier}
	If $G = (V, E)$ is a connected graph with terminals $\tset$, and $\Econtain$ is a subset of edges that contain all $(\tset, c)$-cuts, then the graph
	\[
	H = G / \left(E \bs \Econtain \right)
	\]
	is $(\tset, c)$-equivalent to $G$, and has at most $|\Econtain| + 1$ vertices.
\end{lemma}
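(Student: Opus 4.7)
The plan is to prove two things separately: the $(\tset,c)$-equivalence of $G$ and $H$, and the vertex bound on $H$. Both follow from fairly direct arguments.

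For the equivalence, fix an arbitrary partition $\tset = \tset_1 \cupdot \tset_2$; I want to show $\mincut_H^c(\tset_1, \tset_2) = \mincut_G^c(\tset_1, \tset_2)$. One direction is immediate: since $H = G/(E \setminus \Econtain)$, Lemma~\ref{lem:CutMonotone} gives $\mincut_G(\tset_1,\tset_2) \le \mincut_H(\tset_1,\tset_2)$, so $\mincut_G^c \le \mincut_H^c$. For the reverse direction, consider two cases. If $\mincut_G(\tset_1,\tset_2) > c$, then by the monotonicity $\mincut_H(\tset_1,\tset_2) > c$ as well, so both thresholded values equal $c$. If $\mincut_G(\tset_1,\tset_2) \le c$, then by Definition~\ref{def:Contain} there exists $F \subseteq \Econtain$ of size $\mincut_G(\tset_1,\tset_2)$ separating $\tset_1$ from $\tset_2$ in $G$. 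I then need to argue $F$ is also a cut of the same size in $H$.

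The key observation is that the edges in $F$ are untouched by the contraction, since $F \subseteq \Econtain$ and we only contract $E \setminus \Econtain$. So the edges $F$ persist in $H$ (as a multiset of edges between the appropriate super-nodes), and $|F|$ is unchanged. To see $F$ still separates $\tset_1$ from $\tset_2$ in $H$, observe that $H \setminus F = (G \setminus F) / (E \setminus \Econtain)$; the contracted edges are a subset of the edges of $G \setminus F$, and contractions never connect previously disconnected components. Since $\tset_1$ and $\tset_2$ lie in different components of $G \setminus F$, they remain in different components of $H \setminus F$. This gives $\mincut_H(\tset_1,\tset_2) \le |F| = \mincut_G(\tset_1,\tset_2)$, completing the equivalence.

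For the vertex bound, note that after contracting $E \setminus \Econtain$ in $G$, the vertices of $H$ correspond to the connected components of the graph $(V, E \setminus \Econtain)$, and the edges of $H$ are exactly the (multi)edges in $\Econtain$ with endpoints identified accordingly. Since $G$ is connected and contraction preserves connectivity, $H$ is connected. A connected multigraph on $|V(H)|$ vertices must have at least $|V(H)| - 1$ edges, so $|V(H)| \le |E(H)| + 1 = |\Econtain| + 1$.

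I do not expect a real obstacle here; the only subtle point is the verification that contraction of edges in $E \setminus \Econtain$ cannot destroy the separation guaranteed by $F$. That reduces to the elementary fact that contracting an edge within a single connected component of $G \setminus F$ cannot merge distinct components—worth stating explicitly but not deeply difficult.
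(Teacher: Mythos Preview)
Your proof is correct and follows essentially the same approach as the paper. The paper's argument is terser—it only sketches the nontrivial direction of the equivalence (that a cut $F \subseteq \Econtain$ survives contraction) and phrases the vertex bound as ``$H/\Econtain$ is a single vertex, and uncontracting an edge adds at most one vertex''—but your more explicit treatment of both inequalities via Lemma~\ref{lem:CutMonotone} and your connectivity-based edge count are equivalent formulations of the same reasoning.
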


\begin{proof}
	Consider any cut using entirely edges in $\Econtain$: contracting edges from $E \bs \Econtain$ will bring together vertices on the same side of the cut.
	Therefore, the separation of vertices given by this cut also exists in $H$ as well.
	
	To bound the size of $H$, observe that contracting all edges of $G$ brings it to a single vertex.
	That is, $H / \Econtain$ is a single vertex: uncontracting an edge can increase the number of vertices by at most $1$, so $H$ has at most $|\Econtain| + 1$ vertices.
\end{proof}

We can also state Corollary \ref{cor:RemoveEdges} and Lemma \ref{lem:Independent} in the language of edge containment.

\begin{lemma}
	\label{lem:ContainRemoveEdges}
	Let $\Ehat$ be a set of edges in $G$ with endpoints $V(\Ehat)$, and $\tset$ be terminals in $G$.
	If edges $\Econtain$ contain all $(\tset \cup V(\Ehat), c)$-cuts in $G \bs \Ehat$, then $\Econtain \cup \Ehat$ contains all $(\tset, c)$-cuts in $G$.
\end{lemma}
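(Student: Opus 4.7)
The plan is to lift a minimum terminal cut in $G$ to one that lives entirely in $\Econtain \cup \Ehat$. Concretely, fix a partition $\tset = \tset_1 \cupdot \tset_2$ with $\mincut_G(\tset_1, \tset_2) \le c$, and pick a specific minimum cut $F = E_G(S, V \setminus S)$ witnessing this value, where $\tset_1 \subseteq S$ and $\tset_2 \subseteq V \setminus S$. Split $F$ into $F_0 \defeq F \cap \Ehat$ and $F_1 \defeq F \setminus \Ehat$, and consider the same partition $(S, V \setminus S)$ inside the subgraph $G \setminus \Ehat$, where the edges of the cut are now exactly $F_1$. With respect to the enlarged terminal set $\tset \cup V(\Ehat)$ this induces a partition $(\tset_1', \tset_2')$ with $\tset_j' \defeq (\tset \cup V(\Ehat)) \cap (\text{side containing } \tset_j)$, and so $\mincut_{G \setminus \Ehat}(\tset_1', \tset_2') \le |F_1| \le c$. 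The hypothesis on $\Econtain$ now applies and yields a cut $F_1^\star \subseteq \Econtain$ of size exactly $\mincut_{G \setminus \Ehat}(\tset_1', \tset_2')$ separating $\tset_1'$ from $\tset_2'$ in $G \setminus \Ehat$.

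I then claim that $F^\star \defeq F_0 \cup F_1^\star$ is a cut in $G$ separating $\tset_1$ from $\tset_2$ whose size equals $\mincut_G(\tset_1, \tset_2)$. Let $(S^\star, V \setminus S^\star)$ be the partition witnessing $F_1^\star$ in $G \setminus \Ehat$, with $\tset_1' \subseteq S^\star$ and $\tset_2' \subseteq V \setminus S^\star$. The key observation is that every endpoint of every edge of $\Ehat$ lies in $V(\Ehat) \subseteq \tset \cup V(\Ehat) = \tset_1' \cupdot \tset_2'$; hence an edge of $\Ehat$ crosses $(S^\star, V \setminus S^\star)$ if and only if its endpoints lie on opposite sides of the partition $(\tset_1', \tset_2')$, which happens if and only if it already crossed $(S, V \setminus S)$, i.e.\ if and only if it lies in $F_0$. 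Therefore $E_G(S^\star, V \setminus S^\star) = F_1^\star \cupdot F_0 = F^\star$, showing that $F^\star$ is a genuine cut in $G$, and since $\tset_1 \subseteq \tset_1' \subseteq S^\star$ and $\tset_2 \subseteq \tset_2' \subseteq V \setminus S^\star$, it separates $\tset_1$ from $\tset_2$.

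For sizes, disjointness of $F_1^\star$ (which avoids $\Ehat$) and $F_0 \subseteq \Ehat$ gives $|F^\star| = |F_1^\star| + |F_0| \le |F_1| + |F_0| = |F| = \mincut_G(\tset_1, \tset_2)$, while $F^\star$ being a valid $\tset_1$--$\tset_2$ cut forces $|F^\star| \ge \mincut_G(\tset_1, \tset_2)$, so equality holds. Since $F^\star \subseteq \Econtain \cup \Ehat$, this verifies both conditions of Definition~\ref{def:Contain} for the combined edge set.

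The step I expect to be most delicate is the structural argument in the middle paragraph: one must be careful that picking a specific minimum cut $(S, V \setminus S)$ and then a (possibly different) partition $(S^\star, V \setminus S^\star)$ in $G \setminus \Ehat$ still composes cleanly. The argument succeeds precisely because $V(\Ehat)$ is absorbed into the enlarged terminal set, which forces the $\Ehat$-edges to respect any cut separating $\tset_1'$ from $\tset_2'$ in exactly the same way they respected the original partition. Everything else is cut-arithmetic.
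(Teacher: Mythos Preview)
Your proof is correct. The paper itself does not give an explicit proof of this lemma; it simply remarks that Lemma~\ref{lem:ContainRemoveEdges} is the cut-containment analogue of Corollary~\ref{cor:RemoveEdges} (which in turn follows from Lemma~\ref{lem:AddEdges}), and leaves the verification implicit. Your direct argument---splitting a witnessing minimum cut across $\Ehat$ and $E\setminus\Ehat$, applying the hypothesis on the $\Ehat$-free part with the enlarged terminal set, and then observing that edges of $\Ehat$ cross the new partition $(S^\star, V\setminus S^\star)$ exactly when they crossed the original one because their endpoints are pinned by $V(\Ehat)\subseteq \tset_1'\cupdot\tset_2'$---is precisely the natural unpacking of that analogy.

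One very minor point: when you write ``let $(S^\star, V\setminus S^\star)$ be the partition witnessing $F_1^\star$ \ldots\ with $\tset_1'\subseteq S^\star$ and $\tset_2'\subseteq V\setminus S^\star$,'' you are using that such an $S^\star$ can be chosen. This follows because $F_1^\star$ has size exactly $\mincut_{G\setminus\Ehat}(\tset_1',\tset_2')$: take $S^\star$ to be the union of components of $(G\setminus\Ehat)\setminus F_1^\star$ meeting $\tset_1'$, and minimality forces $E_{G\setminus\Ehat}(S^\star,V\setminus S^\star)=F_1^\star$. This is routine and most readers will accept it as stated, but it is the one place where the paper's definition of ``cut'' (as an edge set of the form $E_G(X,V\setminus X)$) does a little work behind the scenes.
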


\begin{lemma}
	\label{lem:ContainIndependent}
	If the edges $\Econtain_1 \subseteq E(G_1)$ contain all $(\tset_1,c)$-cuts in $G_1$, and the edges $\Econtain_2 \subseteq E(G_2)$ contain all $(\tset_2,c)$-cuts in $G_2$, then $\Econtain_1 \cup \Econtain_2$ contains all the $(\tset_1 \cup \tset_2, c)$-cuts in the vertex disjoint union of $G_1$ and $G_2$.
\end{lemma}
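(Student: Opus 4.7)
\textbf{Proof plan for Lemma~\ref{lem:ContainIndependent}.} Let $G$ denote the vertex-disjoint union of $G_1$ and $G_2$, and fix an arbitrary partition $\tset_1 \cup \tset_2 = \sets_1 \cupdot \sets_2$ with $\mincut_G(\sets_1, \sets_2) \leq c$. The natural first move is to split the partition along components: define $\sets_j^{(i)} \defeq \sets_j \cap \tset_i$ for $i,j \in \{1,2\}$, so that $\sets_j^{(i)}$ is a partition of $\tset_i$ within the component $G_i$. Because $G_1$ and $G_2$ share no vertices, any cut in $G$ is a disjoint union of a cut in $G_1$ and a cut in $G_2$, and conversely the union of a cut in each component is a cut in $G$. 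Thus
\[
\mincut_G(\sets_1,\sets_2) = \mincut_{G_1}\bigl(\sets_1^{(1)}, \sets_2^{(1)}\bigr) + \mincut_{G_2}\bigl(\sets_1^{(2)}, \sets_2^{(2)}\bigr).
\]

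Since the sum of the two component mincuts is at most $c$, each of them is individually at most $c$. Applying the hypothesis on $\Econtain_i$ to the partition $(\sets_1^{(i)}, \sets_2^{(i)})$ of $\tset_i$, I obtain a cut $F_i \subseteq \Econtain_i$ of $G_i$ of size exactly $\mincut_{G_i}(\sets_1^{(i)}, \sets_2^{(i)})$ that disconnects $\sets_1^{(i)}$ from $\sets_2^{(i)}$ in $G_i \setminus F_i$. The candidate witness is then $F \defeq F_1 \cup F_2 \subseteq \Econtain_1 \cup \Econtain_2$.

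It remains to verify the two requirements of Definition~\ref{def:Contain} for $F$ in $G$. For the size, $|F| = |F_1| + |F_2|$ equals the sum of the two component mincuts, which as noted above is $\mincut_G(\sets_1, \sets_2)$. For the disconnection property, observe that $G \setminus F$ is the disjoint union of $G_1 \setminus F_1$ and $G_2 \setminus F_2$; in each of these $\sets_1^{(i)}$ is already separated from $\sets_2^{(i)}$, and there are no edges between $V(G_1)$ and $V(G_2)$ to create any new $\sets_1$--$\sets_2$ path. Hence $\sets_1$ and $\sets_2$ are disconnected in $G \setminus F$, completing the proof.

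The argument is entirely routine; the only care needed is to ensure that $\sets_j^{(i)}$ really partitions $\tset_i$ (which it does, since $\sets_1 \cupdot \sets_2$ is a partition of $\tset_1 \cup \tset_2$) and to use vertex-disjointness twice — once to split the mincut additively, and once to reassemble the disconnection. I do not foresee any obstacle.
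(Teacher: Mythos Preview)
Your proof is correct and follows the natural direct argument; the paper itself omits a proof of this lemma, treating it as the obvious edge-containment analogue of Lemma~\ref{lem:Independent}, so there is nothing to compare against. The one minor point worth making explicit is the degenerate case where some $\sets_j^{(i)}$ is empty (so $(\sets_1^{(i)},\sets_2^{(i)})$ is not a genuine bipartition of $\tset_i$ and the containment hypothesis for $\Econtain_i$ is not invoked): there $\mincut_{G_i}(\sets_1^{(i)},\sets_2^{(i)})=0$ and $F_i=\emptyset$ already suffices, so the argument goes through unchanged.
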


These motivate us to gradually build up $\Econtain$ through a further intermediate definition.

\begin{definition}
	\label{def:Intersect}
	In a graph $G = (V, E)$ with terminals $\tset$, a subset of edges $\Eintersect \subseteq E$ intersects all $(\tset, c)$-cuts for some $c > 0$ if for any partition $\tset = \tset_1 \cupdot \tset_2$ with $\mincut_G(\tset_1, \tset_2) \leq c$, there exists a cut $F = E(V_1, V_2)$ such that:
	\bi
	\item[1.] $F$ has size $\mincut_G(\tset_1,\tset_2)$,
	\item[2.] $F$ induces the same separation of $\tset$: $V_1 \cap \tset = \tset_1$, $V_2 \cap \tset = \tset_2$.
	\item[3.] $F$ contains at most $c - 1$ edges from any connected component of $G \bs \Eintersect$.
	\ei
\end{definition}

\paragraph{Reduction to Cut Intersection}
Based on Definition~\ref{def:Intersect}, we can reduce the problem of finding a set $\Econtain$ which contains all $(\tset, c)$-cuts to the problem of finding a set $\Eintersect$ which intersects all small cuts.
Formally, the deletion of an intersecting edge set $\Eintersect$ separates $(\tset,c)$-cuts of $G$ into edge sets of size $c - 1$.
Each of these smaller cuts happens on one of the connected components of $G \setminus \Eintersect$, and can thus be considered independently when we construct the containing sets of $G \setminus \Eintersect$.

This is done by first finding an intersecting set $\Eintersect$, and then recursing on each component in the (disconnected) graph with $\Eintersect$ removed, but with the endpoints of $\Eintersect$ included as terminals as well.
This will increase the number of edges and terminals, but allow us to focus on $(c-1)$-connectivity in the components, leading to our recursive scheme.
The overall algorithm simply iterates this process until $c$ reaches $1$, as shown in Figure~\ref{fig:GetContainingEdges}.
All arguments until now can be summarized as the following stitching lemma.

\begin{lemma}
	\label{lem:IntersectToContain}
	Let $G = (V, E)$ be a graph with terminals $\tset$, and $\Eintersect \subseteq E$ be a set of edges that intersects all $(\tset, c)$-cuts.
	For $\tsethat = \tset \cup V\left( \Eintersect \right)$, let $\Econtain \subseteq E \setminus \Eintersect$ be a set of edges that contains all $(\tsethat, c-1)$ cuts in the graph $(V, E \bs \Eintersect)$.
	Then $\Econtain \cup \Eintersect$ contains all $(\tset, c)$-cuts in $G$.
\end{lemma}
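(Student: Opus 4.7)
I start with any partition $\tset = \tset_1 \cupdot \tset_2$ with $\mincut_G(\tset_1,\tset_2) \le c$ and invoke the intersecting hypothesis on $\Eintersect$ to obtain a minimum $(\tset_1, \tset_2)$-cut $F = E_G(V_1, V_2)$ that induces the partition $(\tset_1, \tset_2)$ on $\tset$ and places at most $c-1$ of its edges into any single connected component of $G' := (V, E \setminus \Eintersect)$. I decompose $F = F_{\mathrm{int}} \cupdot F_{\mathrm{rest}}$ according to $\Eintersect$ and set $\tsethat_j := V_j \cap \tsethat$, so $\tset_j \subseteq \tsethat_j$.

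The plan is to exhibit the required cut as $F^\star := F_{\mathrm{int}} \cup \bigcup_C F_C^\star$, where the union ranges over those connected components $C$ of $G'$ that contain terminals from both $\tsethat_1$ and $\tsethat_2$, and each $F_C^\star \subseteq \Econtain \cap E(C)$ is a minimum cut in $C$ separating $\tsethat_1 \cap V(C)$ from $\tsethat_2 \cap V(C)$. I extract $F_C^\star$ by applying the hypothesized $(\tsethat, c-1)$-containment property of $\Econtain$ in $G'$ to a ``padded'' partition of $\tsethat$ that coincides with $(\tsethat_1, \tsethat_2)$ on $C$ and lumps the $\tsethat$-vertices of every other component of $G'$ onto a single side. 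Since $F \cap E(C)$ already witnesses a $(\tsethat_1 \cap V(C), \tsethat_2 \cap V(C))$-cut in $C$ of size at most $c-1$, the padded partition has mincut at most $c-1$ in $G'$, so containment applies; a short consistency check then forces $F_C^\star \subseteq E(C)$ with $|F_C^\star| \le |F \cap E(C)|$, whence $|F^\star| \le |F_{\mathrm{int}}| + |F_{\mathrm{rest}}| = |F|$.

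The main obstacle is to verify that $F^\star$ genuinely is a $(\tset_1, \tset_2)$-cut in $G$. By construction, in $G' \setminus \bigcup_C F_C^\star$ no connected component touches both $\tsethat_1$ and $\tsethat_2$. The delicate step is re-inserting the intersecting edges $\Eintersect \setminus F_{\mathrm{int}}$ without spoiling this: every such edge has both endpoints in $V(\Eintersect) \subseteq \tsethat$, and because it avoids the cut $F$ it cannot cross $(V_1, V_2)$, so its two endpoints lie in the same $\tsethat_j$. Therefore the re-added edges only fuse components sharing a side, and no connected component of $G \setminus F^\star$ meets both $\tsethat_1$ and $\tsethat_2$; in particular $\tset_1$ and $\tset_2$ are disconnected in $G \setminus F^\star$. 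Combining the size upper bound with the matching lower bound $|F^\star| \ge \mincut_G(\tset_1, \tset_2) = |F|$ forces equality, completing the argument.
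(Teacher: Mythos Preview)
Your proof is correct and follows essentially the same approach as the paper. The paper's proof is considerably more terse: it invokes the intersecting property to obtain the cut $F$ with at most $c-1$ edges per component of $G\setminus\Eintersect$, and then simply appeals to Lemmas~\ref{lem:ContainRemoveEdges} and~\ref{lem:ContainIndependent} (the containment analogues of Corollary~\ref{cor:RemoveEdges} and Lemma~\ref{lem:Independent}) to conclude. Your argument is the explicit unrolling of what those two lemmas accomplish: your ``padding trick'' is precisely the content of Lemma~\ref{lem:ContainIndependent} (containment works component-wise, so one can pad a partition on a single component to a global partition without changing the mincut), and your careful check that re-inserting $\Eintersect\setminus F_{\mathrm{int}}$ preserves the separation is the content of Lemma~\ref{lem:ContainRemoveEdges}. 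The one place where you are actually more careful than the paper is in arguing that the $F_C^\star$ produced by the containment hypothesis must lie entirely inside $E(C)$ and have size at most $|F\cap E(C)|$; the paper leaves this implicit in its appeal to the auxiliary lemmas.
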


\begin{proof}
	Consider a partition $\tset = \tset_1 \cupdot \tset_2$ with $\mincut_{G}(\tset_1, \tset_2) \le c$.
	Since $\Eintersect$ intersects all $(\tset,c)$-cuts, there is cut of size $\hat{c}$ separating $\tset_1$ and $\tset_2$, which has at most $c-1$ edges in each component of $G \bs \Eintersect$.
	
	Combining this with Lemmas \ref{lem:ContainRemoveEdges} and \ref{lem:ContainIndependent} shows that if $\Econtain$ contains all $(\tset \cup V(\Eintersect),c-1)$-cuts in $G\bs \Eintersect$, then $\Econtain \cup \Eintersect$ contains all $(\tset,c)$-cuts in $G$.
\end{proof}

\begin{figure}
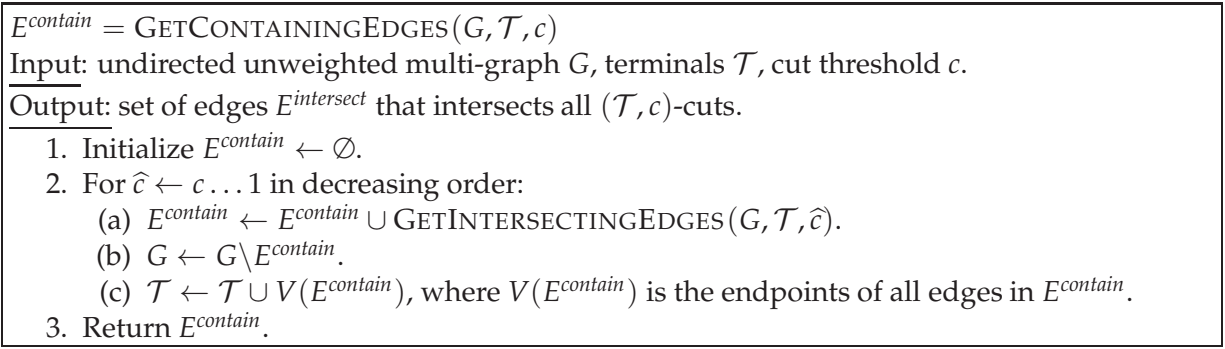

	\begin{algbox}
		$\Econtain = \textsc{GetContainingEdges}(G, \tset, c)$\\
		\underline{Input}: undirected unweighted multi-graph $G$,
		terminals $\tset$, cut threshold $c$.\\
		\underline{Output:} set of edges $\Eintersect$ that intersects all
		$(\tset,c)$-cuts.
		\bi
			\item[1.] Initialize $\Econtain \leftarrow \emptyset$.
			\item[2.] For $\chat \leftarrow c \ldots 1$ in decreasing order:
			\bi
				\item[(a)] $\Econtain \leftarrow \Econtain \cup \textsc{GetIntersectingEdges}(G, \tset, \chat)$.
				\item[(b)] $G \leftarrow G \bs \Econtain$.
				\item[(c)] $\tset \leftarrow \tset \cup V(\Econtain)$, where $V(\Econtain)$ is the endpoints of all edges in $\Econtain$.
			\ei
			\item[3.] Return $\Econtain$.
		\ei
	\end{algbox}
	\vspace{-4mm}
	\caption{Pseudocode for finding a set of edges that contain all the $(\tset,c)$-cuts.}
	\label{fig:GetContainingEdges}
\end{figure}

The following theorem shows the bounds for generating a set of edges $\Eintersect$ that intersects all $(\tset, c)$-cuts.
Its correctness follows from Lemma~\ref{lem:IntersectToContain}.
\begin{theorem}
	\label{thm:IntersectingEdges}
	For any parameter $\phi$, value $c$, and graph $G$ with terminals $\tset$, there exists an algorithm that generates a set of edges $\Eintersect$ that intersects all $(\tset, c)$-cuts:
	\vspace{-3mm}
	\begin{enumerate} \setlength{\itemsep}{-3pt}
		\item	\label{part:IntersectingEdges:Slower} with size at most $O((\phi m \log^4{n} + |\tset|) \cdot c )$ in $\O(m(c\phi^{-1})^{2c})$ time.
		\item 	\label{part:IntersectingEdges:Faster} with size at most $O((\phi m \log^{4}n + |\tset|) \cdot c^2)$ in $\O(m \phi^{-2} c^7)$ time.
	\end{enumerate}
\end{theorem}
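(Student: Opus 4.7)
}
Both parts of the theorem follow the same high-level template: apply expander decomposition, handle each expander piece separately, and then combine. First I would invoke \textsc{ExpanderDecompose}$(G,\phi)$ from Lemma~\ref{lem:ExpanderDecompose} to decompose $G$ into pieces $\{G_i\}$ of conductance at least $\phi$ with at most $O(m\phi\log^3 n)$ inter-cluster edges. I would place all inter-cluster edges into $\Eintersect$ and mark their endpoints as additional terminals in the relevant pieces. An intersection analog of Lemmas~\ref{lem:ContainRemoveEdges} and~\ref{lem:ContainIndependent}, which is immediate from Definition~\ref{def:Intersect} (any small cut of $G$ restricted to a component of $G\setminus\Eintersect$ lives entirely inside a single piece $G_i$), then says that it suffices to compute intersecting edges inside each piece separately and union them. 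Since each original terminal lies in exactly one piece and each inter-cluster edge contributes at most two new terminals, we have $\sum_i |\tset_i|\le |\tset|+O(m\phi\log^3 n)$, so a per-piece size bound of $O(|\tset_i|\cdot c)$ (resp.\ $O(|\tset_i|\cdot c^2)$) translates directly to the global bound claimed in Part~\ref{part:IntersectingEdges:Slower} (resp.\ Part~\ref{part:IntersectingEdges:Faster}).

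For Part~\ref{part:IntersectingEdges:Slower}, I would use the enumeration subroutine of Lemma~\ref{lem:EnumerateCuts} inside each piece $G_i$, which lists all cuts of size at most $c$ in time $\O(|V(G_i)|(c\phi^{-1})^{2c})$, summing to the claimed $\O(m(c\phi^{-1})^{2c})$. From this list I would group cuts by the terminal partition they induce, discard those with mincut $<c$ (for which the intersection condition is vacuous), and for each surviving partition pick one representative min cut, adding at least one of its edges to $\Eintersect_i$; any edge of a min cut that lies in $\Eintersect_i$ suffices to push the number of its edges in each component of $G_i\setminus\Eintersect_i$ down to $\le c-1$. For Part~\ref{part:IntersectingEdges:Faster}, the plan is the same, except I replace the exponential enumeration by the local-cut-based subroutine developed in Section~\ref{sec:ExpanderLocalCut}; this runs in $\O(|E(G_i)|\phi^{-2}c^7)$ time at the expense of an extra factor $c$ in the per-piece edge budget.

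The main obstacle is the combinatorial step of showing that inside a $\phi$-expander, $O(|\tset_i|\cdot c)$ edges already suffice to intersect all $(\tset_i,c)$-cuts. The key leverage is that by~\eqref{eq:ExpanderImba}, every cut of size $\le c$ has a side of volume at most $c\phi^{-1}$, so every small terminal cut is localized near a small cluster of vertices. I would therefore set up a charging scheme that assigns each small terminal partition to one of the terminals on its smaller side, and use the assumption (from Section~\ref{sec:Preliminaries}) that every terminal has degree $\le c$ to argue that each terminal is charged by at most $O(1)$ distinct minimal partitions, each contributing $O(c)$ edges. Together with Lemma~\ref{lem:ContractionCut} (so that contractions during iterative processing do not reintroduce small cuts), this would give the $O(|\tset_i|\cdot c)$ bound per piece. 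The remaining work is the expander-only running time of the local-cut subroutine invoked in Part~\ref{part:IntersectingEdges:Faster}, which I defer to Section~\ref{sec:ExpanderLocalCut}; the rest of the proof is bookkeeping across the two levels (the $\chat$-loop in Figure~\ref{fig:GetContainingEdges} and the per-piece expander decomposition).
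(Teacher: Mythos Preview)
Your high-level template---expander decomposition, mark inter-cluster edges, handle each piece separately, then union---matches the paper, and your accounting for $\sum_i|\tset_i|$ is fine. The gap is entirely in the per-piece combinatorial argument.

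For Part~\ref{part:IntersectingEdges:Slower}, the paper does \emph{not} argue the $O(|\tset_i|\cdot c)$ bound via a charging scheme over enumerated cuts. It instead runs a recursive procedure \textsc{RecursiveNontrivialCuts}: look for a \emph{non-trivial} $(\tset,c)$-cut (both sides connected, each with $\ge 2$ terminals), put its $\le c$ edges into $\Eintersect$, and recurse on the two contracted halves (Lemma~\ref{lem:Partition}); when no such cut exists, a structural lemma (Lemma~\ref{lem:OPLemma}) shows that the union of the single-terminal isolating min-cuts already \emph{contains} all $(\tset,c)$-cuts. The branching argument then gives $O(|\tset|\cdot c)$ edges in total, with no reliance on the expander property for the size bound---expansion is used only to make the search for a non-trivial cut efficient via Lemma~\ref{lem:EnumerateCuts}. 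Your proposed route, ``pick one edge from a representative min-cut for each surviving terminal partition,'' founders because the number of terminal partitions with $\mincut=c$ can be $\Omega(|\tset_i|^2)$ (think of $|\tset_i|$ terminals on a cycle with edge multiplicity $c/2$), and nothing in the degree-$\le c$ assumption limits how many such partitions land on a given terminal's ``small side.'' The charging scheme you sketch therefore does not give $O(|\tset_i|)$ charges per terminal, and you have no mechanism to share edges across partitions.

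For Part~\ref{part:IntersectingEdges:Faster}, the situation is similar: the paper is not replacing cut enumeration by a faster enumeration. It runs a \emph{different} recursion, \textsc{RecursiveTerminalCuts}, built around the minimum terminal cut (both sides automatically connected) together with a notion of \emph{maximal} $s$-isolating min-cuts; since each terminal's isolating min-cut value can increase at most $c$ times, only $O(|\tset|\cdot c)$ cuts of size $\le c$ are ever recorded, for $O(|\tset|\cdot c^2)$ edges (Lemmas~\ref{lem:WeakerVersion}--\ref{lem:StrongerVersion}). The local-cut tool of Section~\ref{sec:ExpanderLocalCut} is then used only to find these particular cuts fast inside an expander. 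Your proposal treats that section as a black-box speedup of enumeration, which it is not; without the maximal-isolating-cut idea you have no handle on either the $c^2$ size factor or the $\phi^{-2}c^7$ time.
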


While Theorem~\ref{thm:IntersectingEdges} Part~\ref{part:IntersectingEdges:Slower} developed in Section~\ref{subsec:Existence} provides a slow subroutine, we are able to modify the argument in Section~\ref{subsec:PolyTime} and then take further steps, expander decomposition and local cut algorithms, in Section~\ref{sec:ExpanderLocalCut} to obtain Theorem~\ref{thm:IntersectingEdges} Part~\ref{part:IntersectingEdges:Faster}.

In essence, we use Theorem \ref{thm:IntersectingEdges} Part \ref{part:IntersectingEdges:Faster} to prove Theorem~\ref{thm:Main} Part~\ref{part:Main2} in Appendix~\ref{proofs:Main2}.
As the size of $\Econtain$ multiplies by $O(c^2)$ every iteration, the total size of $\Econtain$ at the end is $O(c)^{2c}$, as desired in Theorem~\ref{thm:Main} Part~\ref{part:Main2}.

\subsection{Efficient Algorithm: Recursive Constructions}
\label{sec:SecondAlgo:Recursion}

In this section, we give recursive algorithms for finding sets of edges that intersect all $(\tset,c)$-cuts (as defined in Definition~\ref{def:Intersect}).
In Section~\ref{subsec:Existence}, we show the existence of a small set of edges that intersects all $(\tset, c)$-cuts.
In Section~\ref{subsec:PolyTime}, we give a polynomial-time construction that outputs a set of edges whose size is slightly larger than those given in Section~\ref{subsec:Existence}.

Our routines are based on recursive contractions.
Suppose we have found a terminal cut $F = E(V_1, V_2)$.
Then any cut $\Fhat$ overlapping with both $G[V_1]$ and $G[V_2]$, or $F$, will have only at most $c-1$ edges in common with $G[V_1]$ or $G[V_2]$.
Thus, it suffices for us to focus on cuts that lie entire in one half, which we assume without loss of generality is $G[V_1]$.

Since none of the edges in $F$ and $G[V_2]$ are used, we can work equivalently on the graph with all these edges contracted.
The progress made by this, on the other hand, may be negligible: consider, for example, the extreme case of $F$ being a matching, and no edges are present in $G[V_2]$.
On the other hand, if $G[V_2]$ is connected, then it will become a single terminal vertex in addition to $V_1$, and the two halves that we recurse on add up to a size that's only slightly larger than $G$.

Thus, our critical starting point is to look for cuts $(V_1, V_2)$, where both $G[V_1]$ and $G[V_2]$ are connected, and contain two or more terminals.
We first look for such cuts through exhaustive enumeration in Section~\ref{subsec:Existence}, and show that when none are found, we can simply terminate by taking all minimum cuts with one terminal on one side, and the other terminals on the other side.
Unfortunately, we do not have a polynomial time algorithm for determining the existence of a cut $(V_1,V_2)$ with size at most $c$ such that both $G[V_1]$ and $G[V_2]$ are connected and have at least $2$ terminals.

In Section~\ref{subsec:PolyTime}, we take a less direct, but poly-time computable approach based on computing the minimum terminal cut among the terminals $\tset$.
Both sides of this cut are guaranteed to be connected by the minimality of the cut.
However, we cannot immediately recurse on this cut due to it possibly containing only one terminal on one side.
We address this by defining maximal terminal separating cuts: minimum cuts with only that terminal on one side, but containing as many vertices as possible.
The fact that such cuts can only grow $c$ times until their sizes exceed $c$ allows us to bound the number of cuts recorded by the number of terminals, times an extra factor of $c$, for a total of $O(kc^2)$ edges in the sparsifier.

\subsubsection{Existence}
\label{subsec:Existence}

Our divide-and-conquer scheme relies on the following observation about when $(\tset, c)$-cuts are able to interact completely with both sides of a cut.

\begin{lemma}
	\label{lem:Partition}
	Let $F$ be a cut given by the partition $V = V_1 \cupdot V_2$ in $G = (V, E)$ such that both $G[V_1]$	and $G[V_2]$ are connected, and $\tset_1 = V_1 \cap \tset$ and $\tset_2 = V_2 \cap \tset$ be the partition of $\tset$ induced by this cut.
	If $\Eintersect_1$ intersects all $(\tset_1 \cup \{v_2\}, c)$-terminal cuts in $G / V_2$, the graph formed by contracting all of $V_2$ into a single vertex $v_2$, and similarly $\Eintersect_2$ intersects all $(\tset_2 \cup \{v_1\}, c)$-terminal cuts in $G / V_1$, then $\Eintersect_1 \cup \Eintersect_2 \cup F$ intersects all $(\tset, c)$-cuts in $G$ as well.
\end{lemma}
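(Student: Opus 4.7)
The plan is to show, for any partition $\tset = \tset'_1 \cupdot \tset'_2$ with $\hat c := \mincut_G(\tset'_1, \tset'_2) \le c$ (distinguished from the partition $\tset_1 \cupdot \tset_2$ induced by $V_1, V_2$), the existence of a cut of size exactly $\hat c$ separating $\tset'_1$ from $\tset'_2$ in $G$ whose restriction to every connected component of $G \bs \Eintersect$ has at most $c-1$ edges, where $\Eintersect := \Eintersect_1 \cup \Eintersect_2 \cup F$.

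The first step is a structural observation reducing the per-component bound to the hypotheses. Because $F \subseteq \Eintersect$, every component of $G \bs \Eintersect$ lies entirely in $V_1$ or in $V_2$; those in $V_1$ are the components of $G[V_1]$ after deleting $\Eintersect_1 \cap E(G[V_1])$, which form a refinement of the components of $(G/V_2) \bs \Eintersect_1$ restricted to $V_1$ (contracting $V_2$ to the single vertex $v_2$ can only merge $V_1$-components through $v_2$). So any edge set guaranteed by $\Eintersect_1$ to have $\le c-1$ edges per component of $(G/V_2) \bs \Eintersect_1$ automatically has $\le c-1$ edges per $V_1$-component of $G \bs \Eintersect$, and symmetrically in $V_2$.

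The construction itself builds the witness by lifting cuts from the contracted graphs. I start from a min $(\tset'_1, \tset'_2)$-cut $F^*$ in $G$ with partition $(W_1, W_2)$, chosen by submodular uncrossing (exploiting that $G[V_1]$ and $G[V_2]$ are connected) so that $V_2$ sits entirely on one side of $(W_1, W_2)$ whenever this is consistent with the terminal partition, and analogously for $V_1$. This dictates natural placements of the artificial terminals $v_2, v_1$ and hence induced partitions $Q_1 \cupdot Q_2 = \tset_1 \cup \{v_2\}$ of $V(G/V_2)$ and $R_1 \cupdot R_2 = \tset_2 \cup \{v_1\}$ of $V(G/V_1)$. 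Decomposing $F^* = F^*_{V_1} \cupdot F^*_{V_2} \cupdot F^*_F$ according to $E(G[V_1])$, $E(G[V_2])$, $F$ gives $\mincut_{G/V_2}(Q_1, Q_2) \le |F^*_{V_1}| + |F^*_F|$ (with the $F$-edges of $F^*$ becoming edges incident to $v_2$) and the symmetric bound in $G/V_1$. Applying the hypotheses now produces cuts $\tilde F_1, \tilde F_2$ of the contracted min-cut sizes meeting the per-component bounds, and I assemble the witness as $F^{**} := (\tilde F_1 \cap E(G[V_1])) \cup (\tilde F_2 \cap E(G[V_2])) \cup F^*_F$. Any $\tset'_1$-to-$\tset'_2$ path in $G$ is then killed by $F^{**}$ (paths in $V_1$ by $\tilde F_1 \cap E(G[V_1])$, paths in $V_2$ by $\tilde F_2 \cap E(G[V_2])$, and crossing paths by $F^*_F$), and a size count using that the two contracted mincuts double-count exactly $F^*_F$ yields $|F^{**}| = \hat c$. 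The per-component bound then follows from the first step, since $F^*_F \subseteq F \subseteq \Eintersect$ contributes to no component.

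The main obstacle I anticipate is the ``genuinely split'' case, where $V_2$ is split by every min $(\tset'_1, \tset'_2)$-cut (so both $\tset'_1 \cap V_2, \tset'_2 \cap V_2$ are non-empty). There the naive projection of $F^*$ onto $G/V_2$ fails to be a cut because the $F$-edges of $F \bs F^*$ let $v_2$ bridge the two sides in $G/V_2$, and one must choose $F^*$ coherently with the contracted cuts so that the $V_1$-piece of $\tilde F_1$, the $V_2$-piece of $\tilde F_2$, and $F^*_F$ still glue into a terminal partition of $V$ of total cut size exactly $\hat c$ --- most likely via a simultaneous submodular uncrossing of $F^*$ against both $V_1$ and $V_2$, bookkeeping the four quadrants $V_i \cap W_j$ to make sure the $F$-edges are counted exactly once on the side where they contribute.
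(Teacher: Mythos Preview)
Your structural observation about component refinement is correct and matches the paper. The gap is in the construction of $F^{**}$, and your diagnosis of which case is hard is inverted.

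The ``genuinely split'' case you flag as the main obstacle is actually trivial. If both $\tset'_1 \cap V_2$ and $\tset'_2 \cap V_2$ are nonempty, then connectedness of $G[V_2]$ forces \emph{every} $(\tset'_1,\tset'_2)$-cut to contain an edge of $E(G[V_2])$; hence any minimum cut $F^*$ already has at most $\hat c - 1 \le c-1$ edges in $E(G[V_1])$, so at most $c-1$ in each $V_1$-component of $G \setminus \Eintersect$. If $V_1$ is also split the symmetric bound holds on the $V_2$ side, and $F^*$ itself is the witness --- no lifting, gluing, or uncrossing required.

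Conversely, your combination $F^{**} = (\tilde F_1 \cap E(G[V_1])) \cup (\tilde F_2 \cap E(G[V_2])) \cup F^*_F$ does not work as stated. The size claim $|F^{**}| = \hat c$ rests on ``the two contracted mincuts double-count exactly $F^*_F$,'' but $\tilde F_1, \tilde F_2$ are witnesses produced by the intersection hypotheses, not pieces of $F^*$: nothing forces $\tilde F_i \cap F = F^*_F$, and when this fails both the size count and the claim that $F^{**}$ separates $\tset'_1$ from $\tset'_2$ break down. Moreover, when exactly one of $V_1, V_2$ is split by the terminal partition, one of the pairs $(Q_1,Q_2)$, $(R_1,R_2)$ is not even well-defined, so the corresponding $\tilde F_i$ does not exist.

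The paper's proof avoids all of this by a direct three-way case split on a minimum cut $\hat F$. If $\hat F \subseteq E(G[V_1])$, then $G[V_2]$ is untouched and connected, so all of $V_2$ lies on one side, $\hat F$ is a min cut for the induced partition of $\tset_1 \cup \{v_2\}$ in $G/V_2$, and the single witness supplied by the hypothesis on $\Eintersect_1$ lifts verbatim to $G$. The case $\hat F \subseteq E(G[V_2])$ is symmetric. Otherwise $\hat F$ has at most $\hat c - 1 \le c - 1$ edges in each of $E(G[V_1])$ and $E(G[V_2])$, and $\hat F$ itself is the witness. You never need to combine cuts from the two contracted graphs.
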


\begin{proof}
	Consider some cut $\Fhat$ of size at most $c$.
	If $\Fhat$ uses an edge from $F$, then it has at most $c - 1$ edges in $G \bs F$, and thus in any connected component as well.
	If $\Fhat$ has at most $c - 1$ edges in $G[V_1]$, then every connected component in $(G \setminus F) \setminus \Eintersect_1$ has at most $c - 1$ edges from $\Fhat$. This follows because removing $F$ has already disconnected $V_1$ and $V_2$, and removing $\Eintersect_1$ can only further disconnects the remaining components.
	
	The only remaining case is when $\Fhat$ is entirely contained in one of the sides.
	Without loss of generality assume that $\Fhat$ is entirely contained in $V_1$, i.e., $\Fhat \subseteq E(G[V_1])$.
	Because no edges from $G[V_2]$ are removed and $G[V_2]$ is connected, all of $\tset_2$ must be on one side of the cut, and can therefore be represented by a single vertex $v_2$.
	Using the induction hypothesis on the cut $\Fhat$ in $G / V_2$ with the terminal separation given by all of $\tset_2$ replaced by $v_2$ gives that $\Fhat$ has at most $c - 1$ edges in any connected component of $\left( G / V_2 \right) \bs \Eintersect_1$.
	Since connected components remain intact under contracting connected subsets, we conclude that $\Fhat$ has at most $c - 1$ edges in any connected components of $G \bs \Eintersect_1$ as well.
\end{proof}

However, to make progress on such a partition, we need to contract at least two terminals together with either $V_1$ or $V_2$.
This leads to our key definition of a non-trivial $\tset$-separating cut:
\begin{definition}
	\label{def:Nontrivial}
	A non-trivial $(\tset,c)$-cut is a separation of $V$ into $V_1 \cupdot V_2$ such that:
	\bi
		\item[1.] the subgraphs induced by $V_1$ and $V_2$ (i.e., $G[V_1]$ and $G[V_2]$) are both connected.
		\item[2.] $|V_1 \cap \tset| \geq 2$, $|V_2 \cap \tset| \geq 2$.
	\ei
\end{definition}

Such cuts are critical for partitioning and recursing on the two resulting pieces.
The connectivity of $G[V_1]$ and $G[V_2]$ is necessary for applying Lemma \ref{lem:Partition}, and $|V_1 \cap \tset| \geq 2$, $|V_2 \cap \tset| \geq 2$ are necessary to ensure that making this cut and recursing makes progress.

We now study the set of graphs $G$ and terminals $\tset$ for which a non-trivial cut exists.
For example, consider the case when $G$ is a star graph (a single vertices with $n-1$ vertex connected to it) and all vertices are terminals.
In this graph, the side of the cut not containing the center can only have a single vertex; hence, there are no non-trivial cuts.

We can, in fact, prove the converse: if no such interesting separations exist, we can terminate by only considering the $|\tset|$ separations of $\tset$ formed with one terminal on one of the sides.
We define these cuts to be the $s$-isolating cuts.

\begin{definition}
	\label{def:IsolatingCut}
	For a graph $G$ with terminal set $\tset$ and some $s \in \tset$, an $s$-isolating cut is a partition of the vertices $V = V_{A} \cupdot V_{B}$ such that $s$ is the only terminal in $V_A$, i.e., $s \in V_{A}$, $(\tset \bs \{u\}) \subseteq V_{B}$.
\end{definition}

\begin{lemma}
	\label{lem:OPLemma}
	If $\tset$ is a subset of at least $4$ terminals in an undirected graph $G$ that has no non-trivial $\tset$-separating cut of size at most $c$, then the union of all $s$-isolating cuts of size at most $c$:
	\[
	\Eintersect
	\leftarrow
	\bigcup_{\substack{s \in \tset \\
			\mincut_G\left(\left\{s\right\}, \tset \bs \left\{s\right\}\right) \leq c}}
	\Mincut\left(G,\left\{s\right\}, \tset \bs \left\{s\right\}\right),
	\]
	contains all $(\tset, c)$-cuts of $G$. 
\end{lemma}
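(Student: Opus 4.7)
The plan is to proceed by induction on $\min(|\tset_1|, |\tset_2|)$, ranging over partitions $\tset = \tset_1 \cupdot \tset_2$ with $\mincut_G(\tset_1, \tset_2) \leq c$, showing that $\Eintersect$ contains some $(\tset_1, \tset_2)$-cut of size exactly $\mincut_G(\tset_1, \tset_2)$. Assume WLOG $|\tset_1| \leq |\tset_2|$. The base case $|\tset_1| = 1$ is immediate: if $\tset_1 = \{s\}$, then $\Mincut(G, \{s\}, \tset \setminus \{s\})$ is itself a minimum $(\tset_1, \tset_2)$-cut, and it lies in $\Eintersect$ by construction.

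For the inductive step ($|\tset_1| \geq 2$, hence $|\tset_2| \geq 2$), fix a minimum $(\tset_1, \tset_2)$-cut $F = E(V_1, V_2)$. A preliminary cleanup lets us assume that every connected component of $V_1$ (resp.\ $V_2$) contains a terminal: any terminal-free component of $V_1$ must have no edges to $V_2$ (else moving it to $V_2$ strictly shrinks $F$), so it is an isolated component of $G$ and can be discarded without loss of generality. If after this cleanup both $V_1$ and $V_2$ were connected, then $(V_1, V_2)$ would be a non-trivial $\tset$-separating cut of size $\leq c$, contradicting the hypothesis. So at least one side decomposes into $\geq 2$ terminal-carrying components.

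First suppose $V_1$ has $\geq 2$ such components. Group them into $V_1 = V_1^{(a)} \cupdot V_1^{(b)}$ with both parts non-empty, inducing $\tset_1 = \tset_1^{(a)} \cupdot \tset_1^{(b)}$ with $1 \leq |\tset_1^{(\alpha)}| < |\tset_1|$. Because there are no $V_1^{(a)}$--$V_1^{(b)}$ edges, $\mincut_G(\tset_1^{(\alpha)}, \tset \setminus \tset_1^{(\alpha)}) \leq |E(V_1^{(\alpha)}, V_2)|$, and these two quantities sum to $|F|$. Since the smaller side of each sub-partition has size $|\tset_1^{(\alpha)}| < |\tset_1|$, the inductive hypothesis produces cuts $F^{(a)}, F^{(b)} \subseteq \Eintersect$ realizing those minimum cuts; their union separates $\tset_1$ from $\tset_2$ and has size at most $|F^{(a)}| + |F^{(b)}| \leq |F|$, which forces equality with $\mincut_G(\tset_1, \tset_2)$.

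The main obstacle is the remaining case, where $V_1$ is connected but $V_2$ has $\geq 2$ terminal components, because splitting $\tset_2$ does not decrease the induction parameter. The key structural claim here is that each component $V_2^{(i)}$ must contain exactly one terminal: otherwise $(V_2^{(i)}, V \setminus V_2^{(i)})$ would be a non-trivial cut of size $\leq |F| \leq c$, since $V_2^{(i)}$ is connected (being a component) and $V \setminus V_2^{(i)}$ is connected because $V_1$ is connected and every other $V_2^{(j)}$ retains at least one edge to $V_1$. Thus $V_2 = \bigcup_{s \in \tset_2} V_2^{(s)}$ with each $V_2^{(s)}$ housing only $s$, which yields $\mincut_G(\{s\}, \tset \setminus \{s\}) \leq |\partial V_2^{(s)}|$ and hence $\sum_{s \in \tset_2} \mincut_G(\{s\}, \tset \setminus \{s\}) \leq \sum_s |\partial V_2^{(s)}| = |F|$. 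Let $U_s$ be the $s$-side of the chosen $\Mincut(G, \{s\}, \tset \setminus \{s\})$ and set $V_2^* = \bigcup_{s \in \tset_2} U_s$. Then $V_2^* \cap \tset = \tset_2$, and $\partial V_2^* \subseteq \bigcup_s \partial U_s \subseteq \Eintersect$; subadditivity of boundaries under unions gives $|\partial V_2^*| \leq \sum_s |\partial U_s| \leq \mincut_G(\tset_1, \tset_2)$, which combined with $|\partial V_2^*| \geq \mincut_G(\tset_1, \tset_2)$ produces the desired minimum cut inside $\Eintersect$.
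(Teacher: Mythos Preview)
Your proof is correct, but it follows a genuinely different route from the paper's. The paper argues directly: given a partition $\tset=\tset_1\cupdot\tset_2$ with $|\tset_1|,|\tset_2|\ge 2$ and a minimum cut $F$ of size at most $c$, it contracts every edge outside $F$ to obtain a small graph $\Ghat=G/(E\setminus F)$, takes a spanning tree of $\Ghat$, and shows by a short structural argument (each node must contain a terminal by minimality; no leaf can contain two terminals; no internal edge can have two terminals on each side) that $\Ghat$ is forced to be a star with one terminal per leaf. The leaf cuts are then exactly disjoint $s$-isolating minimum cuts whose union realizes $F$.

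Your argument trades this one-shot structural picture for an induction on $\min(|\tset_1|,|\tset_2|)$ together with a connectedness case split on the two sides of a minimum cut. The payoff is that you avoid the spanning-tree analysis entirely and work only with elementary submodularity of boundaries (the $\partial(A\cup B)\subseteq\partial A\cup\partial B$ step); the cost is the extra Case~1 recursion, which the paper never needs because the star structure handles the disconnected-$V_1$ case simultaneously. Your Case~2 is essentially the paper's star conclusion in disguise: each component of $V_2$ is exactly a leaf of the paper's star. One small point worth making explicit is that both arguments tacitly assume $G$ is connected (you need it for ``every other $V_2^{(j)}$ retains at least one edge to $V_1$'', and the paper needs it for ``spanning tree of $\Ghat$''); this is harmless given Lemma~\ref{lem:ContainIndependent}, but you might state it.
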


\begin{figure}[!h]
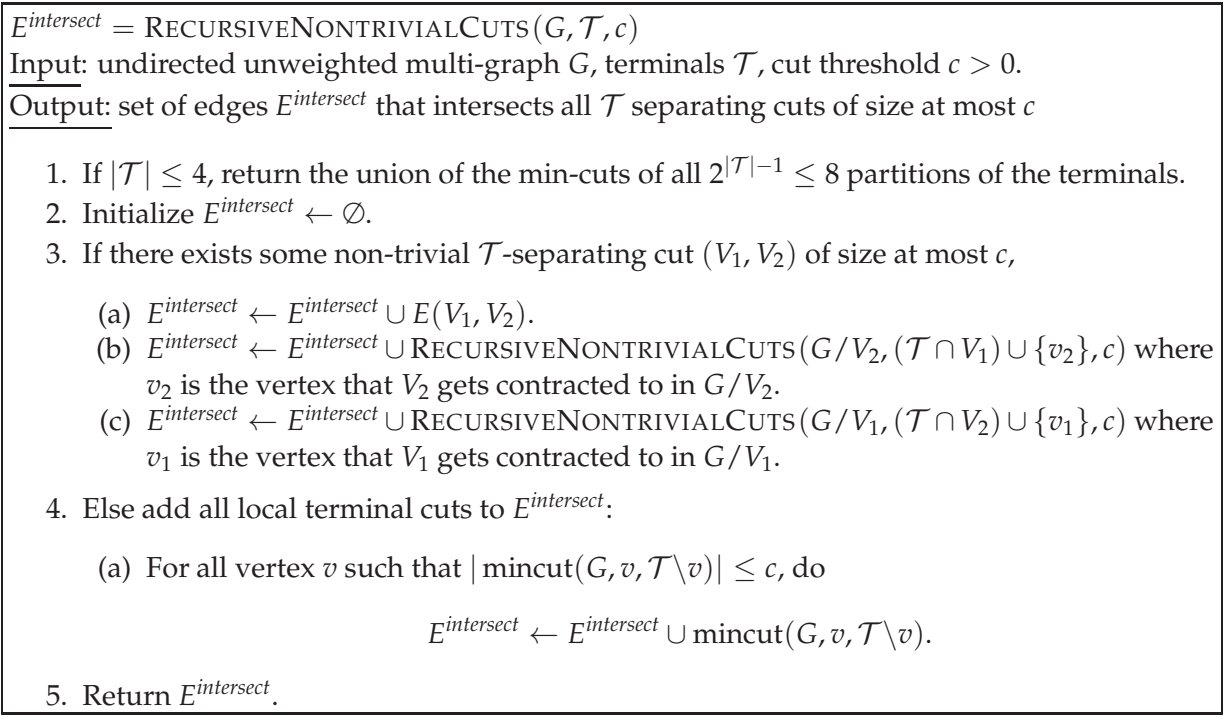

	\begin{algbox}
		$\Eintersect = \textsc{RecursiveNontrivialCuts}(G, \tset, c)$
		
		\underline{Input}: undirected unweighted multi-graph $G$,
		terminals $\tset$,
		cut threshold $c > 0$.
		
		\underline{Output:} set of edges $\Eintersect$ that intersects all $\tset$
		separating cuts of size at most $c$
		\begin{enumerate}
			\setlength{\itemsep}{-3pt}
			\item If $|\tset| \leq 4$, return the union of the
			min-cuts of all $2^{|\tset| - 1} \leq 8$ partitions of the terminals.
			\item Initialize $\Eintersect \leftarrow \emptyset$.
			\item \label{ln:Partition} If there exists some non-trivial
			$\tset$-separating cut $(V_1, V_2)$ of size at most $c$,
			\begin{enumerate}
				\setlength{\itemsep}{-2pt}
				\item $\Eintersect \leftarrow \Eintersect \cup E(V_1, V_2)$.
				\item $\Eintersect \leftarrow \Eintersect \cup \textsc{RecursiveNontrivialCuts}(G / V_2, (\tset \cap V_1) \cup \{v_2\}, c)$ where $v_2$ is the vertex that $V_2$
				gets contracted to in $G / V_2$.
				\item $\Eintersect \leftarrow \Eintersect \cup \textsc{RecursiveNontrivialCuts}(G / V_1, (\tset \cap V_2) \cup \{v_1\}, c)$ where $v_1$ is the vertex that $V_1$
				gets contracted to in $G / V_1$.
			\end{enumerate}
			\item Else add all local terminal cuts to $\Eintersect$:
			\begin{enumerate}
				\setlength{\itemsep}{0pt}
				\item \label{ln:OP} For all vertex $v$ such that $|\Mincut(G, v, \tset \bs v)| \leq c$,
				do
				\[
				\Eintersect \leftarrow \Eintersect \cup
				\Mincut(G, v, \tset \bs v).
				\]
			\end{enumerate}
			\item Return $\Eintersect$.
		\end{enumerate}
	\end{algbox}
	\vspace{-4mm}
	\caption{Algorithm for finding a set of edges that intersects all terminal cuts of size $\leq c$.}
	\label{fig:RecursiveNonTrivialCuts}
\end{figure}

\begin{proof}
	Consider a graph with no non-trivial $\tset$-separating cut of size at most $c$, but there is a partition of $\tset$, $\tset = \tset_1 \cupdot \tset_2$, such that the minimum cut between $\tset_1$ and $\tset_2$, $V_1$ and $V_2$, has at most $c$ edges, and $|\tset_1|, |\tset_2| \ge 2.$
	
	Let $F$ be one such cut, and consider the graph
	\[
	\Ghat = G / \left(E \bs F\right),
	\]
	that is, we contract all edges except the ones on this cut.
	Note that $\Ghat$ has at least $2$ vertices.
	
	Consider a spanning tree $T$ of $\Ghat$.
	By minimality of $F$, each node of $T$ must contain at least one terminal. Otherwise, we can keep one edge from such a node without affecting the distribution of terminal vertices.
	
	We now show that no vertex of $T$ can contain $|\tset|-1$ terminals.
	If $T$ has exactly two vertices, then one vertex must correspond to $\tset_1$ and one must correspond to $\tset_2$, so no vertex has $|\tset|-1$ terminals.
	If $T$ has at least $3$ vertices, then because every vertex contains at least one terminal, no vertex in $T$ can contain $|\tset|-1$ vertices. 
	
	Also, each leaf of $T$ can contain at most one terminal, otherwise deleting the edge adjacent to that leaf forms a non-trivial cut.
	
	Now consider any non-leaf node of the tree, say $r$.
	As $r$ is a non-leaf node, it has at least two different neighbors that lead to leaf vertices.
	
	Let us make $r$ the root of this tree and consider some neighbor of $r$, say $x$.
	If the subtree rooted at $x$ has more than $2$ terminals, then cutting the $rx$ edge results in two components, each containing at least two terminals (the component including $r$ has at least one other neighbor that contains a terminal).
	Thus, the subtree rooted at $x$ can contain at most one terminal, and must therefore be a singleton leaf.
	
	Hence, the only possible structure of $T$ is a star centered at $r$ (which may contain multiple terminals) in which each leaf has exactly one terminal in it.
	This in turn implies that $\Ghat$ must also be a star, i.e., $\Ghat$ has the same edges as $T$ but possibly with multi-edges.
	This is because any edges incident to a leaf of a star forms a connected cut.
	
	By minimality, each cut separating the root from a leaf is a minimal cut for that single terminal, and these cuts are disjoint.
	Thus, taking the union of edges of all these singleton cuts gives a cut that partitions $\tset$ in the same way and has the same size.	
\end{proof}

Note that Lemma \ref{lem:OPLemma} is not claiming all the $(\tset,c)$-cuts of $\tset$ are singletons. Instead, it says that any $(\tset,c)$-cut can be formed from a union of single terminal cuts.

Combining Lemma \ref{lem:Partition} and \ref{lem:OPLemma}, we obtain the recursive algorithm in 
Figure~\ref{fig:RecursiveNonTrivialCuts}, which demonstrates the existence of $O(|\tset| \cdot c)$ sized $(\tset, c)$-cut-intersecting subsets. If there is a non-trivial $\tset$-separating cut, the algorithm in Line~\ref{ln:Partition} finds it and recurses on both sides of the cut using Lemma \ref{lem:Partition}.
Otherwise, by Lemma \ref{lem:OPLemma}, the union of the $s$-isolating cuts of size at most $c$ contains all $(\tset,c)$-cuts, so the algorithm keeps the edges of those cuts in Line~\ref{ln:OP}.

\begin{lemma}
	\label{lem:RecursiveNontrivialCutsCorrectness}
	\textsc{RecursiveNontrivialCuts} as shown in 
	Figure~\ref{fig:RecursiveNonTrivialCuts} correctly returns
	a set of $(\tset, c)$-cut-intersecting edges of size at most
	$O(|\tset| \cdot c)$.
\end{lemma}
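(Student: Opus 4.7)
I would establish both the correctness and the size bound simultaneously by induction on $|\tset|$, following the three branches of \textsc{RecursiveNontrivialCuts}. For correctness, in the base case $|\tset|\le 4$ the algorithm inserts a min-cut for each of the $\le 8$ partitions $(\tset_1,\tset_2)$ with $\mincut_G(\tset_1,\tset_2)\le c$; since the entire min-cut lies inside $\Eintersect$, it contributes zero edges to every connected component of $G\bs\Eintersect$, trivially verifying Definition \ref{def:Intersect}. In the branch that finds a non-trivial cut $(V_1,V_2)$ of size at most $c$, both $G[V_1]$ and $G[V_2]$ are connected by Definition \ref{def:Nontrivial}, so I can invoke Lemma \ref{lem:Partition} directly: the intersecting sets produced by the two recursive calls on $G/V_2$ and $G/V_1$, together with the $\le c$ boundary edges $E(V_1,V_2)$, intersect every $(\tset,c)$-cut in $G$. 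In the remaining branch, where no non-trivial $(\tset,c)$-cut exists, Lemma \ref{lem:OPLemma} says the union of $s$-isolating cuts already contains every $(\tset,c)$-cut of $G$; since a containing set is a fortiori an intersecting set (the certifying cut has zero edges outside $\Eintersect$), the output is valid.

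For the size bound, I would analyze the recursion tree with $L$ leaves. Each internal node contributes at most $c$ edges through its boundary $E(V_1,V_2)$, totalling $(L-1)c$. Each leaf contributes either $O(c)$ edges (base case: at most eight min-cuts each of size $\le c$) or at most $\ell_i \cdot c$ edges (no-non-trivial-cut case: at most $\ell_i$ isolating cuts, each of size $\le c$), where $\ell_i$ denotes the terminal count at leaf $i$. Because a split turns a subproblem with $k$ terminals into children of sizes $k_1+1$ and $k_2+1$ (with $k_1+k_2=k$), telescoping over the tree gives the identity $\sum_i \ell_i = |\tset| + 2(L-1)$.

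The main remaining step, which I expect to be the only genuinely delicate point, is bounding $L$ linearly in $|\tset|$. If the root is already a leaf (i.e.\ $|\tset|\le 4$ or no non-trivial cut exists at the root) then $L=1$ and the lemma follows at once. Otherwise every leaf is the output of a recursive call and hence the child of a non-trivial split; by clause~2 of Definition \ref{def:Nontrivial} it inherits at least two terminals from its parent's side plus one new virtual terminal from the contracted side, so $\ell_i \ge 3$. Substituting into the identity above yields $3L \le |\tset|+2L-2$, hence $L \le |\tset|-2$. Combining everything, the total edge count is
\[
|\Eintersect| \;\le\; (L-1)c + O(Lc) + c \sum_i \ell_i \;=\; O\bigl((L+|\tset|)c\bigr) \;=\; O(|\tset|\cdot c),
\]
as claimed.
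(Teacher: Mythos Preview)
Your proof is correct and follows essentially the same approach as the paper: correctness by induction via Lemmas~\ref{lem:Partition} and~\ref{lem:OPLemma}, and the size bound via the fact that each non-trivial split yields children with at least $3$ terminals while the total terminal count grows by exactly $2$. Your explicit identity $\sum_i \ell_i = |\tset|+2(L-1)$ combined with $\ell_i\ge 3$ is precisely the paper's potential argument, which it phrases (slightly less precisely) as the invariant $\sum_i(|\tset_i|-2)$ being bounded and positive.
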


\begin{proof}
	The correctness of the algorithm can be argued by induction.
	The base case, where we terminate by adding all min-cuts with one terminal on one side, follows from Lemma~\ref{lem:OPLemma}, while the inductive case follows from applying Lemma~\ref{lem:Partition}.
	
	It remains to bound the size of $\Eintersect$ returned.
	Once again there are two cases: The first case is when we terminate with the union of singleton cuts. Each such cut has size at most $c$, thus summing to the total of $|\tset| \cdot c$.
	
	The second is the recursive case, which can be viewed as partitioning $k \geq 4$ terminals into two instances of sizes $k_1$ and $k_2$ where $k_1 + k_2 = k + 2$ and $k_1, k_2 \geq 3$.
	Note that the total values of $|\textsc{Terminals}| - 2$ across all the recursion instances is strictly decreasing, and is always positive.
	So, the recursion can branch at most $|\tset|$ times, implying that the total number of edges added is at most $O(c \cdot |\tset|)$.
\end{proof}

In fact, we may modify \textsc{RecursiveNontrivialCuts} to take extra steps for marginal speedup by utilizing expander decomposition.

\begin{proofof}{Theorem~\ref{thm:IntersectingEdges} Part~\ref{part:IntersectingEdges:Slower}}
	First, we perform expander decomposition, remove the inter-cluster edges, and add their endpoints as terminals.
	
	Now, we describe the modifications to \textsc{RecursiveNontrivialCuts} that make it efficient.
	
	Lemma~\ref{lem:AddEdges} and Lemma~\ref{lem:Independent} allow us to consider the pieces separately.
	
	Now at the start of each recursive call, enumerate all cuts of size at most $c$, and store the vertices on the smaller side, which by Equation~\ref{eq:ExpanderImba} above has size at most $O(c \phi^{-1})$.
	When such a cut is found, we only invoke recursion on the smaller side (in terms of volume).
	For the larger piece, we can continue using the original set of cuts found during the search.
	
	To use a cut from a pre-contracted state, we need to:
	\bi
	\item[1.] check if all of its edges remain (using a union-find data structure).
	\item[2.] check if both portions of the graph remain connected upon removal of this cut -- this can be done by explicitly checking the smaller side, and certifying the bigger side using a dynamic connectivity data structure by removing all edges from the smaller side.
	\ei
	Since we contract each edge at most once, the total work done over all the larger side is at most
	\[ \O\left( m \left( c\phi^{-1} \right)^{2c} \right), \] where we have included the logarithmic factors from using the dynamic connectivity data structure.
	Furthermore, the fact that we only recurse on things with half as many edges ensures that each edge participates in the cut enumeration process at most $O(\log n)$ times.
	Combining these then gives the overall running time.
\end{proofof}

\subsubsection{Polynomial-Time Construction}
\label{subsec:PolyTime}

It is not clear to us how the previous algorithm in Section~\ref{subsec:Existence} could be implemented in polynomial time.
While incorporating expander decomposition makes our algorithm faster, its running time still has the $\log^{O(c)}{n}$ term (as stated in Theorem~\ref{thm:IntersectingEdges}~Part~\ref{part:IntersectingEdges:Slower}).
In this section, we give a more efficient algorithm that returns sparsifiers of larger size, but ultimately leads to the faster running time given in Theorem~\ref{thm:IntersectingEdges}~Part~\ref{part:IntersectingEdges:Faster}.
It was derived by working backwards from the termination condition of taking all the cuts with one terminal on one side in Lemma~\ref{lem:OPLemma}.

Recall that a terminal cut is a cut with at least one terminal one both sides.
The algorithm has the same high level recursive structure, but it instead only finds the minimum terminal cut or certifies that its size is greater than $c$.
This takes $O(m+nc^3 \log n)$ time using an algorithm by Cole and Hariharan~\cite{ColeH03}.
\begin{theorem}[Minimum Terminal Cut \cite{ColeH03}]
\label{thm:ColeH03}
Given graph $G$ with terminals $\tset$ and constant $c$, there is an $O(m+nc^3\log n)$ time algorithm which computes the minimum terminal cut on $\tset$ or certifies that its size is greater than $c$.
\end{theorem}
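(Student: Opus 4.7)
The plan is to follow the tree-packing framework in the spirit of Gabow's and Karger's minimum cut algorithms, adapted to handle the terminal constraint. The starting observation is that if the minimum terminal cut has size $k \le c$, then a large edge-disjoint spanning tree packing of size $\Theta(c)$ must contain at least one tree that is crossed by at most two edges of this cut (by a Nash--Williams/Tutte-type averaging argument). Hence it suffices to compute such a packing and, for each tree in it, search for the minimum terminal cut that $2$-respects that tree.

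First I would run Gabow's round-robin augmenting-path algorithm to extract $\Theta(c)$ edge-disjoint spanning trees in $O(m + nc^2)$ time. Because we only care about cuts of size at most $c$, we can terminate the packing procedure as soon as $\Theta(c)$ trees are found; if fewer trees can be packed, then a natural cut certificate of size $< c$ is produced along the way, and we either return it or continue to check whether it is a terminal cut.

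Next, for each tree $T$ in the packing, I would enumerate $1$-respecting and $2$-respecting terminal cuts. For the $1$-respecting case, iterate over the $n-1$ edges of $T$: deleting an edge $e$ splits $T$ into two subtrees, and the induced cut in $G$ can be evaluated in amortized $O(\log n)$ time using an Euler-tour plus link-cut structure on $T$, while a simple subtree terminal counter determines whether both sides contain a terminal. For the $2$-respecting case, I would use the Karger-style dynamic tree DP that aggregates non-tree edge contributions over pairs of tree edges, augmented by a second scan that rules out pairs whose induced tripartition fails to separate any two terminals. Each tree is then processed in $O(n c^2 \log n)$ time, giving $O(nc^3 \log n)$ across the packing.

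The main obstacle I anticipate is ensuring that the tree packing actually certifies small \emph{terminal} cuts rather than only global cuts: the Nash--Williams lower bound applies to global mincut, so if the global mincut is much smaller than $c$ but lies away from the terminals, one must either contract or reinforce low-connectivity regions before packing (for instance, by adding auxiliary high-capacity edges incident to the terminals and subtracting their contribution when evaluating cut weights). Handling this reinforcement while preserving correctness of the $2$-respecting search, and ensuring that the combined cost of all tree searches remains $O(nc^3 \log n)$, is the technically delicate part; once it is in place the remaining pieces are routine applications of existing dynamic tree machinery, and the $O(m)$ term absorbs the initial graph scan and union-find bookkeeping.
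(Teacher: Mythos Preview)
This theorem is not proved in the paper at all: it is a black-box citation of Cole and Hariharan~\cite{ColeH03}, stated and then invoked in the algorithm of Figure~\ref{fig:GetIntersectingEdgesTerminal} without any accompanying argument. So there is no ``paper's own proof'' to compare your proposal against --- the authors simply import the result.

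That said, your sketch is in the right spirit of the actual Cole--Hariharan proof: their algorithm does proceed via a tree packing of size $\Theta(c)$ and then, for each tree, searches for small cuts that cross the tree in a bounded number of edges, using dynamic-tree data structures to do the per-tree work efficiently. The obstacle you flag --- that Nash--Williams-type bounds concern the global minimum cut rather than the minimum \emph{terminal} cut, so the packing guarantee needs to be salvaged when non-terminal regions have low connectivity --- is exactly the place where Cole and Hariharan do nontrivial work (they handle the Steiner setting via a careful tree-packing argument specific to the terminal set). Your proposed fix of ``reinforcing'' low-connectivity regions and subtracting later is vague and would need substantial elaboration to be correct; but since the present paper does not ask you to reprove this, a citation suffices here.
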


It is direct to check that both sides of a minimum terminal cut are connected.
This is important towards our goal of finding a non-trivial $\tset$-separating cut, defined in Definition \ref{def:Nontrivial}.

\begin{lemma}
	If $(V_A, V_B)$ is the global minimum $\tset$-separating cut in a connected graph $G$, then both $G[V_A]$ and $G[V_B]$ must be connected.
\end{lemma}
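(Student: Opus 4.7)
The proof will proceed by contradiction, using a standard minimality argument. Suppose, without loss of generality, that $G[V_A]$ is disconnected. Since $V_A$ must contain at least one terminal (the cut is $\tset$-separating), pick any connected component $V_A^1$ of $G[V_A]$ that contains a terminal, and let $V_A^2 = V_A \setminus V_A^1$. By assumption $V_A^2$ is nonempty, and by the definition of connected components there are no edges of $G$ between $V_A^1$ and $V_A^2$.

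Now I would consider the alternative partition $(V_A^1,\, V_A^2 \cup V_B)$. I claim this is also a $\tset$-separating cut: $V_A^1$ still contains a terminal by construction, and $V_A^2 \cup V_B$ still contains all of the terminals originally in $V_B$ (which is nonempty since $(V_A,V_B)$ is $\tset$-separating). To bound its size, note that
\[
|E(V_A^1,\, V_A^2 \cup V_B)| \;=\; |E(V_A^1, V_A^2)| + |E(V_A^1, V_B)| \;=\; |E(V_A^1, V_B)|,
\]
because the first term vanishes (no edges across components of $G[V_A]$). On the other hand,
\[
|E(V_A, V_B)| \;=\; |E(V_A^1, V_B)| + |E(V_A^2, V_B)|.
\]
The key remaining observation is that $|E(V_A^2, V_B)| \geq 1$. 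Indeed, since $G$ is connected and $V_A^2$ has no edges to $V_A^1$, every path in $G$ from $V_A^2$ to $V_A^1$ must use an edge into $V_B$; since $V_A^2$ is nonempty and $G$ is connected such a path exists, so at least one edge leaves $V_A^2$ to $V_B$.

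Combining these two observations yields $|E(V_A^1, V_A^2 \cup V_B)| < |E(V_A, V_B)|$, contradicting the assumed minimality of $(V_A,V_B)$ among $\tset$-separating cuts. The symmetric argument for $G[V_B]$ being disconnected produces the same contradiction, completing the proof. The only subtle point is the use of connectedness of $G$ to guarantee $|E(V_A^2, V_B)| \geq 1$; without it, one could have an isolated spurious component in $V_A$ that contributes nothing to the cut, but the hypothesis rules this out.
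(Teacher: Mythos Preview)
Your proof is correct and follows essentially the same approach as the paper: move the spurious component of $G[V_A]$ to the other side and use connectivity of $G$ to argue the cut strictly shrinks. If anything, your write-up is slightly more careful than the paper's, which asserts an edge between $V_{A1}$ and $V_B$ when the inequality actually hinges on an edge between $V_{A2}$ and $V_B$ (exactly the point you highlight).
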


\begin{proof}
	Suppose for the sake of contradiction that $V_{A}$ is disconnected as $V_{A} = V_{A1} \cupdot V_{A2}$.
	Without loss of generality assume $V_{A1}$ contains a terminal.
	Also, $V_{B}$ contains at least one terminal because $(V_A, V_B)$ is $\tset$-separating.
	
	Then as $G$ is connected, there is an edge between $V_{A1}$ and $V_{B}$.
	Then the cut $(V_{A1}, V_{A2} \cup V_{B})$ has strictly fewer edges crossing, and also terminals on both sides, a contradiction to $(V_A, V_B)$ being the minimum $\tset$-separating cut. 
\end{proof}

The only bad case that prevents us from recursing is when the minimum terminal cut has a single terminal $s$ on some side.
That is, one of the $s$-isolating cuts from Definition~\ref{def:IsolatingCut} is also a minimum terminal cut.
We can cope with it via an extension of Lemma \ref{lem:Partition}.
Specifically, we show that for a cut with both sides connected, we can contract a side of the cut along with the cut edges before recursing.
\begin{lemma}
	\label{lem:IsolateContract}
	Let $F$ be a cut given by the partition $V = V_1 \cupdot V_2$ in $G = (V, E)$ such that both $G[V_1]$	and $G[V_2]$ are connected, and $\tset_1 = V_1 \cap \tset$ and $\tset_2 = V_2 \cap \tset$ be the partition of $\tset$ induced by this cut.
	If $\Eintersect_1$ intersects all $(\tset_1 \cup \{v_2\}, c)$-terminal cuts in $G/V_2/F$, the graph formed by contracting all of $V_2$ and all edges in $F$ into a single vertex $v_2$, and similarly $\Eintersect_2$ intersects all $(\tset_2 \cup \{v_1\}, c)$-terminal cuts in $G/V_1/F$, then $\Eintersect_1 \cup \Eintersect_2 \cup F$ intersects all $(\tset, c)$-cuts in $G$.
\end{lemma}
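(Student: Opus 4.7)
The plan is to follow the structure of the proof of Lemma~\ref{lem:Partition}, with the additional contraction of $F$ handled by a careful case analysis. Fix a partition $\tset = \tset_A \cupdot \tset_B$ with $k := \mincut_G(\tset_A,\tset_B) \leq c$ and let $\Fhat$ be any minimum cut realizing this separation. Write $V_F := V(F)\cap V_1$ for the $V_1$-endpoints of the edges being contracted. The task is to exhibit a cut $\Fhat^\star$ of the same size and same $\tset$-separation in which every component of $G \setminus (\Eintersect_1 \cup \Eintersect_2 \cup F)$ contains at most $c-1$ of its edges. Since $\Eintersect_1$ and $\Eintersect_2$ come from edges of the contracted graphs, they satisfy $\Eintersect_1 \subseteq E(G[V_1])$ and $\Eintersect_2 \subseteq E(G[V_2])$, so each such component lies entirely in $V_1$ or entirely in $V_2$.

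The two easy cases are when $\Fhat$ uses an edge of $F$, where $|\Fhat \setminus F| \leq c-1$ gives the per-component bound immediately, and when $\Fhat$ avoids $F$ but has edges on both sides, where $|\Fhat \cap E(G[V_i])| \leq c-1$ for $i \in \{1,2\}$ gives it just as trivially. In both of these cases I set $\Fhat^\star = \Fhat$.

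The substantive case is $\Fhat \subseteq E(G[V_1])$ (the symmetric $\Fhat \subseteq E(G[V_2])$ case is handled via $\Eintersect_2$ and $G/V_1/F$). Connectedness of $G[V_2]$ forces all of $V_2$ to lie on one side of $\Fhat$, say $V_2 \subseteq V_A$; since no $F$-edge lies in $\Fhat$ and every $V_F$-vertex is joined by an $F$-edge to some vertex of $V_2 \subseteq V_A$, it follows that $V_F \subseteq V_A$ as well. In particular $\Fhat$ has no edge with both endpoints in $V_F$, so every edge of $\Fhat$ survives the contraction $G \to G/V_2/F$ and becomes a genuine cut there placing $v_2$ on the $V_A$-side; this cut induces the terminal separation $((\tset_A\cap\tset_1)\cup\{v_2\},\,\tset_B\cap\tset_1)$ of size $k$. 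Combining Lemma~\ref{lem:CutMonotone} with this explicit upper bound shows that $k$ is exactly the minimum cut value for the induced separation in $G/V_2/F$, so the intersection property of $\Eintersect_1$ produces a cut $\Fhat'$ in $G/V_2/F$ of size $k$, with the same separation, containing at most $c-1$ edges in any component of $(G/V_2/F)\setminus \Eintersect_1$.

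I then take $\Fhat^\star$ to be the image of $\Fhat'$ in $G$. Because $V_2 \cup V_F \subseteq V_A$, the $v_2$-side $S''$ of $\Fhat'$ lifts back to $V_2 \cup V_F \cup (S''\setminus\{v_2\})$ in $G$, whose intersection with $\tset$ is $\tset_2 \cup (\tset_1\cap V_F) \cup (\tset_1 \cap V_A \cap (V_1\setminus V_F)) = \tset_A$, so $\Fhat^\star$ has the correct $\tset$-separation. For any component $C$ of $G[V_1]\setminus\Eintersect_1$, either $C\cap V_F=\emptyset$ and $C$ is itself a component of $(G/V_2/F)\setminus\Eintersect_1$, or $C\cap V_F\neq\emptyset$ and $C$ is absorbed into the $v_2$-component there; in either case the guarantee from $\Eintersect_1$ bounds $|\Fhat^\star\cap E(C)|$ by $c-1$, while $V_2$-side components carry no $\Fhat^\star$-edges. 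The main technical obstacle I expect is precisely this bookkeeping between cuts of $G$ and of $G/V_2/F$: one must verify that $\Fhat$ never has both endpoints inside $V_F$ (otherwise the contraction would discard it as a self-loop) and that the lift of $\Fhat'$ reproduces the original terminal separation. The key observation that unlocks both points is $V_F\subseteq V_A$, forced jointly by $\Fhat\cap F=\emptyset$ and the connectivity of $G[V_2]$.
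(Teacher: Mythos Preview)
Your proposal is correct and follows essentially the same three-case structure as the paper's proof (edge of $F$ used; edges on both sides; all edges on one side, handled via the contracted graph). If anything, you are more careful than the paper: you explicitly construct the witness cut $\Fhat^\star$, verify that the mincut value is preserved under the contraction (so that the intersecting property of $\Eintersect_1$ yields a cut of exactly the right size), and isolate the key observation $V_F \subseteq V_A$ that makes both the survival of $\Fhat$ under contraction and the lifting of $\Fhat'$ go through --- points the paper's proof glosses over when it writes ``using the induction hypothesis on the cut $\Fhat$''.
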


\begin{proof}
	Consider some cut $\Fhat$ of size at most $c$.
	If $\Fhat$ uses an edge from $F$, then it has at most $c - 1$ edges in $G \bs F$, and thus in any connected component as well.
	If $\Fhat$ has at most $c - 1$ edges in $G[V_1]$, then no connected component in $V_1$ can have $c$ or more edges because removing $F$ already disconnected $V_1$ and $V_2$, and removing $\Eintersect_1$ can only further disconnect the remaining components.
	
	The only remaining case is when $\Fhat$ is entirely contained on one of the sides.
	Without loss of generality assume $\Fhat$ is entirely contained in $V_1$, i.e., $\Fhat \subseteq E(G[V_1])$.
	Because no edges from $G[V_2]$ and $F$ are removed and $G[V_2]$ is connected, all edges in $G[V_2]$ and $F$ must not be cut and hence can be contracted into a single vertex $v_2$.
	Using the induction hypothesis on the cut $\Fhat$ in $G/V_2/F$ with the terminal separation given by all of $\tset_2$ replaced by $v_2$ gives that $\Fhat$ has at most $c-1$ edges in any connected component of $\left(G/V_2/F \right) \bs \Eintersect_1$.
	Since connected components are unchanged under contracting connected subsets, we get that $\Fhat$ has at most $c-1$ edges in any connected components of $G \bs \Eintersect_1$ as well.
\end{proof}

Now, a natural way to handle the case where a minimum terminal cut has a single terminal $s$ on some side is to use Lemma \ref{lem:IsolateContract} to contract across the cut to make progress.
However, it may be the case that for some $s \in \tset$, there are many minimum $s$-isolating cuts: consider for example the length $n$ path with only the endpoints as terminals.
If we always pick the edge closest to $s$ as the minimum $s$-isolating cut, we may have to continue $n$ rounds, and thus add all $n$ edges to our set of intersecting edges.

To remedy this, we instead pick a ``maximal" $s$-isolating minimum cut.
One way to find a maximal $s$-isolating cut is to repeatedly contract across an $s$-isolating minimum cut using Lemma \ref{lem:IsolateContract} until its size increases.
At that point, we add the last set of edges found in the cut to the set of intersecting edges.
We have made progress because the value of the minimum $s$-isolating cut in the contracted graph must have increased by at least $1$.
While there are many ways to find a maximal $s$-isolating minimum cut, the way described here extends to our analysis in Section \ref{sec:ExpanderLocalCut}.

Pseudocode of this algorithm is shown in  Figure~\ref{fig:GetIntersectingEdgesTerminal}, and the procedure for the repeated contractions to find a maximal $s$-isolating cut described in the above paragraph is in Line~\ref{ln:IsolateContract}.

\begin{figure}[!t]
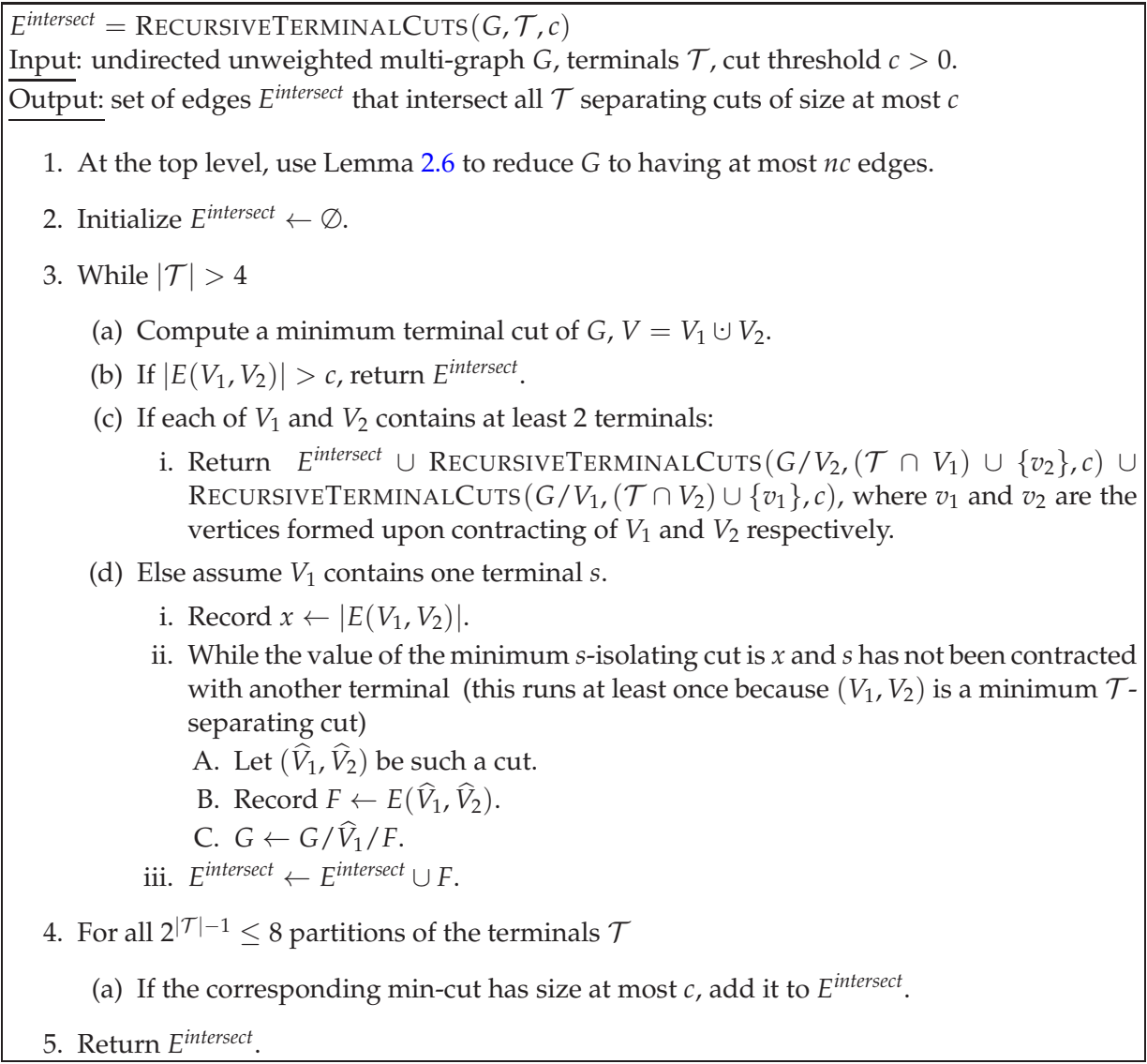

	\begin{algbox}
		$\Eintersect = \textsc{RecursiveTerminalCuts}(G, \tset, c)$
		
		\underline{Input}: undirected unweighted multi-graph $G$,
		terminals $\tset$, cut threshold $c > 0$.
		
		\underline{Output:} set of edges $\Eintersect$ that intersect all $\tset$
		separating cuts of size at most $c$
		
		\begin{enumerate}
			\item At the top level, use Lemma \ref{lem:EdgeReduction} to reduce $G$ to
			having at most $nc$ edges.
			\item Initialize $\Eintersect \leftarrow \emptyset$.
			\item While $|\tset| > 4$
			\begin{enumerate}
				\item Compute a minimum terminal cut of $G$, $V = V_1 \cupdot V_2$. \label{ln:compmincut}
				\item If $|E(V_1, V_2)| > c$, return $\Eintersect$.
				\item If each of $V_{1}$ and $V_2$ contains at least $2$ terminals:
				\begin{enumerate}
					\item Return $\Eintersect \cup \textsc{RecursiveTerminalCuts}(G/V_2, (\tset \cap V_1) \cup \{v_2\}, c) \cup \textsc{RecursiveTerminalCuts}(G / V_1, (\tset \cap V_2) \cup \{v_1\}, c)$, where $v_1$ and $v_2$ are the vertices
					formed upon contracting of $V_1$ and $V_2$ respectively. \label{ln:Return}
				\end{enumerate}
				\item Else assume $V_1$ contains one terminal $s$. \label{ln:IsolateContract}
				\begin{enumerate}
					\item Record $x \leftarrow |E(V_1, V_2)|$.
					\item While the value of the minimum $s$-isolating cut is $x$ and $s$ has not been
					contracted with another terminal \label{ln:Condition}
					\label{ln:LoopToMaximal}
					(this runs at least once because $(V_1, V_2)$ is a minimum $\tset$-separating cut)
					\begin{enumerate}
						\item Let $(\Vhat_1, \Vhat_2)$ be such a cut.
						\item Record $F \leftarrow E(\Vhat_1, \Vhat_2)$.
						\item $G\leftarrow G/\Vhat_1/F.$ \label{ln:ContractedGraph}
					\end{enumerate}
					\item $\Eintersect \leftarrow \Eintersect \cup F$.
				\end{enumerate}
			\end{enumerate}
			\item  For all $2^{|\tset|-1} \leq 8$ partitions of the terminals $\tset$
			\begin{enumerate}
				\item If the corresponding min-cut has size at most $c$,
				add it to $\Eintersect$.
			\end{enumerate}
			\item Return $\Eintersect$.
		\end{enumerate}
	\end{algbox}
	\vspace{-4mm}
	\caption{Recursive algorithm using minimum terminal cuts for finding a set of edges that intersect all terminal cuts of size $\leq c$.}
	\label{fig:GetIntersectingEdgesTerminal}
\end{figure}

\paragraph{Discussion of algorithm in Figure \ref{fig:GetIntersectingEdgesTerminal}.}
We clarify some lines in the algorithm of Figure \ref{fig:GetIntersectingEdgesTerminal}.
If the algorithm finds a non-trivial $\tset$-separating cut as the minimum terminal cut, it returns the result of the recursion in Line~\ref{ln:Return}, and does not execute any of the later lines in the algorithm.
In Line~\ref{ln:Condition}, in addition to checking that the $s$-isolating minimum cut size is still $x$, we also must check that $s$ does not get contracted with another terminal.
Otherwise, contracting across that cut makes global progress by reducing the number of terminals by $1$.
In Line~\ref{ln:ContractedGraph}, note that we can still view $s$ as a terminal in $G\leftarrow G/\Vhat_1/F$, as we have assumed that this contraction does not merge $s$ with any other terminals.

\begin{lemma}
	\label{lem:WeakerVersion}
	For any graph $G$, terminals $\tset$, and a value $c$, the algorithm \textsc{RecursiveTerminalCuts} as shown in Figure~\ref{fig:GetIntersectingEdgesTerminal} runs in $O(n^2c^4\log n)$ time and returns a set at most $O(|\tset|c^2)$ edges that intersect all $(\tset,c)$-cuts.
\end{lemma}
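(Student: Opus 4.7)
My plan is to prove correctness, the size bound $O(|\tset|c^2)$, and the running time $O(n^2c^4\log n)$ in turn.

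\textbf{Correctness.} I would proceed by strong induction on $|\tset|$. The base case $|\tset|\le 4$ enumerates all $2^{|\tset|-1}$ partitions and adds each minimum cut of size at most $c$, which trivially intersects all $(\tset,c)$-cuts. Otherwise, a minimum terminal cut $(V_1,V_2)$ is computed via Theorem~\ref{thm:ColeH03}: if its size exceeds $c$, no $(\tset,c)$-cut exists; if both sides contain at least two terminals, minimality forces $G[V_1]$ and $G[V_2]$ to be connected, so Lemma~\ref{lem:Partition} combined with the inductive hypothesis applied to $G/V_2$ and $G/V_1$ (noting that the fresh terminals $v_2$, $v_1$ have the edges of $F$ as their minimum isolating cuts in the contracted graphs, and are therefore absorbed into $\Eintersect$ by the inner recursion) yields the result. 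If only one side contains a single terminal $s$, the inner loop repeatedly contracts across minimum $s$-isolating cuts of value $x$; minimality again guarantees connectivity of both sides at each step, so iterated application of Lemma~\ref{lem:IsolateContract} justifies the contractions until $\lambda_s$ strictly exceeds $x$, at which point adding the last such cut $F$ to $\Eintersect$ preserves the invariant.

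\textbf{Size bound.} Introduce the potential
\[
\Psi \;=\; \sum_{s\in\tset}\bigl(c+1-\min(\lambda_s,\,c+1)\bigr),
\]
where $\lambda_s=\mincut_G(\{s\},\tset\setminus\{s\})$ in the current contracted graph. Initially $\Psi \le (c+1)|\tset|$. After one execution of the inner loop in case~(c), either $s$ is merged with another terminal $t$ --- in which case the new terminal $v$ satisfies $\lambda_v\ge\max(\lambda_s,\lambda_t)$ by Lemma~\ref{lem:CutMonotone}, and since this case only triggers when $\lambda_s=x\le c$, the potential strictly drops by at least $1$ --- or $\lambda_s$ strictly increases past $x$, again dropping $\Psi$ by at least $1$. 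Hence case~(c) fires $O(|\tset|c)$ times per recursive call, contributing at most $c$ edges each, for $O(|\tset|c^2)$ per call. Case~(b) introduces only two fresh terminals per branch, and across a recursion tree with $O(|\tset|)$ leaves this adds only an $O(|\tset|c)$ extra budget to the global potential. Base cases contribute $O(c)$ edges per leaf for $O(|\tset|c)$ extra. Summing yields $|\Eintersect|=O(|\tset|c^2)$.

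\textbf{Time bound.} After Lemma~\ref{lem:EdgeReduction} we may assume $m\le nc$, so Theorem~\ref{thm:ColeH03} gives each minimum terminal cut query in $O(m+nc^3\log n)=O(nc^3\log n)$ time. The number of outer loop iterations is $O(nc)$ by the potential argument above. Each inner loop iteration performs at least one graph contraction, and since the vertex count is globally monotone nonincreasing across the entire algorithm (contractions are never undone), the total number of inner iterations is $O(n)$. Each cut query costs $O(nc^3\log n)$, so the total running time is $O((nc+n)\cdot nc^3\log n)=O(n^2c^4\log n)$.

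\textbf{Main obstacle.} The most delicate point is the potential accounting: one must carefully verify the strict monotonicity of $\Psi$ in both subcases of the inner loop (especially checking that the merging case maintains $\lambda_v\ge\max(\lambda_s,\lambda_t)$ and exploits $\lambda_s\le c$ at trigger), and separately argue that the implicit inclusion of the cut $F$ via the contracted terminal $v_2$ in case~(b) indeed produces the intersecting set required by Lemma~\ref{lem:Partition}, without explicitly adding $F$ to $\Eintersect$ at the branching step.
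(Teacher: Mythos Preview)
Your overall structure tracks the paper's proof closely, but the size-bound argument has a genuine gap in the merging subcase. You claim that when $s$ is merged with another terminal $t$, the new terminal $v$ satisfies $\lambda_v \ge \max(\lambda_s,\lambda_t)$ by Lemma~\ref{lem:CutMonotone}, and hence $\Psi$ strictly drops. This inequality is false: Lemma~\ref{lem:CutMonotone} compares the same pair of vertex sets before and after contraction, but $\lambda_t$ measures a cut separating $t$ from a set that \emph{includes} $s$, whereas $\lambda_v$ separates $v$ from a set that \emph{excludes} both $s$ and $t$. Concretely, take $c=2$, terminals $s,t,u$, a non-terminal $a$, and edges $st$, $sa$, $ta$, and two parallel $tu$ edges. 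Then $\lambda_s=2$, $\lambda_t=4>c$, $\lambda_u=2$. The minimum terminal cut isolates $s$ with $F=\{st,sa\}$; contracting $\widehat V_1=\{s\}$ and $F$ merges $s,t,a$ into $v$, giving a graph with two parallel $vu$ edges and $\lambda_v=2<\lambda_t$. A direct computation shows $\Psi$ is \emph{unchanged} by this step, so your strict-decrease claim fails.

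The paper sidesteps this entirely with a simpler counting argument: at most $O(|\tset|)$ distinct terminals are ever created (branching creates two per split, at most $O(|\tset|)$ splits; merging only destroys terminals), and for each fixed terminal the $s$-isolating minimum cut value is monotone under contraction and can increase at most $c$ times before exceeding $c$. Hence the inner loop adds a cut $F$ at most $O(|\tset|c)$ times in total, giving $|\Eintersect|=O(|\tset|c^2)$. Your potential $\Psi$ is essentially an encoding of this same monotonicity, and your argument is easily repaired by observing separately that the merging exit can occur at most $O(|\tset|)$ times (each merge reduces the terminal count), rather than trying to force $\Psi$ to drop in that case. Your runtime argument and correctness sketch are otherwise in line with the paper's (which itself is terse on correctness); the minor slip that ``vertex count is globally monotone nonincreasing'' ignores the $+2$ per branch, but since there are $O(|\tset|)$ branches the $O(n)$ bound on inner iterations survives.
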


\begin{proof}
	We assume $m \le nc$ throughout, as we can reduce to this case in $O(mc)$ time by Lemma \ref{lem:EdgeReduction}.
	In line \ref{ln:compmincut}, we use Theorem \ref{thm:ColeH03}.
	
	Note that the recursion in Line~\ref{ln:Return} can only branch $O(|\tset|)$ times, by the analysis in Lemma \ref{lem:RecursiveNontrivialCutsCorrectness}. 
	Similarly, the case where $s$ gets contracted with another terminal in Line~\ref{ln:Condition} can only occur $O(|\tset|)$ times.
	
	Therefore, we only create $O(|\tset|)$ distinct terminals throughout the algorithm.
	Let $s$ be a terminal created at some point during the algorithm. By monotonicity of cuts in Lemma \ref{lem:CutMonotone}, the minimum $s$-isolating cut can only increase in size $c$ times.
	Hence, $\Eintersect$ is the union of $O(|\tset|c)$ cuts of size at most $c$. Therefore, $\Eintersect$ has at most $O(|\tset|c^2)$ edges.
	
	To bound the runtime, we use the total number of edges in the graphs in our recursive algorithm as a potential function.
	Thus, initially, this potential function has value $m$.
	Note that the recursion of Line~\ref{ln:Return} can increase the potential function by $c$; hence, the total potential function increase throughout the algorithm is bounded by $m + O(c|\tset|) = O(nc)$.
	
	Each loop of Line~\ref{ln:Condition} decreases our potential function by at least $1$ from contractions.
	Thus, the total runtime of the loop involving Line~\ref{ln:Condition} can be bounded by
	\[
	O\left(mc\right) + O\left(m+nc^3\log n\right)
	=
	O\left(nc^3\log n\right),
	\]
	where the former term is from running a maxflow algorithm up to flow $c$, and the latter is from applying Theorem \ref{thm:ColeH03}.
	As the total increase in the total potential function is at most $O(nc)$, the loop in Line~\ref{ln:Condition} can only execute $O(nc)$ times, for a total runtime of $O(n^2c^4 \log n)$ as desired.
\end{proof}

Our further speedup of this routine in Section~\ref{sec:ExpanderLocalCut} also uses a faster variant of \textsc{RecursiveTerminalCuts} as base case, which happens when $|\tset|$ is too small.
Here the main observation is that a single maxflow computation is sufficient to compute a ``maximal" $s$-isolating minimum cut, instead of the repeated contractions performed in \textsc{RecursiveTerminalCuts}.

\begin{lemma}
	\label{lem:StrongerVersion}
	For any graph $G$, terminals $\tset$, and a value $c$, there is an algorithm that runs in $O(mc + n|\tset|c^4 \log n)$ time and returns a set at most $O(|\tset|c^2)$ edges that intersect all $(\tset,c)$-cuts.
\end{lemma}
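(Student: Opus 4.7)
The plan is to run essentially the same recursive algorithm as \textsc{RecursiveTerminalCuts} (Figure~\ref{fig:GetIntersectingEdgesTerminal}), but replace the inner loop at Line~\ref{ln:LoopToMaximal}, which iteratively contracts through $s$-isolating minimum cuts, with a single maxflow-based computation that directly produces a \emph{maximal} $s$-isolating minimum cut. At the start, apply Lemma~\ref{lem:EdgeReduction} in $O(mc)$ time to assume $m \le nc$. Whenever the current minimum terminal cut isolates a single terminal $s$ from $\tset \setminus \{s\}$, we build the auxiliary graph $G'$ by contracting $\tset \setminus \{s\}$ into a super-sink $t^*$ and compute a max $s$-$t^*$ flow capped at value $c$ (cost $O(mc) = O(nc^2)$). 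If the flow value is at most $c$, let $V_2^*$ be the set of vertices that can reach $t^*$ in the residual graph; the induced cut $(V \setminus V_2^*, V_2^*)$ is the \emph{maximal} $s$-isolating minimum cut, and we add its edges to $\Eintersect$ and contract via Lemma~\ref{lem:IsolateContract}.

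The first thing to verify is that this single cut satisfies the hypotheses of Lemma~\ref{lem:IsolateContract}, namely that both sides are connected. For $V_2^*$, the residual-reachability tree rooted at $t^*$ certifies connectivity in $G'$ (each residual edge corresponds to a genuine edge of $G'$), which lifts to connectivity of the corresponding set in $G$. For $V_1 = V \setminus V_2^*$: if $G[V_1]$ had a component $V_{12}$ not containing $s$, then any edges from $V_{12}$ to $V_2$ could be dropped by moving $V_{12}$ into $V_2$'s side, yielding a strictly smaller $s$-isolating cut and contradicting minimality; absence of such edges would isolate $V_{12}$ in $G$, contradicting connectedness (which we may assume by handling components separately via Lemma~\ref{lem:Independent}). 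Hence Lemma~\ref{lem:IsolateContract} applies, and the contraction is valid.

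The key correctness point is that contracting a \emph{maximal} $s$-isolating minimum cut strictly increases the value of the minimum $s$-isolating cut in the new graph (or merges $s$ with another terminal). Indeed, any $s$-isolating cut in the contracted graph $G/V_2^*/F$ lifts to an $s$-isolating cut in $G$ whose $s$-side strictly contains $V_1$; by maximality of $V_1$ among $s$-isolating minimum cuts, this lifted cut must have strictly larger size. Consequently, the size and time analyses mirror those of Lemma~\ref{lem:WeakerVersion}: the branching recursion creates $O(|\tset|)$ distinct terminals, and for each terminal the min $s$-isolating cut value rises at most $c$ times, bounding the number of maxflow/min-terminal-cut iterations by $O(|\tset|c)$ and giving $|\Eintersect| \le O(|\tset|c^2)$. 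Each iteration performs one minimum terminal cut via Theorem~\ref{thm:ColeH03} in $O(m + nc^3\log n) = O(nc^3\log n)$ time and one capped maxflow in $O(nc^2)$ time, for a total of $O(mc) + O(|\tset|c) \cdot O(nc^3 \log n) = O(mc + n|\tset|c^4 \log n)$.

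The main obstacle I anticipate is the structural argument in the second paragraph: asserting that the residual-reachability cut is simultaneously maximal among $s$-isolating min cuts \emph{and} has both sides connected in the original (uncontracted) graph $G$, which is what lets a single maxflow replace the while-loop of repeated contractions in \textsc{RecursiveTerminalCuts}. Once this is established, the remainder of the proof is a direct reuse of the accounting arguments from Lemma~\ref{lem:WeakerVersion}.
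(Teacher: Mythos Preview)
Your approach is essentially the paper's: replace the inner while loop of \textsc{RecursiveTerminalCuts} by a single Ford--Fulkerson computation that directly produces the maximal $s$-isolating minimum cut via residual reachability from the sink side, then reuse the size and runtime accounting from Lemma~\ref{lem:WeakerVersion}. The paper justifies existence and uniqueness of this maximal $V_1$ by submodularity and otherwise simply inherits correctness from the earlier analysis, without re-invoking Lemma~\ref{lem:IsolateContract}.

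One step in your write-up does not go through as stated: the argument that $G[V_2^*]$ is connected. Residual reachability gives connectivity only in $G'$, where $\tset\setminus\{s\}$ has been contracted to $t^*$; this does not lift to $G$ because the terminals in $\tset\setminus\{s\}$ need not be mutually connected inside $G[V_2^*]$. For instance, on the path $t_1\text{--}a\text{--}s\text{--}b\text{--}t_2$ the maximal $s$-isolating minimum cut has $V_1=\{s,a,b\}$ and $V_2^*=\{t_1,t_2\}$, which is disconnected in $G$. This is not fatal to the plan: since $V_1$ contains only the single terminal $s$, the one-sided recursion on $G/V_1/F$ already handles every terminal partition, and the paper accordingly does not verify (or use) connectivity of the far side. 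Also note a slip in your maximality paragraph: the contracted graph you continue on should be $G/(V\setminus V_2^*)/F$ (contracting the $s$-side, as in Line~\ref{ln:ContractedGraph}), not $G/V_2^*/F$; with that correction your ``$s$-side strictly contains $V_1$'' argument is exactly right.
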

\begin{proof}
	We modify \textsc{RecursiveTerminalCuts} as shown in Figure~\ref{fig:GetIntersectingEdgesTerminal} and its analysis as given in Lemma \ref{lem:WeakerVersion} above.
	Specifically, we modify how we compute a maximal $s$-isolating minimum cut in Line~\ref{ln:Condition}.
	For any partition $\tset = \tset_1 \cupdot \tset_2$, by submodularity of cuts it is known that there is a unique maximal subset $V_1 \subseteq V$ such that
	\begin{align*}
	\tset_1 & \subseteq V_1,\\
	\tset_2 & \subseteq V\bs V_1,\\
	\left|E\left(V_1, V_2\right)\right|
	&=
	\left|\Mincut\left(G,\tset_1,\tset_2\right)\right|.
	\end{align*}
	Also, this maximal set can be computed in $O(mc)$ time by running the Ford-Fulkerson augmenting path algorithm with $\tset_2$ as source and $\tset_1$ as sink.
	The connectivity value of $c$ means at most $c$ augmenting paths need to be found, and the set $V_2$ can be set to the vertices that can still reach the sink set $\tset_2$ in the residual graph~\cite{FulkersonH75}. Now set $V_1 = V\bs V_2.$
	Thus by setting $\tset_1 \leftarrow \{s\}$, we can use the corresponding computed set $V_1$ as the representative of the maximal $s$-isolating minimum terminal cut.
	
	Now we analyze the runtime of this procedure.
	First, we reduce the number of edges to at most $nc$ in $O(mc)$ time.
	As in the proof of Lemma \ref{lem:WeakerVersion}, all graphs in the recursion have at most $O(nc)$ edges.
	The recursion in Line~\ref{ln:Return} can only branch $|\tset|$ times, and we only need to compute $O(c|\tset|)$ maximal $s$-isolating minimum terminal cuts throughout the algorithm.
	Each call Theorem \ref{thm:ColeH03} takes $O(m+nc^3\log n) = O(nc^3 \log n)$ time, for a total runtime of $O(nc^3 \log n \cdot c|\tset|) = O(n|\tset|c^4 \log n)$ as desired.
\end{proof}

\subsubsection{Using Local Cut Algorithms}
\label{sec:ExpanderLocalCut}
A local cut algorithm is a tool that has recently been developed. Given a vertex $v$, there exists a local cut algorithm that determines whether there is a cut of size at most $c$ such that the side containing $v$ has volume at most $\nu$ in time linear in $c$ and $\nu$.
\begin{theorem}[Theorem 3.1 of \cite{NanongkaiSY19b:arxiv}]
	\label{thm:LocalFlow}
	Let $G$ be a graph and let $v \in V(G)$ be a vertex.
	For a connectivity parameter $c$ and volume parameter $\nu$, there is an algorithm running in time $\O(c^2\nu)$ that with high probability either
	\bi
		\item[1.] Certifies that there is no cut of size at most $c$ such that the side with $v$ has volume at most $\nu$.
		\item[2.] Returns a cut of size at most $c$ such that the side with $v$ has volume at most $130c\nu.$ 
	\ei
\end{theorem}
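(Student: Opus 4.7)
The plan is to adapt the classical Ford--Fulkerson augmenting-path framework into a \emph{local} routine whose running time scales with the volume budget $\nu$ rather than with $|V(G)|$ or $|E(G)|$. Concretely, I would attach a super-source $s$ to $v$ by $c+1$ parallel unit-capacity edges, view every original edge as unit-capacity, and try to push $c+1$ units of flow out of $s$. If we succeed, then by Menger / max-flow-min-cut, every cut separating $\{s, v\}$ from the rest has size $\ge c+1$, so no cut of size at most $c$ can isolate $v$ at all (case 1). If an augmentation fails before we reach value $c+1$, we extract a cut from the residual graph (case 2).

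Each augmentation would be carried out by a DFS from $s$ in the residual graph, terminated as soon as the cumulative number of edges scanned exceeds a budget of $B = \Theta(\nu)$. To keep the work per augmentation near-linear in the length of the path found (rather than in the maximum path length within distance $B$), I would scan outgoing arcs of each vertex in a \emph{random} order, in the spirit of local-flow algorithms; this ensures in expectation and with high probability that when a short augmenting path exists it is found after a number of edge scans proportional to its length plus the support of previously routed flow. Reusing partial DFS trees across the at most $c+1$ augmentation attempts keeps the total work at $\O(c \cdot c\nu) = \O(c^2\nu)$, matching the claimed time bound.

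For the cut extraction, suppose the $i$th augmentation DFS fails, having explored a budget of $B$ edges without reaching a fresh unsaturated vertex. Let $R$ be the set of vertices reachable from $s$ in the residual graph at the moment of failure; then $E_G(R, V \setminus R)$ is a cut whose value equals the current flow value, hence at most $c$. The volume bound $\vol(R) \le 130 c \nu$ follows from summing the per-augmentation exploration budgets (at most $c+1$ of them, each of size $\Theta(\nu)$) and from the fact that $R$ can only collect edges that were touched by one of these DFS executions, including the residual back-edges created by prior augmentations. The constant $130$ is the slack absorbing both of these contributions together with the constant hidden in the per-augmentation budget $B$.

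The main obstacle I anticipate is the runtime bound, specifically showing that the DFS with randomized edge ordering does not blow up when the graph has a few very high-degree vertices. A naive analysis would charge $\deg(u)$ to every vertex $u$ discovered by DFS, which could force $B$ to grow with $\max_u \deg(u)$ rather than $\nu$. Handling this rigorously requires a careful potential argument that charges each scanned edge either to the eventual augmenting path, to a ``dead-end'' subtree that will never be re-explored, or to the final cut $R$ — and it is precisely here that the $\O(c^2\nu)$ time and the constant $130$ in front of $c\nu$ have to be tuned consistently. Once this local-flow accounting is in place, the correctness of cases (1) and (2) follows directly from max-flow-min-cut applied to the local residual graph.
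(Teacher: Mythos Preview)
The paper does not prove this theorem; it is quoted verbatim as Theorem~3.1 of \cite{NanongkaiSY19b:arxiv} and used as a black box. So there is no ``paper's own proof'' to compare against --- your proposal is a sketch of the argument from the cited source, not of anything in this paper.

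That said, your sketch is in the right spirit (bounded-budget DFS augmentations with randomized edge order, absorb flow when the budget is exceeded, extract a residual cut when DFS stalls), and you correctly flag the runtime accounting as the hard part. There is, however, one genuine imprecision in your case~1 argument. You write that if all $c+1$ augmentations succeed then ``by Menger / max-flow-min-cut, every cut separating $\{s,v\}$ from the rest has size $\ge c+1$.'' This is not what max-flow-min-cut gives you, because you never specify a sink: there is no single vertex or set to which the $c+1$ units are routed. In the actual local-flow argument, when a DFS exceeds its edge budget you absorb one unit of flow at the \emph{current} vertex, which becomes a sink for that round. Max-flow-min-cut then only certifies that any cut separating $v$ from \emph{all} of these $c+1$ sink vertices has size $\ge c+1$. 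To conclude case~1 you must additionally argue that every such sink lies outside any candidate set $S \ni v$ with $\vol(S) \le \nu$; this is where the choice $B = \Theta(\nu)$ and a careful charging of explored edges (the same accounting you defer for the runtime) are needed. Without pinning down the sinks and this containment step, the Menger appeal does not go through.
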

We now formalize the notion of the smallest cut that is \emph{local} around a vertex $v$.
\begin{definition}[Local cuts]
	For a vertex $v \in G$, define $\localcut(v)$ to be
	\[ \min_{\substack{V = V_1 \cupdot V_2 \\ v \in V_1 \\ \vol(V_1) \le \vol(V_2)}} |E(V_1,V_2)|.\]
\end{definition}

We now combine Theorem \ref{thm:LocalFlow} with the observation from Equation~\ref{eq:ExpanderImba} in order to control the volume of the smaller side of the cut in an expander.

\begin{lemma}
	\label{lem:ExpanderCutFaster}
	Let $G$ be a graph with conductance at most $\phi$, and let $\tset$ be a set of terminals.
	If $|\tset| \ge 500c^2\phi^{-1}$ then for any vertex $s \in \tset$ we can with high probability in $\O(c^3\phi^{-1})$ time either compute $\localcut(s)$ or certify that $\localcut(s) > c.$
\end{lemma}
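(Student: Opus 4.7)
The plan is to invoke the local flow algorithm of Theorem~\ref{thm:LocalFlow} with an appropriately chosen volume parameter, and then binary-search over the connectivity threshold to recover the exact value of $\localcut(s)$. I interpret the hypothesis ``conductance at most $\phi$'' as a typo for ``conductance at least $\phi$'', so that Equation~\eqref{eq:ExpanderImba} applies: any cut of size at most $c$ in $G$ has smaller side of volume at most $c\phi^{-1}$.

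First I would set $\nu \defeq c\phi^{-1}$. Recall from Section~\ref{sec:Preliminaries} that each terminal has degree exactly $c$ after the twin-preprocessing, so $\vol(G) \geq \vol(\tset) = c|\tset| \geq 500c^3\phi^{-1}$. Now, for each candidate $k$, invoke Theorem~\ref{thm:LocalFlow} on vertex $s$ with connectivity parameter $k$ and volume parameter $\nu$. I would argue the following two directions:
\begin{itemize}
\item If the call certifies that no cut of size $\le k$ exists with $s$-side volume $\le \nu$, then since any cut of size $\le k \le c$ with $s$ on the smaller side would automatically have $\vol(V_s) \le k\phi^{-1} \le \nu$ by Equation~\eqref{eq:ExpanderImba}, we conclude $\localcut(s) > k$.
\item If instead a cut $(V_1,V_2)$ with $s \in V_1$ and $|E(V_1,V_2)| \le k$ is returned, then $\vol(V_1) \le 130k\nu \le 130c^2\phi^{-1}$, while $\vol(V_2) \geq \vol(G) - 130c^2\phi^{-1} \geq 500c^3\phi^{-1} - 130c^2\phi^{-1} > 130c^2\phi^{-1} \geq \vol(V_1)$. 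So $s$ is genuinely on the smaller side, and the returned cut witnesses $\localcut(s) \le k$.
\end{itemize}

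These two directions together show that $\localcut(s) \le c$ if and only if the call succeeds for $k=c$, and when it does, the smallest $k^* \in \{1,\dots,c\}$ for which the call succeeds satisfies $\localcut(s) = k^*$. I would locate this $k^*$ by binary search over $\{1,\dots,c\}$, which uses $O(\log c)$ invocations. Since the cost of the invocation at threshold $k$ is $\O(k^2 \nu)$ and binary search only probes geometrically spaced values, the total cost telescopes to $\O(c^2\nu) = \O(c^3\phi^{-1})$, matching the claimed running time. Failure of Theorem~\ref{thm:LocalFlow} is an error of low probability, so a union bound over the $O(\log c)$ invocations preserves the ``with high probability'' guarantee.

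The main subtlety, and essentially the only place where the hypothesis is used, is verifying in the second bullet that the potentially bloated cut returned by the local flow primitive (whose $s$-side volume may be as large as $130c^2\phi^{-1}$) still places $s$ on the strictly smaller side of the bipartition; this is exactly what forces the lower bound $|\tset| \geq 500c^2\phi^{-1}$ in the statement.
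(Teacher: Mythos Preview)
Your proof is correct and matches the paper's approach: set $\nu = c\phi^{-1}$, binary-search on the connectivity parameter in Theorem~\ref{thm:LocalFlow}, and use $|\tset| \ge 500c^2\phi^{-1}$ to force any returned cut to place $s$ on the strictly smaller-volume side. One small remark: you do not need each terminal to have degree exactly $c$ (this may fail for terminals added as endpoints of inter-cluster edges later in the algorithm); the weaker bound $\vol(G) \ge |\tset| \ge 500c^2\phi^{-1}$ already yields $\vol(V_2) \ge 370c^2\phi^{-1} > 130c^2\phi^{-1} \ge \vol(V_1)$, which is essentially how the paper argues via $130\nu c \le |\tset|/2$.
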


\begin{proof}
	We run binary search on the size of the minimum terminal cut with $s$ on the smaller side, and apply Theorem \ref{thm:LocalFlow}.
	The smaller side of a terminal cut has volume at most $c\phi^{-1}$.
	Therefore, if $|\tset| \ge 500c^2\phi^{-1}$, then the cut returned by Theorem \ref{thm:LocalFlow} for $\nu = c\phi^{-1}$ will always be a terminal cut, as $130\nu c \le |\tset|/2$.
	The runtime is $\O(\nu c^2) = \O(c^3 \phi^{-1})$ as desired.
\end{proof}

We can substitute this faster cut-finding procedure into \textsc{RecursiveTerminalCuts} to get the faster running time stated in Theorem~\ref{thm:IntersectingEdges} Part~\ref{part:IntersectingEdges:Faster}.

\begin{proofof}{Theorem~\ref{thm:IntersectingEdges} Part~\ref{part:IntersectingEdges:Faster}}
	First, we perform expander decomposition, remove the inter-cluster edges, and add their endpoints as terminals.
	
	Now, we describe the modifications we need to make to Algorithm \textsc{RecursiveTerminalCuts} as shown in Figure~\ref{fig:GetIntersectingEdgesTerminal}.
	Let $\tsethat$ be the set of terminals at the top level of recursion.
	The recursion gives that at most $O(|\tsethat|)$ distinct terminals are created in the recursion.
	
	First, we terminate if $|\tset| \le 500c^2\phi^{-1}$ and use the result of Lemma \ref{lem:StrongerVersion}.
	Otherwise, instead of using Theorem \ref{thm:ColeH03} for line \ref{ln:compmincut}, we compute the terminal $s \in \tset$ with minimal value of $\localcut(s).$
	This gives us a minimum terminal cut.
	If the corresponding cut is a non-trivial $\tset$-separating cut then we recurse as in Line~\ref{ln:Return}.
	Otherwise, we perform the loop in Line~\ref{ln:Condition}.
	
	We now give implementation details for computing the terminal $s \in \tset$ with minimal value of $\localcut(s)$.
	By Lemma \ref{lem:CutMonotone} we can see that for a terminal $s$, $\localcut(s)$ is monotone throughout the algorithm.
	For each terminal $s$, our algorithm records the previous value of $\localcut(s)$ computed.
	Because this value is monotone, we need only check vertices $s$ whose value of $\localcut(s)$ could still possibly be minimal.
	Now, either $\localcut(s)$ is certified to be minimal among all $s$, or the value of $\localcut(s)$ is higher than the previously recorded value.
	Note that this can only occur $O(c|\tsethat|)$ times, as we stop processing a vertex $s$ if $\localcut(s) > c$.
	
	We now analyze the runtime.
	We first bound the runtime from the cases $|\tset| \le 500c^2\phi^{-1}$.
	The total number of vertices and edges in the leaves of the recursion tree is at most $O(mc)$.
	Therefore, by Lemma \ref{lem:StrongerVersion}, the total runtime from these is
	at most \[ \O(500c^2\phi^{-1} \cdot mc \cdot c^4) = \O(m\phi^{-1}c^7). \]
	
	Now, the loop of Line~\ref{ln:Condition} can only execute $c\phi^{-1}$ times because the volume of any $s$-isolating cut has size at most $c\phi^{-1}$. 
	Each iteration of the loop requires $\O(c^3\phi^{-1})$ time by Lemma \ref{lem:ExpanderCutFaster}.
	Therefore, the total runtime of executing the loop and calls to it is bounded by
	\[ \O\left(c|\tsethat| \cdot c\phi^{-1} \cdot c^3\phi^{-1} \right) = \O(|\tsethat|\phi^{-2}c^5).\]
	Combining these shows Theorem~\ref{thm:IntersectingEdges} Part~\ref{part:IntersectingEdges:Faster}.
\end{proofof}

\section{Applications} 
\label{sec:Applications}

We now discuss the applications of our \sparsifiers in dynamic graph data structures and parameterized algorithms.
\subsection{Dynamic Offline \texorpdfstring{$c$}{c}-edge-connectivity}
\label{sec:dynaconapplication}
In this section we formally show how to use \sparsifier to obtain offline dynamic connectivity routines.

\begin{lemma}
\label{lemma:dynacon}
Suppose that an algorithm $A(G',S,c)$ returns a \sparsifier with $f(c)|S|$ edges for terminals $S$ on a graph $G'$ in time $\O(g(c)|E(G')|).$ Then there is an offline algorithm that on an initially empty graph $G$ answers $q$ edge insertion, deletion, and $c$-connectivity queries in total time $\O(f(c)(g(c)+c)q)$.
\end{lemma}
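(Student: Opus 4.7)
The plan is a standard segment-tree-over-time construction, with the sparsifier $A$ applied recursively along the tree. Build a segment tree $T$ whose $q$ leaves correspond to the operations in order. For each inserted edge $e$ with lifespan $[a,b]$, place $e$ at the $O(\log q)$ canonical nodes of $T$ whose intervals partition $[a,b]$; write $E_v$ for the set of edges placed at $v$. For each node $v$, define the terminal set $S_v$ to be the endpoints of all $E_u$ over strict descendants $u$ of $v$, together with the endpoints of queries at leaves in the subtree of $v$. Process $T$ top-down: at the root $r$ set $H_r \gets A(E_r, S_r, c)$, and at any non-root $v$ with parent $p$ set $H_v \gets A(H_p \cup E_v, S_v, c)$. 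Finally, at a leaf $t$ that holds a query $(u,v)$, return $\min(c, \mincut_{H_t}(u,v))$, computed via Ford--Fulkerson up to flow value $c$ in time $O(c \cdot |E(H_t)|)$.

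Correctness is an induction along $T$. By the canonical decomposition, the graph present at time $t$ is exactly $G_t := \bigcup_{u \text{ ancestor of } t} E_u$. Assume inductively that $H_p$ is $(S_p, c)$-equivalent to $G_p$. Since $v$ is a strict descendant of $p$, we have $V(E_v) \subseteq S_p$ by construction, so Lemma~\ref{lem:AddEdges} gives that $H_p \cup E_v$ is $(S_p, c)$-equivalent to $G_p \cup E_v = G_v$; because $S_v \subseteq S_p$, Lemma~\ref{lem:Smaller} followed by the guarantee of $A$ makes $H_v$ a $(S_v, c)$-equivalent sparsifier of $G_v$. At a query leaf $t$, the query endpoints lie in $S_t$, so the $c$-thresholded query read off from $H_t$ is correct.

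For the running time, I would work depth-by-depth in $T$. Each ``event'' (edge placement or query) lies in the subtree of exactly one node at any given depth, so $\sum_v |S_v|$ at that depth is at most twice the total number of events, which is $O(q \log q)$ (each of the $O(q)$ edges is placed at $O(\log q)$ nodes, plus $q$ queries). The construction cost at $v$ is $\O(g(c)(f(c)|S_{p(v)}| + |E_v|))$; summing over $v$ at a fixed depth gives $\O(g(c)f(c) \cdot q \log q)$, and summing the $O(\log q)$ depths gives total sparsifier-construction time $\O(g(c) f(c) q)$. For query answering, at each query leaf $t$ we have $|S_t| = O(1)$, so $|E(H_t)| = O(f(c))$, and $c$ augmenting-path rounds cost $O(c f(c))$ each, totalling $O(c f(c) q)$. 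Adding the two contributions yields $\O(f(c)(g(c) + c) q)$, as required.

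The main subtlety is choosing $S_v$ so that every recursive call is legitimate, but there is a clean fix: taking $S_v$ to include the endpoints of all descendant edge-placements makes Lemma~\ref{lem:AddEdges} apply when we fold $E_v$ into the parent's sparsifier, and including all descendant query endpoints lets the leaves read off the right answer. This same choice is also what makes the $\sum_v |S_v| = O(q \log q)$-per-depth counting go through, so the whole analysis reduces to verifying these two inclusions and invoking $A$'s guarantees.
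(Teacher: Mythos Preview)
Your proof is correct and follows essentially the same segment-tree-over-time (offline divide-and-conquer) approach as the paper's \textsc{OfflineConnectivity}: the paper's edge sets $E^{\ell,r}_{left}$, $E^{\ell,r}_{right}$ are precisely your $E_v$ at the two children, and both proofs invoke Lemmas~\ref{lem:AddEdges} and~\ref{lem:Smaller} to justify adding edges into the parent sparsifier before recursing. The only cosmetic difference is bookkeeping of terminals---the paper takes all vertices touched by operations in the sub-interval (giving a clean pointwise bound $|\tset_v|\le 2(r-\ell)$), whereas you take endpoints of edge-placements at strict descendants plus query endpoints and bound $\sum_v |S_v|$ per depth by $O(q\log q)$; both satisfy the needed inclusions $V(E_v)\subseteq S_{p(v)}$ and $S_v\subseteq S_{p(v)}$ and yield the same $\O(f(c)(g(c)+c)q)$ total.
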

Lemma \ref{lemma:dynacon} directly implies an analogous result when graph $G$ is not initially empty, as we can make the first $m$ queries simply insert the edges of $G$.

We now state the algorithm \OfflineConnectivity~in Figure \ref{fig:OfflineDynacon} which shows Lemma \ref{lemma:dynacon}. In Figure \ref{fig:OfflineDynacon} graph $G_i$ for $0 \le i \le q$ denotes the current graph after queries $Q_1, \cdots, Q_i$ have been applied.
\begin{figure}[!h]
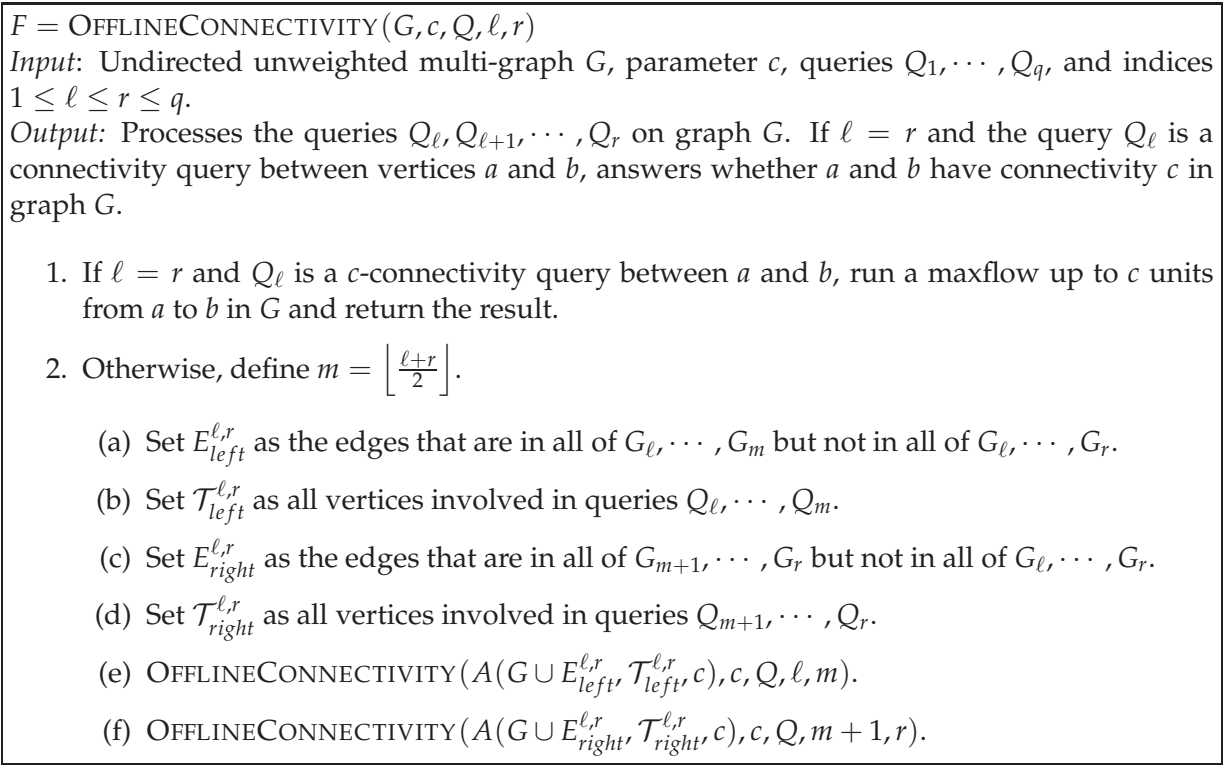

	\begin{algbox}
		$F = \OfflineConnectivity(G, c, Q, \ell, r)$
		
		\emph{Input}: Undirected unweighted multi-graph $G$, parameter $c$, queries $Q_1, \cdots, Q_q$, and indices $1 \le \ell \le r \le q$.
		
		\emph{Output:} Processes the queries $Q_\ell, Q_{\ell+1}, \cdots, Q_r$ on graph $G$. If $\ell = r$ and the query $Q_{\ell}$ is a connectivity query between vertices $a$ and $b$, answers whether $a$ and $b$ have connectivity $c$ in graph $G$.
		
		\begin{enumerate}
			\item If $\ell = r$ and $Q_{\ell}$ is a $c$-connectivity query between $a$ and $b$, run a maxflow up to $c$ units from $a$ to $b$ in $G$ and return the result. \label{line:maxflowfinish}
			\item Otherwise, define $m = \left\lfloor \frac{\ell+r}{2} \right\rfloor$.
			\begin{enumerate}
				\item Set $E^{\ell,r}_{left}$ as the edges that are in all of $G_{\ell}, \cdots, G_m$ but not in all of $G_{\ell}, \cdots, G_r.$ \label{line:left}
				\item Set $\T^{\ell,r}_{left}$ as all vertices involved in queries $Q_\ell, \cdots, Q_m$. \label{line:left2}
				\item Set $E^{\ell,r}_{right}$ as the edges that are in all of $G_{m+1}, \cdots, G_r$ but not in all of $G_{\ell}, \cdots, G_r.$ \label{line:right}
				\item Set $\T^{\ell,r}_{right}$ as all vertices involved in queries $Q_{m+1}, \cdots, Q_r.$ \label{line:right2}
				\item $\OfflineConnectivity(A(G\cup E^{\ell,r}_{left},\T^{\ell,r}_{left},c), c, Q, \ell, m).$
				\item $\OfflineConnectivity(A(G\cup E^{\ell,r}_{right},\T^{\ell,r}_{right},c), c, Q, m+1, r).$
			\end{enumerate}
		\end{enumerate}
	\end{algbox}
	\caption{Algorithm for offline connectivity}
	\label{fig:OfflineDynacon}
\end{figure}
\paragraph{Description of algorithm \OfflineConnectivity.} The algorithm does a divide and conquer procedure, computing \sparsifier on the way down the recursion tree. As the algorithm moves down the recursion tree, it adds edges to our graph that will exist in all children of the recursion node. This is done in line \ref{line:left} and \ref{line:right}. The algorithm then computes all vertices involved in queries in the children of a recursion node in lines \ref{line:left2} and \ref{line:right2}, and computes a \sparsifier treating those vertices as terminals, and recurses.

\begin{proof}
Correctness directly follows from the algorithm description and Lemma \ref{lem:AddEdges} -- we are only adding edges to our \sparsifier as we progress down the recursion tree. It suffices to bound the runtime.

It is straightforward to compute all the sets $E^{\ell,r}_{left}$ and $E^{\ell,r}_{right}$ in $O(q \log q)$ time. Additionally, $\sum_{\ell,r} |E^{\ell,r}_{left}| + |E^{\ell,r}_{right}| \le O(q \log q)$, where the sum runs over all pairs $(\ell, r)$ encountered in an execution of \OfflineConnectivity.

The graph $G$ in a call to $\OfflineConnectivity(G,c,Q,\ell,r)$ has at most $O(f(c)(r-\ell))$ edges by the guarantees of algorithm $A$. Therefore, the runtime of calls to algorithm $A$ is bounded by
\begin{align*} &\O(g(c)) \cdot \sum_{\ell,r} \left((r-\ell)f(c)+|E^{\ell,r}_{left}| + |E^{\ell,r}_{right}|\right) \\ &\le \O(g(c)) \cdot \left(O(q \log q) + \sum_{k=0}^{\log q} 2^k \cdot O(f(c)q \cdot 2^{-k}) \right) \le \O(g(c)f(c)q).
\end{align*}
The cost of running line \ref{line:maxflowfinish} is bounded by $O(c \cdot f(c)q)$ as each graph in the leaf recursion nodes has at most $O(f(c))$ vertices. Hence the total runtime is at most $\O(f(c)(g(c)+c)q)$ as desired.
\end{proof}

Combining Lemma \ref{lemma:dynacon} and Theorem \ref{thm:Main} Part \ref{part:Main2} immediately gives a proof of Theorem \ref{thm:Dynacon}.

Additionally, by adding / deleting edges from source / sinks, we can query for $c$-edge connectivity between multiple sets of vertices efficiently on a static graph.

\begin{corollary}
	Given a graph $G$ with $m$ edges, as well as query subsets $(A_1, B_1), (A_2, B_2) \ldots (A_k, B_k)$, we can compute the value of $\mincut_{G}^{c}(A_i, B_i)$ for all $1 \le i \le k$ in $\O\left((m+\sum_i |A_i|+|B_i|)c^{O(c)}\right)$ time.
\end{corollary}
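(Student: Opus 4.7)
The plan is to reduce each subset-to-subset thresholded min-cut query $(A_i,B_i)$ to a single $c$-edge-connectivity query between two auxiliary vertices, and then feed everything into the offline dynamic algorithm of Theorem~\ref{thm:Dynacon}.

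For each query $i$, I would introduce two fresh vertices $s_i$ and $t_i$, and add $c$ parallel edges from $s_i$ to every $a\in A_i$ and from $t_i$ to every $b\in B_i$. Because any terminal cut of size at most $c$ cannot sever a group of $c$ parallel edges, every cut of value at most $c$ separating $s_i$ from $t_i$ must keep all of $A_i$ with $s_i$ and all of $B_i$ with $t_i$. Combined with Lemma~\ref{lem:AddEdges}, this identity gives $\mincut_G^c(A_i,B_i)=\mincut_{G_i}^c(s_i,t_i)$, where $G_i$ is the augmented graph. So it suffices to compute a single vertex-to-vertex $c$-connectivity value per query on a graph maintained under edge insertions and deletions.

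Next, I would simulate the entire sequence in a single offline run. Starting from $G$, for each $i=1,\ldots,k$ in order, insert $s_i,t_i$ and the $c(|A_i|+|B_i|)$ auxiliary edges one at a time, issue the $c$-edge-connectivity query between $s_i$ and $t_i$, and then delete all the auxiliary edges before moving on to query $i+1$. The total number of insertions, deletions, and queries in this sequence is
\[
q=m+O\!\left(c\sum_i(|A_i|+|B_i|)\right)+k
  =O\!\left(m+c\sum_i(|A_i|+|B_i|)\right).
\]
Invoking Theorem~\ref{thm:Dynacon} on this offline sequence answers every $c$-connectivity query in amortized $\widetilde O(c^{O(c)})$ time, so the total time is
\[
\widetilde O\!\left(\Big(m+c\sum_i(|A_i|+|B_i|)\Big)\cdot c^{O(c)}\right)
=\widetilde O\!\left(\Big(m+\sum_i(|A_i|+|B_i|)\Big)\cdot c^{O(c)}\right),
\]
since the extra factor of $c$ is absorbed into $c^{O(c)}$. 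This matches the claimed bound.

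There is essentially no obstacle here; the only subtle point is to verify that the $c$-parallel-edge gadget really forces $A_i$ (resp.\ $B_i$) onto the same side of any size-$\le c$ cut, but that follows immediately from the convention in Section~\ref{sec:Preliminaries} of working with multigraphs and the $(\tset,c)$-equivalence framework. Everything else is routine bookkeeping of operation counts and an application of Theorem~\ref{thm:Dynacon}.
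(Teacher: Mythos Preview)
Your proposal is correct and matches the paper's approach exactly: the paper only says ``by adding / deleting edges from source / sinks, we can query for $c$-edge connectivity between multiple sets of vertices,'' and you have filled in precisely that gadget (super-source $s_i$ and super-sink $t_i$ attached with $c$ parallel edges) together with the operation count feeding into Theorem~\ref{thm:Dynacon}. One minor remark: the appeal to Lemma~\ref{lem:AddEdges} is unnecessary here, since the equality $\mincut_{G_i}^c(s_i,t_i)=\mincut_G^c(A_i,B_i)$ follows directly from the fact that no size-$<c$ cut can sever a bundle of $c$ parallel edges; otherwise the argument is clean.
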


\subsection{Parameterized Algorithms for Network Design}
\label{sec:sndpapplication}
In this section, we consider the rooted survivable network design problem
(\SNDP), in which we are given a graph $G$ with edge-costs, as well as $h$
demands $(v_i, d_i) \in V \times \Z$, $i \in [h]$, and a root $r \in V$.
The goal is to find a minimum-cost subgraph that contains, for every demand
$(v_i, d_i)$, $i \in [h]$, $d_i$ edge-disjoint paths connecting $r$ to $v_i$.

We will show how to solve \SNDP optimally in the running time of $f(c, \tw(G)) n$, where
$c = \max_i d_i$ is the maximum demand, and $\tw(G)$ is the treewidth of $G$.
Our algorithm uses the ideas of Chalermsook et al.~\cite{ChalermsookDELV18}
together with  \sparsifiers.
Our running time is $n \exp\left(O(c^4\tw(G)\log(\tw(G)c)\right)$ which is double-exponential in $c$, but only single-exponential in $\tw(G)$ (whereas the result by Chalermsook et al.~\cite{ChalermsookDELV18} is double-exponential in both $c$ and $\tw(G)$).

Let $(T,X)$ be a tree decomposition of $G$ (see Section \ref{sec:sndp:prelim} for a definition).
The main idea of our algorithm is to assign, to each $t \in T$, a state
representing the connectivity of the solution restricted to $X_t$.
By assigning these states in a manner that they are consistent across $T$, we
can piece together the solutions by looking at the states for each individual
node.
We will show that representing connectivity by two \sparsifiers is sufficient for our purposes, and that we can achieve
consistency across $T$ by using very simple local rules between the state for
a node $t$ and the states for its children $t_1$, $t_2$.
These rules can be applied using dynamic programming to compute the optimum solution.

\begin{theorem}
\label{thm:sndp}
There is an exact algorithm for \textup{\SNDP} on a graph $G$ with treewidth $\tw(G)$ and maximum demand $c$ with a running time of $n \exp\left(O(c^4\tw(G)\log(\tw(G)c)\right)$.
\end{theorem}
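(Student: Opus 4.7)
The plan is to run a bottom-up dynamic program on a nice tree decomposition $(T, X)$ of $G$. For each node $t \in T$ with bag $X_t$ and induced subgraph $G_t$ on vertices appearing in the subtree rooted at $t$, the DP table stores, for each possible \emph{signature} of a partial solution $F_t \subseteq G_t$, the minimum cost of a partial solution realising that signature. The signature of $F_t$ is a \sparsifier of $F_t$ with terminal set $\tset_t = X_t \cup \{r\}$; by Theorem~\ref{thm:Main} Part~\ref{part:Main1} such a sparsifier has $O(c^4 \tw(G))$ edges and at most that many vertices. This state is sufficient because every demand $d_i$ is at most $c$, so feasibility only depends on $c$-thresholded minimum cuts among terminals, and two $(\tset_t, c)$-equivalent partial solutions behave identically in the remainder of the DP.

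Transitions follow the standard operations of a nice tree decomposition, each justified by a lemma on sparsifiers. At an \emph{introduce} or \emph{introduce-edge} node we enumerate edge inclusions; adding the chosen edges to the child sparsifier preserves equivalence by Lemma~\ref{lem:AddEdges}, after which we re-sparsify via Theorem~\ref{thm:Main} and add the cost of the chosen edges. At a \emph{forget} node removing $v$ from $X_t$, Lemma~\ref{lem:Smaller} guarantees the stored sparsifier remains equivalent on the smaller terminal set, and we re-sparsify. At a \emph{join} node with $X_t = X_{t_1} = X_{t_2}$, we glue the two child sparsifiers along $\tset_t$ with Steiner vertices kept disjoint, which is $(\tset_t, c)$-equivalent to $F_{t_1} \cup F_{t_2}$, and re-sparsify. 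At the root we return the minimum cost over states $H$ satisfying $\mincut^c_H(\{r\}, \{v_i\}) \ge d_i$ for every demand.

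For the runtime, each DP state is a multigraph on $V = O(c^4 \tw(G))$ vertices with $O(c^4 \tw(G))$ edges, with a distinguished terminal set of size $\tw(G)+2$. A crude encoding bound gives at most $(V^2)^{O(V)} = \exp(O(c^4 \tw(G) \log(\tw(G) c)))$ distinct signatures per node. Introduce and forget transitions cost linear in the state count; join transitions enumerate pairs of child states at squared cost, which still fits within the same exponential. Summing over the $O(n)$ nodes of the tree decomposition yields the claimed $n \exp(O(c^4 \tw(G) \log(\tw(G) c)))$ runtime.

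The main subtlety I anticipate is the join step: gluing mimicking networks along shared terminals is not literally Lemma~\ref{lem:Independent}, since that lemma assumes vertex-disjoint unions. The required gluing lemma would be proved directly: in $F_{t_1} \cup F_{t_2}$, any cut separating $\tset_A$ from $\tset_t \setminus \tset_A$ decomposes as independent cuts in $F_{t_1}$ and $F_{t_2}$ (because the two pieces share only terminal vertices), so the minimum cut equals $\mincut_{F_{t_1}} + \mincut_{F_{t_2}}$; the same identity holds in the glued sparsifier, and the $c$-threshold cap preserves equality on both sides by a short case analysis on whether each summand has hit the threshold. With this lemma in hand, all transitions preserve $(\tset_t, c)$-equivalence, so the DP is correct and the claimed bound follows.
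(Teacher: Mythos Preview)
Your gluing argument at join nodes is essentially correct and mirrors the paper's Lemma~\ref{lem:kgst:weakmimnets:unify}. The real problem is how you verify that the demands $(v_i,d_i)$ are satisfied.

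You propose to check, at the root, that $\mincut^c_H(\{r\},\{v_i\})\ge d_i$. But at the root the terminal set of your sparsifier is $X_{\rootn(T)}\cup\{r\}$, and a demand vertex $v_i$ is in general not in $X_{\rootn(T)}$; once $v_i$ has been forgotten it is no longer a terminal in any state you carry, so this test cannot even be formulated. Moving the check to the forget node for $v_i$ does not help either: your single sparsifier encodes $\mincut^c_{F_t}(\{r\},\{v_i\})$, the connectivity \emph{inside} the partial solution $F_t\subseteq G_t$, whereas feasibility requires $\mincut^c_{F}(\{r\},\{v_i\})$ in the \emph{full} solution $F$. An $r$--$v_i$ path may leave $G_t$ through $X_t$ and return through $X_t$ using edges that will only be introduced at ancestors of $t$; such paths are invisible to $F_t$. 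Concretely, if $v_i$ has two neighbours $a,b\in X_t$, the edge $r$--$a$ lies in $G_t$, and the $r$--$b$ connection lies outside $G_t$, then $\mincut_{F_t}(\{r\},\{v_i\})=1$ even though the full solution has two edge-disjoint $r$--$v_i$ paths; your DP would prune this valid solution.

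This is exactly why the paper stores \emph{two} sparsifiers per node: $\Gnet_t$ for $F\cap E(G_t)$ and $\Dnet_t$ for $F\setminus E(G_t)$, and checks $d_i$ edge-disjoint $r$--$v_i$ paths in $\Gnet_t\cup\Dnet_t$ at any node $t$ with $v_i\in X_t$. The second sparsifier is the missing idea; without some representation of the outside connectivity you cannot soundly discharge demands when their vertices leave the bag. Once you add it (and the corresponding top-down consistency rule for $\Dnet_t$, as in Definition~\ref{def:kgst:weakmimnets:unify}), the rest of your outline, including the state-count bound $(V^2)^{O(V)}=\exp(O(c^4\tw(G)\log(\tw(G)c)))$, goes through as in the paper.
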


The rest of this section is dedicated to proving the theorem above.
In Section \ref{sec:sndp:prelim} we introduce some concepts and assumptions
used in our result;
in Section \ref{sec:sndp:connrules} we show how to represent the solution
locally using \sparsifiers, and how to make sure that all these local
representations are consistent;
finally, in Section \ref{sec:sndp:algo} we show how to use these ideas to
solve \SNDP.

\subsubsection{Preliminaries}
\label{sec:sndp:prelim}

\paragraph{Tree Decomposition}
Let $G$ be an undirected graph. A \emph{tree decomposition} is a pair
$(T,X)$ where $T$ is a tree and $X = \{X_t \subseteq V(G)\}_{t
\in V(T)}$ is a collection of \emph{bags} such that:
\begin{enumerate}
	\item $V(G) = \bigcup_{t \in V(T)} X_t$, that is, every $v \in V(G)$ is contained in some bag $X_t$; \label{not:def:tw1}
	\item For any edge $uv \in E$, there is a bag $X_t$ that contains both $u$ and $v$, i.e., $u,v \in X_t$; \label{not:def:tw2}
	\item For each vertex $v \in V(G)$, the collection of nodes $t$ whose bags $X_t$ contain $v$ induces a
connected subgraph of $T$, that is, $T[\{t \in V(T): v \in X_t\}]$ is a (connected) subtree.\label{not:def:tw3}
\end{enumerate}
We will use the term \emph{node} to refer to an element $t \in V(T)$, and
\emph{bag} to refer to the corresponding subset $X_t$.

The treewidth of $G$, denoted $\tw(G)$, is the minimum \emph{width} of any
tree decomposition $(T,X)$ for $G$. The width of $(T,X)$ is
given by $\max |X_t|-1$.

Let $G$ be a graph and $(T, X)$ be its tree decomposition.
We will say that each edge $uv \in E(G)$ \emph{belongs} to a unique bag $X_t$,
and write $e \in E_t$ if $t \in T$ is the node closest to the root such
that $u,v \in X_t$.
For a subset $S \subseteq V(T)$, we define $X\paren{S} := \bigcup_{t
\in S} X_t$.
Given a node $t \in V(T)$, we denote by
$T_t$ the subtree of $T$ rooted at $t$ and by $p(t)$ the parent node
of $t$ in $V(T)$.
We also define $G_t$ as the subgraph with vertices $X(T_t)$ and edges
$E(G_t) = \bigcup_{t' \in T_t} E_{t'}$.
For each $v \in V$, we denote by $t_v$ the node closest to the root for
which $v \in X_{t_v}$.

Throughout this section, we will consider a tree decomposition $(T,X)$ of $G$
satisfying the following properties (see~\cite{ChalermsookDELV18}):
\begin{inparaenum}[(i)]
\item $T$ has height $O(\log n)$;
\item $|X_t| \leq O(\tw(G))$ for all $t \in T$;
\item every leaf bag contains no edges ($E_t = \emptyset$ for all leaves $t \in T$); 
\item every non-leaf has exactly $2$ children.
\end{inparaenum}
Additionally, we add the root $r$ to every bag $X_t$, $t \in T$.

\paragraph{Vertex Sparsification}

In our application of \sparsifiers to \SNDP, we need graphs that
preserve the thresholded minimum cuts \emph{for any disjoint sets} $\tset_1,
\tset_2 \subseteq \tset$ (i.e.\ $\tset_1 \cup \tset_2$ may not include all terminals).
Lemma \ref{lem:prelimcuts:disjointsets} shows that this formulation is
equivalent to that of Definition \ref{def:DefMain}.
We write $G \equiv^c_{\tset} H$ if $G$ and $H$ are $(\tset, c)$-equivalent
according to the definition of Lemma \ref{lem:prelimcuts:disjointsets}.

\begin{lemma}
\label{lem:prelimcuts:disjointsets}
	Let $G$, $H$ be graphs both containing a set of terminals $\tset$.
	$G$ and $H$ are $(\tset, c)$-equivalent if and only if
	for any \emph{disjoint} subsets of terminals $\tset_1, \tset_2 \subseteq \tset$,
	the thresholded minimum cuts are preserved in $H$, i.e.,  
	\[
		\mincut^c_H\left(\tset_1, \tset_2\right)
		=
		\mincut^c_G\left(\tset_1, \tset_2\right).
	\]
\end{lemma}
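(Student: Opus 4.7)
The plan is a direct double-inclusion argument; the lemma essentially says the two natural formulations of thresholded cut preservation are the same.

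The "if" direction is immediate: given any $\tset_1 \subseteq \tset$, set $\tset_2 = \tset \setminus \tset_1$, which are disjoint subsets of $\tset$, and apply the disjoint-subsets hypothesis.

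For the "only if" direction, I would express the minimum cut between disjoint $\tset_1, \tset_2 \subseteq \tset$ as a minimum over bipartitions of $\tset$ that respect $\tset_1$ and $\tset_2$. Concretely, let $\tset_3 \defeq \tset \setminus (\tset_1 \cup \tset_2)$. Any vertex set $S$ with $\tset_1 \subseteq S$ and $\tset_2 \subseteq V \setminus S$ induces a bipartition of $\tset_3$ via $A = S \cap \tset_3$ and $B = (V\setminus S) \cap \tset_3$, and conversely any such refinement $\tset_1 \cupdot A$ versus $\tset_2 \cupdot B$ is a partition of $\tset$. Thus
\[
    \mincut_G(\tset_1, \tset_2)
    \;=\;
    \min_{\tset_1'\,:\, \tset_1 \subseteq \tset_1' \subseteq \tset \setminus \tset_2}
    \mincut_G\!\left(\tset_1',\, \tset \setminus \tset_1'\right),
\]
and identically for $H$. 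The key structural point is that the minimum on the right is taken over partitions of \emph{all of} $\tset$, which is exactly what Definition~\ref{def:DefMain} controls.

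It then remains to push the $\min(c, \cdot)$ threshold through the outer minimum. Since $\min(c, \min_i x_i) = \min_i \min(c, x_i)$, we obtain
\[
    \mincut^c_G(\tset_1, \tset_2)
    \;=\;
    \min_{\tset_1'\,:\, \tset_1 \subseteq \tset_1' \subseteq \tset \setminus \tset_2}
    \mincut^c_G\!\left(\tset_1',\, \tset \setminus \tset_1'\right),
\]
and likewise for $H$. By Definition~\ref{def:DefMain}, each term on the right coincides between $G$ and $H$, hence so does the minimum, giving $\mincut^c_G(\tset_1, \tset_2) = \mincut^c_H(\tset_1, \tset_2)$.

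There is no real obstacle here; the only thing to be careful about is the commutation of the threshold with the outer minimization (which is harmless because $\min(c, \cdot)$ is monotone), and the bookkeeping that every witness $S$ on the $G$-side corresponds to a partition of $\tset$ (and vice versa), so the two minima are truly over the same index set.
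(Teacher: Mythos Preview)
Your proof is correct and uses essentially the same idea as the paper: the key point in both is that $\mincut_G(\tset_1,\tset_2)$ equals the minimum of $\mincut_G(\tset_1',\tset\setminus\tset_1')$ over all completions $\tset_1\subseteq\tset_1'\subseteq\tset\setminus\tset_2$, so $(\tset,c)$-equivalence on full partitions forces equality for disjoint subsets. The only cosmetic difference is that the paper phrases this via a circular chain of inequalities using the optimal cuts $(A_G,B_G)$ in $G$ and $(A_H,B_H)$ in $H$ and the terminal partitions they induce, whereas you write out the min-over-completions identity explicitly and push the threshold through; both arguments are equivalent.
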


\begin{proof}
Note that if the condition above holds, $G$ and $H$ are trivially $(\tset,
c)$-equivalent, since for any partition $\tset = \tset_1 \cupdot \tset_2$,
$\tset_1$ and $\tset_2$ are disjoint.

We now prove that if Definition \ref{def:DefMain} is satisfied, thresholded
minimum cuts are preserved for any disjoint subsets of terminals.
Let $\tset_1, \tset_2 \subseteq \tset$ be disjoint sets of terminals.
Let $(A_G, B_G)$, $(A_H, B_H)$ be the minimum cuts separating $\tset_1$ and
$\tset_2$ in $G$ and $H$, respectively.
We know that
\[
\mincut^c_G(A_H \cap \tset, B_H \cap \tset) \geq 
\mincut^c_G(A_G \cap \tset, B_G \cap \tset),
\]
since $(A_G, B_G)$ is the minimum cut separating $\tset_1$, $\tset_2$ in $G$,
and $\mincut^c_G(A_H \cap \tset, B_H \cap \tset)$ represents a cut which also
separates $\tset_1$, $\tset_2$. A similar statement is also true for $H$.

Furthermore, we know $\mincut^c_G(A_H \cap \tset, B_H \cap \tset) =
\mincut^c_H(A_H \cap \tset, B_H \cap \tset)$ (and similarly for $(A_G, B_G)$)
by $(\tset_c)$-equivalence of $G$ and $H$.

Combining everything, we get,
\begin{alignat*}{2}
\mincut^c_H(A_H \cap \tset, B_H \cap \tset)
&= \mincut^c_G(A_H \cap \tset, B_H \cap \tset)
&&\geq \mincut^c_G(A_G \cap \tset, B_G \cap \tset) \\
&= \mincut^c_H(A_G \cap \tset, B_G \cap \tset)
&&\geq \mincut^c_H(A_H \cap \tset, B_H \cap \tset)
\end{alignat*}

The circular chain of inequalities implies that 
\begin{align*}
\mincut^c_G(A_G \cap \tset, B_G \cap \tset) 
= \mincut^c_H(A_H \cap \tset, B_H \cap \tset)
\end{align*}

The definition of $(A_G, B_G)$, $(A_H, B_H)$ then implies that 
\begin{align*}
\mincut^c_G(\tset_1, \tset_2) 
= \mincut^c_G(A_G \cap \tset, B_G \cap \tset) 
= \mincut^c_H(A_H \cap \tset, B_H \cap \tset)
= \mincut^c_H(\tset_1, \tset_2)
\end{align*}
\end{proof}

\subsubsection{Local Connectivity Rules}
\label{sec:sndp:connrules}

In this section, we will introduce the local connectivity rules which will
allow us to assign states in a consistent manner to the nodes of $T$.
The states we will consider consist of two  \sparsifiers
roughly corresponding to the connectivity of the solution in $E(G_t)$ and $E
\setminus E(G_t)$.
We then present some rules that make these states consistent across $T$,
while only being enforced for a node and its children.

We remark that this notation deviates from the one used by Chalermsook et
al.~\cite{ChalermsookDELV18}, in which states represent connectivity in
$E(G_t)$ and $E$.
We do so because taking the union of overlapping \sparsifiers would lead
to overcounting of the number of edge-disjoint paths.

The following local definition of connectivity introduces the desired
consistency rules that we can use to define a dynamic program for the problem.
Lemma \ref{def:kgst:weakmimnets:unify} shows that a collection of \sparsifiers
satisfy the local definition if and only if they represent the connectivity in
$G$ with terminals given by a bag.

\begin{definition}[Local Connectivity]
\label{def:kgst:weakmimnets:unify}

We say that the pairs of \sparsifiers $\{(\Gnet_t, \Dnet_t)\}_{t\in V(T)}$ satisfy the
\emph{local connectivity definition} if
\begin{align*}
\Gnet_t &\equiv^c_{X_t} (X_t, \emptyset)     & \text{for every leaf node $t$ of $T$} \\
\Dnet_{\rootn(T)} &\equiv^c_{X_t} (X_t, \emptyset)
\end{align*}
and for every internal node $t \in V(T)$ with children $t_1$ and $t_2$,
\begin{align*}
\Gnet_t &\equiv^c_{X_t} (X_t, E_t) \cup \Gnet_{t_1} \cup \Gnet_{t_2} \\
\Dnet_{t_1} &\equiv^c_{X_t} (X_t, E_t) \cup \Gnet_{t_2} \cup \Dnet_{t} 
\end{align*}
where $A \equiv^c_{X_t} B$ means that $\mincut^c_{A}(S_1, S_2) = \mincut^c_{B}
(S_1, S_2)$ for all disjoint sets $S_1, S_2 \subseteq X_t$.
\end{definition}

\begin{lemma}
\label{lem:kgst:weakmimnets:unify}

Let $G=(V,E)$ be a graph, and $(\Tcal, X)$ its tree decomposition satisfying
[the usual properties].
For every $t \in V(T)$, let $(\Gnet_t, \Dnet_t)$ be a
pair as in Definition \ref{def:kgst:weakmimnets:unify}.

Then, the pairs $\set{(\Gnet_t, \Dnet_t)}_{t \in T}$ satisfy the local
definitions if and only if for every $t \in V(\Tcal)$,
\begin{align*}
\Gnet_t &\equiv^c_{X_t} G_t \\
\Dnet_t &\equiv^c_{X_t} G \setminus E(G_t)
\end{align*}
where $A \equiv^c_{X_t} B$ means that $\mincut^c_{A}(S_1, S_2) = \mincut^c_{B}
(S_1, S_2)$ for all disjoint sets $S_1, S_2 \subseteq X_t$.

\end{lemma}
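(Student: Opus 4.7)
My plan is to prove both directions by structural induction on $T$, the central technical tool being a substitution principle for $\equiv^c$: if $H_1 \equiv^c_{\tset_1} H_1'$ and $H_2$ is a graph with $V(H_2) \cap V(H_1) \subseteq \tset_1$, then $H_1 \cup H_2 \equiv^c_{\tset'} H_1' \cup H_2$ for any terminal set $\tset' \subseteq \tset_1 \cup V(H_2)$. This principle is obtained by peeling off $H_2$ into its portion vertex-disjoint from $V(H_1)$ (handled by Lemma~\ref{lem:Independent}) followed by the edges of $H_2$ whose endpoints lie in $\tset_1$ (handled by iterated application of Lemma~\ref{lem:AddEdges}). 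The combinatorial input that lets us apply substitution in our setting is property~(3) of tree decompositions: for every vertex $v$, the set $\set{t : v \in X_t}$ induces a connected subtree of $T$. Consequently $V(G_{t_1}) \cap V(G_{t_2}) \subseteq X_{t_1} \cap X_{t_2} \subseteq X_t$, both endpoints of each edge in $E_t$ lie in $X_t$, and $V(G_{t_i}) \cap V(G \setminus E(G_{t_i})) \subseteq X_{t_i}$.

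For the ``only if'' direction, I first induct bottom-up to show $\Gnet_t \equiv^c_{X_t} G_t$. For a leaf $t$, the assumption $E_t = \emptyset$ together with $V(G_t) = X_t$ gives $G_t = (X_t, \emptyset)$, matching the leaf local rule. At an internal node $t$ with children $t_1,t_2$, the identity $G_t = (X_t, E_t) \cup G_{t_1} \cup G_{t_2}$ (as multigraphs) combined with the inductive hypotheses $\Gnet_{t_i} \equiv^c_{X_{t_i}} G_{t_i}$ and two applications of substitution (one per child, with the child's interface $X_{t_i}$ playing the role of $\tset_1$) yields $(X_t, E_t) \cup \Gnet_{t_1} \cup \Gnet_{t_2} \equiv^c_{X_t} G_t$; chaining with the local rule gives $\Gnet_t \equiv^c_{X_t} G_t$. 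I then induct top-down for $\Dnet_t$. At the root, $G \setminus E(G_{\rootn(T)}) = (V,\emptyset)$ has vanishing thresholded mincuts among any subset of $X_{\rootn(T)}$, matching $(X_{\rootn(T)},\emptyset)$. For a child $t_1$ of an internal node $t$, using the identity $G \setminus E(G_{t_1}) = (X_t, E_t) \cup G_{t_2} \cup (G \setminus E(G_t))$, the inductive hypothesis $\Dnet_t \equiv^c_{X_t} G \setminus E(G_t)$, and the already-established $\Gnet_{t_2} \equiv^c_{X_{t_2}} G_{t_2}$, substitution shows $(X_t, E_t) \cup \Gnet_{t_2} \cup \Dnet_t \equiv^c_{X_t} G \setminus E(G_{t_1})$; combining with the local rule for $\Dnet_{t_1}$ delivers the global statement.

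The ``if'' direction simply reverses these arguments: each local rule is obtained by plugging the global equivalences into the same structural identities and invoking substitution once per child. The main obstacle is a clean formulation of the substitution principle in the multigraph setting, since the unions overlap exactly on interface vertices that have to be ``promoted'' to terminals before the basic lemmas can be invoked. Once the interface containment is extracted from property~(3) of the tree decomposition, every substitution step reduces to a disjoint union (Lemma~\ref{lem:Independent}) followed by the addition of edges whose endpoints lie in the terminal set (Lemma~\ref{lem:AddEdges}), and both inductions assemble cleanly.
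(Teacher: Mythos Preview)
Your proposal is correct and follows essentially the same route as the paper: both use bottom-up induction on $T$ for $\Gnet_t$ and top-down induction for $\Dnet_t$, both rely on the identities $G_t=(X_t,E_t)\cup G_{t_1}\cup G_{t_2}$ and $G\setminus E(G_{t_1})=(X_t,E_t)\cup G_{t_2}\cup(G\setminus E(G_t))$, and both extract the needed interface containments from property~(3) of the tree decomposition. The only real difference is packaging: the paper verifies the key equivalence $(X_t,E_t)\cup\Gnet_{t_1}\cup\Gnet_{t_2}\equiv^c_{X_t} G_t$ (and its $\Dnet$-analogue) by an explicit two-sided cutset decomposition---take a minimum cutset $F$, split it as $F\cap E_t$, $F\cap E(G_{t_1})$, $F\cap E(G_{t_2})$, bound each piece by the corresponding mincut, swap via the inductive hypothesis, and recombine---whereas you abstract this once into a substitution principle and then apply it twice per node. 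Your abstraction is exactly what the paper's inequality chains are proving by hand, so the two arguments are the same in content.

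One small remark on your write-up: the claimed containment $V(G_{t_i})\cap V(G\setminus E(G_{t_i}))\subseteq X_{t_i}$ is not literally true, since $G\setminus E(G_{t_i})$ still carries all of $V$. What is true (and what you need) is that every vertex of $V(G_{t_i})\setminus X_{t_i}$ is isolated in $G\setminus E(G_{t_i})$, so it may be discarded without affecting any thresholded mincut; after that deletion the interface is indeed $X_{t_i}$ and your substitution principle applies. The paper handles the analogous point implicitly in its cutset argument.
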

\begin{proof}
We start by proving the statement for $\Gnet$ by bottom-up induction, and then
the one for $\Dnet$ by top-down induction.
We will show that 

Let $t \in \Tcal$ be a leaf of the tree decomposition. Then $E(G_t) =
\emptyset$, so the statement immediately follows. Consider now an
internal node $t$ with children $t_1$, $t_2$, and assume
that the claim follows for $t_1$, $t_2$.
We will define $H'_t = (X_t, E_t) \cup \Gnet_{t_1} \cup \Gnet_{t_2}$, and prove that
$H'_t \equiv^c_{X_t} G_t$.
That implies that $\Gnet_t \equiv^c_{X_t} H'_t$ (that is, $\Gnet_t$ satisfies
the local connectivity definition) if and only if $\Gnet_t \equiv^c_{X_t}
G_t$.

Let $S_1, S_2 \subseteq X_t$, and $F$ be the cutset for a mincut between $S_1$
and $S_2$ in $E(G_t)$.
We will use $c_G(S_1,S_2) = \mincut^c_{G}(S_1, S_2)$ for
conciseness (in this proof only).
Then
\begin{align*}
c_{G_t}&(S_1, S_2) \\
&= \min(c, |F|) \\
&= \min(c, |F \cap E_t| + |F \cap E(G_{t_1})| + |F \cap E(G_{t_2})|) \\
&\geq \min\parenbig{c, c_{E_t}(S_1, S_2) + c_{E(G_{t_1})}(S_1 \cap X_{t_1}, S_2 \cap X_{t_1}}
			+ c_{E(G_{t_2})}(S_1 \cap X_{t_2}, S_2 \cap X_{t_2})) \\
&= \min\parenbig{c, c_{E_t}(S_1, S_2) + c_{\Gnet_{t_1}}(S_1 \cap X_{t_1}, S_2 \cap X_{t_1}} 
			+ c_{\Gnet_{t_2}}(S_1 \cap X_{t_2}, S_2 \cap X_{t_2})) \\
&\geq c_{H'_t}(S_1, S_2)
\end{align*}

The third inequality follows because each of the three terms corresponds to a
min-cut between $S_1$ and $S_2$ for the respective edge sets.
The fourth inequality follows by induction hypothesis, and the final one
follows by definition of $H'_t$.
For this last step, we crucially use that $X_{t_1} \cap X_{t_2} \subseteq
X_t$, which means that any cut for $E_t$, $\Gnet_{t_1}$ and $\Gnet_{t_2}$ uses
disjoint edges and disjoint vertices outside of $X_t$. These edges provide an
upper bound for the cut $c_{H'_t}$.

Analogously, we can prove that $c_{G_t} \leq c_{H'_t}$, by
taking a set of edges $F'$ of $H'_t$ that realizes the minimum cut in that
graph. The same steps then apply to prove the desired inequality. 
\begin{align*}
c_{H'_t}(S_1, S_2) 
&= \min(c, |F'|) \\
&= \min(c, |F' \cap E_t| + |F' \cap E(\Gnet_{t_1})| + |F' \cap E(\Gnet_{t_2})|) \\
&\geq \min\parenbig{c, c_{E_t}(S_1, S_2) + c_{\Gnet_{t_1}}(S_1 \cap X_{t_1}, S_2 \cap X_{t_1}}
			+ c_{\Gnet_{t_2}}(S_1 \cap X_{t_2}, S_2 \cap X_{t_2})) \\
&= \min\parenbig{c, c_{E_t}(S_1, S_2) + c_{G_{t_1}}(S_1 \cap X_{t_1}, S_2 \cap X_{t_1}} 
			+ c_{G_{t_2}}(S_1 \cap X_{t_2}, S_2 \cap X_{t_2})) \\
&\geq c_{G_t}(S_1, S_2)
\end{align*}
This concludes the first part of the proof.

For the second part of the proof, we will use top-down induction.
For $t=r$, notice that $E\setminus E(G_t) = \emptyset$, so the statement
follows.
We now prove the equality for a node $t_1$ with parent $t$ and sibling $t_2$.
Let  $H_{t_1} = (X_t, E_t) \cup \Gnet_{t_2} \cup \Dnet_{t}$, and prove that $H_t \equiv^c_{X_t} G \setminus G_t$.
This implies the statement, as it shows that $\Dnet_t \equiv^c_{X_t} H_t$
(that is, $\Dnet_t$ satisfies the local connectivity definition) if and only
if $\Dnet_t \equiv^c_{X_t} G \setminus G_t$.

\newcommand{\twolineeq}[3]{
	\begin{aligned}[t]
	#1 & #2 \\
	   & #3
	\end{aligned}
}

Let $S_1, S_2 \subseteq X_{t_1}$, and $F$ be the cutset for a
mincut between $S_1$ and $S_2$ in $E\setminus E(G_{t_1})$. Then
\begin{align*}
c_{E\setminus E(G_{t_1})}(S_1, S_2) 
&= \min(c, |F|) \\
&= \min(c, |F \cap E_t| + |F \cap (E\setminus E(G_t))| + |F \cap E(G_{t_2})|) \\
&\geq 
\twolineeq{\min(c, c_{E_t}(S_1 \cap X_t, S_2 \cap X_t)}{
+ c_{E\setminus E(G_t)}(S_1 \cap X_t, S_2 \cap X_t)}{
+ c_{E(G_{t_2})}(S_1 \cap X_{t_2}, S_2 \cap X_{t_2}))} \\
&= 
\twolineeq{\min(c, c_{E_t}(S_1 \cap X_t, S_2 \cap X_t)}{
+ c_{\Dnet_t}(S_1 \cap X_t, S_2 \cap X_t)}{
+ c_{\Gnet_{t_2}}(S_1 \cap X_{t_2}, S_2 \cap X_{t_2}))} \\
&\geq c_{H_{t_1}}(S_1, S_2)
\end{align*}

Similarly to the proof above, we use the fact that $F \cap E_t$, $F \cap
(E\setminus E(G_t))$, $F \cap E(G_{t_2})$ are cuts in the subgraphs
$E_t$, $G \setminus E(G_t)$, $E(G_{t_2})$ respectively. The last step
follows from the fact that the three terms correspond to cuts in $E_t$, $\Dnet_t$
and $\Gnet_{t_2}$, and therefore their union forms a cut in $\Dnet_t \cup E_t \cup
\Gnet_{t_2}$. Since $H_{t_1} \equiv^c_{X_{t_1}} \Dnet_t \cup E_t \cup \Gnet_{t_2}$,
the inequality follows. The converse follows similarly:
\begin{align*}
c_{H_{t_1}}(S_1, S_2)
&= \min(c, |F|) \\
&= \min(c, |F \cap E_t| + |F \cap E(\Dnet_t)| + |F \cap E(\Gnet_{t_2})|) \\
&\geq 
\twolineeq{\min(c, c_{E_t}(S_1 \cap X_t, S_2 \cap X_t)}{ 
  + c_{\Dnet_t}(S_1 \cap X_t, S_2 \cap X_t)}{
  + c_{\Gnet_{t_2}}(S_1 \cap X_{t_2}, S_2 \cap X_{t_2}))} \\
&= 
\twolineeq{ \min(c, c_{E_t}(S_1 \cap X_t, S_2 \cap X_t)}{
  + c_{E\setminus E(G_t)}(S_1 \cap X_t, S_2 \cap X_t)}{
  + c_{E(G_{t_2})}(S_1 \cap X_{t_2}, S_2 \cap X_{t_2}))} \\
&\geq c_{E\setminus E(G_{t_1})}(S_1, S_2) 
\end{align*}
This completes the proof.
\end{proof}

\subsubsection{Dynamic Program for \SNDP}
\label{sec:sndp:algo}

In this section, we present an algorithm for \SNDP on bounded-treewidth
graphs, which uses dynamic programming to compute a solution bottom-up.
Our goal is to assign two \sparsifiers $\Gnet_t$, $\Dnet_t$ to each node
$t \in T$, corresponding to the connectivity of the solution in $E(G_t)$
and $E\setminus E(G_t)$.
We argue that any solution for $G_t$, $t \in T$ that is compatible with a
state $(\Gnet_t, \Dnet_t)$ can be interchangeably used, which implies that the
dynamic program will obtain the minimum-cost solution.

We define a dynamic programming table $D$, with entries $D[t, \Gnet, \Dnet]$,
$t \in T$, $\Gnet$, $\Dnet$ a  \sparsifiers with
terminal set $X_t$.
The entry $D[t, \Gnet, \Dnet]$ represents the minimum cost of a solution $F$
that is consistent with $\Gnet$ (i.e.\ $F \equiv^c_{X_t} \Gnet$), such that $F
\cup \Dnet_t$ satisfies all the demands contained in $G_t$.

We compute $D[t, \Gnet, \Dnet]$ as follows:
\begin{itemize}
	\item For any leaf $t$, set $D[t, \emptyset, \Dnet] = 0$ and $D[t, \Gnet,
	\Dnet] = +\infty$ for $\Gnet \neq \emptyset$;
	\item For the root node $\rootn(T)$, set $D[\rootn(T), \Gnet, \Dnet] = +\infty$ if $\Dnet
	\neq \emptyset$;
	\item For any demand $(v_i, d_i)$, and $t \in T$ such that $v_i \in X_t$, set $D[t, \Gnet, \Dnet] = +\infty$ if $\Gnet \cup \Dnet$ contains fewer than $d_i$ edge-disjoint paths connecting $r$ to $v_i$.
\end{itemize}
For all other entries of $T$, compute it recursively as:
\begin{align*}
D[t, \Gnet, \Dnet] 
= \min \Big\{ w(Y) + D[t_1, \Gnet_1, \Dnet_1] &+ D[t_2, \Gnet_2, \Dnet_2] :
Y \subseteq E_t, \\
&\Gnet \equiv^c_{X_t} Y \cup \Gnet_1 \cup \Gnet_2, \\
&\Dnet_1 \equiv^c_{X_{t_1}} Y \cup \Dnet \cup \Gnet_2, \\
&\Dnet_2 \equiv^c_{X_{t_2}} Y \cup \Dnet \cup \Gnet_1
\Big\}
\end{align*}

We now want to prove that the dynamic program is feasible, i.~e.~that the
entries $D[\rootn(T), \Gnet, \emptyset]$ correspond to feasible solutions;
and that it is optimal, meaning that we will obtain the optimum solution to
the problem.

To prove that the dynamic program is feasible, notice that, by definition, any
solution obtained induces a choice of $Y_t$, $\Gnet_t$, $\Dnet_t$ for each $t
\in T$. Let $Y = \cup_{t \in T}$.
The recursion formula of the dynamic program implies that the pairs
$\set{(\Gnet_t, \Dnet_t)}_{t \in T}$ satisfy the local connectivity
definition with regard to the graph $(V, Y)$.

By Lemma \ref{lem:kgst:weakmimnets:unify}, this implies that 
\begin{align*}
\Gnet_t \equiv^c_{X_t} G_t[Y], 
\Dnet_t \equiv^c_{X_t} G[Y] \setminus E(G_t), 
\end{align*}
and hence, $\Gnet_t \cup \Dnet_t \equiv^c_{X_t} G[Y]$.

Let $(v_i, d_i)$ be a demand and $t \in T$ be a node such that $v_i \in
X_t$.
Since we know that $\Gnet_t \cup \Dnet_t$ contains $d_i$ edge-disjoint paths
from $r$ to $v_i$ (otherwise $D[t, \Gnet_t, \Dnet_t] = + \infty$), then we
know that the minimum cut separating $r$ from $v_i$ has at least $d_i$ edges,
which implies that $Y$ must also contain $d_i$ edge-disjoint paths connecting
$r$ and $v_i$.

For the converse, we will prove that any feasible solution $F$ can be captured
by the dynamic program.
Given $F$, it is sufficient to take $\Gnet_t$, $\Dnet_t$ to be
 \sparsifiers for $G_t[F]$, $G[F] \setminus E(G_t)$,
respectively.
By Lemma \ref{lem:kgst:weakmimnets:unify} (applied to graph $(V,F)$), we know
that $\set{(\Gnet_t, \Dnet_t)}_{t \in T}$ satisfy the local connectivity
definition for $(V,F)$, and therefore $D[t,\Gnet_t,\Dnet_t]$ can be computed
recursively from $D[t_1, \Gnet_{t_1}, \Dnet_{t_1}]$, $D[t_2, \Gnet_{t_2},
\Dnet_{t_2}]$, $Y_t = F \cap E_t$.

Let $(v_i, d_i)$ be a demand and $t \in T$ be a node such that $v_i \in
X_t$.
Since $F$ is a feasible solution, it contains $d_i$ edge-disjoint paths from
$r$ to $v_i$, and therefore $\mincut^c_F(\set{r}, \set{v_i}) \geq d_i$. This
implies that $\mincut^c_{\Gnet_t \cup \Dnet_t}(\set{r}, \set{v_i}) \geq d_i$,
and thus $\Gnet_t \cup \Dnet_t$ contains $d_i$ edge-disjoint paths from $r$ to
$v_i$ (and is a valid entry of $T$).

We conclude that the dynamic program above computes an optimum solution for
\SNDP.
By Theorem \ref{thm:Main}, there is a  \sparsifier containing
$O(wc^4)$ edges (and $O(wc^4)$ vertices as well), for any graph with $w$ terminals.
Hence, there are at most
\[
	\paren{|V|^2}^{|E|} = \parenbig{w^2c^8}^{wc^4} = \exp\left(O(c^4 w \log(wc))\right)
\]
possibilities for such \sparsifiers.
This implies that the dynamic programming table has $n \exp\left(O(c^4\tw(G)\log(\tw(G)c)\right)$ entries. 
The work required to compute the value of each entry takes time 
\[
\exp\left(O(c^4\tw(G)\log(\tw(G)c)\right) \cdot w^{O(c)} \cdot \poly(c,w)
\]
(considering all combinations of states for children nodes, all disjoint
subsets of terminals, compute the min-cuts and check if they match).

We conclude that the running time of the algorithm is $n \exp\left(O(c^4\tw(G)\log(\tw(G)c)\right)$, which completes the proof.

\paragraph{Acknowledgements:} Parinya Chalermsook has been supported by European Research Council (ERC) under the European Union's Horizon 2020 research and innovation programme (grant agreement No 759557) and by the Academy of Finland Research Fellowship, under the grant number 310415.
Richard Peng is partially supported by the
US National Science Foundation under grant number 1846218. Yang P. Liu has been supported by the Department of Defense (DoD) through the National Defense Science and Engineering Graduate Fellowship (NDSEG) Program.
Bundit Laekhanukit has been supported by the 1000-talents award by the Chinese government and by Science and Technology Innovation 2030 -- ``New Generation of Artificial Intelligence'' Major Project No.(2018AAA0100903), NSFC grant 61932002, Program for Innovative Research Team of Shanghai University of Finance and Economics (IRTSHUFE) and the Fundamental Research Funds for the Central Universities. 
Daniel Vaz has been supported by the Alexander von Humboldt Foundation with funds from the German Federal Ministry of Education and Research (BMBF). 

\bibliographystyle{alpha}
\bibliography{ref}

\begin{appendix}
\section{Deferred Proofs}
\label{sec:proofs}

\subsection{Proof of Lemma \ref{lem:EdgeReduction}}
\label{proofs:EdgeReduction}
	Consider the following routine: repeat $c$ iterations of
	finding a maximal
	spanning forest from $G$, remove it from $G$ and add it to $H$.
	
	Each of the steps takes $O(m)$ time, for a total of $O(mc)$.
	Also, a maximal spanning tree has the property that for every non-empty
	cut, it contains at least one edge from it.
	Thus, for any cut $\partial(S)$, the $c$ iterations add at least
	\[{}
	\min\left\{c, \abs{\partial\left( S \right)} \right\}
	\]
	edges to $H$, which means that up to a value of $c$, all cuts
	in $G$ and $H$ are the same.

\subsection{Proof of Lemma \ref{lem:ContractWellLinked}}
\label{proofs:ContractWellLinked}

Let $G' = G/X$ be the contracted graph and $v_{X}$ be the contracted vertex in $G'$ that is obtained by contracting $G[X]$. Since we do not contract the terminals, it suffices to show that, for any two subsets $\Xset_A, \Xset_B \subseteq \tset$, we have $\mincut^c_{G'}(\Xset_A, \Xset_B) = \mincut^c_G(\Xset_A, \Xset_B)$.

Starting with $\mincut^c_{G'}(\Xset_A, \Xset_B) \geq \mincut^c_G(\Xset_A,
\Xset_B)$, we can see that all the edges in $G'$ are also in $G$, which
implies that any cutset in $G'$ is also in $G$. We conclude that the size of
the minimum cut in $G$ must be at most the size of the minimum cut in $G'$,
for any pair of terminals sets. In general, we can say that contraction of
edges only ever increases connectivity, which implies the above.

Let us now show the converse, that is, $\mincut^c_{G'}(\Xset_A, \Xset_B) \leq
\mincut^c_G(\Xset_A, \Xset_B)$.
Since we are in the unweighted setting, it is sufficient to consider 
$|\Xset_A|, |\Xset_B| \leq c$.
Suppose that $\mincut^c_{G'}(\Xset_A, \Xset_B) = \ell \leq c$.  
Then there must be $\ell$ disjoint paths connecting $X'_A \subseteq X_A$ to $X'_B \subseteq X_B$ such that $|X'_A| = |X'_B| = \ell$. Denote the set of such paths in $G'$ by $\pset'$. 

We will construct the set of edge-disjoint paths $\pset$ in $G$ connecting $X'_A$ to $X'_B$, thus implying that $\mincut^c_G(X_A, X_B) \geq \ell$. 
We write $\pset'$ as $\pset' = \pset'_1 \cup \pset'_2$ where $\pset'_1$ are the paths that do not go through the contracted vertex $v_{X}$. 
We add the paths in $\pset'_1$ to $\pset$, since they correspond to edge disjoint paths in the original graph $G$.  
For paths in $\pset'_2$, we will need to specify their behavior inside the contracted set $G[X]$. Let $E_{in} \subseteq \partial(X)$ be the set of boundary edges of $X$ that paths in $\pset'_2$ use to enter $v_{X}$; analogously, we define $E_{out} \subseteq \partial(X)$. Notice that $|E_{in}| = |E_{out}| = |\pset'_2| \leq c$. 
Since $X$ is connectivity-$c$ well-linked, there is a collection of disjoint paths $\pset_{X}$ connecting $E_{in}$ to $E_{out}$.  
We stitch the three parts of the paths in $\pset'_2$ and $\pset_{X}$ together to add to $\pset$: (1) a subpath of some path $P \in \pset'_2$ from a node in $X'_A$ to $E_{in}$, (2) a path in $\pset_{X}$ from that edge in $E_{in}$ to an edge in $E_{out}$, and (3) a subpath of some path $Q \in \pset'_2$ from the same an edge in $E_{out}$ to a node in $X'_B$.
We remark that, even though $\pset$ contains $\ell$ edge-disjoint paths connecting $X'_A$ to $X'_B$, the pairing induced by $\pset$ and $\pset'$ may be different.

\subsection{Proof of Theorem~\ref{thm:Main} Part~\ref{part:Main2}}
\label{proofs:Main2}
	\newcommand{\cintersect}{C_{\mathit{int}}}
	
	By Lemma \ref{lem:IntersectToContain}, Algorithm \ref{fig:GetContainingEdges} computes a set $\Econtain$ of edges that contains all $(\tset,c)$-cuts.
	Reduce to $m \le nc$ by Lemma \ref{lem:EdgeReduction}.
	Let $\cintersect$ be a constant such that part \ref{part:IntersectingEdges:Faster} of Theorem \ref{thm:IntersectingEdges} gives us a set $\Eintersect$ of edges intersecting all $(\tset,c)$-cuts of size at most $\cintersect(\phi m \log^4{n} + |\tset|)c^2$ in $\O(m \phi^{-2} c^7)$ time.
	Let $\Econtain_i$ be the set of edges $\Econtain$ after the iteration $\chat = i$.
	Let $\tsethat$ be the terminals at the start of the algorithm.
	We show by induction that before processing $\chat = i$ in the second line of Figure \ref{fig:GetContainingEdges} that
	\[
	\left|\Econtain_i\right|
	\le
	\left(4 \cintersect\right)^{c-i}
	\frac{(c!)^2}{(i!)^2}
	\left(\phi m \log^4 n + \left|\tsethat\right|\right)
	\]
	and
	\[
	\left|V(\Econtain_i)\right|
	\le
	2\left(4 \cintersect\right)^{c-i}\frac{(c!)^2}{(i!)^2}
	\left(\phi m \log^4 n + \left|\tsethat\right|\right).
	\]
	
	Since $|V(\Ehat)| \le 2|\Ehat|$ for any set of edges $\Ehat$,
	it suffices to bound $|\Econtain_i|$.
	The induction hypothesis holds for $i = c$.
	By Part~\ref{part:IntersectingEdges:Faster} of Theorem~\ref{thm:IntersectingEdges}
	we have the size of $\Econtain$ after processing $\chat = i$
	is at most
	\begin{align*}
	&\cintersect
	\left(\phi m \log^4{n} + |V(\Econtain_i)|\right)\chat^2 + |\Econtain_i|
	\\ &\le
	\cintersect
	\left(\phi m \log^4{n} + 2\left(4\cintersect\right)^{c-i}\frac{(c!)^2}{(i!)^2}
	\left(\phi m \log^4 n + \left|\tsethat\right| \right)\right)i^2 + \left(4 \cintersect\right)^{c-i}
	\frac{(c!)^2}{(i!)^2}
	\left(\phi m \log^4 n + \left|\tsethat\right|\right) \\
	&\le
	\left(4 \cintersect\right)^{c-i+1}
	\frac{(c!)^2}{\left(i-1\right)!^2}
	\left(\phi m \log^4 n + \left|\tsethat\right|\right)
	\end{align*}
	as desired.
	Taking $i = 0$ shows that the final size of $\Econtain$ is at most
	$(4 \cintersect)^c (c!)^2(\phi m \log^4 n + |\tsethat|)$.
	
	Then, we use the choice of conductance threshold
	\[
	\phi = \frac{1}{5c\left(4 \cintersect\right)^c (c!)^2\log^4n}.
	\]
	Because $m \le nc$, the final size of $\Econtain$ is at most
	\[
	\left(4 \cintersect\right)^c c!\left(\phi m \log^4 n + \left|\tsethat\right|\right)
	\le \frac{n}{5} + \left(4 \cintersect \right)^c c! \left|\tsethat\right|.
	\]
	Now, we apply Lemma \ref{lem:EdgesToSparsifier} to produce a graph $H$ with at most $\frac{n}{5} + (4\cintersect)^c (c!)^2 |\tsethat| + 1$ vertices that is $(\tsethat,c)$-equivalent to $G$.
	Now, we can repeat the process on $H$ $O(\log n)$ times.
	The number of vertices in the graphs we process decrease geometrically until they have at most $2(4 \cintersect)^c (c!)^2 |\tsethat| = O(c)^{2c} |\tsethat|$ many vertices.
	
	Now, combining the runtime of $\O(m\phi^{-2}c^7)$ along with our choice of $\phi$ above
	gives a vertex sparsifier with $|\tsethat| \cdot O(c)^{2c}$ edges in time $O(m \cdot c^{O(c)} \cdot \log^{O(1)}n)$, as desired.

\section{Efficiently Finding a Violating Cut}  
\label{sec:CutFinding}

Although our proof in Section \ref{sec:MainSection} of existence of \sparsifiers with \optimal edges uses the concept of a violating cut, we do not explicitly find the violating cuts. In this section, we present a parametrized algorithm running in time $2^{O(c^2)}k^2m$ for finding violating cuts.

Let $G=(V,E)$ be a graph and $\tset \subseteq V$ a set of terminals,
and let $X = V \setminus \tset$ be the set of non-terminal vertices in $G$. %
For simplicity, we will assume that our terminals are in one-to-one
correspondence with $\partial(X) = E_G(X, V(G)-X)$, that is, that all edges in
$\partial(X)$ have different endpoints outside $X$. %
By abuse of notation, we write $\tset = \partial(X)$ and $k = |\partial(X)|$.
Furthermore, this assumption implies that all terminals have degree $1$.

Observe first that a violating cut can be found in $k^{O(c)} \O(m)$ time by
simply computing all possible minimum cuts separating any 
disjoint subsets of terminals $\tset_0, \tset_1 \subseteq \tset$
of size $q \leq c$ whose minimum cut contains less than $q$ edges. %
However, as we are aiming for a running time of $f(c) \poly(k,m)$, we cannot afford
to enumerate all the possible minimum cuts to find the ``correct'' disjoint subsets $\tset_0,
\tset_1 \subseteq \tset$.

Our algorithm actually solves a more general problem. 
We say that a cut $(A_0, A_1)$ of $G$ is a valid $(Q_0,Q_1,c_0,c_1,\ell)$-constrained cut if 

\begin{itemize}
	
	\item $Q_0 \subseteq A_0 \setminus \tset$ and $Q_1 \subseteq A_1 \setminus \tset$.
	
	\item $|A_j \cap \tset| \geq c_j$ for $j = 0,1$.  
	
	\item $E_G(A_0,A_1)$ contains at most $\ell$ edges. 
\end{itemize}

In other words, $Q_0$ and $Q_1$ are the non-terminals that are ``constrained''
to be on different sides. The values of $c_0$ and $c_1$ are the minimum
required number of terminals on the sides of $A_0$ and $A_1$ respectively. %
We will refer to the two sides of the cuts as \emph{zero side} and \emph{one
side}, respectively. %

\begin{observation}
	Given a subroutine that finds a valid $(Q_0,Q_1,c_0,c_1,\ell)$-constrained cut in time  given by some function $T(m,k,\max(c_0, c_1, \ell))$,
	there exists an algorithm that either returns a violating cut in $G$ or reports that such a cut does not exist in time $O(c T(m,k,c))$.
\end{observation}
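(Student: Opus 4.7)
The plan is to invoke the constrained-cut subroutine $c$ times, once for each candidate cut size $\ell \in \{0, 1, \ldots, c-1\}$, with the parameters $(Q_0, Q_1, c_0, c_1, \ell) = (\emptyset, \emptyset, \ell+1, \ell+1, \ell)$: no non-terminals are forced apart, and we ask for a cut of size at most $\ell$ with strictly more than $\ell$ terminals on each side. As soon as one call succeeds, I would convert the returned cut into a violating cut of $X$; if no call succeeds, I would report that no violating cut exists. The running time is immediate: $c$ subroutine calls, each with $\max(c_0,c_1,\ell) \leq c$, giving $O(c\,T(m,k,c))$.

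The forward direction of correctness is easy. If a violating cut of $X$ is witnessed by a bipartition $(A, B)$ of $X$ with $|E_G(A, B)| = \ell^* < \min(|\partial A \cap \partial X|, |\partial B \cap \partial X|, c)$, then by the degree-one terminal assumption, $|\partial A \cap \partial X|$ equals the number of terminals whose unique neighbor lies in $A$, and similarly for $B$. Extending $(A, B)$ to a cut of $V(G)$ by placing every terminal on the same side as its unique neighbor produces a cut of size exactly $\ell^*$ with at least $\ell^* + 1$ terminals on each side, which is exactly a valid $(\emptyset, \emptyset, \ell^*+1, \ell^*+1, \ell^*)$-constrained cut. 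Hence the iteration $\ell = \ell^*$ of the algorithm is guaranteed to return some valid constrained cut.

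The step I expect to require the most care is the backward direction: recovering an honest violating cut of $X$ from an arbitrary valid constrained cut $(A_0, A_1)$ returned by the subroutine. The difficulty is that the subroutine is free to place terminals on the side opposite their unique neighbor, so a priori $|A_j \cap \tset|$ need not agree with $|\partial(A_j \cap X) \cap \partial X|$. My plan is to set $(A, B) = (A_0 \cap X, A_1 \cap X)$ and apply a short double-counting: each ``misplaced'' terminal (one lying on the side opposite its unique neighbor) contributes one edge to $|E_G(A_0, A_1)|$, so the number of misplaced terminals on each side is at most $\ell - |E_{G[X]}(A, B)|$. Combined with $|A_j \cap \tset| \geq \ell + 1$, this yields $|\partial(A_j \cap X) \cap \partial X| \geq |E_{G[X]}(A, B)| + 1$ for both sides, and simultaneously rules out the degenerate case $A = \emptyset$ or $B = \emptyset$ (which would force $\ell + 1$ terminal edges into the cut, exceeding the budget). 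Since $|E_{G[X]}(A, B)| \leq \ell < c$, the resulting partition of $X$ is a violating cut, completing the reduction.
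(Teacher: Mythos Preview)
Your proposal is correct and follows the same reduction the paper has in mind: iterate over $\ell \in \{0,\ldots,c-1\}$ and call the subroutine with parameters $(\emptyset,\emptyset,\ell+1,\ell+1,\ell)$. The paper does not actually prove this observation; its only justification is the one-line remark (appearing just before the main theorem of the appendix) that ``every violating cut is also $(\emptyset,\emptyset,\ell+1,\ell+1,\ell)$-constrained for some $\ell \in [c-1]$''. That is exactly your forward direction.

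Where your write-up goes beyond the paper is the backward direction: you explicitly argue that an arbitrary constrained cut $(A_0,A_1)$ returned by the subroutine can be turned into a genuine violating cut of $X$ by restricting to $X$ and bounding misplaced terminals against the edge budget. The paper leaves this implicit, but your accounting (misplaced terminals on each side $\le \ell - |E_{G[X]}(A,B)|$, hence $|\partial(A_j\cap X)\cap\partial X| \ge |E_{G[X]}(A,B)|+1$, and the degenerate empty-side case is ruled out by the budget) is correct and is the right thing to check. So you are supplying the detail the paper omits, via the intended approach.
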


In the rest of the section, we shall describe an algorithm that finds a valid $(Q_0,Q_1,c_0,c_1,\ell)$-constrained cut. 
Let $c = \max (c_0,c_1,\ell)$. 
Our algorithm has two steps, encapsulated in the following two lemmas. 

\begin{lemma}[Reduction]
	\label{lem:reduction to base} 
	There is an algorithm that runs in time $2^{O(c^2)}\cdot k^2 \cdot m$, and reduces the problem of finding a valid $(Q_0,Q_1,c_0, c_1,\ell)$-constrained cut to at most $2^{O(c^2)}$ instances of finding valid $(Q'_0,Q'_1,c'_0,c'_1,\ell')$-constrained cut where $\min(c'_0,c'_1) = 0$. 
\end{lemma}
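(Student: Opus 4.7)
The plan is to reduce by iterative four-way branching on a single terminal at a time. The key structural observation is that in any valid constrained cut $(A_0, A_1)$, each terminal $t \in \tset$ lies on exactly one of the two sides, and since $t$ has a unique non-terminal neighbor $v_t \in X$, the boundary edge $\{t, v_t\}$ is either cut or uncut---giving four possible local configurations for how $t$ participates in the cut. We branch on these four cases to commit $t$ to a side and produce strictly smaller instances.

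Concretely, if $\min(c_0, c_1) = 0$ we are already at the base case and output the instance as-is; otherwise we pick a canonical terminal $t \in \tset$ (say the lexicographically smallest) and branch into (at most) four subproblems. In the case $t \in A_0,\ v_t \in A_0$, we contract the edge $\{t, v_t\}$, add the contracted vertex to $Q_0$, remove $t$ from $\tset$, and decrement $c_0$ by one; in the case $t \in A_0,\ v_t \in A_1$ (so the edge must be cut), we delete the edge, add $v_t$ to $Q_1$, remove $t$ from $\tset$, and decrement both $c_0$ and $\ell$ by one; the two symmetric cases (with $t \in A_1$) decrement $c_1$ instead of $c_0$. Branches inconsistent with existing $Q_0, Q_1$ assignments, or for which $\ell$ would go negative, are pruned; we also abort whenever a max-flow check certifies that the min $(Q_0,Q_1)$-cut in the current graph already exceeds $\ell$.

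Each branch strictly decreases $c_0 + c_1$ by one, so the recursion has depth at most $c_0 + c_1 \le 2c$ and branching factor at most $4$, yielding at most $4^{2c} = 2^{O(c)}$ output subproblems---comfortably within the claimed $2^{O(c^2)}$ bound. Per-node work is dominated by a max-flow feasibility check of cost $O(\ell \cdot m) = O(cm)$, so the total running time fits inside $2^{O(c^2)} \cdot k^2 \cdot m$.

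The main obstacle will be verifying correctness of this reduction: we must show that every valid $(Q_0, Q_1, c_0, c_1, \ell)$-constrained cut in the original instance yields a valid constrained cut in exactly one of the output subproblems (soundness), and conversely that any valid cut found in a reduced subproblem can be lifted back to a valid constrained cut of the original instance (completeness). The delicate points are (i) checking that the contraction, edge-deletion, and $\tset$-shrinking operations correctly update the terminal counts so that the $c'_0, c'_1$ constraints in the reduced instance precisely encode what remains of the original $c_0, c_1$ constraints after the commitments, and (ii) handling the case where $v_t$ is already committed (i.e.\ lies in $Q_0$ or $Q_1$ from earlier branching), where only the compatible subset of the four branches remains admissible.
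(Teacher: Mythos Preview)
Your reduction is correct and, in fact, both simpler and sharper than the paper's. The paper proceeds quite differently: at each step it computes a minimum cut $(A'_0,A'_1)$ separating $Q_0$ from $Q_1$ with at least one terminal on each side (this is where the $k^2$ factor enters, via guessing one terminal per side), and then, assuming the side $A'_0$ contains fewer than $c_0$ terminals, it guesses how those $<c$ terminals and the at most $2c$ boundary vertices of the cut split between the two sides of a hypothetical optimal solution ($2^{O(c)}$ guesses), solves $G[A'_0]$ by a direct min-cut, and recurses only on $G[A'_1]$ with $c'_0+c'_1<c_0+c_1$. Depth $\le 2c$ and per-step branching $2^{O(c)}$ give the stated $2^{O(c^2)}$ base instances.

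Your single-terminal four-way branching avoids all of this machinery: no min-cut computation is needed inside the reduction, there is no $k^2$ factor, and you produce only $4^{2c}=2^{O(c)}$ base instances in time $2^{O(c)}\cdot m$, comfortably inside the lemma's budget. The paper's scheme has the minor structural feature that each recursive call lives on a strictly smaller subgraph, but with recursion depth $O(c)$ this is immaterial. One small omission you should make explicit: if at some step $\min(c_0,c_1)>0$ yet the terminal set is already exhausted, that branch is pruned as infeasible (which is correct, since then no valid cut exists).
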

We remark that each of these instances may have different parameters (of the constrained cut). The only property they have in common is that $\min(c'_0,c'_1) = 0$; that is, there is only a one-sided terminal requirement.  

\begin{lemma} [Base case]
	\label{lem:algo for base} 
	For $\ell \leq c$, 
	there is an algorithm that finds a valid $(Q_0,Q_1,0,c,\ell)$-constrained cut (and analogously, $(Q_0,Q_1,c,0,\ell)$-constrained) in time $2^{O(c^2)}\cdot k^2 \cdot m$. 
\end{lemma}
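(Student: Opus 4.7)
The plan exploits the degree-$1$ structure of terminals to reduce the problem to a constrained min-cut on the non-terminal subgraph $G[X]$, then solves the constrained min-cut via color-coding combined with max-flow.

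Reduction: since each terminal $t \in \tset$ has degree $1$ (connected to a unique neighbor $v_t \in X$), for any cut $(A_0, A_1)$ the terminal $t$ contributes $1$ to the cut iff $t$ lies on the opposite side from $v_t$. Hence, given any partition $(X_0, X_1)$ of $X$ with $Q_0 \subseteq X_0$, $Q_1 \subseteq X_1$, the cheapest way to enforce $|A_1 \cap \tset| \ge c$ is: assign each terminal to the same side as its neighbor (contributing $0$), except for $\max(0, c - |Y \cap X_1|)$ terminals that we ``flip'' from $A_0$ to $A_1$ at unit cost each, where $Y = \{v_t : t \in \tset\} \subseteq X$ is the set of terminal-adjacent non-terminals (with $|Y| = k$ by the one-to-one correspondence between $\tset$ and $\partial(X)$). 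So the task reduces to finding a partition $(X_0, X_1)$ of $X$ (respecting $Q_0, Q_1$) minimizing $|E_{G[X]}(X_0, X_1)| + \max(0, c - |Y \cap X_1|)$ and checking whether this quantity is $\le \ell$.

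Solving the constrained min-cut: I would first enumerate over $c_1 \in \{0, \ldots, c\}$, the target value of $|Y \cap X_1|$, giving $c+1$ subproblems of the form ``find a $Q_0$-$Q_1$ cut in $G[X]$ of size $\le \ell - \max(0, c - c_1)$ with at least $c_1$ vertices of $Y$ on the $Q_1$-side''. To handle the coverage constraint, I would randomly $2$-color the vertices of $Y$ as ``force-$X_0$'' (added to $Q_0$) or ``force-$X_1$'' (added to $Q_1$), run a max-flow computation in $O(\ell m)$ time on the resulting instance, and verify feasibility. Since the cut has size $\le \ell \le c$, only $\le c$ vertices of $Y$ are truly ``swing'' (placed on the non-neighbor's side of the cut), so standard color-coding derandomization via a $(|Y|, c)$-perfect hash family produces $2^{O(c)} \log k$ colorings, at least one of which agrees with the target partition on the swing vertices.

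The main obstacle I anticipate is matching the exact bound $2^{O(c^2)} k^2 m$. My sketched color-coding approach naively gives $2^{O(c)} k m \log k$, which is actually tighter; the paper's slightly weaker bound likely reflects a simpler deterministic enumeration --- for instance, iterating over ordered pairs $(t_1, t_2) \in \tset \times \tset$ (accounting for the $k^2$ factor) combined with $2^{O(c^2)}$ choices for a local combinatorial description of the $\le c$ cut edges around $t_1$ and $t_2$ --- that avoids the machinery of perfect hash families. The key combinatorial fact underlying either approach is that a cut of size $\le c$ is a very constrained object, fully specified by $O(c^2)$ bits once a constant number of ``anchor'' vertices are fixed, so only $2^{O(c^2)}$ candidate cut patterns need be examined, each tested in $O(m)$ time via a single max-flow computation.
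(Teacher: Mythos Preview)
Your reduction to a constrained min-cut on $G[X]$ is correct and is a clean way to exploit the degree-$1$ terminal structure: given any partition $(X_0,X_1)$ of $X$, the cheapest compatible $(A_0,A_1)$ with $|A_1\cap\tset|\ge c$ indeed costs $|E_{G[X]}(X_0,X_1)|+\max(0,c-|\{t:v_t\in X_1\}|)$. (Minor quibble: you should count terminals with $v_t\in X_1$, not $|Y\cap X_1|$, since several terminals may share an inner endpoint.) The gap is in the second step. You propose to $2$-color $Y$, force every $y\in Y$ to the side dictated by its color, and run max-flow, arguing that only $\le c$ ``swing'' vertices need to be colored correctly. But the notion of ``swing'' (a terminal placed opposite its unique neighbor) applies to \emph{terminals}, not to vertices of $Y$, and bounds only the number of cut terminal-edges. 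If a non-swing vertex $y\in Y\cap X_1^*$ is mis-colored ``force-$X_0$'', the forced min-cut must now separate $y$ from $Q_1$; when $y$ sits deep inside $X_1^*$ this can raise the cut size far above $\ell'$. In short, the coloring has to agree with the target partition on \emph{all} of $Y$, not on a size-$c$ subset, so a $(|Y|,c)$-universal set buys you nothing and the claimed $2^{O(c)}\log k$ bound does not follow.

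The paper takes a completely different route, via \emph{important cuts}. The key structural fact is that some optimal solution has every connected component of the terminal-rich side equal to the near side of an important $(Q_0,Q_1)$-cut or an important $(t,Q_1)$-cut for some terminal $t$; since there are at most $4^{c}$ important cuts of size $\le c$ from any fixed source, components can be enumerated. Because that side may split into up to $c$ components, the paper cannot enumerate tuples directly; instead it guesses a \emph{cut profile vector} (terminal-counts and edge-counts per component), runs $c{+}1$ greedy rounds to collect a set $S$ of $O(c^2)$ anchor terminals, and finally enumerates $\lambda\le c$ important cuts drawn from $\{(t,Q_1)\text{-important cuts}:t\in S\}$. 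The $2^{O(c^2)}$ comes from enumerating these $\le c$-tuples over $2^{O(c)}$ candidates each; the $k^2$ is inherited from the outer reduction. Your constrained min-cut subproblem can in fact be finished along these lines (e.g., when $Q_1\neq\emptyset$ one can simply enumerate the $4^{c}$ important $(Q_1,Q_0)$-cuts in $G[X]$ and check each), but not by color-coding as you have sketched it.
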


The following theorem follows in a straightforward manner since every violating
cut is also $(\emptyset, \emptyset, \ell+1, \ell+1, \ell)$-constrained, for some
$\ell \in [c-1]$.

\begin{theorem}
	There is an algorithm that runs in time $2^{O(c^2)}\cdot k^2 \cdot m$ and either returns a violating cut or reports that such a cut does not exist. 
\end{theorem}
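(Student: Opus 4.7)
The plan is to reduce the violating-cut detection problem to a bounded number of constrained-cut instances, then invoke Lemma \ref{lem:reduction to base} followed by Lemma \ref{lem:algo for base} on each.

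First, I would set up the reduction from violating cuts to constrained cuts. Recall that a violating cut is a bipartition $(A,B)$ of $X = V\setminus\tset$ with $|E_G(A,B)| < \min(|\partial A\cap\partial X|,\ |\partial B\cap\partial X|,\ c)$. If such a cut has exactly $\ell$ edges, then $\ell<c$ and both sides necessarily have at least $\ell+1$ terminals (in the one-to-one correspondence between $\tset$ and $\partial X$ set up at the start of Appendix \ref{sec:CutFinding}). Conversely, any valid $(\emptyset,\emptyset,\ell+1,\ell+1,\ell)$-constrained cut is a violating cut of size at most $\ell < c$. Hence, a violating cut exists if and only if, for some $\ell\in[c-1]$, a valid $(\emptyset,\emptyset,\ell+1,\ell+1,\ell)$-constrained cut exists.

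Next, I would iterate over all $c-1$ possible values of $\ell$. For each $\ell$, apply Lemma \ref{lem:reduction to base} with parameters $Q_0=Q_1=\emptyset$, $c_0=c_1=\ell+1$, producing in time $2^{O(c^2)} k^2 m$ at most $2^{O(c^2)}$ ``one-sided'' instances, each of the form $(Q'_0,Q'_1,c'_0,c'_1,\ell')$ with $\min(c'_0,c'_1)=0$ and $\max(c'_0,c'_1,\ell')\le c$. For each such one-sided instance, invoke the corresponding base case of Lemma \ref{lem:algo for base} (choosing the variant depending on which side of $c'_0,c'_1$ is zero), each solvable in time $2^{O(c^2)}k^2 m$. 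If any of these instances returns a cut, that cut is by construction a valid $(\emptyset,\emptyset,\ell+1,\ell+1,\ell)$-constrained cut, hence a violating cut, which we return; otherwise, by the correctness of the two lemmas, no such violating cut exists for this $\ell$.

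The total running time is bounded by
\[
 c \cdot \Bigl(2^{O(c^2)} k^2 m \;+\; 2^{O(c^2)} \cdot 2^{O(c^2)} k^2 m\Bigr) \;=\; 2^{O(c^2)} k^2 m,
\]
matching the claim. The reasoning is essentially a bookkeeping combination of the two lemmas, so I do not anticipate a significant obstacle beyond verifying that the asymmetry in Lemma \ref{lem:algo for base} (covering both $(Q_0,Q_1,0,c,\ell)$ and $(Q_0,Q_1,c,0,\ell)$) is handled when the reduction in Lemma \ref{lem:reduction to base} may yield either orientation among its $2^{O(c^2)}$ sub-instances. This is straightforward given the symmetric roles of the two sides in both lemmas.
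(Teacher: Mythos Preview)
Your proposal is correct and follows exactly the approach the paper uses: the paper's own justification is the single sentence that the theorem ``follows in a straightforward manner since every violating cut is also $(\emptyset, \emptyset, \ell+1, \ell+1, \ell)$-constrained, for some $\ell \in [c-1]$,'' and your write-up simply expands this into the explicit loop over $\ell$, the invocation of Lemma~\ref{lem:reduction to base} and then Lemma~\ref{lem:algo for base}, and the running-time bookkeeping.
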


\subsection{The reduction to the base case} 
In this subsection, we prove Lemma~\ref{lem:reduction to base}. 
The main ingredient for doing so is the following lemma. 

\begin{lemma}
	\label{lem: one step} 
	There is a reduction from $(Q_0,Q_1,c_0,c_1,\ell)$-constrained cut to solving at most $2^{O(c)}$ instances of finding valid $(Q'_0,Q'_1,c'_0,c'_1, \ell')$-constrained cut where $(c'_0 +c'_1) < (c_0 +c_1)$. 
\end{lemma}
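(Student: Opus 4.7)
The plan is to exploit that every terminal has degree exactly $1$ in order to branch on the location of a single terminal and its unique non-terminal neighbor. First, note that Lemma~\ref{lem: one step} only needs to be applied when $c_0 \geq 1$ and $c_1 \geq 1$; if either is already zero, we are already in the form reduced to in Lemma~\ref{lem:reduction to base}. Moreover, if $\tset = \emptyset$ but $c_0 + c_1 \geq 1$, no valid cut can exist and we trivially reduce to a single infeasible sub-instance. Otherwise, fix any terminal $t \in \tset$ and let $v_t$ denote its unique non-terminal neighbor.

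In any valid constrained cut $(A_0^*, A_1^*)$, the pair $(t, v_t)$ lies in exactly one of four configurations: (i) both in $A_0^*$; (ii) $t \in A_0^*, v_t \in A_1^*$ (so the edge $(t,v_t)$ is cut); (iii) both in $A_1^*$; or (iv) $t \in A_1^*, v_t \in A_0^*$ (edge cut). The plan is to branch into one sub-instance per configuration.

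For cases (i) and (iii) I contract the edge $(t, v_t)$ into a single non-terminal vertex $w$ and force $w$ into $Q'_0$ or $Q'_1$ respectively; I decrement $c_0$ or $c_1$ by one to account for $t$ being ``spent,'' leaving $\ell' = \ell$. For cases (ii) and (iv) I delete $t$ (and its incident edge) from $G$, decrement $\ell$ by one (charging the cut edge to the sub-instance), and force $v_t$ into $Q'_1$ or $Q'_0$ respectively, again decrementing the appropriate $c_i$ by one. In every sub-instance $c'_0 + c'_1 = c_0 + c_1 - 1$ as required, and any sub-instance in which the newly forced vertex already lies in the opposite $Q$-set is discarded as trivially infeasible.

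What remains to verify is the equivalence between the original problem and the collection of sub-instances: any valid cut in the original instance restricts to a valid cut in the sub-instance corresponding to its configuration of $(t, v_t)$, and conversely any valid sub-instance cut lifts to a valid original cut by placing $t$ on its committed side and uncontracting or reinstating the edge. Terminal counts work out because $t$ contributes exactly $1$ to the committed side, and cut sizes work out because in cases (ii) and (iv) the restored edge contributes exactly $1$. The total branching factor is $4 = O(1) \leq 2^{O(c)}$, well within the lemma's bound. The main obstacle is carefully verifying this bijective correspondence across all four cases simultaneously, especially the edge-cut accounting in cases (ii) and (iv); but because terminals have degree $1$, the case analysis is entirely local to the pair $(t, v_t)$ and stays routine.
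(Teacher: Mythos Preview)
Your argument is correct, and it takes a genuinely different (and more elementary) route than the paper.

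The paper's proof first computes a minimum cut $(A'_0,A'_1)$ separating $Q_0$ from $Q_1$ with at least one terminal on each side, observes that the smaller side $A'_0$ contains fewer than $c$ terminals, and then branches over all $2^{O(c)}$ ways to partition those terminals and the (at most $c$) non-terminal boundary vertices between the two sides of the target cut. On each branch it solves a min-cut subproblem inside $G[A'_0]$ and recurses only on $G[A'_1]$ with strictly smaller $c'_0+c'_1$. This is where the $2^{O(c)}$ branching per level comes from, and iterating it $O(c)$ times yields the $2^{O(c^2)}$ total in Lemma~\ref{lem:reduction to base}.

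Your approach instead exploits the standing degree-$1$ assumption on terminals directly: fix one terminal $t$ with unique non-terminal neighbour $v_t$, and branch on the four possible placements of the pair $(t,v_t)$. The forward and backward correspondences you sketched are exactly right; the only care needed is that contraction in cases (i)/(iii) preserves the degree-$1$ property for the remaining terminals, which it does since only $t$'s edge is absorbed. This gives a branching factor of $4$ per level rather than $2^{O(c)}$, so iterating down to $\min(c'_0,c'_1)=0$ produces only $4^{O(c)}=2^{O(c)}$ base-case instances---strictly sharper than the paper's $2^{O(c^2)}$ for this step. The paper's min-cut computation buys nothing essential for this particular lemma; your argument is both simpler and quantitatively better here (though the overall $2^{O(c^2)}$ running time of Section~\ref{sec:CutFinding} is unchanged, since it is also incurred independently in the base case of Lemma~\ref{lem:algo for base}).
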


In other words, this lemma allows us to reduce the number of required terminals on at least one of the sides by one. 
Applying Lemma~\ref{lem: one step} recursively will allow us to turn an input instance of $(Q_0, Q_1, c_0,c_1,\ell)$ constrained cut into at most $2^{O(c^2)}$ instances of the base problem: This follows from the fact that at every recursive call, the value of at least one of $c_0$ and $c_1$ decreases by at least one. Therefore, the depth of the recursion is at most $2c$, and the ``degree'' of the recursion tree is at most $2^{O(c)}$ as guaranteed by the above lemma.

Let $(G,\tset)$ be an input. We now proceed to prove Lemma~\ref{lem: one step}, i.e., we show how to compute a $(Q_0,Q_1, c_0,c_1, \ell)$-constrained cut in $(G,\tset)$.

\paragraph{Our algorithm:}Let $(A'_0, A'_1)$ be a minimum cut in $G$ such that $Q_0 \subseteq A'_0$ and $Q_1 \subseteq A'_1$ and each side contains at least one terminal.  This cut can be found by a standard minimum $s$-$t$ cut algorithm. Observe that the value of this cut is at most $\ell$ if there is a valid constrained cut.  

Such a cut can be used for our recursive approach to solve smaller subproblems by recursing on $G[A'_i]$ as follows. 
Denote by $\tset_i = A'_i \cap \tset$ for $i=0,1$. 
By definition, each set $\tset_i$ is non-empty, and this is crucial for us. 

If $|E_G(A'_0, A'_1)| > \ell$, the procedure terminates and reports no valid solution. Or, if $|\tset_i| \geq k_i$ for all $i=0,1$, then we have found our desired constrained cut. 
Otherwise, assume that $|\tset_0| < k_0$ (the other case is symmetric). We create a collection of $2^{O(c)}$ instances of smaller subproblems as follows.

\begin{center}
	\framebox[\textwidth]{\begin{minipage}{0.9\textwidth}
			\noindent {\bf Sub-Instances.}
			\begin{itemize}
				\item First, we guess the ``correct'' way to partition terminals in $\tset_0$ into $\tset_0 = \tset_0^0 \cup \tset_0^1$. 
				There are at most $2^c$ possible guesses. 
				
				\item Second, we guess the ``correct'' partition of the (non-terminal) boundary vertices in $V(E_G(A'_0, A'_1)) - \tset$ into $B_0 \cup B_1$ where $B_0$ and $B_1$ are the vertices supposed to be on the zero-side and one-side respectively. 
				Let $\tilde E = E_G(B_0, B_1)$. 
				There are $2^{c}$ possible guesses. 
			\end{itemize}
			
	\end{minipage}
	}    
\end{center}

Now we will solve subproblems in $G[A'_0]$ and $G[A'_1]$. Notice that $G[A'_0]$ has small number of terminals, so we could solve it by brute force. For $G[A'_1]$, we will solve it recursively. 

Let $E_0$ be the minimum cut in $G[A'_0]$ that separates $S_0 = Q_0 \cup (B_0 \cap A'_0) \cup \tset_0^0$ and $T_0 = (B_1 \cap A'_0) \cup \tset_0^1$.   
Next, we solve an instance of valid $(Q'_0, Q'_1, c'_0, c'_1, \ell')$-constrained cut  in $G[A'_1]$ with terminal set $\tset_1$, where $Q'_0 = (B_0 \cap A'_1)$,  $Q'_1 = Q_1\cup (B_1 \cap A'_1)$,  $c'_0 = \max(c_0 - |\tset_0^0|, 0)$, $c'_1 = \max(c_1- |\tset_0^1|,0)$, and $\ell' = \ell - |\tilde E| - |E_0|$. Let $E_1$ be a $(Q'_0, Q'_1, c'_0, c'_1, \ell')$-constrained cut. Our algorithm outputs $E_0 \cup E_1 \cup \tilde{E}$.

\paragraph{Analysis.} 
Clearly, $c'_0 +c'_1 < c_0 +c_1$. 
The following lemma will finish the proof.

\begin{lemma}
	There is a $(Q_0,Q_1,c_0, c_1, \ell)$-constrained cut in $(G,\tset)$ if and only if there exist correct guesses $(B_0,B_1, \tset_0^0, \tset_0^1)$ such that a $(Q'_0, Q'_1,c'_0,c'_1, \ell')$-constrained cut exists in $(G[A'_1], \tset_1)$. 
\end{lemma}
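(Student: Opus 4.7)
My plan is to prove the two directions of the lemma separately by engineering the guesses so that they mirror the intersection pattern of an optimal cut with the canonical bipartition $(A'_0, A'_1)$.

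For the forward direction $(\Rightarrow)$, given a valid $(Q_0, Q_1, c_0, c_1, \ell)$-constrained cut $(A_0, A_1)$ in $G$, I would set the ``correct'' guesses $\tset_0^j := \tset_0 \cap A_j$ and $B_j := (V(E_G(A'_0, A'_1)) \setminus \tset) \cap A_j$ for $j \in \{0, 1\}$, and show that the restriction $(A_0 \cap A'_1, A_1 \cap A'_1)$ is a valid $(Q'_0, Q'_1, c'_0, c'_1, \ell')$-constrained cut in $G[A'_1]$. The containments $Q'_0 \subseteq A_0 \cap A'_1$ and $Q'_1 \subseteq A_1 \cap A'_1$ follow directly from the definitions of $B_j$ and of $Q'_j$; and writing $A_j \cap \tset = \tset_0^j \cupdot (A_j \cap \tset_1)$ together with $|A_j \cap \tset| \geq c_j$ gives $|A_j \cap A'_1 \cap \tset_1| \geq c_j - |\tset_0^j| \geq c'_j$. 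For the edge bound, I would decompose $E_G(A_0, A_1)$ into three disjoint pieces $\alpha$ (in $G[A'_0]$), $\beta$ (in $G[A'_1]$), and $\gamma$ (in $E_G(A'_0, A'_1)$), so $\alpha + \beta + \gamma \leq \ell$. Since $(A_0 \cap A'_0, A_1 \cap A'_0)$ separates $S_0$ from $T_0$ in $G[A'_0]$, minimality yields $\alpha \geq |E_0|$; and since $B_0 \subseteq A_0$ and $B_1 \subseteq A_1$, every edge of $\tilde E = E_G(B_0, B_1)$ lies in $E_G(A_0, A_1)$ with its cross-cluster part contained in $\gamma$. A careful accounting (using that $\tilde E \cap E(G[A'_0]) \subseteq E_0$, so the overlap between $E_0$ and $\tilde E$ is already charged once) then gives $\beta \leq \ell - |E_0| - |\tilde E| = \ell'$.

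For the reverse direction $(\Leftarrow)$, I would assemble a cut in $G$ from the min $(S_0, T_0)$-cut $(V_0, V_1)$ in $G[A'_0]$ realizing $E_0$ and a valid $(Q'_0, Q'_1, c'_0, c'_1, \ell')$-constrained cut $(U_0, U_1)$ in $G[A'_1]$, by setting $A_0 := V_0 \cup U_0$ and $A_1 := V_1 \cup U_1$. The containments $Q_0 \subseteq A_0$ and $Q_1 \subseteq A_1$ and the inequalities $|A_j \cap \tset| \geq c_j$ unfold from the chain $Q_0 \subseteq S_0 \subseteq V_0$, $Q_1 \subseteq Q'_1 \subseteq U_1$, $\tset_0^j \subseteq V_j$, and the terminal count of the subproblem. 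The total cut has size $|E_0| + |E_1| + |\tilde E'|$, where $\tilde E'$ is the set of cross-cluster edges crossing $(A_0, A_1)$; using the identities $V_0 \cap W = B_0 \cap A'_0$, $V_1 \cap W = B_1 \cap A'_0$ (and analogously for $U_0, U_1$ inside $A'_1$), one verifies $\tilde E' \subseteq \tilde E$, so the total is at most $|E_0| + \ell' + |\tilde E| = \ell$.

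The main obstacle I anticipate is the bookkeeping of cross-cluster edges: both directions must precisely account for how $\tilde E$ overlaps with $E_0$ inside $G[A'_0]$ and with the subproblem's cut inside $G[A'_1]$, so that $|E_0| + |\tilde E|$ is not overcounted when charging against $\ell$, and symmetrically that no cross-cluster crossings beyond those in $\tilde E$ appear when reassembling the cut. Handling terminal-incident cross-cluster edges (which are not in $B_0 \cup B_1$) requires using that the partition $\tset_0 = \tset_0^0 \cupdot \tset_0^1$ forces these terminals onto sides consistent with $(V_0, V_1)$, so that any such edge that crosses is already captured inside $E_0$ or inside the subproblem's cut.
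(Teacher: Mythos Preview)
Your plan tracks the paper's proof closely. The forward (``only if'') direction is essentially identical: you choose the guesses as the restriction of $(A_0,A_1)$ to the boundary and to $\tset_0$, then show the within-$A'_1$ piece of the cut meets the budget $\ell'$ via the decomposition into within-$A'_0$, within-$A'_1$, and cross-cluster contributions together with $|E_0|\le\alpha$. For the reverse (``if'') direction the paper argues by a path case analysis that removing $E_0\cup E_1\cup\tilde E$ disconnects $Q_0\cup\tset_0^0\cup\tset_1^0$ from $Q_1\cup\tset_0^1\cup\tset_1^1$; you instead assemble the partition $(V_0\cup U_0,\,V_1\cup U_1)$ explicitly and count its edges. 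The advantage of the paper's route is that the resulting cutset is by construction a subset of $E_0\cup E_1\cup\tilde E$, so the size bound $\le\ell$ is immediate and no separate classification of cross-cluster crossings is needed.

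One concrete problem with your plan: your proposed handling of terminal-incident cross-cluster edges is wrong. If $u\in\tset_0$ has its unique neighbour $v\in A'_1$, the edge $uv$ lies in $E_G(A'_0,A'_1)$ and hence in neither $G[A'_0]$ nor $G[A'_1]$; it therefore cannot be ``captured inside $E_0$ or inside the subproblem's cut'' as you assert, and since $u\notin B_0\cup B_1$ it is not in $\tilde E$ either. In your partition, taking $u\in\tset_0^0$ and $v\in B_1$ makes $uv$ a crossing edge that is simply uncounted, so $|\tilde E'|\le|\tilde E|$ can fail. (To be fair, the paper's own Claim---that $\tset_0^0$ is separated from $Q_1$ after removing $E_0\cup E_1\cup\tilde E$---appears to handwave the very same scenario.) A correct treatment has to either exploit the minimality of $(A'_0,A'_1)$ to control such boundary terminals, or effectively enlarge $\tilde E$ to account for edges between $\tset_0^0$ and $B_1$ and between $\tset_0^1$ and $B_0$.
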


\begin{proof}
	First, we prove  the ``if'' part. Suppose that there exists such a guess $(B_0,B_1, \tset_0^0, \tset_0^1)$. 
	We claim that $E_0 \cup E_1 \cup \tilde{E}$ is actually a $(Q_0,Q_1,c_0,c_1,\ell)$-constrained cut that we are looking for. Observe that the size of the cut is at most $\ell$. 
	
	We argue that there are two subsets of terminals $\widetilde{\tset}_0$ of size $c_0$  and $\widetilde{\tset}_1$ of size $c_1$ that are separated after removing $E_0 \cup E_1 \cup \tilde{E}$. 
	Let $\tset_1^0$ and $\tset_1^1$ be the sets of terminals in $\tset_1$ that are on the side of $Q'_0$ and $Q'_1$, respectively (in particular, $\tset_1^0$ cannot reach $Q'_1$  in $G[A'_1]$ after removing $E_1$). 
	Notice that $| \tset_0^0 \cup \tset_1^0| \geq c_0$ and $|\tset_0^1 \cup \tset_1^1| \geq c_1$. The following claim completes the proof of the ``if'' part. 
\end{proof}
\begin{claim}
	$Q_0 \cup \tset_0^0 \cup \tset_1^0$ and $Q_1 \cup \tset_0^1 \cup \tset_1^1$ are not connected in $G$ after removing $\tilde{E} \cup E_0 \cup E_1$. 
\end{claim}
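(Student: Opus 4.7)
The plan is to construct an explicit partition $(A_0, A_1)$ of $V$ witnessing the separation, and show that every edge crossing this partition lies in $E_0 \cup E_1 \cup \tilde E$. Let $Y_0^{(0)}, Y_0^{(1)}$ denote the two sides of the cut $E_0$ in $G[A'_0]$, chosen so that $S_0 \subseteq Y_0^{(0)}$ and $T_0 \subseteq Y_0^{(1)}$. Similarly let $Y_1^{(0)}, Y_1^{(1)}$ be the two sides of $E_1$ in $G[A'_1]$, with $Q'_0 \cup \tset_1^0 \subseteq Y_1^{(0)}$ and $Q'_1 \cup \tset_1^1 \subseteq Y_1^{(1)}$ (using the definition of $\tset_1^0, \tset_1^1$ as the terminals of $\tset_1$ sitting on the respective sides after removing $E_1$). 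I would then set $A_0 = Y_0^{(0)} \cup Y_1^{(0)}$ and $A_1 = Y_0^{(1)} \cup Y_1^{(1)}$, which is a partition of $V$ since $A'_0, A'_1$ partition $V$.

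The first step is to verify containment of the target sets. From $Q_0 \cup (B_0 \cap A'_0) \cup \tset_0^0 = S_0 \subseteq Y_0^{(0)}$ and $(B_0 \cap A'_1) \cup \tset_1^0 \subseteq Y_1^{(0)}$, I get $Q_0 \cup \tset_0^0 \cup \tset_1^0 \subseteq A_0$; and symmetrically, using $T_0 \subseteq Y_0^{(1)}$ and $Q_1 \cup (B_1 \cap A'_1) = Q'_1 \subseteq Y_1^{(1)}$, together with $\tset_0^1 \subseteq T_0$ and $\tset_1^1 \subseteq Y_1^{(1)}$, I obtain $Q_1 \cup \tset_0^1 \cup \tset_1^1 \subseteq A_1$. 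In particular, the construction also gives $B_0 \subseteq A_0$ and $B_1 \subseteq A_1$.

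The main technical step is to show $E_G(A_0, A_1) \subseteq E_0 \cup E_1 \cup \tilde E$. Any such crossing edge with both endpoints in $A'_0$ lies in $E_{G[A'_0]}(Y_0^{(0)}, Y_0^{(1)}) = E_0$, and any such edge entirely inside $A'_1$ lies in $E_1$. The remaining edges belong to $E_G(A'_0, A'_1)$; their endpoints lie in $V(E_G(A'_0, A'_1)) = (B_0 \cup B_1) \cupdot \big(\tset \cap V(E_G(A'_0, A'_1))\big)$. For an edge with both endpoints non-terminal, the containments $B_0 \subseteq A_0$, $B_1 \subseteq A_1$ force any crossing edge to go between $B_0$ and $B_1$, hence to belong to $\tilde E = E_G(B_0, B_1)$. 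For an edge incident to a terminal $t$, note that each terminal has degree one, so its unique non-terminal neighbor $b$ completely determines its assignment in any correct guess: a correct partition $(B_0, B_1, \tset_0^0, \tset_0^1)$ places $t$ on the same side as $b$ (i.e., $t \in \tset_j^0$ iff $b \in B_0$, and likewise for $B_1$), so such edges do not cross $(A_0, A_1)$ at all.

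The main obstacle I anticipate is the careful handling of these terminal-to-boundary edges: one must argue that ``correct'' encompasses exactly the consistency constraint that each degree-one terminal agrees with its unique neighbor's side, so that no stray crossing edge ends up outside $E_0 \cup E_1 \cup \tilde E$. Once this is in place, the conclusion is immediate: removing $E_0 \cup E_1 \cup \tilde E$ removes every edge of $E_G(A_0, A_1)$, and since $Q_0 \cup \tset_0^0 \cup \tset_1^0 \subseteq A_0$ while $Q_1 \cup \tset_0^1 \cup \tset_1^1 \subseteq A_1$, no path can join these two sets in $G - (\tilde E \cup E_0 \cup E_1)$, completing the claim.
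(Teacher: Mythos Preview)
Your argument is correct but follows a genuinely different route from the paper. The paper argues path by path: for a putative path $P$ between the two target sets it locates the last vertex $u$ with the prefix of $P$ contained in $G[A'_0]$ and the first vertex $v$ with the suffix contained in $G[A'_1]$, splits $P=P_1P_2P_3$ accordingly, and does a case analysis on whether $u,v$ fall in $B_0$ or $B_1$ to conclude that some $P_i$ meets $E_0$, $E_1$, or $\tilde E$; it then repeats (and slightly modifies) this analysis for each of the several endpoint types. You instead build one global bipartition $A_0=Y_0^{(0)}\cup Y_1^{(0)}$, $A_1=Y_0^{(1)}\cup Y_1^{(1)}$ and check in one shot that every edge of $E_G(A_0,A_1)$ is already in $E_0\cup E_1\cup\tilde E$. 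Your approach is more structural and avoids the repeated case splits; the paper's approach is more local but has to be redone for each combination of endpoint types.

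On the edge case you correctly flag: a degree-one terminal lying on the boundary of $(A'_0,A'_1)$ is exactly where both arguments are delicate. In the paper's proof the crossing vertices $u,v$ are tacitly treated as elements of $B_0\cup B_1$, which fails precisely for such terminals; in your proof these are the only boundary edges that might escape $E_0\cup E_1\cup\tilde E$. Your resolution via ``a correct guess places $t$ on the same side as its unique neighbour'' is the right idea, but note there is a symmetric case you do not mention: a terminal $t\in\tset_1$ whose sole neighbour lies in $A'_0$ is isolated in $G[A'_1]$, and its side is fixed by the constrained cut $E_1$ returned by the recursive call, not by the guess $(B_0,B_1,\tset_0^0,\tset_0^1)$. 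One needs to observe that such an isolated terminal can be reassigned in the bipartition induced by $E_1$ to match its neighbour's $B$-side without destroying the constrained-cut property. Neither proof spells this out, but your explicit-bipartition framework makes the fix transparent.
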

\begin{proof}
	Let us consider a path $P$ from $Q_0$ to $Q_1$ in $G$; we view it such that the first vertex starts in $Q_0$ and so on until the last vertex on the path is in $Q_1$.  
	Let $u$ be the last vertex the path from the start lies completely in $G[A'_0]$ and $v$ be the first vertex such that the path from $v$ to the end lies completely in $G[A'_1]$. Break path $P$ into $P_1 P_2 P_3$ where $P_1$ is the path from the first vertex to $u$, $P_2$ is the path from $u$ to $v$, and $P_3$ the path from $v$ to the last vertex of $P$ in $Q_1$. 
	If $|\{u,v\} \cap B_0|= 1$, then we are done because $P_2$ contains some edge in $\tilde{E}$. 
	So, it must be that  (i) $u,v \in B_0$ or (ii) $u,v \in B_1$. In case (i), we have $v \in Q'_0$ while the last vertex of $P$ is in $Q_1 \subseteq Q'_1$, so path $P_3$ is path in $G[A'_1]$ connecting $Q'_0$ to $Q'_1$. Hence,  $P_3$ contains an edge in $E_1$. 
	In case (ii), we have that $u \in T_0$, while the first vertex in $P$ is in $Q_0 \subseteq S_0$. Therefore, path $P_1$ is a path in $G[A'_0]$ connecting $S_0$ to $T_0$, which must be cut by $E_0$. 
	
	Similar analysis can be done when considering the path $P$ that connects $Q_0$ and $\tset_1^1$, or between $\tset_0^0$ and $Q_1 \cup \tset_1^1$. 
	The only (somewhat) different case is when the path $P$ connects $Q_0$ to $\tset^1_0$. Assume that $P$ is not completely contained in $G[A'_0]$ (otherwise, it would be trivial). 
	Let $u$ be the last vertex on $P$ such that the path from the start to $u$ lies completely inside $G[A'_0]$, and let $v$ be the first vertex on $P$ such that the path from $v$ to the end of $P$ lies completely inside $G[A'_0]$. 
	Again, we break $P$ into three subpaths $P_1 P_2 P_3$ similarly to before. If $u \in B_1$, then we are done because $P_1$ would contain an edge in $E_0$; or, if $v \in B_0$, then we are also done since $P_3$ would contain an edge in $E_0$.
	Therefore, $u \in B_0$ and $v \in B_1$, implying that $P_2$ must contain an edge in $\tilde{E}$. 
\end{proof}

To prove the ``only if'' part, assume that $(A_0,A_1)$ is a valid $(Q_0,Q_1, c_0,c_1, \ell)$-constrained cut. 
We argue that there is a choice of guess such that the subproblem also finds a valid $(Q'_0,Q'_1,c'_0,c'_1,\ell')$-constrained cut. 
We define $B_i = V(E_G(A_0,A_1)) \cap A_i$ for $i=0,1$, and $\tset_0^i = \tset_0 \cap A_i$ for $i=0,1$. With these choices, we have determined the values of $Q'_0$, $Q'_1$, $c'_0$ and $c'_1$. 
The following claim will finish the proof. 

\begin{claim}
	There exists a cut $E_0$ that separates $S_0$ and $T_0$ in $G[A'_0]$ and a cut $E_1$ that is a  $(Q'_0, Q'_1, c'_0, c'_1, \ell')$-constrained cut. 
\end{claim}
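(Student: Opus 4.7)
The plan is to produce both cuts directly from the original valid constrained cut $(A_0, A_1)$ by restricting it to the two halves of the minimum cut $(A'_0, A'_1)$. Specifically, I will set $\widehat E_0 := E_G(A_0, A_1) \cap E(G[A'_0])$ and $\widehat E_1 := E_G(A_0, A_1) \cap E(G[A'_1])$, and argue that $\widehat E_0$ realizes a separation of $S_0$ from $T_0$ in $G[A'_0]$ (so the algorithm's minimum cut $E_0$ satisfies $|E_0| \le |\widehat E_0|$), and that $\widehat E_1$ is a valid $(Q'_0, Q'_1, c'_0, c'_1, \ell')$-constrained cut in $G[A'_1]$. Existence of the required $E_1$ then follows immediately.

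For the side-containment and separation properties, I will verify that every vertex in $S_0 = Q_0 \cup (B_0 \cap A'_0) \cup \tset_0^0$ lies in $A_0 \cap A'_0$: this uses $Q_0 \subseteq A_0$ (hypothesis of the original constrained cut) and $Q_0 \subseteq A'_0$ (by choice of $(A'_0, A'_1)$), $B_0 \subseteq A_0$ (our choice $B_j := (V(\partial A'_0) \setminus \tset) \cap A_j$), and $\tset_0^0 := \tset_0 \cap A_0 \subseteq A'_0 \cap A_0$. The analogous computation yields $T_0 \subseteq A_1 \cap A'_0$, $Q'_0 \subseteq A_0 \cap A'_1$, and $Q'_1 \subseteq A_1 \cap A'_1$. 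Hence $\widehat E_0$ disconnects $A_0 \cap A'_0$ from $A_1 \cap A'_0$ inside $G[A'_0]$ (so it separates $S_0$ from $T_0$), and $\widehat E_1$ places $Q'_0$ and $Q'_1$ on opposite sides inside $G[A'_1]$. For the terminal counts, I partition $A_j \cap \tset = \tset_0^j \cupdot (A_j \cap A'_1 \cap \tset)$ and apply $|A_j \cap \tset| \ge c_j$ to obtain $|A_j \cap A'_1 \cap \tset| \ge \max(c_j - |\tset_0^j|, 0) = c'_j$, establishing the terminal-count conditions.

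The main obstacle is the size bound $|\widehat E_1| \le \ell' = \ell - |E_0| - |\tilde E|$. The plan is to decompose $E_G(A_0, A_1)$ into three pairwise-disjoint parts: the edges inside $G[A'_0]$ (which is $\widehat E_0$), the edges inside $G[A'_1]$ (which is $\widehat E_1$), and the edges crossing $(A'_0, A'_1)$, denote the last set by $E_{\mathit{cross}}$. The key auxiliary claim is that $|\tilde E| \le |E_{\mathit{cross}}|$: every edge of $\tilde E$ has one endpoint in $B_0 \subseteq A_0$ and one in $B_1 \subseteq A_1$, both lying in $V(\partial A'_0)$, so each such edge is a non-terminal cross-edge of $E_G(A_0, A_1)$; the standing assumption that terminals have degree $1$ and are in bijection with $\partial(X)$ ensures terminal-incident cross-edges are accounted for without over-counting. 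Combining $|E_G(A_0, A_1)| \le \ell$ with $|E_0| \le |\widehat E_0|$ then gives $|\widehat E_1| = |E_G(A_0, A_1)| - |\widehat E_0| - |E_{\mathit{cross}}| \le \ell - |E_0| - |\tilde E| = \ell'$, as required. Carefully executing this terminal accounting—making sure the placement of $B_j$ inside $V(\partial A'_0) \setminus \tset$ is consistent with how $\tilde E$ overlaps with $\widehat E_0, \widehat E_1$—will be the most delicate step.
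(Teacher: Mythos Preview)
Your approach is essentially the paper's: restrict the valid cut $(A_0,A_1)$ to each side of $(A'_0,A'_1)$, verify $S_0\subseteq A_0\cap A'_0$, $T_0\subseteq A_1\cap A'_0$, $Q'_j\subseteq A_j\cap A'_1$, count terminals via the partition $A_j\cap\tset=\tset_0^j\cupdot(A_j\cap\tset_1)$, and bound $|\widehat E_1|$ by decomposing $E_G(A_0,A_1)$ into the part inside $A'_0$, the part inside $A'_1$, and the rest. The paper states this last step as the identity $E_G(A_0,A_1)=E_G(B_0,B_1)\cup E_G(A'_0\cap A_0,A'_0\cap A_1)\cup E_G(A'_1\cap A_0,A'_1\cap A_1)$, which is the same content as your three-way split with $\tilde E$ playing the role of the crossing part.

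One caution on your ``key auxiliary claim'' $|\tilde E|\le|E_{\mathit{cross}}|$: the justification you give does not work as written. An edge $uv\in\tilde E$ has both endpoints in $V(\partial A'_0)$, but that only says each endpoint is incident to \emph{some} edge of $E_G(A'_0,A'_1)$; it does not force $uv$ itself to cross $(A'_0,A'_1)$. It is perfectly possible that $u\in B_0\cap A'_0$ and $v\in B_1\cap A'_0$, so that $uv\in\tilde E\cap\widehat E_0$ and $uv\notin E_{\mathit{cross}}$. You are right to flag this as the delicate step, and in fact the paper's own argument is loose in the same place (it implicitly treats the three-set union as disjoint for the size count). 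So your plan is at the same level of rigor as the original; just be aware that the inequality $|\tilde E|\le|E_{\mathit{cross}}|$ is not literally what you argued, and a fully airtight version would need to track the overlaps $\tilde E\cap\widehat E_0$ and $\tilde E\cap\widehat E_1$ explicitly.
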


\begin{proof}
	First, we remark that $|E_G(A_0, A_1)| \leq \ell$ and 
	\[E_G(A_0, A_1) = E_G(B_0, B_1) \cup E_G(A'_0 \cap A_0, A'_0 \cap A_1) \cup E_G(A'_1 \cap A_0, A'_1 \cap A_1)\] 
	To complete the proof of the claim, it suffices to show that  $E_G(A'_0 \cap A_0, A'_0 \cap A_1)$ is an $(S_0,T_0)$ cut in $G[A'_0]$ and that $E_G(A'_1 \cap A_0, A'_1 \cap A_1)$ is a valid constrained cut in $G[A'_1]$. 
	
	The first claim is simple: Since $S_0 \subseteq A_0$ and $T_0 \subseteq A_1$, any path from $S_0$ to $T_0$ in $G[A'_0]$ must contain an edge in $E_G(A'_0 \cap A_0, A'_0 \cap A_1)$. 
	
	The second claim is also simple: (i) $Q'_0 \subseteq A_0$ and $Q'_1 \subseteq A_1$, so the edge set $E_G(A'_1 \cap A_0, A'_1 \cap A_1)$ separates $Q'_0$ and $Q'_1$, (ii) For $i=0,1$, the number of terminals on the $Q'_i$-side must be at least $c_i - |\tset_0^i|$ because otherwise this would contradict the fact that $(A_0,A_1)$ is a $(Q_0,Q_1,c_0,c_1,\ell)$-constrained cut.  
\end{proof}

\begin{lemma}
	\label{lem:runtime}
	Let $c = \max\{\ell, c_0, c_1\}$. The algorithm to reduce the problem of finding a $(Q_0,Q_1,c_0,c_1,\ell)$-constrained cut with $\min{c_0, c_1} > 0$ to the problem of finding a $(Q'_0,Q'_1,c'_0,c'_1,\ell)$-constrained cut with $\min{c_1, c_0} =0$ terminates in time $2^{O(c^2)}\cdot k^2\cdot {O}(m)$.
\end{lemma}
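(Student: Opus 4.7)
The plan is a standard recursion-tree analysis of the reduction obtained by iterating Lemma~\ref{lem: one step}. Two quantities control the total cost: the number of nodes in the recursion tree, and the work performed at each node.

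For the tree size, I would apply Lemma~\ref{lem: one step} directly. Each recursive call branches into at most $2^{O(c)}$ subproblems and strictly decreases the potential $c_0 + c_1$ by at least one. Since $c_0 + c_1 \le 2c$ at the root and we stop as soon as $\min(c_0, c_1) = 0$, the depth of recursion is at most $2c$, and hence the total number of nodes is $(2^{O(c)})^{O(c)} = 2^{O(c^2)}$.

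For the per-node cost, I would account for the two nontrivial computations inside a single application of Lemma~\ref{lem: one step}: (i) computing a minimum cut $(A'_0, A'_1)$ in $G$ with $Q_0 \subseteq A'_0$, $Q_1 \subseteq A'_1$, and at least one terminal on each side; and (ii) for each of the $2^{O(c)}$ guesses of $(B_0, B_1, \tset_0^0, \tset_0^1)$, computing the $(S_0, T_0)$-min-cut $E_0$ inside $G[A'_0]$. Because $Q_0$ and $Q_1$ contain no terminals, the ``terminal on each side'' constraint in (i) is enforced by guessing an unordered pair of terminals $(t_0, t_1) \in \tset \times \tset$ forced to opposite sides; this contributes an overhead of $O(k^2)$, and for each such guess the min-cut (of value at most $\ell \le c$) is computed in $O(cm)$ time via $\ell$ rounds of augmenting paths. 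For (ii), each of the $2^{O(c)}$ guesses requires a single min-cut at cost $O(cm)$. The work at one node is therefore at most $O(k^2 c m) + 2^{O(c)} \cdot O(cm) = 2^{O(c)} \cdot O(k^2 m)$.

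Combining the two bounds, the total running time is $2^{O(c^2)} \cdot 2^{O(c)} \cdot O(k^2 m) = 2^{O(c^2)} \cdot k^2 \cdot O(m)$, as claimed. The main subtlety I anticipate is justifying that only a $k^2$ factor (and not $k^{O(c)}$) is needed to enforce the ``terminal on each side'' condition: the key observation is that a single representative terminal per side suffices to produce a valid candidate minimum cut, and optimizing over all $O(k^2)$ such ordered pairs of representatives recovers the true minimum cut subject to the side constraints.
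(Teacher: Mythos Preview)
Your proposal is correct and follows essentially the same approach as the paper: both bound the recursion tree by depth $2c$ and branching factor $2^{O(c)}$ to get $2^{O(c^2)}$ nodes, and both charge $O(k^2 \cdot cm)$ per node for the terminal-on-each-side min-cut computation via augmenting paths. Your version is slightly more detailed than the paper's---you explicitly account for the $E_0$ computations across the $2^{O(c)}$ guesses and spell out why a single pair of terminal representatives suffices for the $k^2$ factor---but the argument is the same.
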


\begin{proof}
	Lemma~\ref{lem: one step} implies that the depth of the recursion tree is at the most $2c$ and that each recursive step reduces to solving $2^{O(c)}$ sub-instances. Hence, the total number of nodes in the recursion tree is $2^{O(c^2)}$. 
	
	The total runtime outside the recursive calls is dominated by a  minimum $s-t$-cut computation. However, we observe that we are only interested in minimum cuts that are of value at the most $c$. Hence, such a cut can be found in time $O(mc)$ using any standard augmentation path based algorithm.
	Also, recall that we are looking for cuts that have at least one terminal on each side and hence we need to make $k^2$ guesses. The total runtime for this procedure is $k^2\cdot {O}(mc)$ and we have the lemma.
\end{proof}

\subsection{Handling the base case} 
In this subsection, we prove Lemma~\ref{lem:algo for base}, i.e., we present
an algorithm that finds a $(Q_0,Q_1,c_0,0,\ell)$-constrained cut $(A'_0,A'_1)$. %
We first consider the case of $c_0=0$: since neither side of the cut is required to
contain a terminal, we can simply compute a minimum-cut between $Q_0$ and
$Q_1$. %
If one of these is empty (say $Q_1$), we take $A'_0 = V(G)$, $A'_1 =
\emptyset$. %
In any case, let $E_1$ be the edges of the cut. %
Now there are two possibilities: if $|E_1| \leq \ell$, then our cut is a solution
to the subproblem; %
if $|E_1| > \ell$, then there is no cut separating $Q_0$ from $Q_1$ with at
most $\ell$ edges, and therefore, there is no valid constrained cut.

We can now focus on the case where $c_0 >0$. 
We can further assume that $|\tset| \geq c_0$; otherwise, there is no feasible
solution. %
For simplification, we also assume that $Q_0$ is connected; if it is not, we
can add auxiliary edges to make it connected in the run of the algorithm, which we
can remove afterwards (these edges will never be cut since $Q_0 \subseteq A'_0$). %

\paragraph{Important Cuts.}
The main tool we will be using is the notion of important cuts, introduced by
Marx \cite{Marx06} (see \cite{CyganFKLMPPS15} and references within for other
results using this concept).

\begin{definition}[Important cut]
	Let $G$ be a graph and $X,Y \subseteq V(G)$ be disjoint subsets of vertices of
	$G$.
	
	A cut $(S_X, S_Y)$, $X \subseteq S_X$, $Y \subseteq S_Y$ is an important cut
	if it has (inclusionwise) maximal reachability (from $X$) among all cuts with
	at most as many edges. %
	In other words, there is no cut $(S'_X, S'_Y)$, $X \subseteq S'_X$, $Y
	\subseteq S'_Y$, such that $|E(S'_X, S'_Y)| \leq |E(S_X, S_Y)|$ and $S_X
	\subsetneq S'_X$. %
\end{definition}

\begin{proposition}[{\cite{CyganFKLMPPS15}}]
	\label{prop:impcut:exist}
	Let $G$ be an undirected graph and $X, Y \subseteq V(G)$ two disjoint sets of vertices. %
	
	Let $(S_X, S_Y)$ be an $(X,Y)$-cut. Then there is an important $(X,Y)$-cut $(S'_X, S'_Y)$ (possibly $S_X = S'_X$) such that $S_X \subseteq S'_X$ and $|E(S'_X, S'_Y)| \leq |E(S_X, S_Y)|$.
\end{proposition}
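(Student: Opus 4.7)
My plan is to prove the proposition through a simple maximality argument over the finite set of vertex subsets. I will consider the collection $\mathcal{C}$ of all $(X,Y)$-cuts $(S, V \setminus S)$ that ``dominate'' the given cut $(S_X, S_Y)$, meaning those satisfying $S_X \subseteq S$ and $|E(S, V \setminus S)| \leq |E(S_X, S_Y)|$. This collection is non-empty since $(S_X, S_Y)$ itself belongs to it.

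Since $V(G)$ is finite, I can pick from $\mathcal{C}$ a cut $(S'_X, S'_Y)$ that maximizes $|S'_X|$ (breaking ties arbitrarily). By construction, $S_X \subseteq S'_X$ and $|E(S'_X, S'_Y)| \leq |E(S_X, S_Y)|$, so it remains to verify that $(S'_X, S'_Y)$ is an important $(X,Y)$-cut.

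Suppose for contradiction that $(S'_X, S'_Y)$ is not important. Then by definition there is some $(X,Y)$-cut $(T_X, T_Y)$ with $|E(T_X, T_Y)| \leq |E(S'_X, S'_Y)|$ and $S'_X \subsetneq T_X$. Since $S_X \subseteq S'_X \subsetneq T_X$, the containment $S_X \subseteq T_X$ holds, and combined with $|E(T_X, T_Y)| \leq |E(S'_X, S'_Y)| \leq |E(S_X, S_Y)|$, we get $(T_X, T_Y) \in \mathcal{C}$. But $|T_X| > |S'_X|$ contradicts the choice of $(S'_X, S'_Y)$ as a cut in $\mathcal{C}$ with maximum first-side size.

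There is essentially no obstacle here; the entire argument rests on showing that the ``dominates $(S_X,S_Y)$'' property is preserved under passing to any strictly larger first-side cut with no more crossing edges, which is immediate from transitivity of the containment and inequality. The only thing one has to be careful about is to formulate the maximality over $|S'_X|$ rather than over edge count, since the minimum-edge cut dominating $(S_X, S_Y)$ need not be unique or have maximal reachability.
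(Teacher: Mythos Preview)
Your argument is correct. The paper does not actually prove this proposition; it is stated as a citation from~\cite{CyganFKLMPPS15} without proof, so there is no ``paper's own proof'' to compare against. Your maximality argument over $|S'_X|$ is a clean and standard way to establish the claim, and the key step---that any witness $(T_X,T_Y)$ to non-importance of $(S'_X,S'_Y)$ must itself lie in $\mathcal{C}$---is handled correctly.
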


\begin{theorem}[{\cite{CyganFKLMPPS15}}]
	Let $G$ be an undirected graph, $X, Y \subseteq V(G)$ be two disjoint sets of
	vertices and $c\geq 0$ be an integer. %
	There are at most $4^c$ important $(X,Y)$-cuts of size at most $c$. %
	Furthermore, the set of all important $(X,Y)$-cuts of size at most $c$ can
	be enumerated in time $O(4^c \cdot c \cdot m)$.
\end{theorem}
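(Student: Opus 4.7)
My plan is to use a branching algorithm governed by the measure $\mu(G, X, Y, c) = 2c - \lambda$, where $\lambda = \lambda_G(X, Y)$ is the value of a maximum $X$-to-$Y$ flow, and prove by induction on $\mu$ that there are at most $2^{\mu} \le 4^c$ important cuts of size at most $c$, each recoverable within the claimed running time.

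First I would compute $\lambda$ and terminate if $\lambda > c$. Otherwise I would compute the unique ``closest to $X$'' minimum $(X,Y)$-cut $(S^*, V \setminus S^*)$ by running augmenting paths up to value $c$ and taking $S^*$ to be the set of vertices reachable from $X$ in the residual graph. A standard submodularity argument shows that every important cut $(S, V \setminus S)$ of size at most $c$ must satisfy $S^* \subseteq S$: indeed, $|\partial(S \cup S^*)| + |\partial(S \cap S^*)| \le |\partial S| + |\partial S^*|$, and since $S \cap S^* \supseteq X$ induces an $(X,Y)$-cut we have $|\partial(S \cap S^*)| \ge \lambda = |\partial S^*|$, forcing $|\partial(S \cup S^*)| \le |\partial S|$. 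If $S^* \not\subseteq S$, then $S \cup S^* \supsetneq S$ witnesses strictly larger $X$-reachability at no greater cut weight, contradicting the defining maximality of importance. Proposition~\ref{prop:impcut:exist} fits naturally into this framework.

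Now if every edge leaving $S^*$ stays inside $S^*$, then $S^*$ is a union of connected components and $(S^*, V \setminus S^*)$ is the unique important cut. Otherwise I pick any $v \in V \setminus S^*$ adjacent to $S^*$ and branch on whether $v$ lies on the $X$-side of the target important cut. In the first branch I recurse on $(G, X \cup \{v\}, Y, c)$: the new max-flow $\lambda'$ must strictly exceed $\lambda$, because otherwise a minimum $(X, Y)$-cut would place $v$ on its $X$-side, contradicting minimality of $S^*$; hence $\mu$ drops by at least one. In the second branch I commit to at least one $S^*$-to-$v$ edge being in the important cut and recurse on $(G, X, Y \cup \{v\}, c - 1)$; enlarging $Y$ cannot decrease the max-flow, so the new measure is at most $2(c-1) - \lambda = \mu - 2$. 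Since every important cut is consistent with exactly one branch, writing $T(\mu)$ for the number of important cuts the recurrence $T(\mu) \le T(\mu - 1) + T(\mu - 2)$ yields $T(\mu) \le 2^{\mu} \le 4^c$. Each branching node performs one bounded max-flow in $O(cm)$ time and the recursion tree has $O(4^c)$ nodes, giving total enumeration time $O(4^c \cdot c \cdot m)$.

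The main obstacle I expect is the bookkeeping needed to ensure that each important cut is produced by exactly one leaf of the branching tree, so that the count is a genuine upper bound rather than multicounting, and to verify that the ``closest to $X$'' cut continues to behave correctly after committed vertices are absorbed into $X$ or $Y$ across recursive calls. A secondary subtlety is handling the case of multiple parallel edges between $S^*$ and $v$: one must check that the budget decrement by one unit in the second branch suffices, which is fine because although all such parallel edges are forced into the cut simultaneously, they correspond to a single recursive decision and the measure still drops by two.
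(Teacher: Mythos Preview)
The paper does not prove this statement; it is quoted from \cite{CyganFKLMPPS15} and used as a black box in Appendix~\ref{sec:CutFinding}. Assessing your argument on its own merits: the branching scheme with potential $\mu = 2c - \lambda$ is the standard one, and your submodularity argument that $S^* \subseteq S$ for every important cut $S$ is correct. But you have the orientation of $S^*$ reversed. You take $S^*$ to be the inclusion-\emph{minimal} minimum cut (the vertices reachable from $X$ in the residual graph). With that choice your Branch~1 claim that $\lambda_G(X \cup \{v\}, Y) > \lambda$ fails: on the path $X\text{--}a\text{--}b\text{--}Y$ one has $\lambda = 1$, $S^* = \{X\}$, $v = a$, yet $\lambda_G(\{X,a\},Y) = 1$. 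The existence of a minimum $(X,Y)$-cut placing $v$ on its $X$-side contradicts inclusion-\emph{maximality} of $S^*$, not minimality. The fix is to take $S^*$ to be the ``farthest from $X$'' minimum cut (the complement of the set of vertices from which $Y$ is residually reachable); then every minimum $(X,Y)$-cut is contained in $S^*$, so none can contain $v \notin S^*$, and $\lambda$ genuinely strictly increases in Branch~1.

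Branch~2 also needs adjustment. Recursing on $(G, X, Y \cup \{v\}, c-1)$ without deleting the edge $uv$ misses any important cut of size exactly $c$ with $v$ on the $Y$-side, since that cut still has size $c$ in the sub-instance but your budget is $c-1$. The correct move is to delete $uv$ and recurse on $(G\setminus\{uv\}, X, Y, c-1)$; then $\lambda$ drops by exactly one (because $uv \in \partial S^*$), so $\mu$ decreases by $1$, not the $2$ you claim. With both corrections the recurrence becomes $T(\mu) \le 2T(\mu-1)$, which still yields $T(\mu) \le 2^{\mu} \le 4^c$ and the stated $O(4^c \cdot c \cdot m)$ running time.
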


\begin{proposition}
	\label{prop:violcut:impcutforall}
	Let $G$ be an undirected graph and $X, Y \subseteq V(G)$ two disjoint sets of
	vertices, and let $(S_X, S_Y)$ be an important $(X,Y)$-cut. %
	
	Then $(S_X, S_Y)$ is also an important $(X',Y)$-cut for all $X' \subseteq S_X$.
\end{proposition}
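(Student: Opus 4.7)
My plan is to prove Proposition \ref{prop:violcut:impcutforall} by a direct contradiction argument, exploiting the fact that shrinking the source set from $X$ to $X' \subseteq S_X$ only enlarges the collection of admissible cuts, so any ``witness'' showing non-importance for $(X',Y)$ immediately transfers to a witness for non-importance with respect to $(X,Y)$.

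First I would observe the easy direction: since $X' \subseteq S_X$ and $Y \subseteq S_Y$, the pair $(S_X, S_Y)$ is indeed a valid $(X',Y)$-cut, so the statement at least makes sense. Then I would assume towards contradiction that $(S_X, S_Y)$ is \emph{not} important as an $(X',Y)$-cut. By definition, this produces an $(X',Y)$-cut $(S'_X, S'_Y)$ with $X' \subseteq S'_X$, $Y \subseteq S'_Y$, $|E(S'_X,S'_Y)| \leq |E(S_X,S_Y)|$, and $S_X \subsetneq S'_X$.

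The key step is to verify that this same pair $(S'_X, S'_Y)$ is in fact an $(X,Y)$-cut: using $X \subseteq S_X$ and $S_X \subsetneq S'_X$ we obtain $X \subseteq S'_X$, while $Y \subseteq S'_Y$ is immediate. Since it has no more edges than $(S_X, S_Y)$ and strictly larger $X$-side, this directly contradicts $(S_X, S_Y)$ being an important $(X,Y)$-cut per the definition. The main (and only) obstacle is purely notational: making sure the containments are chained in the correct direction so that enlarging the reachable side for the smaller source set $X'$ still qualifies as an enlargement for the original source set $X$; once that is done the claim follows in a single line.
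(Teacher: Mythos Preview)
Your proof is correct and follows essentially the same contradiction argument as the paper: assume $(S_X,S_Y)$ is not an important $(X',Y)$-cut, obtain a witness $(S'_X,S'_Y)$ with $S_X \subsetneq S'_X$ and no more cut edges, and then chain $X \subseteq S_X \subsetneq S'_X$ to conclude that this same witness violates importance for $(X,Y)$. The only cosmetic difference is that the paper invokes Proposition~\ref{prop:impcut:exist} to produce the witness, whereas you extract it directly from the negation of the definition; both are equally valid here.
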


\begin{proof}
	Assume that the statement is false for contradiction. Then there is an
	important cut $(S'_X, S'_Y)$ for $(X', Y)$, with $|E(S'_X, S'_Y)| \leq |E(S_X,
	S_Y)|$ and $S_X \subsetneq S'_X$ by Proposition \ref{prop:impcut:exist}. %
	But then, $X \subseteq S_X \subseteq S'_X$, which means $(S'_X, S'_Y)$ is an
	$(X,Y)$-cut, and therefore, $(S_X, S_Y)$ is not an important cut for $(X,Y)$,
	which is a contradiction.
\end{proof}

\paragraph{Cut profile vectors.}
In order to make the exposition of the algorithm clearer, we introduce the concept of cut profile vectors.

\begin{definition}
	\label{def:k0:slot}
	Let $c, \ell \geq 0$. A \emph{cut profile vector} is a vector of $\lambda \leq c$
	pairs of numbers $\{(\kappa_i, \ell_i)\}_{i \in [\lambda]}$, with $\kappa_i \in
	[c-1]$, $\ell_i \in [\ell]$, satisfying
	\[
	c \leq \sum_{i = 1}^\lambda \kappa_i \leq 2c, \quad \quad \sum_{i=1}^\lambda \ell_i \leq \ell
	\]
	Each of the pairs $(\kappa_i, \ell_i)$ is called a \emph{slot} of this profile. %
	We say a cut $(A,B)$ is compatible with a slot $(\kappa_i, \ell_i)$ if $|A
	\cap \tset| = \kappa_i$ and $|E(A,B)| = \ell_i$
\end{definition}

\begin{observation}
	There are at most $c^c \cdot \ell^c$ different cut profile vectors. 
\end{observation}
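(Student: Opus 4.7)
The plan is to prove this observation by a direct encoding/counting argument, since a cut profile vector is fully specified by (i) the length $\lambda$ and (ii) the sequence of $\lambda$ pairs $(\kappa_i, \ell_i) \in [c-1] \times [\ell]$.

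The cleanest approach I would take is to pad every cut profile vector to a fixed length of exactly $c$ entries by introducing a distinguished ``empty'' symbol $\bot$, used in positions $i$ with $\lambda < i \leq c$. Under this encoding, each of the $c$ positions is independently drawn from the set $\{\bot\} \cup ([c-1] \times [\ell])$, whose cardinality is $1 + (c-1)\ell$. Thus the total number of cut profile vectors is at most
\[
\bigl(1 + (c-1)\ell\bigr)^c.
\]
For $\ell \geq 1$ we have $1 + (c-1)\ell \leq c\ell$, so this bound is at most $(c\ell)^c = c^c \cdot \ell^c$, as claimed. The degenerate cases $\ell = 0$ or $c = 1$ can be checked by inspection (in the former there are no valid vectors, and in the latter the constraints $\sum \kappa_i \geq c$ and $\kappa_i \in [c-1] = \emptyset$ conflict), so the bound holds trivially.

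Since the constraints $c \leq \sum \kappa_i \leq 2c$ and $\sum \ell_i \leq \ell$ in Definition~\ref{def:k0:slot} are only used as feasibility conditions on a vector, I would simply discard them in the counting argument: any upper bound on the number of length-$\leq c$ sequences with entries in $[c-1] \times [\ell]$ automatically upper bounds the number of feasible ones. There is no real obstacle here; the only thing to be careful about is ensuring the padding trick does not over- or under-count by treating two different paddings of the same vector as distinct (which it does not, since the pad length is determined by $\lambda$).
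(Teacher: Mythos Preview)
Your argument is correct: the padding encoding gives an injection from cut profile vectors into $\bigl(\{\bot\}\cup([c-1]\times[\ell])\bigr)^c$, and $1+(c-1)\ell\le c\ell$ for $\ell\ge 1$ finishes the bound, with the degenerate cases handled as you note. The paper states this as an observation without proof, so your direct counting is exactly the intended one-line justification.
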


Given a cut $(A,B)$, a cut profile vector represents the bounds for terminals
covered and cut edges for each of the components of $G[A]$: there are $\lambda$
connected components, and component $C_i$ contains $\kappa_i$ terminals and
has $\ell_i$ cut edges. %
Our algorithm will enumerate all the possible cut profile vectors and, for each of
them, try to find a solution that fits the constraints given by the input. %
If there is a solution to the problem, there must be a corresponding profile 
vector, and therefore the algorithm finds a solution. %

\paragraph{Algorithm} Our algorithm works by guessing the number of connected
components of $G[A]$ as well as the number of terminals that are contained in each
component, and then proceeding to find cuts that fit these guesses. %
This is made easier by the following two facts:
\begin{inparaenum}
\item there is a solution such that $A$ is a disjoint union of important $(Q_0, Q_1)$-cuts or $(t, Q_1)$-cuts, $t \in \tset$;
\item we can find a set of $O(k^2)$ terminals such that there is a solution
where each connected component of $G[A]$ contains one of these terminals.
\end{inparaenum}

The strategy of the algorithm is as follows: it starts by guessing the component
$C_0$ that contains $Q_0$, out of all important $(Q_0, Q_1)$-cuts. %
If $(C_0, \bar C_0)$ is feasible, it returns. %
Otherwise, it guesses the cut profile vector of the solution. %
Then it tries to greedily fill all of the slots using important cuts
containing disjoint sets of terminals. %
The goal of this stage is not yet to obtain a solution, but to accumulate
terminals for the second stage. %
This process of trying to fill each slot is repeated $c$ times so that we may
have $c$ candidates for each slot. %
All of the terminals contained in each of the candidates found this way form
our base set of terminals, denoted $S$. %
The solution is finally obtained by enumerating tuples of up to $c$
components out of important $(t,Q_1)$-cuts, for $t \in S$. %

We refer to Figure \ref{alg:CC:k0} for a formal description of the algorithm.

We will now show that if there is a solution to the problem, our algorithm
always finds a solution. This implies that, when we output ``No Valid
Solution'', there is no solution. %
From now on, we assume that there is a solution to the problem. %
Let $(A,B)$ be a solution that minimizes the number of connected components of
$G[A]$. %

Let $\cset_0$ be the set of all important cuts $(C, \bar C)$ for $(Q_0, Q_1)$, and let
$\cset$ be the set of all important cuts $(C, \bar C)$ for $(t, Q_1)$, for any $t \in
\tset$.

\begin{lemma}
	\label{lem:violcut:impcutcomps}
	There is a solution $(A', B')$ such that every connected component $C$ of
	$G[A']$ corresponds to an important cut $(C, \bar C)$ in $\cset_0$ or $\cset$. %
	Furthermore, the number of connected components of $G[A']$ is not greater than
	that of $G[A]$.
\end{lemma}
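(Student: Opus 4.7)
My plan is as follows. Let $(A,B)$ be the minimum-components solution from the hypothesis, with components $C_1,\dots,C_\lambda$ of $G[A]$, indexed so that $C_1$ is the one containing $Q_0$ (if $Q_0$ is nonempty). First I would show that every $C_i$ with $i\ge 2$ contains a terminal: otherwise, moving $C_i$ from $A$ to $B$ keeps $|A\cap\tset|\ge c_0$, does not increase the cut (it deletes exactly the edges in $E(C_i,\bar{C_i})$ and adds none, since $C_i$ is a component of $G[A]$ has no edges to $A\setminus C_i$), and strictly decreases the component count, contradicting minimality.

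For each $i$, I then build an important-cut extension of $C_i$. For $i=1$, apply Proposition~\ref{prop:impcut:exist} to the $(Q_0,Q_1)$-cut $(C_1,\bar{C_1})$ to obtain an important $(Q_0,Q_1)$-cut $(C_1^*,\bar{C_1^*})\in\cset_0$ with $C_1\subseteq C_1^*$ and $|E(C_1^*,\bar{C_1^*})|\le|E(C_1,\bar{C_1})|$. For each $i\ge 2$, pick $t_i\in C_i\cap\tset$ and analogously obtain an important $(t_i,Q_1)$-cut $(C_i^*,\bar{C_i^*})\in\cset$ extending $(C_i,\bar{C_i})$. I will use the max-flow / residual-reachability realization of the important cut, which grows out of the connected seed $C_i$, so that $C_i^*$ is itself connected. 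Set $A' := \bigcup_{i=1}^\lambda C_i^*$ and $B' := V\setminus A'$.

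Verifying that $(A',B')$ is a valid $(Q_0,Q_1,c_0,0,\ell)$-constrained cut is direct: $Q_0\subseteq C_1^*\subseteq A'$; each $C_i^*$ avoids $Q_1$ so $Q_1\subseteq B'$; $|A'\cap\tset|\ge|A\cap\tset|\ge c_0$ because $A\subseteq A'$; and every edge in $E(A',B')$ has its $A'$-endpoint in some $C_i^*$ and its other endpoint in $\bar{C_i^*}$, giving
\[
|E(A',B')|\;\le\;\sum_{i=1}^{\lambda}|E(C_i^*,\bar{C_i^*})|\;\le\;\sum_{i=1}^{\lambda}|E(C_i,\bar{C_i})|\;=\;|E(A,B)|\;\le\;\ell,
\]
where the equality uses that distinct components of $G[A]$ share no edges. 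Because $A\subseteq A'$ and each $C_i$ is connected, every component of $G[A']$ contains at least one $C_i$, so the number of components of $G[A']$ is at most $\lambda$; by minimality of $(A,B)$ this count is exactly $\lambda$, yielding a bijection in which the component $D_i\supseteq C_i$ of $G[A']$ contains no $C_j$ for $j\ne i$. Since $C_i^*$ is connected and contains $C_i$, it lies entirely in $D_i$; conversely any $v\in D_i\setminus C_i^*$ would lie in some $C_j^*$ with $j\ne i$, and by connectedness of $C_j^*$ would drag $C_j$ into $D_i$, contradicting the bijection. Hence $D_i=C_i^*$, which is the source side of an important cut in $\cset_0$ or $\cset$.

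The delicate step is the final bijection argument, which hinges on the $C_i^*$'s being connected. If one literally takes the ``maximal reachability'' definition of important cut, connectedness of the source side is not automatic, and a priori the components of $G[A']$ could split a single $C_i^*$ or span several of them. Invoking the constructive max-flow realization bypasses this. Alternatively, if one insists on the abstract definition, one can replace $C_i^*$ by the connected component of $C_i$ inside $C_i^*$ and re-apply Proposition~\ref{prop:impcut:exist}; monotone growth of the source side bounds the number of iterations, after which the resulting $C_i^*$ is both important and connected.
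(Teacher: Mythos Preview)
Your argument is correct and takes a genuinely different route from the paper's. The paper proceeds iteratively: starting from $(A,B)$, it repeatedly picks a single offending component $C$ of $G[A]$ and either discards it (if $C$ contains no terminal and avoids $Q_0$) or replaces $A$ by $A\cup C'$ for an important cut $C'\supseteq C$, arguing termination via a potential that tracks the number of components together with the number of vertices in $A$ reachable from terminals. You instead do everything in one shot: extend every $C_i$ to an important cut $C_i^*$ simultaneously, set $A'=\bigcup_i C_i^*$, and then exploit the minimum-component hypothesis to force a bijection showing that the components of $G[A']$ are exactly the $C_i^*$'s.

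Your version is tidier in that it avoids the potential-function termination analysis, and you are explicit about the one genuine subtlety---connectedness of the source side of an important cut---which the paper's proof also needs (for its claim that the component count never increases under $A\mapsto A\cup C'$) but does not address. Your second fix (iterate: replace $C_i^*$ by the component of $C_i$ inside it and re-apply Proposition~\ref{prop:impcut:exist}) does work, though the justification is a touch more than ``monotone growth'': the $D^{(k)}$'s increase and their boundary sizes are nonincreasing, so both stabilize; at the fixed point $D^*$, if the important cut $C^*\supsetneq D^*$ returned by the Proposition still had $D^*$ as a proper component, then $C^*\setminus D^*$ would have no edges to $D^*$ (by componenthood) and, since $|E(D^*,\bar D^*)|=|E(C^*,\bar C^*)|$, none to $V\setminus C^*$ either, contradicting connectedness of $G$. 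What each approach buys: the paper's iterative argument establishes that the component count never increases without using the minimum-component assumption, whereas your one-shot argument invokes minimality twice---once to rule out terminal-free components, once for the bijection---but rewards you with a direct identification $D_i=C_i^*$ of the final components.
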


\begin{proof}
	We will show an iterative process that turns a solution $(A,B)$ into a
	solution $(A', B)$ where every component corresponds to an important cut as
	above. %
	
	Let $C$ be a component of $G[A]$ that does not correspond to an important cut in
	$\cset_0$ or $\cset$. %
	Notice that $C$ cannot contain a proper non-empty subset of $Q_0$ since $Q_0
	\subseteq A$ and we assume that $Q_0$ is connected. %
	If $C$ does not contain any terminals or $Q_0$, we move $C$ to $B$ (resulting
	in the cut $(A\setminus C, B \cup C)$). %
	Since $C$ is a connected component of $G[A]$, all of the neighbors of $C$ are in
	$B$, and therefore moving $C$ to $B$ does not add any cut edges. %
	
	In the remaining case, $C$ contains a terminal $t \in \tset$ or $Q_0$
	but is not an important cut. %
	By Proposition \ref{prop:impcut:exist}, there is an important cut $(C',
	\bar C')$ with at most as many cut edges as $(C, \bar C)$ and $C \subsetneq
	C'$. %
	We can replace $C$ by a component corresponding to an important cut by taking
	the cut $(A \cup C', B \setminus C')$. %
	This is still a valid solution since all terminals contained in $A$ are
	contained in $A \cup C'$, and $Q_0 \subseteq A$, $Q_1 \subseteq B \setminus C'$.
	Additionally, the number of edges crossing the cut does not increase: since
	$|E(C', \bar C')| \leq |E(C, \bar C)|$, the number of edges added to the
	cutset is at most the number of edges removed.
	
	We can apply the operations above until the constraints in the lemma are
	satisfied. %
	Notice that when applying the operations above, the number of components of
	$G[A]$ never increases and the number of vertices in $A$ connected to
	terminals in $G[A]$ never decreases. %
	Furthermore, each operation changes at least one of the two quantities above,
	so this process must finish after a finite number of operations.
\end{proof}

\begin{lemma}
If a feasible cut $(A,B)$ exists, then our algorithm returns a feasible solution.
\end{lemma}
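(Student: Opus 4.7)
My plan is to fix, as is standard for these arguments, a canonical solution and then track it through the algorithm. Starting from any feasible cut $(A,B)$, first apply Lemma~\ref{lem:violcut:impcutcomps} to obtain a feasible solution $(A^{*}, B^{*})$ in which every connected component of $G[A^{*}]$ corresponds to an important cut in $\cset_{0} \cup \cset$. Among these, exactly one component $C_{0}^{*}$ contains $Q_{0}$ (since $Q_{0}$ is connected and lies entirely in $A^{*}$), and $C_{0}^{*}$ is an important $(Q_{0},Q_{1})$-cut. The remaining components $C_{1}^{*}, \ldots, C_{\lambda}^{*}$, with $\lambda \le c$, are each important $(t_{i}^{*}, Q_{1})$-cuts for some terminal $t_{i}^{*} \in C_{i}^{*} \cap \tset$, and together they realize a well-defined cut profile vector $\pi^{*} = \{(|C_{i}^{*} \cap \tset|, |E(C_{i}^{*}, \overline{C_{i}^{*}})|)\}_{i \in [\lambda]}$ in the sense of Definition~\ref{def:k0:slot} (the total terminal count lies in $[c_{0}, 2c]$ by minimality of $|A^{*} \cap \tset|$, and the total edge count is at most $\ell$).

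Next, I would argue that the algorithm will consider the right guesses. Since it enumerates all important $(Q_{0}, Q_{1})$-cuts, one of the candidates for $C_{0}$ is exactly $C_{0}^{*}$; and since it enumerates all cut profile vectors, one candidate is $\pi^{*}$. Fixing these guesses, the remaining task is to show that the terminal set $S$ accumulated by the $c$ rounds of greedy slot-filling contains at least one terminal from each $C_{i}^{*}$; then the final enumeration of tuples of up to $c$ important $(t, Q_{1})$-cuts for $t \in S$ will include $(C_{1}^{*}, \ldots, C_{\lambda}^{*})$, which, together with $C_{0}^{*}$, gives a valid $(Q_{0}, Q_{1}, c_{0}, 0, \ell)$-constrained cut that the algorithm returns.

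The main obstacle is the combinatorial claim that after $c$ rounds of greedy filling, every $C_{i}^{*}$ is ``hit'' by $S$. My plan is to argue by contradiction via uncrossing: fix a slot index $i$, and suppose $S \cap C_{i}^{*} = \emptyset$. Then in every one of the $c$ rounds, when the greedy filled slot $i$, it used some important cut $D_{i}^{(r)}$ compatible with $(\kappa_{i}, \ell_{i})$ whose ``$A$-side'' is disjoint from $C_{i}^{*}$ and from all previously used $A$-sides in that round. Because successive rounds reuse the same slots with disjointness enforced within each round, I would argue that the $c$ cuts $D_{i}^{(1)}, \ldots, D_{i}^{(c)}$ together with $C_{i}^{*}$ yield $c+1$ pairwise disjoint candidate $A$-sides compatible with slot $i$, and in particular $c+1$ disjoint terminal sets of size $\kappa_{i}$ each. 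Using Proposition~\ref{prop:violcut:impcutforall} (important cuts remain important when restricted to a subset) together with the standard submodular uncrossing of important cuts, the greedy should have been able to select a cut containing $t_{i}^{*}$ in one of the $c$ rounds, a contradiction. Hence $S$ hits every $C_{i}^{*}$, and the final enumeration succeeds. The hardest technical step will be to formalize this uncrossing properly, ensuring that ``restricting'' an important $(t_{i}^{*}, Q_{1})$-cut to respect the disjointness constraints imposed by previously placed components still produces an important cut of the same slot type.
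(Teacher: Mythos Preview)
Your overall setup (apply Lemma~\ref{lem:violcut:impcutcomps}, fix the correct $C_0 = C_0^*$ and the correct cut profile vector, then analyze the greedy rounds) matches the paper. But the central claim of your argument---that after the greedy rounds $S$ must hit \emph{every} $C_i^*$---is not what the paper proves, and in fact need not be true. The greedy, when filling slot $i$, picks an \emph{arbitrary} important cut in $\cset$ compatible with $(\kappa_i,\ell_i)$ and terminal-disjoint from the current $S$; nothing forces it to ever choose $C_i^*$ if there are enough other compatible, pairwise terminal-disjoint candidates. Your proposed contradiction (``the greedy should have been able to select a cut containing $t_i^*$'') only shows that $C_i^*$ was \emph{available} in each round, not that it was chosen. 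The uncrossing you sketch does not close this gap, because important-cut uncrossing controls the structure of the family of important cuts, not the nondeterministic choices the algorithm makes among them.

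The paper's argument is different and more robust: it explicitly allows some slots to be unhit. If slot $i$ is unhit, then in each of the $c+1$ rounds (not $c$; this extra round is the point) the greedy must have succeeded in filling slot $i$ with some $C_{ji}$, and these $c+1$ sets are pairwise terminal-disjoint and all lie in $\cset_S$. Now one builds a feasible tuple $(C_1,\ldots,C_\lambda)$ in $\cset_S$: for hit slots take $C_i = C_i^*$, and for an unhit slot $i$ use pigeonhole---the previously placed $C_{i'}$ together with the hit-slot components $C^*$ cover at most $c$ terminals (otherwise one could drop a component of $G[A]$, contradicting minimality of the number of components), so among the $c+1$ terminal-disjoint candidates $C_{1i},\ldots,C_{(c+1)i}$ at least one avoids all those terminals and can serve as $C_i$. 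The final enumeration over $\cset_S$ then finds this tuple (which may differ from $(C_1^*,\ldots,C_\lambda^*)$). You should replace the ``all slots are hit'' claim with this pigeonhole step.
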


\begin{proof}
Due to Lemma \ref{lem:violcut:impcutcomps}, we can assume that every connected
component of $G[A]$ corresponds to an important cut. %
Now, let $C^*_0, C^*_1, \ldots, C^*_\lambda$ be the connected components of
$G[A]$, with $C^*_0 \in \cset_0$ being the component that contains $Q_0$. %
Let $\set{(\kappa_i, \ell_i)}_\lambda$ be the cut profile vector corresponding
to the cuts $(C^*_i, \bar C^*_i)$ for $i \in \set{1, \ldots, \lambda}$
(excluding $C^*_0$), meaning that $\kappa_i$, $\ell_i$ are the number of
terminals in $C^*_i$ and the number of edges in the cutset, $E(C^*_i, \bar
C^*_i)$, respectively. %
Notice that, if $C^*_0$ or $C^*_0 \cup C^*_i$ (for some $i \in [\lambda]$)
contain at least $c$ terminals, then we can remove all the other components of
$A$. %
In this case, the algorithm finds $C_0\in\cset_0$ or $C_0 \in \cset_0$, $C_1
\in \cset$ by enumeration and returns a valid solution. %
Otherwise, all the components contain at most $c-1$ terminals each (and thus
$A$ induces a slot vector as in Definition \ref{def:k0:slot}).

Consider the iteration of the algorithm in which the cut profile
vector defined above is considered and $C_0 = C^*_0$. %
The next part of the algorithm (Lines
\ref{alg:k0:forround}--\ref{alg:k0:forroundend}) greedily fills the slots with
compatible important cuts from $\cset$, while making sure that each set
contains a disjoint set of terminals from the others. %
Though it seems that our goal at this stage is to obtain a feasible solution,
what we intend is to obtain a set of terminals, denoted $S$, such that the set
of important cuts for terminals in $S$ contains a feasible solution. %
For instance, if $S$ contains at least one terminal from each $C^*_i$,
$i\in[\lambda]$, our goal is achieved.

The above considerations motivate the following definition. %
We say a slot $i$ is \emph{hit} by $S$ if $S \cap C^*_i \neq \emptyset$. %
Notice that slot $i$ is hit by $S$ if $C_{ji} = C^*_i$ for some $j$, since
the terminals in $C^*_i$ is added to $S$. %
Slot $i$ is also hit by $S$ if, for some $j$, we cannot find a set $C_{ji}$,
since that implies that $C_{ji} = C^*_i$ is not a valid choice, and thus $S
\cap C^*_i \neq \emptyset$. %
Furthermore, if slot $i$ is not hit by $S$, then $C_{ji}$ is found in all
$(c+1)$ rounds. %

Let $\cset_S \subseteq \cset$ be the subset of important cuts containing
terminals in $S$ %
(by Proposition \ref{prop:violcut:impcutforall} these are the important $(t,
Q_1)$-cuts for $t \in S$). %
It is now suficient to show that there is a sequence of $\lambda$ cuts
$\set{(C_i,\bar C_i)}_{i \in \lambda}$ from $\cset_S$, such that all $C_i$
contain disjoint sets of terminals (also disjoint with the terminals in
$C_0$), and such that $(C_i,\bar C_i)$ is compatible with $(\kappa_i,
\ell_i)$. %
Taking $C=C_0 \cup \bigcup_{i=1}^\lambda C_i$, we obtain a feasible solution
$(C, \bar C)$, which may be different from $(A,B)$, but has the same number of
connected components as $G[A]$, and the same numbers of terminals
contained in each component and cut edges separating each component from the
other side of the cut. %
Since the algorithm enumerates all such sequences of $\lambda$ sets, it will
find either $(C, \bar C)$ or a different feasible solution. %

We now define the sets $C_i$: if a slot $i$ is hit by $S$ %
we can set $C_i = C^*_i$, since there is $t \in C^*_i \cap S$, and therefore,
$(C^*_i, \bar C^*_i) \in \cset_S$. %
This cut is trivially compatible with $(\kappa_i, \ell_i)$, and is disjoint to
all other sets defined similarly. 
Let $I_H$ be the set of all $i \in [\lambda]$ such that slot $i$ is hit by $S$,
and let $C^* = \bigcup \set{ C_i: i \in I_H}$. %
All that is left to prove is that, for every slot $i$ that is not hit by $S$,
there is an important cut $(C_i, \bar C_i) \in \cset_S$, which
contains terminals not in any previous $C_{i'}$, $i' \leq i$, or in $C^*$. %
Notice that we have covered at most $c$ terminals so far %
(if we covered more, then the components so far are sufficient and therefore the
number of components of $G[A]$ is not minimal). %
Since there are $c+1$ important cuts $(C_{ji}, \bar C_{ji})$, $j\in[c+1]$, all
compatible with slot $i$ and containing disjoint sets of terminals %
(since the terminals of $C_{ji}$ are added to $S$ after being picked), %
there must be one set $C_{ji}$ that does not contain any of the at most $c$
terminals in $\bigcup_{i' < i} C_{i'}$, or in $C^*$, and we can set
$C_i=C_{ji}$. %
Therefore, a sequence $\set{(C_i,\bar C_i)}_{i \in \lambda}$ exists, and the
algorithm outputs a feasible solution. %
\end{proof}

Finally, we analyze the running time.

\begin{lemma}
The described algorithms terminates in time $2^{O(c^2)} \cdot k + 2^{O(c)}\cdot k \cdot m$.
\end{lemma}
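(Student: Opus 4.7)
My plan is to decompose the running time into three phases: (i) preprocessing that enumerates all relevant important cuts, (ii) the outer enumeration over pairs $(C_0, \text{profile vector})$ plus the $c{+}1$ greedy rounds that build $S$, and (iii) the final tuple enumeration over $\cset_S$. I would dispatch the easy $c_0 = 0$ case upfront: it reduces to a single minimum cut of value at most $\ell \leq c$, computable in $O(mc)$ time via Ford--Fulkerson, which is absorbed into the $2^{O(c)} k m$ budget.

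For phase (i), I would invoke the important-cut enumeration theorem twice: once for the pair $(Q_0, Q_1)$, producing $|\cset_0| \leq 4^c$ cuts in $O(4^c c m)$ time, and once per terminal $t \in \tset$ for the pair $(t, Q_1)$, producing $|\cset| \leq k \cdot 4^c$ cuts in $O(k \cdot 4^c c m) = 2^{O(c)} k m$ time. During this enumeration I would annotate each important cut with (a) its cut-edge count and (b) the number of terminals on its $X$-side, both determined in $O(c)$ time per cut using the at-most-$c$ cut edges; this auxiliary bookkeeping stays within the same $2^{O(c)} k m$ bound.

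For phase (ii), the profile vectors of Definition \ref{def:k0:slot} number at most $(c \cdot \ell)^c \leq c^{2c} = 2^{O(c \log c)}$. Combined with $|\cset_0| \leq 4^c$ choices of $C_0$, the outer loop has $2^{O(c \log c)}$ iterations. Inside, each of $c{+}1$ rounds scans at most $\lambda \leq c$ slots, and for each slot we pass through the $|\cset| \leq k \cdot 4^c$ candidates, checking slot-compatibility and disjointness from the current $S$ and from the cuts chosen earlier in the round in $O(c)$ time using a boolean array indexed by terminals. This yields $2^{O(c \log c)} \cdot c^2 \cdot k \cdot 4^c \cdot c = 2^{O(c^2)} \cdot k$ total work for phase (ii). For phase (iii), I would first bound $|S| \leq (c{+}1) \cdot \lambda \cdot c \leq c^3$, hence $|\cset_S| \leq c^3 \cdot 4^c = 2^{O(c)}$; the number of ordered $\lambda$-tuples with $\lambda \leq c$ is therefore $(2^{O(c)})^c = 2^{O(c^2)}$, and each tuple can be checked in $O(c^2 + k)$ time for pairwise terminal-disjointness, profile-compatibility, and validity of the resulting constrained cut, giving $2^{O(c^2)} \cdot k$ total.

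Summing the three phases gives $2^{O(c)} \cdot k \cdot m + 2^{O(c^2)} \cdot k$ as claimed. The main obstacle I expect is phase (iii): naively the tuples could be drawn from the full $\cset$, giving $(k \cdot 4^c)^c = k^c \cdot 4^{c^2}$ and ruining the bound. The key point that makes the phase work is that $|S|$ is polynomial in $c$ and independent of $k$, because only $(c{+}1)\lambda$ important cuts are ever added per (profile, $C_0$) iteration and each contributes at most $c$ terminals; once that bound on $|S|$ is secured, the tuple count collapses to $2^{O(c^2)}$ and the remaining accounting is routine.
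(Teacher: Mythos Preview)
Your proposal is correct and follows essentially the same three-phase decomposition as the paper's own proof: enumerate important cuts in $2^{O(c)}km$ time, run the outer loop over $c^{O(c)}$ profile vectors and $4^c$ choices of $C_0$ with the greedy slot-filling costing $2^{O(c)}k$ per iteration, and then enumerate $\lambda$-tuples from $\cset_S$ in $2^{O(c^2)}$ time per iteration using the polynomial-in-$c$ bound on $|S|$. The only minor slips are cosmetic: your bound $|S|\le (c{+}1)\lambda c$ omits the initial contribution $|C_0\cap\tset|<c$ and the inequality $(c{+}1)\lambda c\le c^3$ is off by a lower-order term (it is $c^3+c^2$ when $\lambda=c$), but $|S|=O(c^3)$ (indeed $O(c^2)$, since $\sum_i\kappa_i\le 2c$) is what you actually need and the rest of the accounting goes through unchanged.
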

\begin{proof}
Computing all the relevant important cuts takes time $2^{O(c)}\cdot k\cdot m$. %
There are at most $c^{O(c)}$ cut profile vectors, and for each of these %
the algorithm fills the slots at most $c$ times, which takes time $c^2 \cdot k
\cdot 2^{O(c)}$; %
then, once it has computed $S$, it enumerates at most $c$ components out of
$2^{O(c)} \cdot c^2$ possible important cuts, which takes time
$\parenbig{2^{O(c)} \cdot c^2}^c = 2^{O(c^2)}$. %
Once the right combination of components is found, it takes $O(n)$ time to
obtain the corresponding feasible cut.
The total running time is
\[
2^{O(c)}\cdot k\cdot m + c^{O(c)} \parenbig{c^2 \cdot k \cdot 2^{O(c)} + 2^{O(c^2)}} = 2^{O(c^2)} \cdot k + 2^{O(c)}\cdot k \cdot m
\]
\end{proof}
\begin{figure}[t]
\label{alg:CC:k0}
\small
\begin{algorithmic}[1]
\Function{$\CC$}{$G$, $\tset$, $Q_0$, $Q_1$,$c$, $\ell$}
	\If {$c = 0$}
	\label{alg:k0:ifk0}
		\State Compute a min-cut $(A'_0, A'_1)$ such that $Q_0\subseteq A'_0, Q_1\subseteq A'_1$
		\If {$|E(A'_0, A'_1)| \leq \ell$} 
			\State \Return $(A'_0, A'_1)$
		\Else 
			\State\Return \nosolution
		\EndIf
	\EndIf
	\label{alg:k0:ifk0end}
	\Statex

	\State Compute the set $\cset_0$ of all  important $(Q_0, Q_1)$-cuts with at most $\ell$ cut edges
	\State Compute the set $\cset$ of all important $(t, Q_1)$-cuts for $t \in \tset$ with at most $\ell$ cut edges
	\label{alg:k0:impcut}
	\Statex

	\State Find an important cut $(C_0, \bar{C_0}) \in \cset_0$ such that $|C_0 \cap \tset| \geq c$
	\label{alg:k0:spec1}
	\If {$(C_0, \bar{C_0})$ exists}
		\State\Return $(C_0, \bar{C_0})$
	\EndIf
	\State  \parbox[t]{0.9\textwidth}{
			Find important cuts $(C_0, \bar{C_0}) \in \cset_0$, $(C_1, \bar{C_1}) \in \cset$, 
			such that $(C_0 \cap \tset) \cap (C_1 \cap \tset) = \emptyset$, \par
			\hskip\algorithmicindent $|C_0\cap\tset| + |C_1 \cap \tset| \geq c$, 
			and $\card{E(C_0 \cup C_1, \bar{C_0} \cap \bar{C_1})} \leq \ell$
			}
	\If {$(C_0, \bar{C_0})$, $(C_1, \bar{C_1})$ exist}
		\State\Return $(C_0 \cup C_1, \bar{C_0} \cap \bar{C_1})$
	\EndIf
	\label{alg:k0:spec2}

	\Statex
	\For {all cut profile vectors $\set{(\kappa_i, \ell_i)}_\lambda$, and all $(C_0, \bar{C_0}) \in \cset_0$ }
	\label{alg:k0:forslotvec}
		\State $S \gets C_0 \cap \tset$
		\For {$j \in \set{1, \ldots, c+1}$} \InlineComment{Round $j$}
		\label{alg:k0:forround}
			\For {$i \in \set{1, \ldots, \lambda}$}
			\label{alg:k0:forslot}
				\State Find $(C_{ji}, \bar C_{ji}) \in \cset$ compatible with slot $(\kappa_i, \ell_i)$, 
						such that $C_{ji} \cap S = \emptyset$
				\If {$C_{ji}$ exists} 
					\State $S \gets S \cup (C_{ji} \cap \tset)$
				\EndIf
			\EndFor
		\EndFor
		\label{alg:k0:forroundend}

		\Statex
		\State Let $\cset_S = \set{(C, \bar C) \mid C \cap S \neq \emptyset}$
		\State \parbox[t]{0.8\textwidth}{
			Find (by enumeration) $\set{(C_i,\bar C_i)}_{i \in \lambda}$, with $(C_i,\bar C_i) \in \cset_S$ compatible with slot $i$, \par 
			\hskip\algorithmicindent and all sets $C_i\cap \tset$ are disjoint (including $C_0 \cap \tset$)
			}
		\label{alg:k0:enumset}
		\If {$\set{(C_i,\bar C_i)}_{i \in \lambda}$ exists}
			\State Let $C=C_0 \cup \bigcup_{i=1}^\lambda C_i$
			\State \Return $(C, \bar C)$
		\EndIf
	\EndFor

	\Statex
	\State\Return \nosolution
	\InlineComment{No solution found for any cut profile}
\EndFunction
\end{algorithmic}
\caption{Algorithm to find a constrained cut in the base case}
\end{figure}
\end{appendix}

\end{document}